\documentclass[11pt]{article}

\usepackage{sn-preamble} 
\usepackage{tikz}
\usepackage{verbatim}
\usepackage{cancel}
\usepackage{titling}
\usepackage{thm-restate}
\usepackage{microtype}

% notes

\newcommand{\rnote}[1]{\footnote{{\bf \color{red}Rocco}: {#1}}}

\newcommand{\blue}[1]{{{\color{blue}#1}}}
\newcommand{\red}[1]{{\color{red} {#1}}}

\newcommand{\ignore}[1]{}

% new commands
\newcommand{\Ball}{\mathrm{Ball}}
\newcommand{\Thinshell}{{\mathrm{ThinShell}}}

\newcommand{\Leb}{\mathrm{Leb}}
\newcommand{\Naz}{{\mathrm{Naz}}}

\renewcommand{\S}{\mathbb{S}}

\newcommand{\Dno}{{\cal D}_{\mathrm{no}}}
\newcommand{\Dyes}{{\cal D}_{\mathrm{yes}}}
\newcommand{\Shell}{{\mathrm{Shell}}}

\newcommand{\Avg}{\mathrm{Avg}}
\newcommand{\dtv}{d_{\mathrm{TV}}}
\newcommand{\duo}{d_{\mathrm{UO}}}
\def\ff{{\alpha}}
\newcommand{\clip}{\mathrm{clip}}
\newcommand{\Ryes}{\bR_{\mathrm{yes}}}
\newcommand{\Rno}{\bR_{\mathrm{no}}}

% Shivam: added these
\newcommand{\curb}{\mathrm{Curb}}
\newcommand{\road}{\mathrm{Middle}}

\newcommand{\fno}{f_{\mathrm{no}}}
\newcommand{\boldfyes}{\boldf_{\mathrm{yes}}}
\newcommand{\boldfno}{\boldf_{\mathrm{no}}}
\newcommand{\bfyes}{\boldf_{\mathrm{yes}}}
\newcommand{\bfno}{\boldf_{\mathrm{no}}}
\newcommand{\Pconv}{\calP_{\mathrm{conv}}}
\newcommand{\Top}{{\mathrm{Left}}}
\newcommand{\Bottom}{{\mathrm{Right}}}
\newcommand{\Middle}{{\mathrm{Middle}}}
\newcommand{\Bad}{\mathsf{Bad}}
\newcommand{\EVol}[1]{\Ex\sbra{\Vol\pbra{#1}}}
\newcommand{\slab}{\mathrm{hfsp}}
\newcommand{\ring}{\slab}

% flag

% meta
\title{Lower Bounds for Convexity Testing}

\author{
Xi Chen\thanks{Columbia University. Email: \url{xc2198@columbia.edu}.} \and 
Anindya De\thanks{University of Pennsylvania. Email: \url{anindyad@seas.upenn.edu}.} \and 
Shivam Nadimpalli\thanks{MIT. Email: \url{shivamn@mit.edu}.} \and 
Rocco A. Servedio\thanks{Columbia University. Email: \url{rocco@cs.columbia.edu}.} \and 
Erik Waingarten\thanks{University of Pennsylvania. Email: \url{ewaingar@seas.upenn.edu}.}
}

\date{\today}

\begin{document}

\pagenumbering{gobble}

\maketitle

%\begin{abstract}
%
%We consider the problem of testing whether an unknown and arbitrary set $S \subseteq \R^n$ (given as a black-box membership oracle) is convex, versus $\varepsilon$-far from every convex set, under the standard $N(0,I_n)$ Gaussian distribution.  The current state-of-the-art testing algorithms for this problem make $2^{\tilde{O}(\sqrt{n})\cdot \mathrm{poly}(1/\varepsilon)}$ non-adaptive queries, both for the standard testing problem and for tolerant testing.
%
%We prove three lower bounds on convexity testing in this setting:\vspace{0.15cm}
%\begin{flushleft}\begin{itemize}
%\item We show that for any $c>0$, any non-adaptive tester (which may make two-sided errors) must use at least $n^{1/4 - c}$ queries in order to test to some constant accuracy $\varepsilon>0$.\vspace{0.06cm}
%
%\item We show that any one-sided tester (which may be adaptive) must use at least $n^{\Omega(1)}$ queries in order to test to some constant accuracy $\varepsilon>0$.\vspace{0.06cm}    
%
%\item We show that any non-adaptive \emph{tolerant} tester (which may make two-sided errors) must use at least $2^{\Omega(n^{1/4})}$ queries to distinguish sets that are $\varepsilon_1$-close to convex versus $\varepsilon_2$-far from convex, for some absolute constants $0<\varepsilon_1<\varepsilon_2.$\vspace{0.15cm}
%\end{itemize}\end{flushleft}
%These are the first known lower bounds for convexity testing with black-box queries.
%
%\end{abstract}

\begin{abstract}

We consider the problem of testing whether an unknown and arbitrary set $S \subseteq \R^n$ (given as a black-box membership oracle) is convex, versus $\varepsilon$-far from every convex set, under the standard Gaussian distribution.  The current state-of-the-art testing algorithms for this problem make $2^{\tilde{O}(\sqrt{n})\cdot \mathrm{poly}(1/\varepsilon)}$ non-adaptive queries, both for the standard testing problem and for tolerant testing.

We give the first lower bounds for convexity testing in the black-box query model:
\begin{itemize}
	\item We show that any one-sided tester (which may be adaptive) must use at least $n^{\Omega(1)}$ queries in order to test to some constant accuracy $\varepsilon>0$.
	\item We show that any non-adaptive \emph{tolerant} tester (which may make two-sided errors) must use at least $2^{\Omega(n^{1/4})}$ queries to distinguish sets that are $\varepsilon_1$-close to convex versus $\varepsilon_2$-far from convex, for some absolute constants $0<\varepsilon_1<\varepsilon_2$. 
\end{itemize}
Finally, we also show that for any constant $c>0$, any non-adaptive tester (which may make two-sided errors) must use at least $n^{1/4 - c}$ queries in order to test to some constant accuracy $\varepsilon>0$.
\end{abstract}

\newpage

\setcounter{page}{1}

\pagenumbering{arabic}

%!TEX root = ../ctlb.tex

\newpage

\section{Introduction}
\label{sec:intro}

High-dimensional convex geometry 
%(the study of high-dimensional convex sets) 
is a rich topic at the intersection of geometry, probability, and analysis (see \cite{ball1997elementary,GruberWills93,LeonardLewis15,Tropp18,TkoczNotes,HugWeil20}, among many other works, for general overviews).  Apart from its intrinsic interest, a strong motivation for the study of high-dimensional convex sets from the perspective of theoretical computer science is that convexity often translates into a form of mathematical niceness which facilitates efficient computation, as witnessed by the plethora of positive results in algorithms and optimization for convex functions and convex sets. In this work, the object of study is the convex set:
\begin{flushleft}\begin{quote}
A set $K \subset \R^n$ is convex if and only if for every two points $x, y \in \R^n$, any point $z$ on the segment between $x$ and $y$ lies in $K$ whenever $x$ and $y$ lie in $K$. 
\end{quote}\end{flushleft}
The above gives a ``local'' characterization of convex sets, where ``local'' refers to the fact that (aside from quantifying over \emph{all} co-linear points $x,z,y$,) an algorithm may make three membership queries to check the condition --- in particular, non-convexity can be verified with three queries. Can one relax the ``for all'' quantification to give 
%a local, yet \emph{approximate} characterizations of convex sets? 
a local condition which characterizes \emph{approximately} convex sets?
Is there an algorithm which, by making very few queries, can determine whether or not a set is (almost) convex? 
%Given access to a convex set, does there exists a query-efficient algorithm which can determine (with high probability) whether a set is \emph{close} to convex?

%This in turn motivates one of the most fundamental algorithmic questions about high-dimensional convex sets, which is the following:  how can we determine whether a given set $S \subset \R^n$ is convex or approximately convex? This is the question that drives our work.

A natural vantage point for this broad question is that of \emph{property testing}~\cite{BLR93, RS96}, which provides an algorithmic framework for studying the above questions. 
In our setting, we consider property testing of convex sets with respect to the \emph{standard Gaussian distribution}, arguably the most natural distribution over $\R^n$. 
Indeed, various learning, property testing, and other algorithmic problems in the Gaussian setting have been intensively studied in theoretical computer science~\cite{KOS:08,Vempalafocs10,MORS:10,Kane:Gotsman11,Kane12,Kane14,KK14,KNOW14,Kane15,CFSS17,CDS19,DMN19,OSTK21,DMN21,HSSV22,DNS23-convex}. 
Furthermore, while a large body of mathematical work (e.g. \cite{Borell:75,Ball:93,Latala1999,latala2003some,Nazarov:03,Borell2003,cordero2004b,Latala2005,Borell2008,roy14}) investigates the geometry of high-dimensional convex sets against the Gaussian distribution, convexity over Gaussian space arises naturally within theoretical computer science in the context of algorithmic discrepancy theory~\cite{gluskin1989extremal,bansal2010constructive,lovett2015constructive,rothvoss2017constructive,levy2017deterministic,eldan-chvatal-corr,reis2023vector} and lattice problems~\cite{reis2023subspace,rothvoss2023lattices,regev2024reverse}.

We consider the following algorithmic task:
A (randomized) testing algorithm has black-box query access to an unknown and arbitrary function $f \colon \R^n \to \zo$ (the indicator function of a subset of $\R^n$), and its goal is to make as few membership queries on $f$ as possible while %approximately 
deciding whether $f$ is convex or $\eps$-far from convex (meaning $f$ and any indicator of a convex set $g:\R^n\to\zo$ disagree on $\bx \sim N(0, I_n)$ with probability at least $\eps$). Thus, a testing algorithm gives an efficiently-checkable (randomized) condition which all convex sets satisfy, and furthermore, any set which satisfies this condition is ``almost'' convex (with respect to the standard Gaussian distribution). For example, the definition of a convex set naturally leads to the following property testing question, whose positive resolution would directly give a ``constant-query'' testing algorithm (i.e.~an algorithm whose query complexity depends only on $\eps$ and not on the ambient dimension $n$):
\begin{flushleft}\begin{quote}
Does there exist a probability distribution over co-linear points $\bx, \bz, \by$ in $\R^n$ such that the condition
%, by ensuring $\bz \in K$ whenever $\bx, \by \in K$ with probability at least $1 - \delta(\eps)$, one can guarantee $K$ is $\eps$-close to convex with respect to the standard Gaussian?
$\Pr[\bz \in K \hspace{0.06cm} | \hspace{0.06cm} \bx,\by \in K] \geq 1-\delta(\eps)$  implies that the set $K$ must be $\eps$-close to convex with respect to the standard Gaussian?\footnote{Such a distribution would immediately yield a \emph{proximity-oblivious} testing algorithm~\cite{GoldreichRon11}, one of the strongest forms of property testing. Prior to this work, the existence of a proximity-oblivious tester for convexity was entirely possible.}
\end{quote}\end{flushleft}
In this work, we show the first non-trivial lower bounds for testing convexity under the standard Gaussian distribution. Our lower bounds not only give a negative resolution to the above question, they imply that, in a variety of property testing models (non-adaptive, adaptive, one-sided, two-sided, and tolerant), a dependence on the ambient dimension $n$ is always necessary. Prior to this work, an $O(1/\eps)$-query test was entirely possible for all of those models.\footnote{An $\Omega(1/\eps)$-query lower bound is easily seen to hold for essentially every non-trivial property, since this many queries are required to distinguish between the empty set (which is convex) and a random set of volume $2\eps$ (which is far from convex and far from having most properties of interest).}

\ignore{%

 %Recall that the standard setting of property testing over $\zo^n$ is that 

There, one considers a property ${\cal C}$ (specified by a sub-class of Boolean functions, like all linear functions or low-degree polynomials), and a (randomized) testing algorithm has black-box query access to an unknown and arbitrary function $f\colon \zo^n \to \zo$. The goal of a testing algorithm %$T$ for a property ${\cal C}$ (i.e.~a class of functions, such as the class of all monotone functions over $\zo^n$, the class of all $k$-juntas over $\zo^n$, etc.) 
is to make as few queries as possible while approximately deciding whether or not $f$ lies in $\calC$: if $f \in {\cal C}$ then the algorithm should output ``yes'' with high probability, while if $f$ is $\eps$-far from every function in ${\cal C}$ the algorithm should output ``no'' with high probability. In the Boolean setting, the meaning of ``$f$ is $\eps$-far from ${\cal C}$'' is that for every function $g \in {\cal C}$, $\Pr_{\bx \sim \zo^n}[f(\bx) \neq g(\bx)] > \eps$, i.e.~distance between two functions is measured with respect to the \emph{uniform distribution} over $\zo^n$.

The above-described standard setting of property testing over $\zo^n$ adapts  naturally to the problem of \emph{convexity testing}, which is the topic of this paper. In the model we consider, the testing algorithm is given a black-box oracle for an unknown and arbitrary $f: \R^n \to \zo$ (i.e.~a membership oracle for an unknown and arbitrary set $S \subseteq \R^n$), and the tester's goal is to accept w.h.p.~if $S$ is convex and to reject w.h.p.~if $\Pr_{\bx \sim N(0,I_n)}[f(\bx) \neq g(\bx)] > \eps$ for every convex set $g: \R^n \to \zo$. Note that in our continuous setting of $\R^n$, we are measuring distance using the \emph{standard Normal distribution.} \rnote{\red{Motivate/justify?}}}

As further discussed in \Cref{sec:related-work}, a number of prior works have studied convexity testing in a range of different settings, yet large gaps remain in our understanding. Most closely related are the works of~\cite{KOS:08}, who study learning convex sets over $N(0, I_n)$, and~\cite{CFSS17}, who study testing convexity over $N(0, I_n)$ when restricted to sample-based testers (i.e.~the algorithm~can~only query a given number of random points independently drawn from $N(0, I_n)$). On the upper bound side, the best algorithm for convexity testing \cite{CFSS17} is based on~\cite{KOS:08} and queries $\smash{2^{\tilde{O}(\sqrt{n})/\eps^2}}$ randomly sampled points from $N(0, I_n)$. Hence, this ``sample-based'' tester gives a non-adaptive property testing algorithm.\footnote{Recall that a \emph{non-adaptive} testing algorithm is one in which the choice of its $i$-th query point does not depend on the responses received to queries $1,\dots,i-1$.} %queries; in fact, it only queries $f$ on independent random samples from $N(0,I_n)$, and hence is a \emph{sample-based} testing algorithm.  
Turning to lower bounds,~\cite{CFSS17} showed that, when restricted to sample-based testers, (i) algorithms which incur one-sided error must make $2^{\Omega(n)}$ queries,\footnote{Recall that a \emph{one-sided} tester for a class of functions is one which must accept (with probability 1) any function $f$ that belongs to the class. This is in contrast to making \emph{two-sided} error, where an algorithm may reject a function in the class with small probability.} and (ii) algorithms which incur two-sided error must make $2^{\Omega(\sqrt{n})}$ queries. Importantly, lower bounds on sample-based testers do not imply any lower bounds for algorithms which are allowed to make unrestricted queries. There are many prominent property testing problems (e.g., linearity and monotonicity) where the complexity of sample-based testing is significantly higher than the complexity in the (standard) query-based model.\footnote{For example, linearity testing over $\zo^n$ admits $O(1/\eps)$-query algorithms~\cite{BLR93}, but requires $\Omega(n)$ queries for sample-based testers~\cite{GoldreichRon16}. Monotonicity testing over $\zo^n$ admits $\poly(n)$-query algorithms~\cite{GGLRS,CS13a,CST14,KMS18}, but requires $\Omega(2^{n/2})$ for sample-based testers~\cite{GGLRS}.} %Indeed, \emph{no} nontrivial lower bounds have been given  in the literature for query-based convexity testing:  prior to the current work, it was conceivable that convexity in $\R^n$ could be tested using $O(1/\eps)$ queries,  independent of the ambient dimension $n$.

\subsection{Our Results and Discussion}

This work gives the first non-trivial lower bounds for query-based convexity testing.  We prove three different lower bounds for three variants of the property testing model, which we now describe. As mentioned, the best known algorithm for convexity testing is the non-adaptive algorithm of \cite{CFSS17}, which makes $2^{\tilde{O}(\sqrt{n})/\eps^2}$ non-adaptive queries (and makes two-sided error).  

Our first~result gives a polynomial lower bound for one-sided adaptive testers:

\begin{restatable}[One-sided adaptive lower bound]{theorem}{theomaintheoremtwo}
\label{thm:one-sided}
%\begin{theorem}  \label{}
For some absolute constant $\eps>0$, any one-sided (potentially adaptive) $\eps$-tester for convexity over $N(0,I_n)$ must use $n^{\Omega(1)}$ queries.
\end{restatable}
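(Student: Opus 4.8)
## Proof Proposal for Theorem~\ref{thm:one-sided}

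The plan is to exploit the defining feature of one-sided testers: such a tester must accept every convex set with probability $1$. Equivalently, whenever the tester rejects, the set of query/answer pairs it has collected must be an \emph{incontrovertible witness of non-convexity} — that is, there must exist colinear points $x, z, y$ among the queried points (or points whose membership is forced by the answers) with $x, y \in K$, $z \notin K$. So it suffices to construct a distribution $\Dno$ over sets that are $\eps$-far from convex, yet for which no adaptive tester making $q = n^{o(1)}$ queries can, with constant probability, turn up such a colinear violating triple.

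The construction I would use is a ``bumpy ball'' or perturbed-convex-body distribution: take a fixed convex body (say a Euclidean ball of appropriate radius so that it captures a constant fraction of Gaussian mass, or more cleverly a polytope / an intersection of many halfspaces), and then carve out or add in a hidden collection of small ``bumps'' located in a random low-dimensional or random-direction-dependent way. Each individual bump makes the set non-convex, and collectively the bumps push the set $\eps$-far from \emph{every} convex set (this ``far'' claim is a volume argument: any convex set must either miss most of the bumps' added mass or include most of the carved mass, and there are enough bumps that one of these errors is $\ge \eps$). The key design goal is that detecting a bump — more precisely, finding a colinear triple $x,y \in K$, $z \notin K$ — requires the tester to essentially ``guess'' the random location/orientation of a bump, which happens with probability $n^{-\Omega(1)}$ per query. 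A union bound over $q$ queries then gives failure probability $o(1)$ as long as $q = n^{o(1)}$ (in fact one can push to $q = n^{\Omega(1)}$ with a small enough exponent, which is all the theorem claims).

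The steps, in order: (1) define $\Dno$ precisely and verify that every set in its support is $\eps$-far from convex, via the volume/packing argument sketched above; (2) set up the standard adaptive one-sided lower bound framework — argue that WLOG the tester is deterministic given its coin string, and that on input drawn from $\Dno$ it rejects only if its transcript contains a colinear violating triple; (3) the core combinatorial-geometric lemma: condition on the answers to the first $i-1$ queries, and show that the $i$-th query point $q_i$, whatever it is, lands ``on or near a hidden bump in a way that can complete a violating triple with previously-seen points'' with probability at most $n^{-\Omega(1)}$ over the still-random part of $\Dno$; here one needs to be careful that a single query can simultaneously play the role of the ``outside'' point $z$ for one potential triple and an ``inside'' endpoint $x$ or $y$ for another, and that pairs of already-queried inside-points define lines that might pass through a bump — so the bound has to account for $O(q^2)$ candidate lines, which is fine since $q^2 = n^{o(1)}$ too; (4) combine via a union bound / martingale argument to conclude the tester fails with probability $1 - o(1)$, hence cannot be a valid $\eps$-tester with $q = n^{\Omega(1)}$ queries.

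The main obstacle I anticipate is step~(3), and specifically controlling the \emph{adaptive} interaction between the geometry of the bumps and the lines through previously-queried points. A non-adaptive version is comparatively easy — random bump locations dodge a fixed query set — but an adaptive tester could try to ``walk toward'' a bump: query a point, use the answer (in or out) to localize, and home in. The construction must be robust to this, which is why the bumps should be small relative to the resolution at which membership answers reveal information (e.g., make them thin in a random direction, so that a membership answer at a generic point gives essentially no directional information about any bump), and why it may be cleanest to randomize an orientation in $\R^n$ so that the ``useful'' direction is hidden in a high-dimensional space where the tester's $q$ points span only a negligible subspace. Getting the quantitative tradeoff here — bump size vs.\ number of bumps vs.\ the $n^{-\Omega(1)}$ detection probability vs.\ the $\eps$-farness — is the technical heart of the argument.
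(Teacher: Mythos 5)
Your framework is the right one and matches the paper's: build a distribution $\Dno$ of far-from-convex sets, observe that a one-sided tester rejects only upon finding a colinear violating triple among its queries, and show that $q=n^{o(1)}$ adaptive queries cannot produce such a certificate. You also correctly flag that the technical heart is the adaptive case of step (3). But the proposal has a genuine gap exactly there, and the gap is not just ``details to be filled in'' --- the quantitative framing is wrong. You bound the failure probability by arguing that each query ``lands on or near a hidden bump in a way that can complete a violating triple'' with probability $n^{-\Omega(1)}$. For any construction whose non-convexity has constant total Gaussian measure (which is forced by the requirement that $\Dno$-sets be $\eps$-far from convex for constant $\eps$), queries land inside the non-convex ``bumps'' with \emph{constant} probability; so the hardness cannot come from queries missing the bumps. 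What must be argued instead is a statement about \emph{pairs} of queries: even when many queries lie in non-convex regions, no two of them ever certify it, because certifying requires two queries that (a) lie in the \emph{same} hidden piece of the construction and (b) straddle a hidden boundary within that piece. Your sketch never isolates this pairwise event or explains why it has probability $n^{-\Omega(1)}$.

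The paper's construction is engineered precisely to make that pairwise event analyzable, and it needs two ingredients your proposal does not supply. First, the non-convexity is hidden inside the $N=2^{\sqrt{n}}$ ``uniquely violated'' flaps $\bU_i$ of a Nazarov body embedded in a random $n$-dimensional control subspace of $\R^{2n}$: each flap individually has measure roughly $1/N$, but their union has constant measure (this is what reconciles constant farness with the difficulty of hitting the same flap twice). A geometric argument (\Cref{lem:ev1}, via events $\calE_{1,1}$--$\calE_{1,5}$) then shows that any two queries falsifying the same halfspace must be within distance $O(\sqrt{q}\,n^{1/4})$ of each other. Second, within each flap the set is carved by a strip of the form $\{x:\langle v^i,x\rangle\in[-\sqrt{n}/2,\sqrt{n}/2]\}$ along an independent random direction $v^i$ in the orthogonal action subspace; since two queries in the same flap are within $O(\sqrt{q}\,n^{1/4})$ of each other, their projections onto $v^i$ differ by $O(\sqrt{q}\,n^{1/4}\log n)$ with high probability, while the strip boundary is anti-concentrated at scale $\sqrt{n}$, so the pair straddles the boundary with probability only $O(\sqrt{q}\log n/n^{1/4})$. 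The final bound is a union bound over the $O(q^2)$ flaps touched by queries, giving $O(q^{3.5}\log n/n^{1/4})$. Without an analogue of the ``two queries in the same piece must be geometrically close'' lemma and the anti-concentration step, your union bound over $O(q^2)$ lines has nothing to multiply against, so the argument does not close.
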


We also consider a challenging and well-studied extension of the standard testing model which is known as \emph{tolerant testing}~\cite{PRR06}.  
Recall that an $(\eps_1,\eps_2)$-tolerant tester for a class of functions is a testing algorithm which must accept with high probability if the input is $\eps_1$-close to some function in the class and reject with high probability if the input is $\eps_2$-far from every function in the class; thus the standard property testing model corresponds to $(0,\eps)$-tolerant testing.

The sample-based algorithm for convexity testing that is given in \cite{CFSS17} is based on agnostic learning results from \cite{KOS:08}. It follows easily from the analysis in \cite{CFSS17} and results of \cite{KOS:08} that for any $0 \leq \eps_1 < \eps_2$ with $\eps_2-\eps_1=\eps$, the \cite{CFSS17} approach gives a $2^{\tilde{O}(\sqrt{n})/\eps^4}$-query sample-based algorithm for $(\eps_1,\eps_2)$-tolerant testing of convexity.  As our final result, we give a mildly exponential lower bound on the query complexity of two-sided non-adaptive tolerant convexity testing:

\begin{restatable}[Two-sided non-adaptive tolerant testing lower bound]{theorem}{tolerantthm} \label{thm:tolerant}
There exist absolute constants $0 < \eps_1 < \eps_2 < 0.5$ such that any non-adaptive  $(\eps_1,\eps_2)$-tolerant tester for convexity over $N(0,I_n)$ (which may make two-sided errors) must use at least $2^{\Omega(n^{1/4})}$ queries.
\end{restatable}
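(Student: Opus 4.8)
The plan is to exhibit two distributions $\Dyes$ and $\Dno$ over indicator functions $f:\R^n\to\zo$, such that (i) every $f$ in the support of $\Dyes$ is $\eps_1$-close to convex, (ii) every $f$ in the support of $\Dno$ is $\eps_2$-far from convex, and (iii) no non-adaptive algorithm making $2^{o(n^{1/4})}$ queries can distinguish a draw from $\Dyes$ from a draw from $\Dno$. By Yao's principle it suffices to fool deterministic non-adaptive testers, i.e.\ fixed query sets $Q=\{q_1,\dots,q_m\}\subseteq\R^n$ with $m=2^{o(n^{1/4})}$, so the heart of the argument is an indistinguishability bound: $\dtv\big((\Dyes)_{|Q},(\Dno)_{|Q}\big)=o(1)$ for every such $Q$.

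For the construction, I would start from a base convex body whose Gaussian boundary region has ``room'' to hide a perturbation — a natural choice is a Euclidean ball $\Ball(0,r)$ of radius $r\approx\sqrt n$ chosen so that the Gaussian measure of a thin shell $\Thinshell$ around its boundary sphere is a constant. Inside this shell one carves out (for $\Dno$) a random union of many ``bumps'' or ``dents'' — small caps/blocks placed at random locations on the sphere $\sqrt n\cdot\S^{n-1}$ — whose union has constant Gaussian volume and which is $\eps_2$-far from \emph{every} convex set; for $\Dyes$ one instead fills in (or leaves out) these bumps in a \emph{correlated, convex-compatible} way so that the resulting set is genuinely $\eps_1$-close to convex (e.g.\ the symmetric difference with $\Ball(0,r)$, or with a slightly shrunk/expanded ball, is small). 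The key design constraints are: (a) each individual bump occupies a Gaussian-negligible region, so that a query set of size $2^{o(n^{1/4})}$ hits any \emph{fixed} bump with tiny probability, and the number of bumps ($2^{\Theta(n^{1/4})}$, say) is large enough that all bumps together still have only constant measure; (b) marginally, a single query point $q_i$ lands in the ``ambiguous'' shell region with the \emph{same} probability under $\Dyes$ and $\Dno$, and conditioned on landing there, its membership bit has the same marginal law; and (c) the randomness in the placement of bumps is ``spread out'' enough (e.g.\ via random rotations / random Gaussian directions) that the joint distribution of the answer vector $(f(q_1),\dots,f(q_m))$ is nearly identical in the two cases.

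The indistinguishability analysis would proceed by the standard route: condition on the ``location profile'' of $Q$ relative to the random rotation (where the shell is, via inner products $\langle q_i,\boldsymbol{\theta}\rangle$ with the random frame), argue that with $1-o(1)$ probability no two query points are close enough (in the relevant angular/radial metric) to interact with the same bump, and that with $1-o(1)$ probability at most one query point even lies in any particular bump's footprint; then the answers on the shell-points become (essentially) a product distribution of independent bits whose marginals match between $\Dyes$ and $\Dno$, while answers on non-shell points are deterministic and identical. A union bound over the $\binom m2$ pairs of query points — which is where the $2^{\Omega(n^{1/4})}$ threshold enters, since each ``collision'' event must have probability roughly $2^{-\Omega(n^{1/4})}$ given the geometry of caps on $\sqrt n\cdot\S^{n-1}$ — closes the bound. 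I would prove $\eps_2$-farness of the $\Dno$ sets via a Gaussian-isoperimetry / surface-area argument (a set with many separated constant-width dents all over a sphere of radius $\sqrt n$ has boundary ``too large,'' so any convex body must differ from it on a constant fraction of Gaussian mass), probably reusing a Nazarov-type bound on Gaussian surface area of convex sets, and $\eps_1$-closeness of the $\Dyes$ sets by direct volume estimates.

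The main obstacle I expect is condition (b)–(c): making the per-query marginal \emph{and} the near-independence of the answer vector hold simultaneously while still forcing the $\Dno$ object to be $\eps_2$-far from \emph{all} convex sets (not merely from the base ball). These pull in opposite directions — far-from-convex wants dense, geometrically rigid structure in the shell, while fooling queries wants the structure to be sparse and ``locally invisible'' — so the delicate part is tuning the bump size, bump count, shell width, and the coupling between $\Dyes$ and $\Dno$ (so that each query's bit has an identical conditional law given its location) so that all four requirements hold at the target query bound $2^{\Omega(n^{1/4})}$. Controlling the $o(1)$ error in $\dtv$ after conditioning on the random rotation — in particular ruling out the bad event that a single query point sits near the \emph{edge} of a bump, where the $\Dyes$ and $\Dno$ marginals could differ — is the technical crux, and I anticipate it requires a careful ``curb''/margin region around each bump that is itself Gaussian-negligible even summed over all bumps and all queries.
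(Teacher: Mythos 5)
Your high-level plan — hide the yes/no difference inside many random ``caps'' near the boundary sphere of $\Ball(\sqrt{n})$, match single-query marginals, and union-bound over pairs of queries — is the same skeleton as the paper's proof, which uses a Nazarov body (an intersection of $N=2^{\sqrt n}$ random halfspaces at depth $\approx n^{1/4}$) and plants the distinguishing structure only in the \emph{uniquely violated} regions $\bU_i$. But there are three concrete gaps. First, you never specify a mechanism that makes a \emph{single} query's answer have the same law under $\Dyes$ and $\Dno$: if $\Dno$ carves a bump out and $\Dyes$ fills it in, one query inside a bump already distinguishes. The paper fixes this by flipping an independent coin $\bP$ for each region $\bU_i$ in \emph{both} distributions and making the two distributions differ only in the finer pattern within a region, so each bit is marginally Bernoulli$(1/2)$ either way. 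Second, and more seriously, your collision analysis only covers far-apart query pairs. An adversary can place two queries at distance $0.001$; such a pair lands in the \emph{same} bump with constant probability, so ``no two queries interact with the same bump'' is simply false for adversarial $Q$. The paper's resolution is an extra one-dimensional random \emph{action direction}: the function's value on $\bU_i$ depends on the projection $x_{\bA}$ onto that direction, partitioned into $\Top/\curb/\road/\curb/\Bottom$, and two queries reveal a difference only if their action-projections fall in \emph{distinct} coarse regions. For $\|x-y\|\le cn^{3/8}$ the projection of $x-y$ onto a Haar-random direction exceeds the constant curb width with probability only $e^{-\Omega(n^{1/4})}$ (spherical-cap bound with $t\approx n^{1/8}$), and for $\|x-y\|>cn^{3/8}$ a bivariate Gaussian tail bound with correlation $1-\Omega(n^{-1/4})$ shows they rarely share a $\bU_\ell$. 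This dichotomy — not the count or measure of the caps, which is $2^{\sqrt n}$, not $2^{n^{1/4}}$ — is where the exponent $n^{1/4}$ actually comes from; your proposal attributes it to cap geometry alone, which does not handle the near-pair case.

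Third, your farness argument for $\Dno$ via Gaussian surface area does not go through as stated: large surface area does not imply constant symmetric-difference distance to convexity (a set of tiny measure can have enormous surface area yet be close to the empty set). The paper instead proves farness \emph{line by line}: along each line in the action direction through a uniquely violated point, the no-function realizes a ``$1\,0\,1$'' pattern with constant mass in each piece, and a convex set restricted to a line is an interval, so a constant fraction of mass must change; this is exactly what the $\road$ region is engineered to produce. Relatedly, for the yes-case to be $\eps_1$-close to convex you need the multiply-violated regions $\bigcup_{|T|\ge 2}\bF_T$ (which are forced to $0$ in both distributions) to have much smaller measure than $\bigsqcup_i\bU_i$; this is a quantitative property of the Nazarov construction (\Cref{lemma:flaps-vs-dog-ears}) that your generic ``random bumps of negligible measure'' design does not automatically provide. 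You correctly flag the marginal-matching and curb issues as the crux, but the proposal as written is missing the action-direction device that simultaneously resolves the close-pair problem, supplies the farness proof, and sets the $2^{\Omega(n^{1/4})}$ threshold.
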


Returning to the standard testing model, our final result gives a polynomial lower bound for two-sided non-adaptive testers:

\begin{restatable}[Two-sided non-adaptive lower bound]{theorem}{maintheoremone}
\label{thm:two-sided}
%\begin{theorem}  
For any %small 
constant $c>0$, there is a constant $\eps=\eps_c > 0$ such that any non-adaptive $\eps$-tester for convexity over $N(0,I_n)$ (which may make two-sided errors) must use at least $n^{1/4 - c}$ queries.
\end{restatable}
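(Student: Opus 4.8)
The plan is to prove \Cref{thm:two-sided} by Yao's minimax principle: it suffices to construct a distribution $\Dyes$ supported on indicators of convex sets and a distribution $\Dno$ supported on sets that are $\eps$-far from convex such that, for \emph{every} fixed multiset $Q\subseteq\R^n$ of $q:=n^{1/4-c}$ points, the answer vector $(\boldf(x))_{x\in Q}$ has almost the same distribution (total variation $o_q(1)$) whether $\boldf\sim\Dyes$ or $\boldf\sim\Dno$. Averaging over the tester's internal randomness fixes $Q$, and a total-variation gap smaller than the tester's success margin then rules out any non-adaptive two-sided $\eps$-tester making $q$ queries.

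For the construction I would confine all the randomness to a thin spherical shell and exploit the extremality of the sphere. Pick $R=\Theta(\sqrt n)$ so that $B:=B(0,R)$ has constant Gaussian mass; since $\partial B$ has Gaussian surface area $\Theta(n^{1/4})$ (and, by Nazarov, this is the maximum order for convex bodies), the shell $\Shell:=\{x:\|x\|\in[R,R+w]\}$ has Gaussian mass $\Theta(n^{1/4}w)$, so we set $w=\Theta(\eps\, n^{-1/4})$ to make $\Shell$ have mass $\Theta(\eps)$. Partition $\Shell$ into cells indexed by a net of directions on $\S^{n-1}$ at the \emph{critical} angular scale $\rho=\Theta(\sqrt{w/R})$ at which a cell-sized perturbation of amplitude $w$ sits exactly on the boundary between preserving and destroying convexity. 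Every set drawn from either distribution agrees with $B$ on $\{\|x\|<R\}$ and with the empty set on $\{\|x\|>R+w\}$; they differ only in how the cells of $\Shell$ are labeled. A draw from $\Dyes$ labels the cells according to a random but \emph{convex-consistent} rule (e.g.\ $B$ together with a random convex outward bulge inside $\Shell$, gentle enough to stay convex), while a draw from $\Dno$ uses a rule designed to match the convex rule on every small collection of cells yet to be globally incoherent, so that with high probability the set has a large convex-hull gap.

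There are two things to establish. First, $\Dno$ is $\eps$-far from convex with high probability: I would argue that any convex $K$ disagrees with a typical no-instance $K'$ on $\Omega(\eps)$ mass, by showing that convexity forces $K$ to ``bridge over'' a constant fraction of the perturbed cells — if $K$ agreed with $K'$ on the $1$-labeled cells surrounding such a cell it would be dragged across it — and that these forced disagreements tile a constant fraction of $\Shell$ and so sum to $\Omega(\eps)$; one also checks that the cheaper-looking alternatives (deleting many $1$-cells, or passing to a much smaller or much larger convex body) each already cost $\Omega(\eps)$ against the inner ball or the outer region, using concentration over the random labeling. Second, and this is the crux: the total-variation bound. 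Outside $\Shell$ the answers are deterministic and identical; for queries inside $\Shell$, the two labelings can be distinguished only through a sub-configuration of $Q$ that fails to be in convex position after labeling — in the simplest form a collinear triple $x,m,y$ with $m=(x+y)/2$, $\boldf(x)=\boldf(y)=1$, $\boldf(m)=0$ — because any labeling of points in convex position is realizable by some convex set and hence equally likely under the two distributions (once the convex rule of $\Dyes$ has been designed to match the incoherent rule of $\Dno$ on all small cell-sets). A short geometric calculation shows such a triple can be informative only if $x,y$ lie near the outer radius of $\Shell$ and are angularly within $\Theta(\rho)$ of each other (so the midpoint drops in radius by $\approx w$); one then union-bounds over the $O(q^3)$ relevant sub-configurations of $Q$, bounding for each the probability over the no-instance's random labels that it is informative, and shows the total is $o(1)$ precisely when $q\le n^{1/4-c}$.

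The main obstacle is the crux, and the delicate point there is that the tester may \emph{cluster} its queries arbitrarily: it can place all $q$ points at the outer radius of $\Shell$ inside a single angular cap of size $\rho$, making every pair ``angularly close enough,'' so one cannot appeal to the near-orthogonality of random directions. The resolution has to be that a cluster of $\le q$ points still cannot reliably produce an informative triple, because the fine structure of the no-instance within a cell (the location and profile of the perturbation, randomized at scale $\lesssim\rho$) forces ``covering'' all straddling triples inside a cap to cost more than $n^{1/4-c}$ points; making this into a clean union bound over sub-configurations, together with engineering the random convex rule of $\Dyes$ so that it genuinely matches $\Dno$ on every convex-position sub-configuration of size $\le q$, is the technical heart — and it is exactly this ``matching of all small marginals by an honestly convex family'' that pins the achievable bound to $\Theta(n^{1/4})$ rather than something larger.
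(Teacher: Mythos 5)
Your proposal takes a completely different route from the paper, which constructs the yes- and no-distributions as degree-2 polynomial threshold functions $\{\sum_i \bu_i(\ba^{(i)}\cdot x)^2\le\mu\}$ versus $\{\sum_i \bv_i(\ba^{(i)}\cdot x)^2\le\mu\}$ with coefficients drawn from two moment-matching distributions (one supported on nonnegative values, one with negative mass), and proves indistinguishability by a Lindeberg replacement argument with a product mollifier. But as written your argument has a genuine gap at exactly the point you yourself call ``the crux.'' For a two-sided tester, indistinguishability requires that the full joint distribution of the $q$ answers be $o(1)$-close in total variation under $\Dyes$ and $\Dno$; it is not enough that no queried sub-configuration certifies non-convexity. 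Your reduction to ``informative triples'' rests on the claim that any labeling of points in convex position is realizable by some convex set ``and hence equally likely under the two distributions'' --- realizability does not imply equal probability. The entire difficulty of the theorem is to exhibit a distribution over convex sets and a distribution over far-from-convex sets whose marginals on every $q$-point query set agree, and your proposal never constructs these distributions: the ``random convex-consistent rule,'' the ``globally incoherent rule,'' and the mechanism by which they ``match on all small cell-sets'' are all left unspecified, and you explicitly flag the clustered-query case as unresolved. Without that construction there is nothing to verify, and designing it is essentially the whole problem.

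A secondary issue is the distance bound for $\Dno$. A no-instance in your construction differs from the convex ball $B$ only inside the shell, so its distance to convexity is automatically at most the mass of the perturbed cells; showing it is $\Omega(\eps)$ requires showing that no convex set (including $B$ itself and small perturbations of it) absorbs more than a constant fraction of the bumps, which again depends on the unspecified labeling rule. The paper avoids both difficulties at once: its no-sets with a constant fraction of negative coefficients are shown to be $\Omega(1)$-far from convex by a radial, line-by-line argument (\Cref{lem:nofarfromconvex}), and indistinguishability follows from matching the first $\ell\approx 1/c$ moments of the coefficient distributions together with a mollifier/Taylor estimate --- the exponent $1/4$ comes out of balancing the anti-concentration term $q\tau$ against the Taylor error, not from the Gaussian surface area of the sphere as in your heuristic.
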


Since $q$-query non-adaptive lower bounds imply $(\log q)$-query adaptive lower bounds,  Theorem~\ref{thm:two-sided} implies an $\Omega(\log n)$ two-sided adaptive convexity testing lower bound. (This is in contrast to the $n^{\Omega(1)}$-query lower bound against one-sided adaptive testers given by~Theorem~\ref{thm:one-sided}.)

\subsection{Techniques}

Our lower bounds rely on a wide range of techniques and constructions, and draw inspiration from prior work on \emph{monotonicity testing} of Boolean functions $f\isazofunc$~\cite{CST14,BB16,CWX17stoc,PallavoorRW22,chen2024mildly}.\footnote{Recall that a Boolean function $f\isazofunc$ is \emph{monotone} if whenever $x, y\in\zo^n$ satisfy $x_i \leq y_i$ for $i \in [n]$, we have $f(x) \leq f(y)$.} Indeed, a conceptual contribution of our work is to highlight a (perhaps unexpected) connection between ideas in monotonicity testing and convexity testing.  
Our work thus adds to and strengthens a recently emerging analogy between monotone Boolean functions and high-dimensional convex sets~\cite{DNS21itcs,DNS22,DNS23-polytope}.
Establishing this connection requires a number of technical and conceptual innovations for each of our main results; we highlight some of the key ideas below.

\subsubsection{The Nazarov Body}
A central role in our lower bounds in Theorem~\ref{thm:one-sided} and Theorem~\ref{thm:tolerant} is played by the so-called ``Nazarov body"~\cite{Nazarov:03,KOS:08,CFSS17}. This is a randomized construction of a convex set $\bB$ which is a slight variation of a construction originally due to Nazarov~\cite{Nazarov:03}, which is essentially as follows:  we choose $N \approx 2^{\sqrt{n}}$ halfspaces $\bH_1, \ldots, \bH_N$ in the space $\mathbb{R}^n$, where each halfspace $\bH_i$ is a {\em random halfspace} at a distance roughly $n^{1/4}$ from the origin.  In more detail, each halfspace is $\bH_i (x):= \Indicator\cbra{\bg^i \cdot x \geq r}$ where $r \approx n^{3/4}$ and $\bg^i$ is drawn from $N(0,I_n)$.   The convex set $\bB$ is obtained by taking the intersection of all $N$ halfspaces with $\Ball(\sqrt{n})$, the origin-centered ball of radius $\sqrt{n}.$\footnote{We remark that the original construction of \cite{Nazarov:03} differs from our construction in a number of ways: the distribution over random halfspaces is slightly different, and the body is not intersected with $\Ball(\sqrt{n}).$  For technical reasons, our specific construction facilitates our lower bound arguments.}
The exact parameters $r$ and $N$ are set carefully so that with high probability the ``Gaussian volume'' of $\bB$, i.e.~$\Pr_{\bg \sim N(0,I_n)} [\bg \in \bB],$ is a constant bounded away from $0$ and $1$. 

Note that for the Nazarov body $\bB$ and any point $x \in \Ball(\sqrt{n}) \setminus \bB$, there is a non-empty subset $J_x \subseteq [N]$ such that $j \in J_x$ iff $\bH_j(x)=0$, i.e., the point $x$ {\em violates} the halfspace $\bH_j$ for all $j \in J_x$. Now, define a point $x \in \Ball(\sqrt{n}) \setminus \bB$ to lie in the set $\bU_i$ if the set $J_x =\{i\}$, so $x \in \Ball(\sqrt{n})$ lies in $\bU_i$ if $\bH_i$ is the unique halfspace violated by $x$. 
The set $\bU := \cup_{i \in [N]} \bU_i$ is thus the set of ``uniquely violated points'' in $\Ball(\sqrt{n}).$
A crucial feature of the Nazarov construction is that the Gaussian volume of the set of points which are uniquely violated, i.e., Gaussian volume of the set $\bU$, is ``large'' compared to the 
Gaussian volume of the set $\Ball(\sqrt{n}) \setminus \bB$
%in fact at least $1/2$ of the Gaussian volume of the set $\overline{K}$ 
(see \Cref{lemma:flaps-vs-dog-ears} for the precise statement).  
%\rnote{Changed this slightly from what it had been - note that the 

 %``Points which are uniquely falsified by a halfspace form a constant fraction of Gaussian space''

The original construction of Nazarov may be viewed as a Gaussian-space analogue of \emph{Talagrand's random CNF formula} \cite{Talagrand:96} (see \cite{DNS23-polytope} for a discussion of this connection).  Talagrand's random CNF has been very useful in lower bounds for monotonicity testing over the Boolean hypercube, as demonstrated by \cite{BB16,CWX17stoc,chen2024mildly}. We use our modified Nazarov body to obtain new lower bounds for convexity testing, as described below.

\subsubsection{One-Sided Adaptive Lower Bound} Recall that a \emph{one-sided} tester always outputs ``accept'' on convex sets and outputs ``reject'' on far-from-convex sets with probability at least $2/3$ --- this requirement implies that the tester rejects only if a certificate of non-convexity is found (i.e.~a set of queries $x_1,\dots, x_t$ which lie in the body, and a query $y$ in the convex hull of $x_1,\dots, x_t$ which is not in the body). In order to argue a $q$-query lower bound, it suffices to (1) design a distribution $\Dno$ over sets which are far-from-convex with high probability, and (2) argue that no $q$-query deterministic algorithm can find a certificate of non-convexity.

The key will be to ``hide'' the non-convexity within the uniquely violated sets of the Nazarov body. Consider working in $\R^{2n}$ and first randomly draw an $n$-dimensional ``control subspace'' $\bC$ and the orthogonal $n$-dimensional ``action subspace'' $\bA$; we embed the $n$-dimensional Nazarov body in the control subspace $\bC$. A point $x \in \R^{2n}$ lies in our (non-convex) body iff:
\begin{flushleft}\begin{itemize}
\item It has Euclidean norm at most $\sqrt{2n}$, and in addition, $x_{\bC}$ (the projection onto the control subspace) has norm at most $\sqrt{n}$; and
\item The point $x_{\bC}$ lies within an $n$-dimensional Nazarov body that we randomly sample within the control subspace $\bC$; \emph{or}, for every $j \in [N]$ where $H_j(x_{\bC}) = 0$, the projection $x_{\bA}$ on the action subspace lies outside of a ``strip'' of width $1$ along a randomly sampled direction $\bv^j$ in the action subspace. (See~\Cref{subsec:adapt-dno}).
\end{itemize}\end{flushleft}
Consider a line through a point $x \in \R^{2n}$ in direction $\bv^j$, for $j \in [N]$ such that $x_{\bC}$ lies in the uniquely violated region $\bU_j$ and $x_{\bA}$ lies inside the strip along $\bv^j$ (and therefore \emph{outside} our body). Then, as the line proceeds out from $x$ in directions $\bv^j$ and $-\bv^j$, it remains in the uniquely violated region $\bU_j$ (since $\bv_j$ is orthogonal to $\bC$) but exits the strip, thereby entering the body and exhibiting non-convexity. By design, the uniquely violated regions and the strips are large enough to constitute a constant fraction of the space, giving the desired distance to convexity (\Cref{lem:distance}). Intuitively, detecting non-convexity is hard because the algorithm does not know $\bC$, the halfspace $\bH_j(\cdot)$, or the direction $\bv^j$. In fact, we show that an algorithm which makes few queries cannot find, with probability at least $2/3$, two points $x, z$ outside the same halfspace $\bH_j(\cdot)$ such that $x$ lies inside and $z$ outside the strip in direction $\bv^j$.

Roughly speaking, the proof proceeds as follows. First, we show that, except with $o(1)$ probability, any two queries $x, z$ which are far (at distance at least $1000\sqrt{q} n^{1/4}$) cannot lie outside the same halfspace $\bH_j(\cdot)$ while having projections onto $\bC$ with norm at most $\sqrt{n}$ (\Cref{lem:ev1}), and moreover it is extremely unlikely for a query to be falsified by more than $q$ halfspaces (this follows from a calculation in~\Cref{lemma:small-vol-high-degree}). The argument is geometric in nature and is given in~\Cref{sec:proofev1}, and essentially argues that it is unlikely, since the algorithm does not know the control subspace $\bC$ or the vector defining the halfspace, that two far-away queries happen to uniquely falsify the same halfspace. 

On the other hand, consider the halfspaces which are falsified by some query (and notice there are most $q^2$ such halfspaces, since each query is falsified by at most $q$ halfspaces). Since all such queries are within distance $1000\sqrt{q}n^{1/4}$ of each other, the projection of any two such queries onto the direction defining the strip is a segment of length $O(\sqrt{q}/n^{1/4})$ with high probability, and the precise location of this segment is uniform(-like, from Gaussian anti-concentration) (see~\Cref{section:final}). Therefore, the probability of any particular segment of length $O(\sqrt{q} / n^{1/4})$ which goes from inside to outside the  strip of width $1$ is roughly $O(\sqrt{q} / n^{1/4})$. We take a union bound over the $q^2$ possible halfspaces, each containing at most $q$ queries which define segments which may ``cross'' the strip with probability $O(\sqrt{q} / n^{1/4})$, for a total probability of $O(q^{3.5} / n^{1/4})$. Since this must be at least $2/3$ for the algorithm to succeed, this gives the $n^{\Omega(1)}$ lower bound.

\medskip

\subsubsection{Two-Sided Non-Adaptive Tolerant Lower Bound}

Continuing the analogy with monotonicity testing lower bounds, the proof of Theorem~\ref{thm:tolerant} is inspired by recent lower bounds on tolerant monotonicity testing, namely \cite{PallavoorRW22} and the follow-up work of~\cite{chen2024mildly}. The basic idea of \cite{PallavoorRW22}
 is to construct a family of functions by randomly partitioning  the space of variables into {\em control variables} and {\em action variables}:  if the control variables are not balanced, i.e.~there are more 1s than 0s (or vice-versa), then the function is trivially set to $1$ (resp. to $0$) both for $\boldf \sim \Dyes$ and for $\boldf \sim \Dno$. If the control variables are balanced, 
 then, at a high level, \begin{enumerate}
 \item for $\boldf \sim \Dyes$ the function on the action variables is close to monotone; 
 \item for $\boldf \sim \Dno$ the function on the action variables is far from monotone. 
 \end{enumerate} 
 Roughly speaking, the analysis in \cite{PallavoorRW22} shows that unless the algorithm queries two points such that  both these points (a) have the same setting of the control variables, and (b) the control variables are balanced, the algorithm cannot distinguish between $\boldf \sim \Dyes$ and $\boldf \sim \Dno$. As the control and action variables are partitioned at random, it turns out that satisfying both (a) and (b) is not possible for a non-adaptive algorithm unless the algorithm makes $2^{\Omega(\sqrt{n})}$ many queries. In particular, \cite{PallavoorRW22} shows that distinguishing between functions which are $c_1/\sqrt{n}$-close to monotone versus $c_2/\sqrt{n}$-far from monotone (where $c_2>c_1>0$) cannot be done with  $2^{o(\sqrt{n})}$ queries. 
 
  The main modification in \cite{chen2024mildly} vis-a-vis \cite{PallavoorRW22} is the following: one can think of the balanced setting of the control variables in the construction described above as the ``minimal satisfying assignments" of the Majority function. In \cite{chen2024mildly}, the Majority function is replaced by  Talagrand's random monotone DNF~\cite{Talagrand:96}, a well-studied function in Boolean function analysis and related areas~\cite{mossel2003noise,o2007approximation}. 
  The specific properties of Talagrand's monotone DNF allows \cite{chen2024mildly} to obtain a $2^{n^{1/4}}$ query lower bound for non-adaptive testers where the functions in $\Dyes$ are $c_1$-close to monotone and functions in $\Dno$ are $c_2$-far from monotone, where $c_2>c_1$ are positive constants.
  
For Theorem~\ref{thm:tolerant}, the goal is to obtain lower bounds for tolerant convexity testing rather than monotonicity testing. Towards that goal, let us assume that the ambient space is $\mathbb{R}^{n+1}$. We choose a random $n$-dimensional subspace $\bC$ and think of it as the {\em control subspace}, and we view its one-dimensional orthogonal complement as the {\em action subspace} $\bA$ (analogous to the notion of control and action variables in \cite{PallavoorRW22, chen2024mildly}).\footnote{We remark that in the one-sided adaptive lower bound described above, it would not have been possible to use a one-dimensional action subspace because an adaptive algorithm would be able to detect that ``global structure,'' which is shared across all the $\bU_i$'s; this is why the dimension of the action subspace $\bA$ was $n$ in the earlier construction, and there was a different random ``action direction'' $\bv^j$ from $\bA$ for each $j \in [N]$ in the earlier construction.}  We embed the Nazarov body $\bB$ (described earlier) in the control subspace. We define the $\Dyes$ and $\Dno$ distributions in analogy with \cite{chen2024mildly}, roughly as follows: for $x \in \R^{n+1}$,
  \begin{enumerate}
  \item If the projection $x_C$ does not lie in the uniquely violated set of $\bB$, then $\boldf(x)$ is defined the same way for $\boldf \sim \Dyes$ and $\boldf \sim \Dno$;
  \item If the projection $x_C$ lies in the uniquely violated set, then $\boldf(x)$ is set differently for $\boldf \sim \Dyes$ and $\boldf \sim \Dno$ (depending on the projection $x_{\bA}$ to the action subspace). In particular, for $\boldf \sim \Dyes$, $\boldf$ is defined in such a way that $\boldf^{-1}(1)$ is close to a convex set, and for $\boldf \sim \Dno$, $\boldf$ is defined in such a way that $\boldf^{-1}(1)$ is far from every convex set. 
This crucially uses the fact that the Gaussian volume of $\bU$ is ``large'' compared to the 
Gaussian volume of the set $\Ball(\sqrt{n}) \setminus \bB$, as mentioned in our earlier discussion of the Nazarov body.
\end{enumerate} 
At a high level, the indistinguishability argument showing that $q = 2^{\Omega(n^{1/4})}$ non-adaptive queries are required to distinguish $\boldf \sim \Dyes$ from $\boldf \sim \Dno$ is a case analysis based on the distance between any given pair of query vectors $x$ and $y$ (see \Cref{lem:xyfar} and \Cref{lem:xynear}), combined with a union bound over all ${q \choose 2}$ possible pairs of query vectors.  Roughly speaking, if $\|x-y\|$ is small, then the way that $\boldf$ depends on the projection to the action subspace makes it very unlikely to reveal a difference between $\boldf \sim \Dyes$ and $\boldf \sim \Dno$. On the other hand, if $\|x-y\|$ is large, then it is very unlikely for $x$ and $y$ to lie in the same set $\bU_i$, which must be the case for the pair $x,y$ to reveal a difference between $\boldf \sim \Dyes$ and $\boldf \sim \Dno$.  
There are many technical issues and geometric arguments required to carry out this rough plan, but when all the dust settles the argument gives a $2^{\Omega(n^{1/4})}$ lower bound for tolerant convexity testing.   

 % -- in particular, for $x \in \mathbb{R}^{n+1}$, if the projection $x_C$
   
 %Replacing the Majority function by the  Talagrand's DNF function -- a well-studied function in Boolean function analysis including its applications in property testing -- 

 %However, if the control variables are balanced, then for the ``yes distribition" $\Dyes$, the function on the action variables is monotone. On the other hand, for the ``no distribution" $\Dno$, the function on the action variables is constant far from monotone. We also note 

%In contrast to Theorem~\ref{thm:one-sided}, to prove Theorem~\ref{thm:tolerant}, the distribution $\Dyes$ need not be over convex sets. Instead, all we need is that with non-negligible probability, $f \sim \Dyes$ is noticeably closer to convex sets vis-a-vis $f \sim \Dno$. 

%Prose to give idea/techniques for Theorem~\ref{thm:tolerant}. Mention that here, in contrast with the one-sided section, the yes-distribution does not need to actually be over convex sets - we just need that (some non-negligible fraction of) the yes-distribution is on sets that are closer to convex than a non-negligible fraction of the no-distribution.  Another thing to mention is that since the algorithm is non-adaptive, we can use the same hidden action direction for all of the flaps.

\subsubsection{Two-Sided Non-Adaptive Bound} Our approach to prove Theorem~\ref{thm:two-sided} is inspired by the lower bounds of \cite{CDST15} on non-adaptive monotonicity testing.
As in most property testing lower bounds for non-adaptive algorithms, the high-level approach is to use Yao's principle; we follow \cite{CDST15} in that we use a suitable high-dimensional central limit theorem as the key technical ingredient for establishing indistinguishability between the yes- and no- distributions.  In \cite{CDST15} both the yes- and no- functions are linear threshold functions over $\bn$, but since any linear threshold function is trivially a convex set, the \cite{CDST15} construction cannot be directly used to prove a convexity testing lower bound.  Instead, in order to ensure that our no- functions are both indistinguishable from the yes- functions and are far from every convex set, we work with \emph{degree-2 polynomial threshold functions (PTFs)} over $\R^n$ rather than linear threshold functions over $\bn$.  At a high level, degree-2 PTFs of the form $\sum_i \lambda_i x_i^2$ where each $\lambda_i$ is positive (note that any such PTF is a convex set) play the ``yes-function'' role that monotone LTFs play in the \cite{CDST15} argument, and degree-2 PTFs of the form $\sum_i \lambda'_i x_i^2$ where a constant fraction of the $\lambda'_i$'s are negative play the ``no-function'' role that far-from-monotone LTFs play in the \cite{CDST15} argument.  We show that having a constant fraction of the $\lambda'_i$'s be negative is sufficient, in the context of our construction, to ensure that no-functions are far from convex, and we show that the multi-dimensional central limit theorem used in \cite{CDST15} can be adapted to our context to establish indistinguishability and thereby prove the desired lower bound.

\subsection{Related Work}
\label{sec:related-work}

%{\bf Covered:} \cite{PRR03} \cite{PRV18} \cite{BBB20} \cite{BB20} 
%
%\cite{RademacherVempala05} \cite{CFSS17} 
%

A number of earlier papers have considered different aspects of convexity testing.
One strand of work deals with testing convexity of (real-valued) functions $f \colon [N] \to \R$, where convexity means the second derivative is positive.\footnote{These works study discrete domains, where a discrete derivative is used.} This study was initiated by Parnas et al.~\cite{PRR03}, %who gave an $O(\log(N)/\eps)$-query algorithm for testing convexity of a real-valued function over the discrete line $[N]=\{1,\dots,N\}.$
%This line of work was 
and extended by Pallavor et al.~\cite{PRV18}, who gave an improved result parameterized by the image size of the function being tested; by Blais et al.~\cite{BRY14b}, who gave lower bounds on testing convexity of real-valued functions over the hypergrid $[N]^d$; and by Belovs et al.~\cite{BBB20}, who gave upper and lower bounds on the number of queries required to test convexity of real-valued functions over various discrete domains including the discrete line, the ``stripe'' $[3] \times [N]$, and the hypergrid $[N]^d.$ (See also the work of Berman et al.~\cite{BRY14}, who investigated a notion of ``$L_1$-testing'' real-valued functions over $[N]^d$ for convexity.)

A different body of work, which is closer to this paper, deals with testing convexity of \emph{high-dimensional sets} (equivalently, Boolean indicator functions).  The earliest work we are aware of along these lines is that of Rademacher and Vempala \cite{RademacherVempala05}.\footnote{The study of convexity testing in two dimensions was initiated in earlier work of Raskhodnikova \cite{Raskhodnikova03} for the domain $[N]^2$, and has since been extended to sample-based testing \cite{BMR16}, testing over the continuous domain $[0,1]^2$ \cite{BMR19}, and tolerant testing \cite{BMR22}; see also \cite{B-EF18}.} In their formulation, a body $K \subseteq \R^n$ is $\eps$-far from being convex if $\Leb(K \ \triangle \ C) \geq \eps \cdot \Leb(K)$ for every convex set $C$, where $\Leb(\cdot)$ denotes the Lebesgue volume (note that, in contrast, our model uses absolute volume under the Gaussian measure, rather than relative volume under the Lebesgue measure).  Moreover, \cite{RademacherVempala05} allow the testing algorithm access to a black-box membership oracle (as in our model) as well as a ``random sample'' oracle which can generates a uniform random point from $K$ (for testing with respect to relative measures, such an oracle is necessary). The main positive result of \cite{RademacherVempala05} is a $(cn/\eps)^{n}$ sample- and query- algorithm for testing convexity in their model.  \cite{RademacherVempala05} also give an exponential lower bound for a simple ``line segment tester,'' which checks whether a line segment connecting two (uniformly random) points from the body is contained within the body. This lower bound was strengthened and extended to an exponential lower bound for a ``convex hull tester'' in recent work of Blais and Bommireddi \cite{BB20}.  We note that the negative results of \cite{RademacherVempala05} and \cite{BB20}, while they deal with natural and interesting candidate testing algorithms, only rule out very specific kinds of testers and do not provide lower bounds against general testing algorithms in their framework.

The most closely related work for us is the study of sample-based testing algorithms for convexity under the $N(0,I_n)$ distribution \cite{CFSS17}. As was mentioned earlier, \cite{CFSS17} gave a $2^{\tilde{O}(\sqrt{n})/\eps^2}$-sample algorithm for convexity testing and showed that any sample-based tester must use $2^{\Omega(\sqrt{n})}$ samples; we remark that lower bounds for sample-based testers do not have any implications for query-based testing.\footnote{\cite{CFSS17} also gave a $2^{O(n \log(n/\eps))}$-sample one-sided algorithm, which was generalized to testing under arbitrary product distributions by \cite{HarmsYoshida22}.}  Finally, another closely related paper is the recent work of Blais et al.~\cite{BBH24} which gives nearly matching upper and lower bounds of $3^{\tilde{\Omega}(\sqrt{n})}$ queries for one-sided non-adaptive convexity testing over $\{-1,0,1\}^n$.  \cite{BBH24} cites the high-dimensional Gaussian testing problem as motivation for their study of the ternary cube, and asks ``Can queries improve upon the bounds of \cite{CFSS17,HarmsYoshida22} for testing convex sets with samples in $\R^n$ under the Gaussian distribution?'' (Question 1.15 of~\cite{BBH24}).  Our work makes progress on this question by establishing the first \emph{lower} bounds for query-based testing under the Gaussian distribution.

%\bigskip \bigskip

%Papers to discuss:  

%Maybe \href{https://arxiv.org/abs/0903.2634v2}{https://arxiv.org/abs/0903.2634v2} - Eldan poly number points volume convex

%\cite{DNS23-convex} - testing convex truncation

%\cite{Raskodnikova03}:  convexity of figures under unif dist; see also \cite{BMR19} 

%\cite{Dumitrescu03} - convexity of point sets? (check - related to \cite{CSZ00})

%\cite{FY11} - convexity of tree colorings

%\href{https://ieeexplore.ieee.org/stamp/stamp.jsp?arnumber=6875499}{https://ieeexplore.ieee.org/stamp/stamp.jsp?arnumber=6875499} - convexity testing over hypergrids
 
%!TEX root = ../ctlb.tex

\section{Preliminaries}
\label{sec:prelims}

We use boldfaced letters such as $\bx, \bX$, etc. to denote random variables (which may be real- or vector-valued; the intended type will be clear from the context). 
We write $\bx \sim \calD$ to indicate that the random variable $\bx$ is distributed according to probability distribution $\calD$. We will frequently identify a set $K\sse\R^n$ with its $0/1$-valued indicator function, i.e., $K(x)=1$ if $x\in K$ and $K(x)=0$ otherwise. 
We write $\ln$ to denote natural logarithm and $\log$ to denote base-two logarithm.

\subsection{Geometry}
\label{subsec:prelims-spherical-cap-bounds}

We write $\S^{n-1}$ for the unit sphere in $\R^n$, i.e.  $\S^{n-1} = \{x\in\R^n : \|x\| = 1 \}$ where $\|x\|$ denotes the $\ell_2$-norm of $x$.  
We write $\Ball(r) \sse \R^n$ to denote the $\ell_2$-ball of radius $r$ in $\R^n$, i.e. 
\[
	\Ball(r) := \cbra{ x \in \R^n : \|x\| \leq r }.
\]
We will frequently write $\Ball := \Ball(\sqrt{n})$. %Recall that $\Vol(\Ball) = 1/2 \pm o(1)$.
We recall the following standard bound on the volume of spherical caps (see e.g. Lemma~2.2 of \cite{ball1997elementary}):

\begin{lemma} \label{lem:spherical-cap}
For $0 \leq \eps < 1$, we have
$\Prx\sbra{\bu_1 \geq \eps} \leq e^{-n\eps^2/2}$,
where $\bu \sim \mathbb{S}^{n-1}$, i.e. $\bu$ is a Haar random vector drawn uniformly from the unit sphere $\S^{n-1}$.
\end{lemma}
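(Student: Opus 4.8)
The plan is to bound the cap measure directly by projecting onto the equatorial hyperplane $\{x_1 = 0\}$ and comparing $(n-1)$-dimensional integrals. Write $C_\eps := \{x \in \S^{n-1} : x_1 \geq \eps\}$, so that the quantity we must bound is $\sigma(C_\eps)/\sigma(\S^{n-1})$, where $\sigma$ is surface measure.

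First I would parametrize each open hemisphere $\{x \in \S^{n-1} : \pm x_1 > 0\}$ as a graph over the unit ball $B^{n-1} \subseteq \R^{n-1}$: writing $y = (x_2,\dots,x_n)$ we have $x_1 = \pm\sqrt{1-\|y\|^2}$, and the surface-area element pulls back to $(1-\|y\|^2)^{-1/2}\,dy$. Hence $\sigma(\S^{n-1}) = 2\int_{B^{n-1}}(1-\|y\|^2)^{-1/2}\,dy$, while — since $\eps \geq 0$ the cap lies entirely in the upper hemisphere and corresponds to $\{y : \|y\|^2 \leq 1-\eps^2\} = \sqrt{1-\eps^2}\,B^{n-1}$ — we get $\sigma(C_\eps) = \int_{\sqrt{1-\eps^2}B^{n-1}}(1-\|y\|^2)^{-1/2}\,dy$. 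Next I would substitute $y = \sqrt{1-\eps^2}\,z$ in this last integral, which produces a Jacobian factor $(1-\eps^2)^{(n-1)/2}$ and leaves $\int_{B^{n-1}}(1-(1-\eps^2)\|z\|^2)^{-1/2}\,dz$; since $1-(1-\eps^2)\|z\|^2 \geq 1-\|z\|^2$ for every $z \in B^{n-1}$, this is at most $\int_{B^{n-1}}(1-\|z\|^2)^{-1/2}\,dz = \tfrac12\sigma(\S^{n-1})$. Combining the three displays gives $\Pr_{\bu}[\bu_1 \geq \eps] \leq \tfrac12(1-\eps^2)^{(n-1)/2}$.

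It then remains to verify the elementary inequality $\tfrac12(1-\eps^2)^{(n-1)/2} \leq e^{-n\eps^2/2}$ for all $0 \leq \eps < 1$. Taking logarithms and using $\ln(1-\eps^2) \leq -\eps^2$, the left-hand side is at most $-\ln 2 - (n-1)\eps^2/2$, so the claim reduces to $\eps^2/2 \leq \ln 2$, which holds since $\eps < 1 < \sqrt{2\ln 2}$.

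There is no genuine obstacle here — the argument is routine — but the one point that needs care is making the measure comparison valid uniformly for $\eps$ close to $1$: the naive approach of containing the solid cone over $C_\eps$ inside the translated ball $\eps e_1 + \sqrt{1-\eps^2}\,\Ball(1)$ (and concluding $\Pr_{\bu}[\bu_1 \geq \eps] \leq (1-\eps^2)^{n/2}$) only works for $\eps \leq 1/\sqrt2$, since for larger $\eps$ the origin — which lies in the cone — escapes that ball. The projection computation above sidesteps this and covers the whole range $\eps \in [0,1)$. (For the regime $\eps \leq 1/\sqrt2$ that actually arises in our applications, the cone-in-ball containment does give the marginally cleaner bound $\Pr_{\bu}[\bu_1 \geq \eps] \leq (1-\eps^2)^{n/2} \leq e^{-n\eps^2/2}$ with no stray factor of $\tfrac12$.)
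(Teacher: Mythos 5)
Your proof is correct, and it is worth noting that the paper does not actually prove \Cref{lem:spherical-cap} at all --- it simply cites Lemma~2.2 of \cite{ball1997elementary}, whose classical proof is exactly the cone-in-ball containment you mention in your closing remark (compare the volume of the solid cone over the cap with that of the ball $\eps e_1 + \sqrt{1-\eps^2}\,\Ball(1)$ containing it). Your route is genuinely different: by writing each hemisphere as a graph over the equatorial ball $B^{n-1}$ with surface element $(1-\|y\|^2)^{-1/2}\,dy$, rescaling the cap's projection $\sqrt{1-\eps^2}\,B^{n-1}$ back to $B^{n-1}$, and using the pointwise comparison $1-(1-\eps^2)\|z\|^2 \geq 1-\|z\|^2$, you get the clean bound $\tfrac12(1-\eps^2)^{(n-1)/2}$ valid uniformly on all of $[0,1)$, and the final reduction to $\eps^2 \leq 2\ln 2$ is right. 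What this buys is a single self-contained computation with no case split at $\eps = 1/\sqrt{2}$; what it costs is the slightly weaker exponent $(n-1)/2$ in place of $n/2$, which your prefactor of $\tfrac12$ absorbs. Either argument suffices for every use of the lemma in the paper (all applications take $\eps = O(n^{-3/8})$ or similar, well below $1/\sqrt{2}$).
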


\subsection{Gaussian and Chi-Squared Random Variables}
\label{subsec:gaussian-tail-bounds}

For $\mu \in \R^n$ and $\Sigma \in \R^{n\times n}$, we write $N(\mu, \Sigma)$ to denote the $n$-dimensional Gaussian distribution centered at $\mu$ and with covariance matrix $\Sigma$. In particular, identifying $0\equiv 0^n$ and writing $I_n$ for the $n\times n$ identity matrix, we will denote the $n$-dimensional {standard} Gaussian distribution by $N(0, I_n)$. We write $\vol(K)$ to denote the Gaussian measure of a (Lebesgue measurable) set $K \subseteq \R^n$, i.e.  
\[\vol(K) := \Prx_{\bg \sim N(0,I_n)}[\bg \in K].\]  

We recall the following standard tail bound on Gaussian random variables:

\begin{proposition}[Theorem~1.2.6 of \cite{durrett_2019} or Equation~2.58 of \cite{TAILBOUND}] \label{prop:gaussian-tails}
	Let $\Phi: \R\to (0,1)$ denote the cumulative density function of the (univariate) standard Gaussian distribution, i.e. 
	\[
		\Phi(r) = \Prx_{\bg\sim N(0,1)}\sbra{\bg\leq r}.
	\] 
	%Let $\bg\sim N(0,1)$ be a one-dimensional Gaussian random variable.
	 Then {for all $r>0$,} we have
\[
\varphi(r)
\left({\frac 1 r} - {\frac 1 {r^3}} \right) \leq 1-\Phi(r) \leq
\varphi(r)
\left({\frac 1 r} - {\frac 1 {r^3}} + {\frac 3 {r^5}}\right)
\]
where $\phi$ is the one-dimensional standard Gaussian density which is given by 
\[\phi(x) := \frac{1}{\sqrt{2\pi}}e^{-x^2/2}.\]
\end{proposition}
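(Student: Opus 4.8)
The plan is to prove both inequalities simultaneously by repeated integration by parts on the Gaussian tail integral
\[
1 - \Phi(r) = \int_r^\infty \phi(t)\,dt,
\]
using the elementary identity $\phi'(t) = -t\,\phi(t)$, which lets one trade a factor of $t$ in an integrand against a derivative of $\phi$.

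First I would compute, for each odd $k \in \{1,3,5\}$, the derivative
$\tfrac{d}{dt}\bigl(-\phi(t)/t^k\bigr) = \phi(t)/t^{k-1} + k\,\phi(t)/t^{k+1}$, which is immediate from $\phi'(t) = -t\,\phi(t)$. Integrating this from $r$ to $\infty$ and using that $\phi(t)/t^k \to 0$ as $t \to \infty$ (since $\phi(t) = \tfrac{1}{\sqrt{2\pi}}e^{-t^2/2}$ decays faster than any polynomial), one obtains the three recursions
\[
\int_r^\infty \frac{\phi(t)}{t^{k-1}}\,dt \;=\; \frac{\phi(r)}{r^k} \;-\; k\int_r^\infty \frac{\phi(t)}{t^{k+1}}\,dt \qquad (k = 1,3,5),
\]
where for $k = 1$ the left-hand side is exactly $\int_r^\infty \phi(t)\,dt = 1 - \Phi(r)$.

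Chaining the $k=1$ and $k=3$ recursions produces the exact identity
\[
1 - \Phi(r) \;=\; \frac{\phi(r)}{r} - \frac{\phi(r)}{r^3} + 3\int_r^\infty \frac{\phi(t)}{t^4}\,dt .
\]
The lower bound is then immediate, since the remainder integral is strictly positive: $1 - \Phi(r) \ge \phi(r)\bigl(1/r - 1/r^3\bigr)$; and when $0 < r \le 1$ the right-hand side is non-positive, so the bound holds trivially there as well. For the upper bound I would substitute the $k=5$ recursion, $\int_r^\infty \phi(t)/t^4\,dt = \phi(r)/r^5 - 5\int_r^\infty \phi(t)/t^6\,dt \le \phi(r)/r^5$, into the exact identity, giving $1 - \Phi(r) \le \phi(r)\bigl(1/r - 1/r^3 + 3/r^5\bigr)$.

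There is essentially no obstacle here: this is a classical estimate and, once the integration-by-parts identity is in place, the whole argument is a two-line computation together with the trivial positivity of the tail integrals. The only points deserving a word of care are verifying that all boundary terms at $+\infty$ vanish (immediate from the Gaussian decay of $\phi$) and noting that the lower bound is only a meaningful assertion for $r > 1$, so one should remark that it holds vacuously for $0 < r \le 1$.
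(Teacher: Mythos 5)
Your proof is correct: the repeated integration by parts based on $\phi'(t)=-t\,\phi(t)$ yields the exact identity $1-\Phi(r)=\phi(r)\bigl(\tfrac1r-\tfrac1{r^3}\bigr)+3\int_r^\infty \phi(t)t^{-4}\,dt$, from which both bounds follow immediately, and this is precisely the classical argument behind the result the paper cites (it gives no proof of its own, deferring to \cite{durrett_2019,TAILBOUND}). The only minor remark is that your caveat about $0<r\le 1$ is unnecessary, since the identity and the positivity of the remainder already give the lower bound for all $r>0$.
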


It is well known that if $\bg\sim N(0,I_n)$, then $\|\bg\|$ is distributed according to the chi distribution with $n$ degrees of freedom, i.e. $\|\bg\| \sim \chi(n)$. 
{It is well known (see e.g.~\cite{Wiki-chisquare}) that the mean of the $\chi^2(n)$ distribution is $n$, the median is $n(1 - \Theta(1/n))$, and for $n\geq 2$ the probability density function is everywhere at most 1. We note that an easy consequence of these facts is that the origin-centered ball $\Ball(\sqrt{n})$ of radius $\sqrt{n}$ in $\R^n$ has $\Vol(B(\sqrt{n})) = 1/2 + o(1).$} 

We will require the following tail bound on $\chi^2(n)$ random variables:

\begin{proposition}[Section 4.1 of \cite{laurent2000}] \label{prop:chi-squared-tail}
	Suppose $\by\sim\chi^2(n)$. Then for any $t > 0$, we have 
	\begin{align*}
		\Prx_{\by\sim\chi^2(n)}\sbra{\by \geq n + 2\sqrt{nt} + 2t} \leq e^{-t}
		\quad\text{\ and\ }\quad
		\Prx_{\by\sim\chi^2(n)}\sbra{\by \leq n - 2\sqrt{nt}} \leq e^{-t} .
	\end{align*}
\end{proposition}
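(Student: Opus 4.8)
The plan is to prove both inequalities by the textbook Cram\'er--Chernoff (moment-generating-function) method, which is the argument used in \cite{laurent2000}. Write $\by = \sum_{i=1}^n \bg_i^2$ with $\bg_1,\dots,\bg_n \sim N(0,1)$ independent. The one computation everything rests on is the elementary identity $\Ex[e^{\lambda \bg_1^2}] = (1 - 2\lambda)^{-1/2}$ for $\lambda < 1/2$, which yields the centered cumulant generating function
\[
\psi(\lambda) \;:=\; \ln \Ex\!\left[ e^{\lambda(\by - n)} \right] \;=\; -\,n\lambda \;-\; \frac{n}{2}\,\ln(1 - 2\lambda), \qquad \lambda \in \left(0, \tfrac{1}{2}\right).
\]
Markov's inequality applied to $e^{\lambda(\by-n)}$ (resp.\ to $e^{-\mu(\by-n)}$) then gives, for every $\lambda \in (0,\tfrac12)$ and $\mu > 0$,
\[
\Pr\!\left[\, \by \geq n + u \,\right] \;\leq\; \exp\!\big(\psi(\lambda) - \lambda u\big), \qquad \Pr\!\left[\, \by \leq n - v \,\right] \;\leq\; \exp\!\big(\psi(-\mu) - \mu v\big),
\]
and it remains to optimize over $\lambda$ (resp.\ $\mu$) and check a one-line numerical inequality.

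For the upper tail I would maximize $\lambda u - \psi(\lambda)$ over $\lambda \in (0,\tfrac12)$: differentiating gives the stationary point $\lambda^\star = u / \big(2(u+n)\big)$, at which $1 - 2\lambda^\star = n/(u+n)$, and substituting back collapses the bound to
\[
\Pr\!\left[\, \by \geq n + u \,\right] \;\leq\; \exp\!\left( -\frac{u}{2} + \frac{n}{2}\,\ln\!\Big(1 + \frac{u}{n}\Big) \right).
\]
Now I would set $u = 2\sqrt{nt} + 2t$ and substitute $s = \sqrt{t/n}$ (so $\sqrt{nt} = ns$ and $u/n = 2s + 2s^2$): the exponent becomes $-\big(ns + ns^2\big) + \tfrac{n}{2}\ln(1 + 2s + 2s^2)$, so the claimed bound $e^{-t} = e^{-ns^2}$ is exactly equivalent to $2s \geq \ln(1 + 2s + 2s^2)$, i.e.\ to $e^{2s} \geq 1 + 2s + 2s^2$ --- which is immediate since every term of the Taylor series of $e^{2s}$ is nonnegative.

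The lower tail is handled the same way: maximizing $\mu v - \psi(-\mu)$ over $\mu > 0$ (using $\psi(-\mu) = n\mu - \tfrac{n}{2}\ln(1+2\mu)$) gives $\mu^\star = v/\big(2(n-v)\big)$ for $v \in (0,n)$ and the bound $\Pr[\by \leq n - v] \leq \exp\!\big( -\tfrac{v}{2} - \tfrac{n}{2}\ln(1 - v/n) \big)$. Plugging in $v = 2\sqrt{nt}$ and $s = \sqrt{t/n}$ again reduces the claim to $e^{2s + 2s^2}(1 - 2s) \leq 1$, which follows because the left side equals $1$ at $s = 0$ and has derivative $-8s^2 e^{2s+2s^2} \leq 0$, hence is nonincreasing on $[0,\tfrac12)$; and the range $t \geq n/4$ (where $v \geq n$) is vacuous since then $n - 2\sqrt{nt} \leq 0$ while $\by > 0$ almost surely.

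Honestly there is no real obstacle here --- it is a routine Chernoff computation --- and the only thing requiring a bit of care is the bookkeeping that turns the optimized Cram\'er rates $\tfrac{u}{2} - \tfrac{n}{2}\ln(1+u/n)$ and $\tfrac{v}{2} + \tfrac{n}{2}\ln(1-v/n)$ back into the exact thresholds $2\sqrt{nt} + 2t$ and $2\sqrt{nt}$; these thresholds are engineered precisely so that the residual inequalities degenerate to the two one-line facts above. An alternative, if one prefers to avoid the explicit optimization, is to bound $\psi(\lambda) \leq n\lambda^2/(1-2\lambda)$ via $-\ln(1-2\lambda) - 2\lambda = \sum_{k\geq 2}(2\lambda)^k/k \leq \tfrac{1}{2}(2\lambda)^2/(1-2\lambda)$ and then plug in a convenient near-optimal $\lambda$, at the cost of slightly looser constants; I would present the exact-optimization version since it recovers the stated inequalities verbatim.
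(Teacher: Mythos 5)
The paper does not prove this proposition at all --- it is imported verbatim from Laurent and Massart \cite{laurent2000} --- so there is no in-paper argument to compare against. Your Chernoff-method derivation is correct and self-contained: the stationary points, the collapsed rates $\tfrac{u}{2}-\tfrac{n}{2}\ln(1+u/n)$ and $\tfrac{v}{2}+\tfrac{n}{2}\ln(1-v/n)$, and the two residual one-line inequalities $e^{2s}\geq 1+2s+2s^2$ and $e^{2s+2s^2}(1-2s)\leq 1$ all check out. One small slip: your displayed lower-tail bound $\exp\bigl(-\tfrac{v}{2}-\tfrac{n}{2}\ln(1-v/n)\bigr)$ has the wrong sign (as written its exponent is positive, so the bound exceeds $1$); the optimized bound is $\exp\bigl(\tfrac{v}{2}+\tfrac{n}{2}\ln(1-v/n)\bigr)$, and your subsequent substitution is in fact consistent with this corrected form, so nothing downstream breaks. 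For the record, Laurent and Massart themselves take the route you relegate to the final remark --- bounding the cumulant generating function by $n\lambda^2/(1-2\lambda)$ and inverting that sub-gamma bound --- which is how the exact thresholds $2\sqrt{nt}+2t$ and $2\sqrt{nt}$ arise in their Lemma 1; your exact-optimization version recovers the same constants and is, if anything, slightly sharper along the way.
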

	
\subsection{Property Testing and Tolerant Property Testing}
\label{subsec:testing-prelims}

Let $\Pconv := \Pconv(n)$ denote the class of convex subsets of $\R^n$, i.e. 
\[
	\Pconv = \big\{L \sse\R^n : L~\text{is convex}\big\}. 
\]
Given a set $K \sse \R^n$, we define its \emph{distance to convexity} as
\[
	\dist(K, \Pconv) := \inf_{L \in \Pconv} \Vol(K \, \triangle \, L)
\]
where $K\,\triangle\,L = (K\setminus L) \cup (L\setminus K)$ denotes the symmetric difference of $K$ and $L$. In particular, we will say that $K$ is $\epsilon$-close to (resp.~$\eps$-far from) a convex set if $\dist(K,\Pconv) \leq \eps$ (resp.~$\geq \eps$).  

\begin{definition}[Property testers and tolerant property testers]
	Let $\eps, \eps_1, \eps_2 \in [0,0.5]$ with $\eps_1 < \eps_2$. An algorithm $\calA$ is an \emph{$\eps$-tester} for convexity if, given black-box query access to an unknown set $K \sse \R^n$, it has the following performance guarantee:
	\begin{itemize}
		\item If $K$ is convex, then $\calA$ outputs ``accept'' with probability at least $2/3$;
		\item If $\dist(K, \Pconv) \geq \eps$, then $\calA$ outputs ``reject'' with probability at least $2/3$.
	\end{itemize}
	An algorithm $\calA$ is an \emph{$(\eps_1, \eps_2)$-tolerant tester} (or simply an \emph{$(\eps_1, \eps_2)$-tester}) for convexity if it has the following performance guarantee:
	\begin{itemize}
		\item If $\dist(K, \Pconv) \leq \eps_1$, then $\calA$ outputs ``accept'' with probability at least $2/3$;
		\item If $\dist(K, \Pconv) \geq \eps_2$, then $\calA$ outputs ``reject'' with probability at least $2/3$.
	\end{itemize}
	In particular, note that every $\eps$-tester is a $(0, \eps)$-tolerant tester. 
\end{definition}

Our query-complexity lower bounds for {non-adaptive} property testing algorithms are obtained via Yao's minimax principle~\cite{Yao:77}, which we recall below.
(We remind the reader that an algorithm for the problem of $(\eps_1,\eps_2)$-tolerant testing is correct on an input function $f$ provided that it outputs ``yes'' if $f$ is $\eps_1$-close to the property and outputs ``no'' if $f$ is $\eps_2$-far from the property; if the distance to the property is between $\eps_1$ and $\eps_2$ then the algorithm is correct regardless of what it outputs.)

\begin{theorem}[Yao's principle] \label{thm:yao-minimax}
	To prove an $\Omega(q)$-query lower bound on the worst-case query complexity of any non-adaptive {randomized} testing algorithm, it suffices to give a distribution $\calD$ on instances
	such that for any $q$-query non-adaptive \emph{deterministic} algorithm $\calA$, we have 
	\[\Prx_{\boldf\sim \calD}\big[\calA\text{ is correct on }\boldf\big]\leq c.\] 
	where $0\leq c<1$ is a universal constant. 
\end{theorem}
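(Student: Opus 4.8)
The plan is to prove the contrapositive: I would show that if some non-adaptive randomized algorithm using $O(q)$ queries is a valid $\eps$-tester (resp.\ $(\eps_1,\eps_2)$-tolerant tester) for convexity, then for \emph{every} distribution $\calD$ over instances there is a non-adaptive \emph{deterministic} $q$-query algorithm that is correct on $\boldf\sim\calD$ with probability strictly more than $c$, contradicting the hypothesis. The argument has two ingredients. First, an amplification step: given a non-adaptive randomized tester $\mathcal{T}$ using $q_0$ queries with worst-case success probability $\geq 2/3$, running $\mathcal{T}$ independently $t=O(1)$ times and outputting the majority vote yields a non-adaptive randomized tester $\mathcal{T}'$ using $tq_0$ queries whose worst-case success probability exceeds the fixed constant $c$, by a Chernoff bound; non-adaptivity is preserved since all $tq_0$ query points can be committed to in advance, and in the tolerant case instances whose distance to convexity lies strictly between $\eps_1$ and $\eps_2$ are handled for free since any output counts as correct. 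Thus it suffices to rule out $q$-query non-adaptive randomized algorithms with worst-case success probability $>c$.

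Second, the derandomization-by-averaging step: a non-adaptive randomized $q$-query algorithm is by definition a distribution over non-adaptive deterministic $q$-query algorithms $\{\calA_r\}_r$ indexed by the internal randomness $r$, and ``worst-case success probability $>c$'' means $\Prx_r[\calA_r\text{ correct on }\boldf]>c$ for every instance. Taking expectations over $\boldf\sim\calD$ and exchanging the order of the two integrals (valid by Fubini/Tonelli since the integrand is bounded),
\[
\Ex_r\big[\Prx_{\boldf\sim\calD}[\calA_r\text{ correct on }\boldf]\big]=\Ex_{\boldf\sim\calD}\big[\Prx_r[\calA_r\text{ correct on }\boldf]\big]>c,
\]
so there is a fixed seed $r^*$ with $\Prx_{\boldf\sim\calD}[\calA_{r^*}\text{ correct on }\boldf]>c$; since $\calA_{r^*}$ is a non-adaptive deterministic $q$-query algorithm, this contradicts the hypothesis. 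Combining the two steps shows that any non-adaptive randomized tester must use $\Omega(q)$ queries.

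This is a standard argument, so there is no real obstacle; the only point requiring care is the bookkeeping around the success-probability constants — the tester definition pins the success probability at $2/3$ whereas the hypothesis tolerates an arbitrary constant $c<1$, so one cannot skip the amplification step, but since it costs only a constant factor $t$ in the number of queries it does not affect an $\Omega(q)$-type conclusion. One should also keep in mind throughout that, as recalled just before the theorem statement, ``$\calA$ is correct on $\boldf$'' for a tolerant tester means $\calA$ outputs ``accept'' when $\dist(\boldf,\Pconv)\le\eps_1$ and ``reject'' when $\dist(\boldf,\Pconv)\ge\eps_2$, with no constraint imposed in the intermediate regime; this convention is used consistently but plays no further role in the proof.
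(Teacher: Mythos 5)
Your proof is correct, and it is the standard averaging-plus-amplification argument that the paper itself relies on implicitly: the paper states Theorem~\ref{thm:yao-minimax} as a known fact with a citation to Yao and gives no proof, so there is nothing in the paper to diverge from. Both of your steps are handled properly, including the two points that actually need care — amplifying from success probability $2/3$ to beyond the arbitrary constant $c<1$ at only a constant-factor cost in queries while preserving non-adaptivity, and the convention that in the tolerant setting any output is counted as correct on instances whose distance lies strictly between $\eps_1$ and $\eps_2$.
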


%!TEX root = ../ctlb.tex

\section{Nazarov's Body}
\label{sec:nazarov-prelims}

Our constructions in~\Cref{sec:one-sided-adaptive,sec:tolerant} will employ modifications of a probabilistic construction of a convex body due to Nazarov~\cite{Nazarov:03}. 
Nazarov's randomized construction yields a convex set with asymptotically maximal Gaussian surface area~\cite{Ball:93,Nazarov:03}, and modifications thereof have found applications in learning theory and polyhedral approximation~\cite{KOS:08,DNS23-polytope}.   

\begin{definition}[Nazarov's body]
\label{def:nazarov-body}
	For $r, N > 0$, we write $\Naz(r, N)$ to be the distribution over convex subsets of $\R^n$ where a draw $\bB\sim\Naz(r,N)$ is obtained as follows:
	\begin{enumerate}
		\item For $i\in[N]$, draw independent vectors $\bg^i \sim N(0, I_n)$ and let $\bH_i \sse\R^n$ denote the halfspace
		\begin{equation} \label{eq:boldH}
			\bH_i := \{x\in\R^n : x\cdot\bg^i \leq r\}.
		\end{equation}
		\item Output the convex set $\bB\sse\R^n$ where
		\[
			\bB := \Ball(\sqrt{n}) \cap \pbra{\bigcap_{i=1}^N \bH_i}.
		\]
	\end{enumerate}
\end{definition}

Note that for any fixed $x\in\R^n$, 
\begin{align}
	\Prx_{\bH_i}[x\in\bH_i] &= \Prx_{\bg^i\sim N(0,I_n)}\sbra{x\cdot\bg^i \leq r} \nonumber \\ 
	&= \Prx_{\bg^i\sim N(0,I_n)}\sbra{\sum_{j=1}^n x_j\bg^i_j \leq r} \nonumber \\
	&= \Prx_{\bg \sim N(0,1)}\sbra{\bg \leq \frac{r}{\|x\|}} \nonumber \\
	&= \Phi\pbra{\frac{r}{\|x\|}} \label{eq:halfspace-prob}
\end{align}
where $\Phi(\cdot)$ is the univariate Gaussian cumulative density function. 
Consequently, because of the independence of $\bg^i$, we have 
\begin{equation} \label{eq:nazarov-prob}
	\Prx_{\bB\sim\Naz(r,N)}\sbra{x\in\bB} = 
%	\begin{cases}
%		0 & \|x\| > \sqrt{n}, \\
		\Indicator\cbra{\|x\| \leq \sqrt{n}}\cdot\Phi\pbra{\frac{r}{\|x\|}}^N.
%	\end{cases}
\end{equation}
Note that $\bB$ can be also written as
\[
	\bB=\Ball(\sqrt{n}) \setminus \bigcup_{i\in [N]}\Big( \Ball(\sqrt{n})\setminus \bH_i\Big).
\]
For each $i \in [N]$,
we define $\bF_i$ (for ``flap'') to be points in $\Ball(\sqrt{n})$ which are falsified by   $\bH_i$, i.e.
\[
	\bF_i := \Ball(\sqrt{n})\setminus \bH_i.
\]
Given a non-empty $T \subseteq [N]$, we write $\bF_T := \bigcap_{i\in T} \bF_i$. We will be interested in points in $\Ball(\sqrt{n})$ that are falsified by a unique halfspace $\bH_i$ and denote the set of such points as $\bU_i$ (for ``unique''): 
\[
	\bU_i := \bF_i \setminus \bigcup_{j \neq i} \bF_j.
\]

\begin{figure}
	\centering
	\begin{tikzpicture}[x=0.75pt,y=0.75pt,yscale=-1,xscale=1] 

		\def\A{(218,229) .. controls (218,161.62) and (272.62,107) .. (340,107) .. controls (407.38,107) and (462,161.62) .. (462,229) .. controls (462,296.38) and (407.38,351) .. (340,351) .. controls (272.62,351) and (218,296.38) .. (218,229) -- cycle};
	
		\begin{scope}
			\clip \A;
			\fill[color=green, opacity=0.1] \A;
			\fill[color=blue!20] (351.5,56) to (523.5,302)|- cycle;
			\fill[color=white] (442.5,69) to (178,203)|- cycle;
			\fill[color=white] (462.5,91) to (206.5,168)|- cycle;
			\fill[color=white] (267.5,89) to (203.5,341)|- cycle;
			\fill[color=white] (480.5,379) to (187,293)|- cycle;
			\fill[color=white] (321,388) to (514.5,245)|- cycle;
			
			\clip (462.5,91) to (206.5,168)|- cycle;
			\fill[color=red!20] (442.5,69) -- (178,203)|- cycle;
			\fill[color=white] (267.5,89) to (203.5,341)|- cycle;
		\end{scope}

		\draw \A;
		\draw (442.5,69) -- (178,203); % filled
		\draw (267.5,89) -- (203.5,341); % filled
		\draw (480.5,379) -- (187,293); % filled
		\draw (514.5,245) -- (321,388); % filled
		\draw (462.5,91) -- (206.5,168); % filled
		\draw (523.5,302) -- (351.5,56);	 % filled
		
		\node[fill, inner sep=1pt,circle] (ori) at (340,235) {};
		\node (oriname) at (340, 225) {\small ~$0^n$};
		\draw (ori) -- (234,290);
		\node (sqrtn) at (280,255) {\small $\sqrt{n}$};
		\draw (ori) -- (314,330);
		\node (rad) at (340,290) {\small $\frac{r}{\|\bg_i\|}$};		
		
		\node (Hi) at (495, 382) {\small $\bH_i$};
		\node (H1) at (535, 305) {\small $\bH_1$};
		\node (H2) at (475, 92) {\small $\bH_2$};
		\node (H3) at (456, 69) {\small $\bH_3$};
		\node (HN) at (310, 392) {\small $\bH_N$};
		
		\node (U1) at (455, 164) {\small $\bU_1$};
		\node (F) at (300, 128) {\small $\bF_{2,3}$};
	\end{tikzpicture}
	
	\
	
	\caption{A depiction of $\bB$ (in green) sampled from $\Naz(r,N)$.}	
\end{figure}
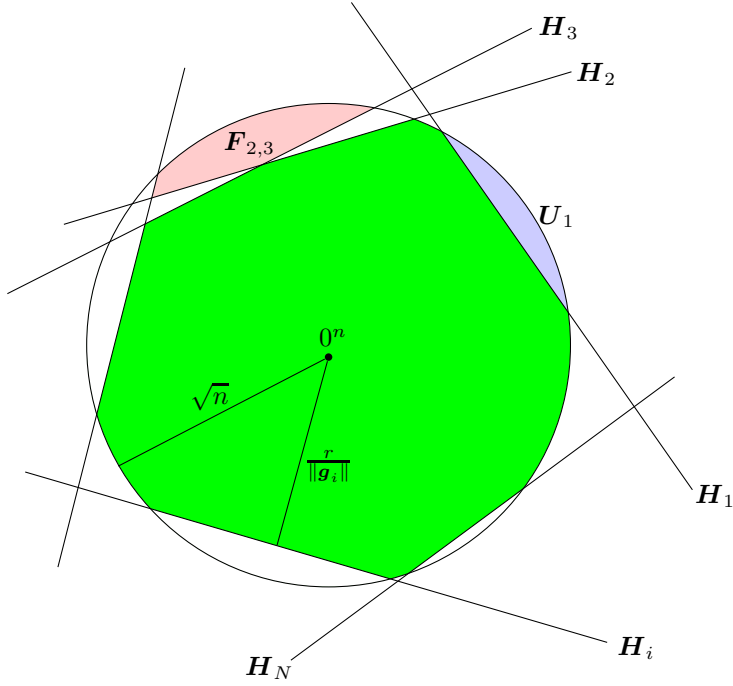

\subsection{Useful Estimates}
\label{subsec:nazarov-prelims}

Suppose $N$ satisfies $N=n^{\omega_n(1)}$; in~both \Cref{sec:one-sided-adaptive,sec:tolerant} we will take $N = 2^{\sqrt{n}}$. 
Let $c_1 > 0$ be a parameter; in~\Cref{sec:one-sided-adaptive}, we will set $c_1 = \ln 2 \pm O(1)/N$,  {and in \Cref{sec:tolerant} we will take $c_1$ to be a suitable small absolute constant.}

%\red{(that is independent of $n$ where $\R^n$ is the ambient space)}\rnote{Do we want to say this - I thought we were hoping to handle sub-constant $\eps$ in Section~6 and that to do this, we might need to have $c_1$ be sub-constant?}. 
Throughout this section we will take $r$ to be the unique positive number such that 
\begin{equation} \label{eq:r-condition}
	\Phi\pbra{\frac{r}{\sqrt{n}}} = 1 - \frac{c_1}{N}.
\end{equation}
Gaussian tail bounds allow us to relate $r$ and $N$: 

\begin{lemma} \label{lemma:r-N-relationship}
	We have 
	\[
		r = \sqrt{2n(1-o(1)) \ln\pbra{\frac{N}{c_1}\sqrt{\frac{n}{2\pi}}}}.
	\]
\end{lemma}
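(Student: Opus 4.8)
The final statement to prove is \Cref{lemma:r-N-relationship}, relating $r$ (defined by $\Phi(r/\sqrt n) = 1 - c_1/N$) to $N$.

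\medskip

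\textbf{The plan.} The goal is to invert the Gaussian tail bound. Write $s := r/\sqrt n$, so the defining relation \eqref{eq:r-condition} becomes $1 - \Phi(s) = c_1/N$. Since $N = n^{\omega_n(1)}$ and $c_1 = \Theta(1)$, the quantity $c_1/N$ is tiny, so $s \to \infty$ as $n \to \infty$; in particular $s$ is eventually larger than any constant, so \Cref{prop:gaussian-tails} applies and gives
\[
\varphi(s)\pbra{\frac 1 s - \frac 1 {s^3}} \;\leq\; \frac{c_1}{N} \;\leq\; \varphi(s)\pbra{\frac 1 s - \frac 1 {s^3} + \frac 3 {s^5}}.
\]
Both bounding expressions are $\frac{\varphi(s)}{s}(1 \pm O(1/s^2)) = \frac{1}{s\sqrt{2\pi}} e^{-s^2/2}(1\pm o(1))$. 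Taking logarithms, $-\ln(c_1/N) = \tfrac{s^2}{2} + \ln s + \tfrac12\ln(2\pi) + o(1)$, i.e.
\[
\frac{s^2}{2} \;=\; \ln\pbra{\frac{N}{c_1}} - \ln s - \frac12\ln(2\pi) + o(1).
\]

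\medskip

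\textbf{Key steps.} First I would establish the crude bound $s = \Theta(\sqrt{\ln(N/c_1)})$: from the display above, $s^2/2 \leq \ln(N/c_1) + o(1)$ gives the upper bound $s \leq \sqrt{2\ln(N/c_1)}(1+o(1))$, and since $\ln s$ and $\ln(2\pi)$ are lower-order, $s^2/2 \geq \ln(N/c_1) - O(\ln\ln(N/c_1))$ gives a matching lower bound. This crude bound is enough to control the error terms: $\ln s = \tfrac12 \ln\ln(N/c_1) + O(1) = o(\ln(N/c_1))$ since $N/c_1 \to \infty$ super-polynomially (indeed any growth suffices here). Substituting back,
\[
\frac{s^2}{2} = \ln\pbra{\frac{N}{c_1}} - \frac12\ln\ln\pbra{\frac{N}{c_1}} - \frac12\ln(2\pi) + o(1) = \ln\pbra{\frac{N}{c_1}}(1 - o(1)),
\]
so that $s^2 = 2(1-o(1))\ln(N/c_1)$. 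To match the exact form in the statement, I would absorb the $-\tfrac12\ln\ln(N/c_1) - \tfrac12\ln(2\pi)$ correction: note $\ln\pbra{\tfrac{N}{c_1}\sqrt{\tfrac{n}{2\pi}}} = \ln(N/c_1) + \tfrac12\ln n - \tfrac12\ln(2\pi)$, and since $N = n^{\omega_n(1)}$ we have $\ln(N/c_1) = \omega_n(\ln n)$, so $\tfrac12 \ln n$ is also absorbed into the $(1-o(1))$ factor. Hence
\[
r^2 = n s^2 = 2n(1-o(1))\ln\pbra{\frac{N}{c_1}\sqrt{\frac{n}{2\pi}}},
\]
which on taking square roots is exactly the claimed identity.

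\medskip

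\textbf{Main obstacle.} There is no serious obstacle here — the lemma is a routine asymptotic inversion of the Gaussian tail estimate, and the $(1-o(1))$ slack in the statement is generous enough to swallow all the lower-order logarithmic corrections without careful bookkeeping. The only point requiring a little care is the bootstrap: one must first pin down $s$ to within a constant factor (using that $\ln s$ is dominated by $s^2$) before plugging that estimate back in to identify the leading constant $2$; doing this in the wrong order risks a circular argument. A secondary minor point is checking that the choice of $\sqrt{n/(2\pi)}$ inside the logarithm (rather than, say, just $N/c_1$) is consistent — but as noted this is automatic because $\ln N$ dominates $\ln n$ by the hypothesis $N = n^{\omega_n(1)}$, so which of these lower-order terms one places inside the log is immaterial up to the $(1-o(1))$ factor.
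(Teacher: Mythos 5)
Your proposal is correct and follows essentially the same route as the paper: apply the two-sided Gaussian tail bound of \Cref{prop:gaussian-tails} to the defining relation \eqref{eq:r-condition}, take logarithms, use the resulting crude bound on $r$ to show the $\ln r$ term is lower order, and absorb the remaining corrections (including the $\sqrt{n/(2\pi)}$ factor, using $N = n^{\omega_n(1)}$) into the $(1-o(1))$. The paper's write-up is just a slightly more compressed version of the same bootstrap.
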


\begin{proof}
	Note that because $N = \omega_n(1)$, it follows that $r = \omega(\sqrt{n})$; otherwise, note that $1 - \Phi\pbra{\frac{r}{\sqrt{n}}} = \Omega_n(1)$, contradicting~\Cref{eq:r-condition}. 
	Next, it follows from~\Cref{prop:gaussian-tails} and~\Cref{eq:r-condition} that 
	\begin{equation} \label{eq:gtb-app}
		\pbra{\frac{\sqrt{n}}{r} - \pbra{\frac{\sqrt{n}}{r}}^3}\cdot\phi\pbra{\frac{r}{\sqrt{n}}}
		\leq \frac{c_1}{N} \leq \frac{\sqrt{n}}{r}\cdot\phi\pbra{\frac{r}{\sqrt{n}}}	.
	\end{equation}
	The upper bound implies that 
	\[
		r\cdot\exp\pbra{\frac{r^2}{2n}} \leq \frac{N}{c_1}\sqrt{\frac{n}{2\pi}}, 
		\qquad\text{and so}\qquad 
		\ln r + \frac{r^2}{2n} \leq \ln\pbra{\frac{N}{c_1}\sqrt{\frac{n}{2\pi}}}.
	\]
	In particular, this implies that 
	\begin{equation} \label{eq:r-upper-bound}
		r \leq \sqrt{2n \ln\pbra{\frac{N}{c_1}\sqrt{\frac{n}{2\pi}}}}.
	\end{equation}
	Next, note that the lower bound from~\Cref{eq:gtb-app} implies that 
	\[
		\frac{N}{c_1}\sqrt{\frac{n}{2\pi}}\pbra{1-\frac{n}{r^2}} \leq r\exp\pbra{\frac{r^2}{2n}},
		\qquad\text{and so}\qquad
		\ln\pbra{\frac{(1-o(1))N}{c_1}\sqrt{\frac{n}{2\pi}}} \leq \ln r + \frac{r^2}{2n}.
	\]
	This in turn implies that%\xnote{I think here we need $\log N=\omega_n(\log n)$.}
	\begin{equation} \label{eq:r-lower-bound}
		r \geq \sqrt{2n(1-o(1))\cdot \ln\pbra{\frac{N}{c_1}\sqrt{\frac{n}{2\pi}}}}
	\end{equation}
	The result follows from~\Cref{eq:r-upper-bound,eq:r-lower-bound}.
	%\rnote{\red{DELETE THIS FOOTNOTE WHEN YOU'RE COMFORTABLE} I might be confused but it looks to me like there's a slight mismatch between~\Cref{eq:r-upper-bound,eq:r-lower-bound} and the claimed bound in the lemma statement? It looks to me like what the proof is giving, assuming \eqref{eq:r-lower-bound} is okay, is that $		r \geq \sqrt{2n(1-o(1))\cdot \ln\pbra{\frac{N}{c_1}\sqrt{\frac{n}{2\pi}}}}$. 
%
%With this, by the way, since we are fixing $N=2^{\sqrt{n}}$, I think that even if $c_1$ were as small as $2^{-n^{0.49}}$, we would have that $r=\Theta(n^{3/4})$ (in fact, we even know the leading constant in $r$, up to a $1\pm o(1)$ factor, is $\sqrt{2 \ln 2},$ I think, in case that's helpful) --- I think we'll use this later?}
\end{proof}

We need the following lemma which will be useful in analyzing our construction in~\Cref{sec:one-sided-adaptive}.

\begin{lemma} \label{lemma:small-vol-high-degree}
	Let $x\in\R^n$ be a point with $\|x\|\leq\sqrt{n}$. Then
	\[\Prx_{\bB\sim\Naz(r,N)}\sbra{x\in \bigcup_{|T|\geq q} \bF_T} \leq \frac{c_1^q}{q!}.\]
	for all $q \in [N]$.
\end{lemma}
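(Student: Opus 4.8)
The plan is to bound the probability that $x$ lies in $\bigcup_{|T| \geq q} \bF_T$ by a union bound over all $q$-subsets $T$, combined with the fact that the events "$x \in \bF_i$" are independent across $i$ (since the $\bg^i$ are independent). First observe that $x \in \bigcup_{|T| \geq q} \bF_T$ if and only if $x$ is falsified by at least $q$ of the halfspaces, i.e.\ $|\{i : x \notin \bH_i\}| \geq q$. By a union bound over the $\binom{N}{q}$ subsets $T$ of size exactly $q$,
\[
\Prx_{\bB\sim\Naz(r,N)}\sbra{x\in \bigcup_{|T|\geq q} \bF_T} \leq \binom{N}{q} \cdot \Prx\sbra{x \in \bF_T \text{ for a fixed } |T|=q} = \binom{N}{q} \cdot p^q,
\]
where $p := \Pr_{\bg \sim N(0,I_n)}[x \notin \bH_i] = 1 - \Phi(r/\|x\|)$ and the last equality uses independence of the $\bg^i$ across $i \in T$.

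The key step is to show $p \leq c_1/N$. Since $\|x\| \leq \sqrt{n}$, we have $r/\|x\| \geq r/\sqrt{n}$, and because $1 - \Phi$ is decreasing, $p = 1 - \Phi(r/\|x\|) \leq 1 - \Phi(r/\sqrt{n}) = c_1/N$ by the defining condition \eqref{eq:r-condition} on $r$. Plugging this in and using the standard bound $\binom{N}{q} \leq N^q/q!$ gives
\[
\Prx_{\bB\sim\Naz(r,N)}\sbra{x\in \bigcup_{|T|\geq q} \bF_T} \leq \frac{N^q}{q!} \cdot \pbra{\frac{c_1}{N}}^q = \frac{c_1^q}{q!},
\]
which is exactly the claimed bound.

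I do not anticipate a serious obstacle here; the only points requiring a little care are (i) correctly identifying that membership in $\bigcup_{|T|\geq q}\bF_T$ is equivalent to being falsified by at least $q$ halfspaces, so that the union bound over size-$q$ subsets suffices (a larger violated set automatically contains a violated subset of size exactly $q$), and (ii) invoking the independence of the halfspace-violation events to factor $\Pr[x \in \bF_T] = p^{|T|}$. Everything else is the monotonicity of $\Phi$ together with the defining equation for $r$ and the elementary binomial estimate.
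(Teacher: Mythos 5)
Your proposal is correct and matches the paper's proof essentially step for step: a union bound over the $\binom{N}{q}$ size-$q$ subsets, independence of the halfspace-violation events to get $\bigl(1-\Phi(r/\|x\|)\bigr)^q$, monotonicity of $\Phi$ together with the defining equation \eqref{eq:r-condition} for $r$ to bound this by $(c_1/N)^q$, and the estimate $\binom{N}{q}\leq N^q/q!$. No gaps.
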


\begin{proof}
	Note that%\xnote{Is $t$ here $\|x\|$? Also I think we
		%should not have the $(N-q)!$ in the RHS?}
	\begin{align*}
		\Prx_{\bB\sim\Naz(r,N)}\sbra{x \in \bigcup_{|T|\geq q} \bF_T}
		& \leq {N \choose q}\Prx_{\bB\sim\Naz(r,N)}\sbra{x\in \bF_1 \cap \ldots \cap \bF_{q}} \\
		%& = {N \choose q}\Prx_{\bB\sim\Naz(r,N)}\sbra{x\in \bF_1}^q \\
		& \leq \frac{1}{q! }\pbra{N\pbra{1-\Phi\pbra{\frac{r}{\|x\|}}}}^q \\
		& \leq \frac{1}{q!}\pbra{N\pbra{1-\Phi\pbra{\frac{r}{\sqrt{n}}}}}^q \\ 
		& = \frac{c_1^q}{q!}
	\end{align*}
	where the penultimate equality relies on~\Cref{eq:r-condition}.
\end{proof}

We will also require a lower bound on the expected volume of $\bigsqcup_{i=1}^N \bU_i$:

\begin{lemma} \label{lemma:unique-expected-volume-lb-adaptive}
	 For constant $0 < c_1 \leq 0.9$ and $N = 2^{\sqrt{n}}$, we have 
	 %\rnote{It's a bit weird to me that there are so many occurrences of $\Theta(\Vol(\Ball(\sqrt{n}))$ in the statement and proof - isn't $\Vol(\Ball(\sqrt{n})$ just $(1/2) (1 \pm o(1))$?}
	\[
	\Ex_{\bB \sim \Naz(r,N)}\sbra{\vol\pbra{\bigsqcup_{i=1}^N \bU_i}}=\Omega(c_1).
%		\Ex_{\bB\sim\Naz(r,N)}\sbra{\vol(\bU_i)} \geq \Omega\pbra{\frac{c_1}{N}}.
%
		%\cdot\Theta(\Vol(\Ball(\sqrt{n}))\cdot c_1(1-c_1)}
	\]
\end{lemma}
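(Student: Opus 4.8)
The plan is to lower bound $\Ex_{\bB}\sbra{\vol\pbra{\bigsqcup_i \bU_i}}$ by expressing it as $\sum_{i=1}^N \Ex_{\bB}[\vol(\bU_i)]$ (the $\bU_i$ are pairwise disjoint by construction) and analyzing a single term by symmetry, so that the quantity equals $N\cdot\Prx_{\bg,\bB}[\bg\in\bU_1]$ where $\bg\sim N(0,I_n)$ is independent of $\bB$. Now $\bg\in\bU_1$ means $\|\bg\|\leq\sqrt n$, $\bg\notin\bH_1$, and $\bg\in\bH_j$ for all $j\neq 1$. Conditioning on $\bg$ and using independence of the $\bg^i$'s, this probability is
\[
	\Ex_{\bg}\sbra{\Indicator\{\|\bg\|\leq\sqrt n\}\cdot\pbra{1-\Phi\pbra{\tfrac{r}{\|\bg\|}}}\cdot\Phi\pbra{\tfrac{r}{\|\bg\|}}^{N-1}}.
\]
So the whole expectation equals $N\cdot\Ex_{\bg}[\,\cdots]$, and I want to show this is $\Omega(c_1)$.

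The key step is to identify a ``good'' radius regime for $\|\bg\|$ where the integrand is not too small, and to show that $\bg\sim N(0,I_n)$ lands there with constant probability. Since $\Phi(r/\sqrt n)=1-c_1/N$, for $\|\bg\|$ slightly below $\sqrt n$ we have $1-\Phi(r/\|\bg\|)$ comparable to $c_1/N$ (up to constants, using that $\Phi(r/t)$ is continuous and $r=\omega(\sqrt n)$ by Lemma~\ref{lemma:r-N-relationship}, together with the Gaussian tail estimates of Proposition~\ref{prop:gaussian-tails}), and $\Phi(r/\|\bg\|)^{N-1}=(1-\Theta(c_1/N))^{N-1}=\Theta(1)$ since $c_1\leq 0.9$ is a constant. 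Concretely, I would take the radius window $\|\bg\|\in[\sqrt n - \Theta(1), \sqrt n]$ (or a window of width $\Theta(1/\sqrt n)$ in terms of $\|\bg\|^2$ around $n$); one checks via the tail behavior of $\Phi$ that throughout this window $1-\Phi(r/\|\bg\|)\geq \Omega(c_1/N)$ and $\Phi(r/\|\bg\|)^{N-1}\geq\Omega(1)$. Finally, since $\|\bg\|^2\sim\chi^2(n)$ has median $n(1-\Theta(1/n))$ and bounded density, the event that $\|\bg\|$ lies in this window has probability $\Omega(1)$; here I would invoke Proposition~\ref{prop:chi-squared-tail} (or the stated density bound and median fact) to pin down the constant. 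Multiplying: $N\cdot\Omega(1)\cdot\Omega(c_1/N)\cdot\Omega(1)=\Omega(c_1)$, as desired.

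The main obstacle I anticipate is making the ``good radius window'' argument quantitatively clean: I need a window that is simultaneously (a) wide enough that $\bg$ lands in it with $\Omega(1)$ probability — this needs the $\chi^2(n)$ concentration/anticoncentration and pushes toward a window of width $\Theta(\sqrt n)$ in $\|\bg\|^2$, i.e.\ width $\Theta(1)$ in $\|\bg\|$ — and (b) narrow enough that uniformly over it, $r/\|\bg\|$ stays close enough to $r/\sqrt n$ that $1-\Phi(r/\|\bg\|)=\Theta(c_1/N)$ and $\Phi(r/\|\bg\|)^{N-1}=\Theta(1)$. The delicate point is that $r=\Theta(\sqrt{n\log N})=\Theta(n^{3/4})$, so a change of $\Theta(1)$ in $\|\bg\|$ changes $r/\|\bg\|$ by $\Theta(r/n)=\Theta(n^{-1/4})$, which changes the argument of $\phi$ in the Gaussian tail by $\Theta(r/n)\cdot(r/\sqrt n)=\Theta(r^2/n^{3/2})=\Theta(\sqrt{\log N})=\Theta(n^{1/4})$ in the exponent — this is \emph{not} negligible and would blow up $1-\Phi$. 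So (b) actually forces a much narrower window, of width $O(n/r^2)=O(1/\sqrt{\log N})=O(n^{-1/4})$ in $\|\bg\|$, equivalently $O(\sqrt n/\sqrt{\log N})$ in $\|\bg\|^2$; I then need the $\chi^2(n)$ density bound (everywhere at most $1$, so probability mass in an interval of length $\ell$ around $n$ is $\Omega(\ell/\sqrt n)$ once we also know the density is $\Omega(1/\sqrt n)$ near its median by the local CLT / explicit density estimate) to still give $\Omega(1/\sqrt{\log N})$ mass — but that is $o(1)$, not $\Omega(1)$! Resolving this tension is the crux: I would instead allow $1-\Phi(r/\|\bg\|)$ to \emph{grow} as $\|\bg\|$ increases toward and slightly past the window where $\Phi(r/\|\bg\|)^{N-1}$ starts to decay, and integrate $N(1-\Phi(r/\|\bg\|))\Phi(r/\|\bg\|)^{N-1}$ against the (roughly constant, $\Theta(1/\sqrt n)$) $\chi$-density over the full transition region for $\Phi(r/\|\bg\|)^{N-1}$, which has width $\Theta(\sqrt n/\sqrt{\log N})$ in $\|\bg\|^2$; a substitution $u=N(1-\Phi(r/\|\bg\|))$ turns the integral into $\int u e^{-u}\,du$ times the density, and the Jacobian exactly supplies the missing $\sqrt{\log N}$ factor, yielding $\Omega(c_1)$. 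I'd carry this out by the change of variables $t=\|\bg\|^2$, writing the contribution as $\int \tfrac12 f_{\chi^2(n)}(t)\cdot N(1-\Phi(r/\sqrt t))\Phi(r/\sqrt t)^{N-1}\,dt$, bounding $f_{\chi^2(n)}(t)=\Theta(1/\sqrt n)$ on the relevant $O(\sqrt n\log^{-1/2}N)$-length interval around $n$, and evaluating the remaining one-dimensional integral.
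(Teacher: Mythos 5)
Your proposal opens exactly as the paper's proof does: write the quantity as $N\cdot\Ex_{\bg}\big[\Indicator\{\|\bg\|\le\sqrt n\}\,(1-\Phi(r/\|\bg\|))\,\Phi(r/\|\bg\|)^{N-1}\big]$, note that $\Phi(r/\|\bg\|)^{N-1}\ge(1-c_1/N)^{N-1}\ge 1-c_1=\Omega(1)$ on all of $\Ball(\sqrt n)$, and restrict to a radius window $\|\bg\|\in[\sqrt n-\Theta(1),\sqrt n]$ on which $1-\Phi(r/\|\bg\|)=\Theta(c_1/N)$. That is precisely the paper's argument (it uses the window $[\sqrt n-10,\sqrt n]$, which carries $\Omega(1)$ Gaussian mass by \Cref{prop:chi-squared-tail}).

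The ``obstacle'' you then raise, however, rests on an arithmetic slip and does not exist. You claim that moving $\|\bg\|$ by $\Theta(1)$ changes the exponent $r^2/(2t^2)$ by $\Theta(r^2/n^{3/2})=\Theta(\sqrt{\log N})=\Theta(n^{1/4})$. But by \Cref{lemma:r-N-relationship} we have $r^2=\Theta(n\log N)$ with $\log N=\Theta(\sqrt n)$, so $r^2/n^{3/2}=\Theta(\log N/\sqrt n)=\Theta(1)$, not $\Theta(\sqrt{\log N})$. Concretely, for $t\in[\sqrt n-10,\sqrt n]$,
\[
\frac{r^2}{2t^2}-\frac{r^2}{2n}\;\le\;\frac{r^2}{2}\cdot\frac{20\sqrt n}{n(\sqrt n-10)^2}\;=\;\Theta\!\left(\frac{r^2}{n^{3/2}}\right)\;=\;\Theta(1),
\]
so by \Cref{prop:gaussian-tails} the quantity $1-\Phi(r/t)$ varies by only a constant factor across the window and remains $\Theta(c_1/N)$ throughout; this is exactly the ratio computation $\exp\big(r^2(100-20\sqrt n)/(2n(\sqrt n-10)^2)\big)=\Theta(1)$ carried out in the paper. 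The simple argument therefore closes: $N\cdot\Omega(1)\cdot\Omega(c_1/N)\cdot(1-c_1)=\Omega(c_1)$. Your proposed detour --- integrating over the ``full transition region'' via the substitution $u=N(1-\Phi(r/\|\bg\|))$ --- is unnecessary and, as written, inherits the same error (the transition region has width $\Theta(1)$ in $\|\bg\|$, i.e.\ $\Theta(\sqrt n)$ in $\|\bg\|^2$, not $\Theta(\sqrt n/\sqrt{\log N})$; there is no missing $\sqrt{\log N}$ factor for a Jacobian to supply). Deleting the obstacle paragraph and simply verifying the constant-factor ratio bound on $[\sqrt n-10,\sqrt n]$ turns your first paragraph into a complete proof, identical in substance to the paper's.
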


\begin{proof}
	Fix any $x \in \R^n$ and any $i \in [N]$. Note that 
	\begin{equation} \label{eq:fix-x-prob-in-U_i}
		\Prx_{\bB\sim\Naz(r,N)}\sbra{x \in \bU_i} = \Indicator\cbra{\|x\|\leq\sqrt{n}}\cdot\pbra{1-\Phi\pbra{\frac{r}{\|\bx\|}}}\Phi\pbra{\frac{r}{\|\bx\|}}^{N-1}.
	\end{equation}
	It follows that 
	\begin{align}
		\EVol{\bU_i} 
		& = \Ex_{\bx\sim N(0,I_n)}\sbra{\Prx_{\bB\sim\Naz(r,N)}\sbra{\bx\in\bU_i}} \nonumber \\
		& = \Vol\pbra{\Ball(\sqrt{n})}\Ex_{\bx\sim N(0,I_n)}\sbra{\pbra{1-\Phi\pbra{\frac{r}{\|\bx\|}}}\Phi\pbra{\frac{r}{\|\bx\|}}^{N-1} ~\Bigg|~ \|\bx\|\leq\sqrt{n}} \nonumber \\
		& \geq {\frac 1 2} \Ex_{\bx\sim N(0,I_n)}\sbra{\pbra{1-\Phi\pbra{\frac{r}{\|\bx\|}}}\Phi\pbra{\frac{r}{\sqrt{n}}}^{N-1} ~\Bigg|~ \|\bx\|\leq\sqrt{n}} \nonumber \\
		& \geq {\frac 1 2} \pbra{1-\frac{c_1}{N}}^{N-1} \Ex_{\bx\sim N(0,I_n)}\sbra{1-\Phi\pbra{\frac{r}{\|\bx\|}} ~\Bigg|~\|\bx\|\leq\sqrt{n}} \nonumber \\
		& \geq {\frac 1 2} \pbra{1- c_1 + \frac{c_1}{N}} \Ex_{\bx\sim N(0,I_n)}\sbra{1-\Phi\pbra{\frac{r}{\|\bx\|}} ~\Bigg|~\|\bx\|\leq\sqrt{n}} \label{eq:simons-institute-for-the-theory-of-computing}
	\end{align}
	where the penultimate inequality follows from~\Cref{eq:r-condition} and the final inequality relies on the fact that $(1-y)^z \geq 1-yz$.

	Next, at the cost of $0.01$ probability mass (thanks to~\Cref{prop:chi-squared-tail}), we can assume that $\|\bx\| \in [\sqrt{n}-10, \sqrt{n}]$. It follows that 
	\[
		\Ex_{\bx\sim N(0,I_n)}\sbra{\pbra{1-\Phi\pbra{\frac{r}{\|\bx\|}}} \Bigg| \sqrt{n}-10\leq \|\bx\|\leq\sqrt{n}} \geq 0.99\cdot\pbra{1-\Phi\pbra{\frac{r}{\sqrt{n}-10}}}.
	\]
	Standard Gaussian tail bounds give that 
	\begin{align*}
		1 \geq \frac{1-\Phi\pbra{\frac{r}{\sqrt{n}-10}}}{1-\Phi\pbra{\frac{r}{\sqrt{n}}}} &\geq \frac{\pbra{\frac{\sqrt{n}-10}{r} - \frac{(\sqrt{n}-10)^3}{r^3}}\exp\pbra{\frac{-r^2}{2(\sqrt{n}-10)^2}}}{\frac{\sqrt{n}}{r}\exp\pbra{\frac{-r^2}{2n}}} \tag{\Cref{prop:gaussian-tails}}\\
		& \geq (1- o(1))\cdot{\exp\pbra{\frac{r^2(100-20\sqrt{n})}{2n(\sqrt{n}-10)^2}}} \\ 
		& = {\Theta(1)}, 
	\end{align*}
where the last line uses our bounds on $r$ from \Cref{lemma:r-N-relationship} and our bounds on $c_1$ from the statement of the current lemma.
%	\rnote{\red{DELETE THIS FOOTNOTE WHEN YOU'RE COMFORTABLE} The last red line feels a little imprecise --- what we know about $r$ is from \Cref{lemma:r-N-relationship}, that $		r = \sqrt{2n \ln\pbra{\frac{N(1-o(1))}{c_1}\sqrt{\frac{n}{2\pi}}}}
%$. I think at this point we're allowing $N$ to be a general value which is not necessarily $2^{\sqrt{n}}$, but even if we had committed to having $N=2^{\Theta(\sqrt{n})}$ at this point, would we need to take into account the fact that $r$ depends on $c_1$ (although only in a mild way)?  The ultimate bound in this lemma states the dependence on $c_1$\dots 
%
%Maybe we should add in, as part of the conditions of this lemma, that $N$ is in some range and $c_1$ is in some range? That might help with these issues.  (I guess the ranges we care about will be clear later, once we've figured out what $\eps$'s we're trying to handle in the tolerant section.) \shivam{Thanks for catching this, Rocco! You're right, although I think we take $N = 2^{\sqrt{n}}$ in both sections now. So I think we can happily write the $\Theta(1)$ as $c_1^{-\Theta(1)}$ unless I botched the algebra.} {\bf Rocco:} If what I wrote in the earlier rnote about the value of $r$ is correct (that assuming $N=2^{\sqrt{n}}$ and $c_1 > 2^{-n^{0.49}}$, we have $r = (1\pm o(1))\sqrt{2 \ln 2} n^{3/4}$),then I think the blue stuff above is an honest-to-goodness $\Theta(1)$ constant (with no dependence on $c_1$) and so we can really take the red quantity in the last line to just be $\Theta(1)$; right?}
	Putting everything together and recalling that $c_1 < 0.9$, we get that for $n$ large enough, 
	\begin{align}
		\EVol{\bU_i} &\geq \Omega\pbra{1 - \Phi\pbra{\frac{r}{\sqrt{n}}}} = \Omega\pbra{{\frac{c_1}{N}}}
	\end{align}
	thanks to~\Cref{eq:r-condition}. Consequently, we have 
	\begin{equation} \label{eq:ellicott-citty}
		\EVol{\bigsqcup_{i\in[N]} \bU_i } \geq \Omega(c_1),
		% \Theta(\Vol(\Ball(\sqrt{n}))\cdot c_1(1-c_1)
	\end{equation}
	completing the proof. 
\end{proof}

Next we show that the volume of $\bigsqcup_{i} \bU_i$ is highly concentrated:

\begin{lemma}\label{lemma:concentration}
	Suppose $N = 2^{\sqrt{n}}$. With probability at least $1-o(1)$, we have
	$$\Vol\left(\bigsqcup_{i\in[N]} \bU_i \right)\ge 0.9\cdot \Ex_{\bB\sim\Naz(r,N)}\left[\Vol{\left(\bigsqcup_{i\in[N]} \bU_i\right) }\right].
	$$
\end{lemma}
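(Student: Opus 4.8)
The plan is to prove this via a bounded-differences / martingale concentration argument, using the fact that $\Vol(\bigsqcup_i \bU_i)$, viewed as a function of the $N$ independent halfspace normals $\bg^1,\dots,\bg^N$, changes by only a small amount when any single $\bg^i$ is resampled. Concretely, I would set $F(\bg^1,\dots,\bg^N) := \Vol(\bigsqcup_{i\in[N]}\bU_i)$ and seek to apply McDiarmid's inequality (or the Azuma--Hoeffding bound on the Doob martingale of $F$). Since $\Ex[F] = \Omega(c_1) = \Omega(1)$ by \Cref{lemma:unique-expected-volume-lb-adaptive}, it suffices to show $F$ concentrates to within an additive $o(1)$ (indeed within $0.1\cdot\Ex[F]$) with probability $1-o(1)$, and for that it is enough that the per-coordinate bounded-difference constants $c^{(i)}$ satisfy $\sum_{i=1}^N (c^{(i)})^2 = o(1)$.

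The key step is to bound the effect of resampling a single $\bg^i$. Fix all $\bg^j$ for $j\neq i$ and compare $F$ when the $i$-th normal is $\bg^i$ versus $\widetilde{\bg}^i$. Changing $\bg^i$ can only affect the membership of a point $x$ in $\bigsqcup_j \bU_j$ if $x$ lies in $\bF_i$ or $\widetilde{\bF}_i$ (the flap of the old or new $i$-th halfspace), or if $x\in\bU_i$ under the new configuration — in all cases $x$ must be falsified by $\bH_i$ or $\widetilde{\bH}_i$. Hence $|F(\dots,\bg^i,\dots) - F(\dots,\widetilde{\bg}^i,\dots)| \le \Vol(\bF_i) + \Vol(\widetilde{\bF}_i)$, and since $\bg^i$ and $\widetilde{\bg}^i$ are identically distributed, it suffices to control $\Ex_{\bg^i}[\Vol(\bF_i)]$ and to argue $\Vol(\bF_i)$ is with high probability not much larger than its mean. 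By \Cref{eq:halfspace-prob}, for a fixed $x$ with $\|x\|\le\sqrt{n}$ we have $\Pr_{\bg^i}[x\in\bF_i] = 1-\Phi(r/\|x\|) \le 1-\Phi(r/\sqrt{n}) = c_1/N$, so $\Ex[\Vol(\bF_i)] \le c_1/N$. Thus a ``typical'' bounded-difference constant is $O(1/N)$, and $\sum_i (1/N)^2 = O(1/N) = o(1)$, which is exactly what we need — a draw from McDiarmid then gives deviation $o(1)$ with probability $1-\exp(-\Omega(N \cdot o(1)^2))$... so I need to be a little careful that the slack in the exponent still beats $1$; taking the target deviation to be a fixed small constant times $\Ex[F]=\Omega(1)$ makes the exponent $-\Omega(N)$, which is more than enough.

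The main obstacle is that McDiarmid's inequality in its textbook form requires a \emph{deterministic} bound $c^{(i)}$ on the differences, whereas here $\Vol(\bF_i)$ is only small \emph{in expectation / with high probability} — for a pathological $\bg^i$ of very small norm the halfspace $\bH_i$ sits close to the origin and $\Vol(\bF_i)$ could be as large as $\approx 1/2$. I would handle this in one of two standard ways: (i) use a version of the bounded-differences inequality with high-probability (rather than almost-sure) bounds on the increments (e.g.\ Combes' or Kutin's variants, or a direct Azuma argument on a truncated martingale), first conditioning on the event $\calE$ that every $\|\bg^i\|$ lies in, say, $[\sqrt{n}/2, 2\sqrt{n}]$, which holds with probability $1 - N e^{-\Omega(n)} = 1-o(1)$ by \Cref{prop:chi-squared-tail}; conditioned on $\calE$, one gets $\Vol(\bF_i)\le 1-\Phi(r/(2\sqrt{n})) \le \mathrm{poly}(n)/N$ using \Cref{lemma:r-N-relationship} and \Cref{prop:gaussian-tails}, a deterministic bound that still gives $\sum_i (c^{(i)})^2 = \mathrm{poly}(n)/N = o(1)$; or (ii) alternatively bound $\mathrm{Var}(F)$ directly via the Efron--Stein inequality, $\mathrm{Var}(F) \le \frac12\sum_i \Ex[(F-F^{(i)})^2] \le \sum_i \Ex[\Vol(\bF_i)^2]$, and show each $\Ex[\Vol(\bF_i)^2] = o(1/N^{1.5})$ or so by a second-moment computation over pairs of points, then apply Chebyshev. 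Either route is routine once the per-coordinate influence bound $\Vol(\bF_i) = \widetilde O(1/N)$ (on the likely event) is in hand; I would go with route (i) since the truncation event $\calE$ is clean and the resulting deterministic increment bound makes McDiarmid immediate.
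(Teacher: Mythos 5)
Your overall strategy --- treat $\Vol(\bigsqcup_i \bU_i)$ as a function of the independent normals $\bg^1,\dots,\bg^N$, condition on a high-probability event to make the bounded-difference constants deterministic, then apply McDiarmid --- is exactly the route the paper takes. But the specific conditioning event you picked in route~(i) is far too loose, and the increment bound you derive from it is false, so the argument as written does not go through.

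You condition on $\|\bg^i\|\in[\sqrt{n}/2,\,2\sqrt{n}]$ and then claim $\Vol(\bF_i)\le 1-\Phi(r/(2\sqrt{n}))\le\mathrm{poly}(n)/N$. This is off by a huge margin. By \Cref{lemma:r-N-relationship} with $N=2^{\sqrt{n}}$ we have $r=\Theta(n^{3/4})$ with $(r/\sqrt{n})^2\approx 2\sqrt{n}\ln 2$; since $1-\Phi(t)\asymp e^{-t^2/2}/t$, halving the argument divides the exponent by $4$, so $1-\Phi(r/(2\sqrt{n}))\approx 2^{-\sqrt{n}/4}=N^{-1/4}$ --- roughly a factor $N^{3/4}$ larger than $\mathrm{poly}(n)/N$. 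That gives $\sum_i c_i^2\approx N\cdot N^{-1/2}=N^{1/2}\to\infty$, and McDiarmid yields nothing. The point you are overlooking is that with $r\asymp n^{3/4}$ the flap volume $\Vol(\bF_i)=1-\Phi(r/\|\bg^i\|)$ is \emph{super-polynomially} sensitive to $\|\bg^i\|$: a constant-factor change in $\|\bg^i\|$ swings it by a factor $N^{\Theta(1)}$, so the conditioning window must have only sub-polynomial relative slack around $\sqrt{n}$. The paper instead conditions on $\|\bg^i\|\in[\sqrt{n}-10n^{1/4},\,\sqrt{n}+10n^{1/4}]$ --- still wide enough that, by \Cref{prop:chi-squared-tail} and a union bound over the $N=2^{\sqrt{n}}$ normals, the event holds with probability $1-o(1)$, but tight enough that $1-\Phi\big(r/(\sqrt{n}+10n^{1/4})\big)\le e^{-(1-o(1))\sqrt{n}}$, whence $\sum_i c_i^2\le N\cdot e^{-(2-o(1))\sqrt{n}}=e^{-\Omega(\sqrt{n})}$ and McDiarmid closes the argument. (Your alternative route~(ii) via Efron--Stein and Chebyshev is not obviously wrong, since the heavy-tail event $\|\bg^i\|\ll\sqrt{n}$ has probability $e^{-\Omega(n)}\ll N^{-2}$, but you did not carry it out and it is not the route the paper takes.)
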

\begin{proof}
Let $\Naz^*(r,N)$ be the same distribution as $\Naz(r,N)$ except that
  when drawing $\bB$, each $\bg^i$ is drawn from $N(0,I_n)$ conditioning 
  on $\|\bg^i\|=\sqrt{n}\pm 10n^{1/4}$ (instead of just drawing $\bg^i\sim N(0,I_n)$).
Recall $N=2^{\sqrt{n}}$. By \Cref{prop:chi-squared-tail}, the probability of 
  $\|\bg^i\|\notin [\sqrt{n}-10n^{1/4},\sqrt{n}+10n^{1/4}]$ for some $i\in [N]$ is at most $o(1)$.
As a result, we have 
$$
\Ex_{\bB\sim \Naz^*(r,N)}\left[\Vol{\left(\bigsqcup_{i\in[N]} \bU_i\right) }\right]
\ge \Ex_{\bB\sim \Naz (r,N)}\left[\Vol{\left(\bigsqcup_{i\in[N]} \bU_i\right) }\right]-o(1).
$$
Moreover, it suffices to show that when $\bB\sim \Naz^*(r,N)$, we have
\begin{align}\label{eq:hehe3}	
\Vol\left(\bigsqcup_{i\in[N]} \bU_i \right)\ge 0.99\cdot \Ex_{\bB\sim\Naz^*(r,N)}\left[\Vol{\left(\bigsqcup_{i\in[N]} \bU_i\right) }\right].
\end{align}
	with probability at least $1-o(1)$.
To this end, we recall McDiarmid's inequality:

\begin{theorem}[McDiarmid bound \cite{McDiarmid:89}]
Let $\bX_1,
\dots, \bX_S$ be independent random variables taking values in a set
$\Omega$. Let $G \colon \Omega^S \to \R$ be such that for all $i \in
[S]$ we have
\[
\big|G(x_1, \dots, x_S) - G(x_1, \dots, x_{i-1}, x_i', x_{i+1}, \dots,
x_S)\big| \leq c_i
\]
for all $x_1, \dots, x_S$ and $x_i'$ in $\Omega$.  Let $\mu =
\E[G(\bX_1,\dots, \bX_S)]$.  Then for all $\tau> 0$, we have
\[
\Pr\left[G(\bX_1, \dots, \bX_S) < \mu - \tau\right] <
\exp\left(-\frac{\tau^2}{\sum_{i\in [S]} c_i^2}\right).
\]
\end{theorem}

We will take $S=N$, $\bX_i$ to be the halfspaces $\bH_i$ and $G(\cdot)$ to be the volume 
  of $\sqcup_{i\in [N]}\bU_i$, as we draw $\bB\sim \Naz^*(r,N)$. 
%we will take $\Omega$ to be the set of all halfspaces of the form $\{x : x\cdot v\leq r \}$ where $r$ is the parameter fixed in the construction and (possibly annoyingly) $v$ is the set of vectors of norm $\sqrt{n} \pm \blue{n^{1/4 + c}}$ (because each $v$ is a Gaussian random vector in our construction). We should gladly give up on draws of Nazarov's body where this doesn't hold, right? 
%Finally, we will take the function $G$ to be the volume of the unique regions, i.e. $\Vol\pbra{\sqcup_{i\in[N]} \bU_i}$; for convenience, let's write $\bU := \sqcup_{i\in[N]} \bU_i$. 
%We know (thanks to the earlier lemma) that $\E[G] \geq c_0c_1 - o_n(1)$. 
%\
Given the way $\bg^i$ is drawn in $\bB\sim \Naz^*(r,N)$, the
  volume of each $\bH_i$ is always at least (using $r\ge \sqrt{2n^{3/2}(1-o(1)}$ by \Cref{lemma:r-N-relationship})
$$
\Phi\left(\frac{r}{\sqrt{n}+10n^{1/4}}\right)\ge 1-e^{-(1-o(1))\sqrt{n}}, 
$$  
from which we have $c_i\le e^{-(1-o(1))\sqrt{n}}$.
%  $1 - \frac{1}{n^{1/4}}\exp(-\sqrt{n}/2)$. So each $c_i$ above will be on the order of $\frac{1}{n^{1/4}}\exp(-\sqrt{n}/2)$; 
As a consequence, 
\[
	\sum_{i\in [N]} c_i^2 \leq N\cdot  e^{-(1-o(1))\sqrt{n}}=e^{-\Omega(\sqrt{n})}.
\]
It follows from McDiarmid that \Cref{eq:hehe3} holds with probability at least $1-o(1)$.
%which should hopefully be $\exp(-\Theta(\sqrt{n}))$. (We're being a bit sloppy with some of the constants here I think.) 
%So I guess we can take $\tau = 0.0001c_0c_1$ in our application of McDiarmid above, and so with probability at least $1-\exp(-0.1c_0c_1\sqrt{n}))$, we should have that the volume of the unique regions is at least $0.9999c_0c_1$?
%And I think the $0.5$ probability bound from Markov (to obtain an upper bound on the $\bF$ regions) should now be okay? We have $0.5 + 1-\exp(-0.1c_0c_1\sqrt{n})) > 1$ so the Yao lemma should go through okay, and I think the gap in the yes and no cases will be 
%\[	
%	2c_0c_1\pbra{\frac{1-2c_2}{12} - \red{2 \left(\frac{c_1}{1-c_1}\right)}} - 2c_2 - \frac{4\tau}{3},
%\]
%which we have control over once again.
\end{proof}

%For \Cref{sec:one-sided-adaptive} we require the following:
%\begin{lemma} \label{lem:make-me-constant}
%Fix any $x \in \R^{n}$ such that $\sqrt{n}-1 \leq \|x\| \leq \sqrt{n}$.  Then
%\begin{equation}\label{eq:make-me-constant}
%N \cdot \pbra{1-\Phi\pbra{\frac{r}{\|\bx\|}}}\Phi\pbra{\frac{r}{\|\bx\|}}^{N-1} = \Omega(1).
%\end{equation}
%\end{lemma}
%
%\begin{proof}
%
%\end{proof}

Finally, the following lemma will allow us to obtain bounds on the distance to convexity of the ``yes''- and ``no''-distributions in~\Cref{sec:tolerant}:

\begin{lemma} \label{lemma:flaps-vs-dog-ears}
	For $r$ satisfying~\Cref{eq:r-condition}, we have %\xnote{On the LHS, I think it should be the volume of the disjoint union? See footnote on the next page to see why I think we need to change the 1 to 2 in green.}
	\[
		\Ex_{\bB\sim\Naz(r,N)}\sbra{\Vol\left(\bigsqcup_{i\in[N]}\bU_i\right)} \geq 
		\pbra{\frac{2}{c_1} - {2}} \Ex_{\bB\sim\Naz(r,N)}\sbra{\Vol\pbra{\bigcup_{|T|\geq 2}\bF_T}}.
	\]
\end{lemma}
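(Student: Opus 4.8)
The plan is to compare the two quantities pointwise (before taking expectations), by relating, for each point $x \in \Ball(\sqrt n)$, the probability that $x$ lands in a unique flap to the probability that $x$ lands in an intersection of two or more flaps. Fix $x$ with $\|x\| \le \sqrt n$ and write $p := 1 - \Phi(r/\|x\|)$ for the probability that a single random halfspace is violated by $x$; note $p \le 1-\Phi(r/\sqrt n) = c_1/N$ by the monotonicity of $\Phi$ and \Cref{eq:r-condition}. Because the $N$ halfspaces are drawn independently, the number of halfspaces violated by $x$ is $\mathrm{Bin}(N,p)$. Hence
\[
\Prx_{\bB\sim\Naz(r,N)}\sbra{x\in \bigsqcup_{i}\bU_i} = Np(1-p)^{N-1},
\qquad
\Prx_{\bB\sim\Naz(r,N)}\sbra{x\in \bigcup_{|T|\ge 2}\bF_T} = 1 - (1-p)^N - Np(1-p)^{N-1}.
\]
Taking $\bx \sim N(0,I_n)$ and integrating these identities over $\bx$ (using Fubini) turns the statement of the lemma into the pointwise-in-$p$ inequality that needs to hold for all relevant values of $p$, namely all $p \in [0, c_1/N]$:
\[
Np(1-p)^{N-1} \;\ge\; \pbra{\tfrac{2}{c_1}-2}\pbra{1-(1-p)^N - Np(1-p)^{N-1}}.
\]

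The key step is to bound the right-hand side by a second-order Taylor/Bernoulli estimate. Writing $q = 1-p$, we have $1 - q^N - Npq^{N-1} = \sum_{k\ge 2}\binom{N}{k}p^k q^{N-k} \le q^{N-2}\sum_{k\ge 2}\binom{N}{k}p^k \le \binom{N}{2}p^2 q^{N-2} \le \tfrac12 N^2 p^2 q^{N-2}$, where I used $q\le 1$ to pull out $q^{N-2}$ and the fact that $\sum_{k\ge 2}\binom Nk p^k \le \binom N2 p^2(1-p)^{-(N-2)}\cdot$ — actually the cleanest route is simply $\sum_{k\ge 2}\binom{N}{k}p^kq^{N-k} \le q^{N-2}\cdot\binom N2 p^2 / (\text{something})$; I would instead just use the crude bound $1 - q^N - Npq^{N-1} \le \binom N2 p^2 q^{N-2}$, which follows because each term $\binom Nk p^k q^{N-k}$ for $k\ge 2$ is at most $\binom Nk p^k q^{N-k}$ and one checks the full tail sum is dominated by $\binom N2 p^2 q^{N-2}$ times a convergent geometric factor $\le 2$ when $Np \le c_1 \le 1$. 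Either way, up to a factor of $2$, the right-hand side of the target inequality is at most $\pbra{\tfrac2{c_1}-2}\cdot N^2p^2q^{N-2}$, while the left-hand side is $Npq^{N-1} = Npq^{N-2}\cdot q$. So it suffices to show $q \ge \pbra{\tfrac2{c_1}-2}Np$, i.e. $1-p \ge \pbra{\tfrac2{c_1}-2}Np$. Since $Np \le c_1$, the right-hand side is at most $\pbra{\tfrac2{c_1}-2}c_1 = 2 - 2c_1$, and since $c_1 \le 0.9$ this exceeds $1$, so this crude bound is not quite enough — I need to be more careful about the geometric-tail constant. The fix: use $\sum_{k\ge 2}\binom Nk p^kq^{N-k}$ and bound it by $\tfrac12 (Np)^2 q^{N-2}\cdot e^{Np} \le \tfrac12(Np)^2 q^{N-2} e^{c_1}$ via the Poisson-type estimate $\binom Nk p^k \le (Np)^k/k!$; then the comparison becomes $q \ge \tfrac12\pbra{\tfrac2{c_1}-2}Np\, e^{c_1}$, and with $Np\le c_1$ and $1-p = q \ge 1 - c_1/N = 1-o(1)$, the RHS is $\le \tfrac12(2-2c_1)e^{c_1}(1+o(1))$, which for $c_1$ a sufficiently small constant is $< 1$.

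The main obstacle — and the place where the actual constant $c_1$ matters — is exactly this last comparison: getting the multiplicative constants in the tail estimate $1 - (1-p)^N - Np(1-p)^{N-1} \lesssim (Np)^2$ tight enough that the factor $\tfrac2{c_1}-2$ (which blows up as $c_1\to 0$) is still beaten by the gain of a full factor of $Np \le c_1$ (which shrinks as $c_1 \to 0$) — these two effects nearly cancel, and the clean form of the bound $\tfrac2{c_1}-2 = \tfrac{2(1-c_1)}{c_1}$ is clearly engineered so that $(Np)\cdot(\tfrac2{c_1}-2) \le c_1 \cdot \tfrac{2(1-c_1)}{c_1} = 2(1-c_1) < 2$, leaving exactly the factor-of-$2$ slack that the second-order tail estimate consumes. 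So I expect the proof to reduce to: (i) the exact Binomial identities above via Fubini; (ii) the bound $1 - (1-p)^N - Np(1-p)^{N-1} \le \tfrac12(Np)(1-p)^{N-1}\cdot\tfrac{Np}{1-p}\cdot(\text{geometric factor})$, proved by comparing consecutive binomial terms; (iii) plugging in $Np \le c_1$ and $c_1$ small to close the inequality pointwise in $p$; and (iv) integrating back over $\bx\sim N(0,I_n)$. No concentration or geometry is needed here — it is a one-variable calculus estimate on the Binomial tail, uniform over $p \in [0,c_1/N]$.
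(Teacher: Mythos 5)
Your overall structure matches the paper's: fix $x$, write everything in terms of $p=1-\Phi(r/\|x\|)\le c_1/N$, observe that the number of violated halfspaces is $\mathrm{Bin}(N,p)$, prove a pointwise ratio bound, and integrate via Fubini. The paper also uses the exact identity $\Pr[x\in\bigsqcup_i\bU_i]=Np(1-p)^{N-1}$ (in the form of \Cref{eq:fix-x-prob-in-U_i}) and the Bernoulli inequality $(1-c_1/N)^{N-1}\ge 1-c_1+c_1/N$ at the end.

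Where you part ways is in bounding $\Pr[x\in\bigcup_{|T|\ge 2}\bF_T]$. The paper does \emph{not} compute the exact binomial tail $\sum_{k\ge 2}\binom{N}{k}p^k q^{N-k}$; it uses the much simpler \emph{union bound over pairs}, $\Pr[\text{$\ge 2$ halfspaces violated}]\le\binom{N}{2}p^2\le\frac{N^2}{2}p^2$, with no $q$ factor in the denominator at all. The ratio then collapses to $\frac{Np\,q^{N-1}}{(N^2/2)p^2}=\frac{2\,q^{N-1}}{Np}\ge\frac{2(1-c_1/N)^{N-1}}{c_1}$, and Bernoulli finishes. That avoids all the bookkeeping of $q^{N-k}$ factors, works for any $c_1\in(0,1)$, and needs no geometric or Poisson tail estimate.

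By contrast, two of your intermediate claims are simply false. The step "$\sum_{k\ge 2}\binom{N}{k}p^k q^{N-k}\le q^{N-2}\sum_{k\ge 2}\binom{N}{k}p^k$" goes the wrong way: since $0<q\le 1$ and $N-k\le N-2$ for $k\ge 2$, we have $q^{N-k}\ge q^{N-2}$, so you cannot pull $q^{N-2}$ out as an upper bound. Likewise the "crude bound" $1-q^N-Npq^{N-1}\le\binom{N}{2}p^2 q^{N-2}$ is false: the left side is $\sum_{k\ge 2}\binom{N}{k}p^kq^{N-k}$, and its $k=2$ term already \emph{equals} your proposed upper bound, with positive contributions from $k\ge 3$ on top. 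The Poisson-style estimate you land on can be repaired (the correct form is $\sum_{k\ge 2}\binom{N}{k}p^kq^{N-k}\le\frac{(Np)^2}{2}q^{N-2}e^{Np/q}$, obtained by writing $q^{N-k}=q^N\cdot q^{-k}$ and using $\binom{N}{k}\le N^k/k!$; since $p\le c_1/N$, $e^{Np/q}=e^{c_1}(1+o(1))$), and with that fix the final comparison $q\ge(1-c_1)e^{c_1}(1+o(1))$ does close because $(1-c_1)e^{c_1}<1$ strictly for every $c_1\in(0,1)$ — no "sufficiently small $c_1$" hypothesis is needed. But the union-bound-over-pairs route is both shorter and free of these pitfalls, and is what the paper actually does.
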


\begin{proof}
	Fix $x\in\R^n$ and $i\in[N]$. Recall \Cref{eq:fix-x-prob-in-U_i}.
	% Note that because of independence, 
	%\begin{align}
%		\Prx_{\bB\sim\Naz(r,N)}\sbra{x \in \bU_i} &= \Prx_{\bB\sim\Naz(r,N)}\sbra{x\in \Ball(\sqrt{n})\cap (\R^n\setminus\bH_i) \cap \pbra{\bigcap_{j\neq i} \bH_j}} \nonumber \\
%		&= \Indicator\cbra{\|x\|\leq\sqrt{n}}\pbra{1 - \Phi\pbra{\frac{r}{\|x\|}}}\Phi\pbra{\frac{r}{\|x\|}}^{N-1} \label{eq:unique-flap-lb-fixed-x}
%	\end{align}
On the other hand, we have 
	\begin{align}
		\Prx_{\bB\sim\Naz(r,N)}\sbra{x\in \bigcup_{|T|\geq 2}\bF_T} %&\leq \Prx_{\bB\sim\Naz(r,N)}\sbra{x\in \bigcup_{|T| = 2}\bF_T} \nonumber \\
		&\leq {N \choose 2} \Prx_{\bB\sim\Naz(r,N)}\sbra{x\in\bF_1\cap\bF_2} \nonumber \\ 
		&= {N \choose 2} \Prx_{\bB\sim\Naz(r,N)}\sbra{x\in \Ball(\sqrt{n}) \cap (\R^n \setminus \bH_1) \cap (\R^n\setminus\bH_2)} \nonumber \\ 
		&\leq \frac{N^2}{2}\cdot\Indicator\cbra{\|x\|\leq\sqrt{n}}\cdot\pbra{1-\Phi\pbra{\frac{r}{\|x\|}}}^2 \label{eq:dog-ear-ub-fixed-x}
	\end{align}
	where we once again used~\Cref{eq:halfspace-prob}. It follows from~\Cref{eq:fix-x-prob-in-U_i,eq:dog-ear-ub-fixed-x} that for $x\in\Ball(\sqrt{n})$ (i.e $\|x\|\leq\sqrt{n}$), we have 
	\begin{align}
		\frac{N\cdot\Pr\sbra{x\in \bU_i}}{\Pr[x\in \bigcup_{|T|\geq 2} \bF_T]} &\geq \pbra{\frac{2}{N}}\Phi\pbra{\frac{r}{\|x\|}}^{N-1}\pbra{1-\Phi\pbra{\frac{r}{\|x\|}}}^{-1} \nonumber \\ 
		&\geq \pbra{\frac{2}{N}}\Phi\pbra{\frac{r}{\sqrt{n}}}^{N-1}\pbra{1-\Phi\pbra{\frac{r}{\sqrt{n}}}}^{-1}  \label{eq:ratio-calc-phi-increasing} \\
		&= \pbra{\frac{2}{c_1}}\pbra{1-\frac{c_1}{N}}^{N-1} \label{eq:altoids} \\
		&\geq \pbra{\frac{2}{c_1}}\pbra{1 - c_1 + \frac{c_1}{N}} \label{eq:icebreakers} \\
		&>\frac{2}{c_1} - 2\nonumber
%		&\geq \frac{1}{N}\cdot\pbra{1-\Phi\pbra{\frac{r}{\sqrt{n}}}}^{-1} \label{eq:flap-dog-ratio-r-choice}  \\ 
%		&\geq \frac{1}{c_1} \label{eq:r-N-relation-application-flap-dog-ratio} \\
%		&\geq \frac{1}{\ln 2 + O\pbra{\frac{1}{N}}} \nonumber \\
%		&\geq \frac{1}{\ln 2} - O\pbra{\frac{1}{N}} \label{eq:fix-x-flap-ratio}
	\end{align}
	%\xnote{I think there is only one place we used this. I checked and updated it.}
	where~\Cref{eq:ratio-calc-phi-increasing} relies on the fact that $\|x\|\leq\sqrt{n}$ and $\Phi(\cdot)$ being increasing, \Cref{eq:altoids} relies on our definition of $r$ from~\Cref{eq:r-condition}, and~\Cref{eq:icebreakers} relies on Bernoulli's inequality: $(1-y)^z \geq 1- yz$. 
	(Note that for $x$ with $\|x\| > \sqrt{n}$, we have $\Pr[x\in\bU_i] = \Pr[x \in \bigcup_{|T|\geq 2} \bF_T] = 0$.)
	
	To conclude, we have 
	\begin{align}
		N\cdot\Ex_{\bB\sim\Naz(r,N)}\sbra{\Vol(\bU_i)} 
		&= \Ex_{\bB\sim\Naz(r,N)}\sbra{N\cdot\Prx_{\bx\sim N(0,I_n)}\sbra{\bx\in\bU_i}} \nonumber \\
		&= \Ex_{\bx\sim N(0,I_n)}\sbra{N\cdot\Prx_{\bB\sim\Naz(r,N)}\sbra{\bx\in\bU_i}} \nonumber \\
		&\geq \pbra{\frac{2}{c_1} - 2}\Ex_{\bx\sim N(0,I_n)}\sbra{\Prx_{\bB\sim\Naz(r,N)}\sbra{\bx\in \bigcup_{|T|\geq 2}\bF_T}} \label{eq:fix-x-flap-dog-ratio-application} \\
		&= \pbra{\frac{2}{c_1} - 2} \Ex_{\bB\sim\Naz(r,N)}\sbra{\Prx_{\bx\sim N(0,I_n)}\sbra{\bx\in \bigcup_{|T|\geq 2}\bF_T}} \nonumber \\
		&= \pbra{\frac{2}{c_1} - 2} \Ex_{\bB\sim\Naz(r,N)}\sbra{\Vol\pbra{\bigcup_{|T|\geq 2}\bF_T}} \nonumber
	\end{align}
	where~\Cref{eq:fix-x-flap-dog-ratio-application} follows from the earlier calculation, completing the proof. 
\end{proof}

%!TEX root = ../ctlb.tex

\section{One-Sided Adaptive Lower Bound}
\label{sec:one-sided-adaptive}

For this section, it will be most convenient for us to work over $\R^{2n}$. Let us restate Theorem~\ref{thm:one-sided} in this setting:

\begin{theorem} [One-sided adaptive lower bound, restated] \label{thm:one-sided2}
For some absolute constant $\eps>0$, any one-sided $\eps$-tester for convexity  over $N(0,I_{2n})$ (which may be adaptive) must use $n^{\Omega(1)}$ queries.
\end{theorem}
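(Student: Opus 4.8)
The plan is to prove Theorem~\ref{thm:one-sided2} via Yao's principle (restricted to deterministic algorithms), combined with the structural observation that a one-sided tester can only reject if it has explicitly found a certificate of non-convexity --- that is, queried points $x_1,\dots,x_t$ all lying in the body together with a point $y$ in their convex hull lying outside the body. Thus the task reduces to (i) constructing a distribution $\Dno$ over functions $\boldf \colon \R^{2n}\to\zo$ that are $\eps$-far from convex with probability $1-o(1)$, and (ii) showing that no $q$-query deterministic algorithm, for $q = n^{\Omega(1)}$, can locate such a certificate with probability more than, say, $1/3$ over the draw of $\boldf \sim \Dno$ and its own (now forced, since deterministic) query sequence.

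For step (i): I would work in $\R^{2n}$, draw a random $n$-dimensional control subspace $\bC$ with orthogonal complement the action subspace $\bA$, embed a Nazarov body $\bB \sim \Naz(r,N)$ (with $N = 2^{\sqrt n}$, $c_1 = \ln 2 \pm O(1/N)$ so that $\Vol(\bB)$ is bounded away from $0$ and $1$) in $\bC$, and for each $j \in [N]$ draw an independent ``action direction'' $\bv^j \in \bA$. The no-body consists of points $x$ with $\|x\|\le \sqrt{2n}$ and $\|x_{\bC}\|\le\sqrt n$ such that either $x_{\bC}\in\bB$, or for \emph{every} $j$ with $\bH_j(x_{\bC})=0$ the projection $x_{\bA}$ lies outside the width-$1$ strip around $\bv^j$. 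The key geometric point is that each uniquely-violated region $\bU_j$ (lifted to $\R^{2n}$) intersected with the strip around $\bv^j$ is genuinely non-convex: a line in direction $\bv^j$ through a strip-point stays in $\bU_j$ but exits the strip on both sides, so it leaves and re-enters the body. I would use \Cref{lemma:unique-expected-volume-lb-adaptive} and \Cref{lemma:concentration} to argue that $\bigsqcup_j \bU_j$ has Gaussian volume $\Omega(1)$ with high probability; combined with the width-$1$ strips carving out a constant fraction of the action space, this yields distance $\geq\eps$ to convexity for an absolute constant $\eps>0$ (this is \Cref{lem:distance}).

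For step (ii), which I expect to be the main obstacle, I would fix a deterministic $q$-query algorithm and argue that with probability $1-o(1)$ its queries never witness non-convexity. Since a certificate must involve two in-body points $x,z$ lying outside the \emph{same} halfspace $\bH_j$ with $x$ inside and $z$ outside the strip in direction $\bv^j$, it suffices to rule this out. First, by \Cref{lemma:small-vol-high-degree} and a union bound over queries, no query is falsified by more than $q$ halfspaces, so at most $q^2$ halfspaces are relevant. Second --- the genuinely geometric part --- I would show (this is the content of \Cref{lem:ev1}, proved in \Cref{sec:proofev1}) that, because the algorithm is ignorant of $\bC$ and of the defining vectors $\bg^j$, with probability $1-o(1)$ no two queries at mutual distance $\geq 1000\sqrt q\, n^{1/4}$ can both violate the same halfspace while having control-projections of norm $\le\sqrt n$; hence all queries relevant to a given $\bH_j$ are clustered within distance $1000\sqrt q\,n^{1/4}$. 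Finally, for a fixed $\bH_j$, the action-projections of its (at most $q$) relevant queries onto the strip direction $\bv^j$ form a set of diameter $O(\sqrt q/n^{1/4})$ whose location is, by Gaussian anti-concentration, essentially uniform; so the probability that any such pair straddles the boundary of a width-$1$ strip is $O(\sqrt q/n^{1/4})$ per query, $O(q^{1.5}/n^{1/4})$ per halfspace. Union-bounding over the $q^2$ relevant halfspaces gives total failure probability $O(q^{3.5}/n^{1/4})$, which is $o(1)$ as long as $q = n^{1/4 - \Omega(1)} = n^{\Omega(1)}$, completing the argument (the details of independence --- that $\bv^j$ really is fresh randomness conditioned on the query/control information --- are the delicate point to get right).
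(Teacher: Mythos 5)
Your proposal is correct and follows essentially the same route as the paper: the same $\Dno$ construction (control/action subspaces, Nazarov body, per-flap action directions and width-$1$ strips), the same reduction of one-sided rejection to finding two queries in a common flap that straddle a strip boundary, and the same three-part analysis (bounded flap-degree via \Cref{lemma:small-vol-high-degree}, clustering of same-flap queries within $1000\sqrt{q}\,n^{1/4}$, and anti-concentration giving the $O(q^{3.5}/n^{1/4})$ failure bound). The only minor deviation is that you claim farness from convexity with probability $1-o(1)$, whereas the paper's \Cref{lem:distance} establishes only an $\Omega(1)$ fraction --- which is all that is needed for the one-sided argument.
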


At a high level, the proof of Theorem~\ref{thm:one-sided} works by (1) first defining a distribution $\Dno$ of ``no-functions'' (Boolean-valued functions over $\R^{2n}$, or equivalently, subsets of $\R^{2n}$), and showing that an $\Omega(1)$ fraction of draws from $\Dno$ yield sets which are $\Omega(1)$-far from convex; and (2) then arguing that for a suitable absolute constant $c>0$, any $n^c$-query algorithm (even an adaptive one) has only an $o(1)$ probability of querying a set of points whose labels are inconsistent with every convex set in $\R^{2n}$.
In the next subsection we describe the distribution $\Dno$.

\subsection{The distribution $\Dno$ of far-from-convex sets}

\subsubsection{Setup}
\label{sec:adaptive-setup}

We will see that every function $f$ in the support of $\Dno$ outputs $0$
on every input point $x\in \mathbb{R}^{2n}$ with $\|x\|>\sqrt{2n}$.
To describe how $f$ behaves within the $\sqrt{2n}$-ball, denoted by
\[
	\Ball(\sqrt{2n}):= \big\{x\in \mathbb{R}^{2n}: \|x\|\le \sqrt{2n}\big\},
\] 
we require some more setup.
\medskip

\noindent {\bf The ``control subspace,''  the ``action subspace,'' and the Nazarov body.}
Let $\bC$ be a Haar random $n$-dimensional subspace of $\R^{2n}$; we call $\bC$ the \emph{control subspace}.  
Let $\bA$ be the orthogonal complement of $\bC$ (which is also an $n$-dimensional subspace); we call $\bA$ the ``action subspace.''
Given a vector $x \in \R^n$, we write $x_{\bC}$ to denote the orthogonal projection of $x$ onto $\bC$ and we write $x_{\bA}$ to denote the orthogonal projection of $x$ onto $\bA$, so every vector satisfies $x = x_{\bA} + x_{\bC}$.

Fix $N:= 2^{\sqrt{n}}$ (we assume without loss of generality that $n$ is a perfect square, so $N$ is an integer).
For this choice of $N$, let $\bB \sim \Naz(r,N,\bC)$ where $\Naz(r,N,\bC)$ is as defined in \Cref{def:nazarov-body} but with the $n$-dimensional control subspace $\bC$ playing the role of $\R^n$.  (We emphasize that $\bB \sim \Naz(r,N,\bC)$ is a subset of $\R^{2n}$ which is an ``$n$-subspace junta,'' meaning that for any $x \in \R^{2n}$, membership of $x$ in $\bB$ depends only on $x_{\bC}.$)
We take $r$ to be the unique positive number such that 
	\[
		\Phi\pbra{\frac{r}{\sqrt{n}}}^N = \frac{1}{2}.
	\]
	In other words, we choose $r$ to be the unique value such that any point $x$ with $\|x_{\bC}\|=\sqrt{n}$ has probability $1/2$ of being in $\bB \sim \Naz(r,N,\bC)$ (cf.~\Cref{eq:nazarov-prob}). Note that  
	\[
		\Phi\pbra{\frac{r}{\sqrt{n}}} = \pbra{\frac{1}{2}}^{\frac{1}{N}} = 1 - \frac{c_1}{N}
			\quad \text{for a value~}c_1 ={\ln 2} \pm {\frac {O(1)}{N}}
	\]
	by the Taylor expansion of $e^{-\ln(2)/N}$ and setting of $r$ (Lemma~\ref{lemma:r-N-relationship}).

%
%\begin{figure} \label{fig:nazzy}
% \begin{center}
%\includegraphics[width=0.7\textwidth]{B}
%\end{center}
%\caption{The Nazarov body $\bB$ (in green).  The region $\bF_1$ is shaded yellow and orange, and the orange region is $\bF_{\{1,3\}}$. }
%\end{figure}

%The regions $\bF_i$ and $\bF_j$ are shaded red in Figure~1. 
%We define 
%\begin{equation} \label{eq:F}
%\bF := \bF_1 \sqcup \cdots \sqcup \bF_N
%\end{equation}
%(note that the $\bF_i$'s are disjoint), and we define 
%\begin{equation} \label{eq:Bprime}
%\bB' := \bH'_1 \cap \cdots \cap \bH'_N.
%\end{equation} 
%We will sometimes refer to $\bB'$ as the ``contracted Nazarov body''. 

  %the points of $\bB$ which are falsified by exactly those $\bH'_j$'s for which $j$ belongs to $T$, i.e.
%\begin{equation} \label{eq:DT}
%\bD_T = \cbra{x \in \bB \ : \text{~for $j \in T$ we have~}
%x \notin \bH'_j \text{~and for $j \in [N]\setminus T$ we have~}x \in \bH'_j
% }.
% \end{equation}
%Note that $\bD_{\{j\}}$ is another name for $\bF_j$, that $\bD_{\emptyset}$ is another name for $\bB'$, %and that $\bB_i = \cup_{T \ni i} \bD_T$.
%We define $\bD$ (for ``{\bf d}og ear'') to be the union of the $\bD_T$'s over all sets $T$ of size at %least two:
%\begin{equation} \label{eq:D}
%\bD := \bigcup_{|T| \geq 2, T \subseteq N} \bD_T.
%\end{equation}
%For $i \in [N]$, we further define
%\begin{equation} \label{eq:Dsuperi}
%\bD^i = \bigsqcup_{T \subset N, |T|=i} \bD_T
%\end{equation}
%to be the disjoint union of the $\bD_T$'s across all $T$ of size exactly $i$.

\medskip

\noindent {\bf The ``action directions.''} 
For each $i \in [N]$, draw a random vector $\bv^i$ from the standard Normal distribution $N(0,I_n)$ over the $n$-dimensional action subspace $\bA$ (independent of everything else).  We say that $\bv^i$ is the \emph{action direction} for the $i$-th flap $\bF_i$ of the Nazarov body $\bB$.  We note that for every pair $i,j \in [N]$, the vector $\bg^i$ defining the $i$-th halfspace $\bH^i$ of the Nazarov body is orthogonal to the vector $\bv^j$ (because $\bg^i \in \bC$ and $\bv^j \in \bA$).

%Let's also define the 1-dimensional interval
%\[
%I := \sbra{-1/2, 1/2} \subset \R
%\]
%since we will need it soon.

\subsubsection{The distribution $\Dno$}\label{subsec:adapt-dno}

For a fixed setting of the control subspace $C$ and the (orthogonal) action subspace $A$, of $\vec{H} := (H_1,\dots,H_N)$ (which also specifies $B$ and $F_i$'s) and of $\vec{v}:=(v^1,\dots,v^N)$, we define the function $f_{C,A,\vec{H},\vec{v}}: \R^{2n} \to \{0,1\}$ as follows:
 
 %\rnote{Minor thing: in the previous version we had been saying, in the first line of the definition of $g_{C,\vec{H},P}$, that $g_{C,\vec{H},P}(x)$ is 0 if $\|x\|$ (not $\|x_C\|$) is outside the right range. But I think this is a little inconsistent with e.g. the second line in the definition of $g_{C,\vec{H},P}(x)$ --- it's possible for $\|x\|$ to be larger than $\sqrt{n}+10\sqrt{\ln(1/\eps)}$ but for $x$ to also lie in $B'$, if, e.g., $x$ had a freakishly large component $x_A$ (recall that $B'$ is defined only in terms of what happens in the control space, i.e. in terms of $x_C$).}

\[
f_{C,A,\vec{H},\vec{v}}(x) = \begin{cases}
0 & x \notin \Ball(\sqrt{2n}) \text{~or~} \|x_C\| > \sqrt{n};\\
1 & x \in \Ball(\sqrt{2n}) \text{~and~} x_C \in B;\\
\bigwedge_{j \in T}\Indicator\sbra{\langle v^j,x\rangle \notin [-{\frac {\sqrt{n}}{2}},{\frac {\sqrt{n}}{2}}]} &  x \in \Ball(\sqrt{2n}) \text{~and~}x_C \in F_T
\text{~for some~}\emptyset \neq T \subseteq [N].
\end{cases}
\]
A random function $\boldf \sim \Dno$ is drawn as follows:  first we draw a Haar random $n$-dimensional subspace $\bC$; then $\bA$ is taken to be the $n$-dimensional (Haar random) orthogonal complement of $\bC$; then we draw $\bB \sim \Naz(r,N,\bC)$ (which gives a draw of $\vec{\bH}$ as in \Cref{eq:boldH}); then we draw  $\vec{\bv}=(\bv^1,\dots,\bv^N)$ from $\bA$ as described above; then we set the function
  $\boldf$ to be $\smash{f_{\bC,\bA,\vec{\bH},\vec{\bv}}}$.

%\[
%\boldf_{\vec{H},\vec{v}}(x) = \begin{cases}
%0 & \text{if~} \|x\| > \sqrt{n}+1 ;\\
%1 & \text{if~} \|x\| < \sqrt{n}-1 ;\\
%\bg_{\vec{\bH},\vec{\bv}}(x) & \text{if~} \|x\| \in \Shell.
%\end{cases}
%\]

%\snote{The lemma on higher-order-dog-ears having small volume is now in~\Cref{sec:nazarov-prelims}.}

% ignoring old stuff
\ignore{

\subsection{Useful properties of the $\Dno$-sets}

We will require the following lemma, which states that the expected volume of $\bD^i$ decreases very rapidly as a function of $i$: {\color{red}Xi: I hope this lemma is still true, with $D^i$ now
    defined to be the union of $F_T$ for all $T$ of size $i$.}

\blue{
\begin{lemma}[Erik says maybe the following is good enough for the one-sided adaptive part] 
\label{lem:rad-by-rad}
Fix any radius $r$ we need to worry about, in $[0,\sqrt{n}]$, and fix a point $z$ with $\|z\|_2=r$.  Fix $q=n^c$ for a small constant $c$.  Then for a random Nazarov body, we have $\Pr[z \in \sqcup_{i \geq q}\bD^i] \leq O(1/q!).$  (Even a RHS of $1/4^q$ would suffice for us; this will fight a union bound over all $(1.0\cdot 2)^q$ leaves of the tree.)
\end{lemma}
}

%\begin{lemma} \label{lem:exp-decrease}
%\red{Let $m = \green{\log n}.$ For $1 \leq i \leq m$, we have
%$\sum_{j \geq i} \Ex_{\boldf \sim \Dno}[\Vol(\bD^j)] \leq O(1/i!) + 1/n$.}
%\rnote{Is this what we need? For the tolerant construction, I think we needed that $\Vol(\bD^1)$ dominates $\sum_{j\geq 2} \Vol(\bD^j)$.}

%
%Also, I think we might additionally need the statement to give us that
%$\Ex_{\boldf}[\Vol(\cup_{i \in [m+1,N]} \bD^i)]$ is not too large.}
%\end{lemma}

\red{(Mostly ignore the following since we plan to analyze the $\bD^i$'s instead of the $\bD'^i$'s)}

We provide some setup for this lemma before entering into the proof.
For $T \subseteq [N]$ define 
\[
\bD'_T := \cbra{x \in \R^n \ : \text{for each~}i \in [N], x 
\in \bS_i \text{~iff~}i \in T
},
\]
i.e.~the set of points in $\R^n$ that belong to precisely the slabs indexed by the elements of $T$.
For $T \subseteq [N]$ define 
\[
\bD'^i = \bigsqcup_{T \subset N, |T|=i} \bD'_T
\]
Note that $\bD_T=\bB \cap \bD'_T$ and $\bD'^i = \bB \cap \bD^i$.
%Hence \Cref{lem:exp-decrease} is an immediate corollary of the following:
%\begin{lemma} \label{lem:exp-decrease2}
%\red{Let $m = \green{\log n}.$ For $1 \leq i \leq m$, we have
%$\sum_{j \geq i} \Ex_{\boldf \sim \Dno}[\Vol(\bD^j)] \leq O(1/i!) + 1/n$.}
%\end{lemma}

The main idea behind \Cref{lem:exp-decrease2} is to perform the analysis on a ``radius-by-radius'' basis.
For any $t>0$, define $p_t \in [0,1]$ to be the value such that every $x \in \R^n$ with $\|x\|_2=t$ has $\Pr[x \in \bS_1]=p_t$ (this value will be the same for all $x$ with $\|x\|_2=t$ by rotational symmetry of $N(0,I_n)$).
It follows that
\begin{equation}
\label{eq:fixed-radius-t}
\text{ for $\|x\|_2=t,$ we have~}
\Pr[x \in \bD'_T]=p_t^{|T|} \cdot (1-p_t)^{N-|T|}.
\end{equation}
Let us analyze this quantity:

\bigskip\bigskip\bigskip\bigskip

\blue{create some BLAHBLAH statement roughly along the lines of the analysis in \href{https://docs.google.com/document/d/1Z45IzhgP4Ke-Y_1ZtTOfkXJ7Ewg6mR5fTJumIUtYxeI/edit}{this Google Drive file}, establishing that for any $z$ with $\|z\| = \sqrt{n}+\Delta$, where $|\Delta| \leq$ (something), we have $\Pr_{\boldf \sim \Dno}[z \in \bD'^i] = such-and-such$.  Using the fact that the variance of the $\chi^2(n)$ distribution is $2n$, we need to work out what the (something) needs to be so that we can combine the radius-by-radius bounds and only incur a small acceptable error because of outcomes of $\|\bg\|_2$ which come out outside of $[\sqrt{n}-\Delta,\sqrt{n}+\Delta]$).}

\bigskip\bigskip\bigskip\bigskip
 
With \blue{BLAHBLAH} in hand, we can bound the volume of $\bD'^i$ as follows. 
Recalling \Cref{eq:fixed-radius-t}, we have
\[
\Ex_{\boldf \sim \Dno}[\Vol(\bD'^i)]=
{N \choose i} \cdot \Ex_{\bt \sim \chi(n)}\sbra{ p_{\bt}^{i} \cdot (1-p_{\bt})^{N-i}}.
\]

\bigskip \bigskip \bigskip \bigskip \bigskip \bigskip \bigskip \bigskip 

\blue{
\subsubsection{High-order $\bD_T$'s have small volume}

\medskip

We think we will need a lemma like the following:
\begin{lemma} \label{lem:exp-decrease-old}
$\Ex_{\boldf \sim \Dno}[\Vol(\bD^i)] \leq 2^{-\Omega(i)}.$
\end{lemma}

\noindent
\emph{Proof sketch for \Cref{lem:exp-decrease}.}
We observe that $\bD_T \subseteq \bD'_T$. 
%(To see this, consider any $x \in \bD_T$. We have $x \in \bB$ so $x \in \bH_i$ for all $i \in [N]$, and $x \in \bH'_i$ for exactly those $i \in [N]$ that do not belong to $T$. Hence $x \in \bS_i$ for exactly those $i \in [N]$ that belong to $T$, so $x \in \bD'_T.$)
So to show that $\Ex_{\boldf \sim \Dno}[\Vol(\bD_i)] \leq 2^{-\Omega(i)},$ it suffices to show that $\Ex_{\boldf \sim \Dno}[\Vol(\bD'^i)] \leq 2^{-\Omega(i)}.$ 

%Recall that for $i \in [N]$ we have that slab $i$ is
%\[
%\bS_i = \bH_i \setminus \bH'_i
%\]
%(so $\bB_i \subseteq \bS_i$ because $\bS_i$, unlike $\bB_i$, may contain points that are outside of $\bB$). 

We argue that $\Ex_{\boldf \sim \Dno}[\Vol(\bD'^i)] \leq 2^{-\Omega(i)}$ by arguing radius-by-radius.  Given a real value $t > 0$, any $x \in \R^n$ with $\|x\|_2=t$ has $\Pr[x \in \bS_1]=p_t$ where $p_t = \red{BLAH}$ and hence
$\Pr[x \in \bD'_T]=p_t^{|T|} \cdot (1-p_t)^{N-|T|}$, so 
\[
\Ex_{\boldf \sim \Dno}[\Vol(\bD'^i)]=
{N \choose i} \cdot \Avg_{\bt \sim \chi(n)}\sbra{ p_{\bt}^{i} \cdot (1-p_{\bt})^{N-i}}
\]
See \href{https://docs.google.com/document/d/1Z45IzhgP4Ke-Y_1ZtTOfkXJ7Ewg6mR5fTJumIUtYxeI/edit}{this Google Drive file} for a sketch of how to prove the radius-by-radius statement or \Cref{lem:exp-decrease}.

}

} % end of ignore

\subsection{Sets in $\Dno$ are far from convex}

We need a constant fraction of the no-functions to be constant-far from convex. This is given by the following lemma:

\begin{lemma}  \label{lem:distance}
With probability $\Omega(1)$ over a draw of $\boldf \sim \Dno$, we have that $\Vol(\boldf  \triangle g)=\Omega(1)$ for every $g: \R^{2n} \to \zo$ that is the indicator function of a convex set in $\R^{2n}$.
\end{lemma}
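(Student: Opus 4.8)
The plan is to show that a random $\boldf \sim \Dno$ has a large ``local violation'' region which forces any convex $g$ to disagree with $\boldf$ on a constant fraction of Gaussian mass. The key structural observation is the one sketched in the introduction: fix $j \in [N]$ and consider a point $x \in \R^{2n}$ with $\|x\| \le \sqrt{2n}$, $\|x_{\bC}\| \le \sqrt{n}$, such that $x_{\bC} \in \bU_j$ (the uniquely-violated region of the $j$-th flap) and $\langle \bv^j, x\rangle \in (-\tfrac{\sqrt n}{2}, \tfrac{\sqrt n}{2})$. On such a point $\boldf(x) = 0$. But since $\bv^j \in \bA$ is orthogonal to $\bC$, moving from $x$ along $\pm\bv^j$ keeps the $\bC$-projection fixed (so we stay in $\bU_j$, hence $x_{\bC} \in \bF_T$ with $T=\{j\}$ throughout), while $\langle \bv^j, \cdot\rangle$ changes; once it leaves $(-\tfrac{\sqrt n}{2},\tfrac{\sqrt n}{2})$ the conjunction becomes $1$, i.e. $\boldf = 1$ there (as long as we remain inside $\Ball(\sqrt{2n})$, which holds for a segment of length $\Omega(\sqrt n)$ on each side with high probability since $x$ is typically well inside the ball). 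So along a typical such line we see the pattern $1, \dots, 1, 0, \dots, 0, 1, \dots, 1$: a zero sandwiched between two ones. This is a certificate of non-convexity that is ``replicated'' across the whole region
\[
	\bW := \bigsqcup_{j\in[N]} \Big\{ x \in \Ball(\sqrt{2n}) : x_{\bC} \in \bU_j,\ \langle \bv^j, x\rangle \in \big(-\tfrac{\sqrt n}{2}, \tfrac{\sqrt n}{2}\big) \Big\},
\]
on which $\boldf \equiv 0$.

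Next I would lower bound $\Vol(\bW)$ in expectation, and then argue concentration. For a fixed $j$: conditioned on $\bC, \bA, \vec{\bH}$, the event $\{x_{\bC}\in\bU_j\}$ depends only on $x_{\bC}$, while conditioned additionally on $\bv^j$ the constraint $\langle\bv^j,x\rangle \in (-\tfrac{\sqrt n}{2},\tfrac{\sqrt n}{2})$ is a slab constraint on $x_{\bA}$. Since $\|\bv^j\| \approx \sqrt n$ whp, the Gaussian mass (within $\bA$) of this slab is $\Pr_{\bz\sim N(0,1)}[|\bz| \le \tfrac{\sqrt n}{2\|\bv^j\|}] = \Theta(1)$ by anti-concentration; and crucially this slab event is (essentially) independent of $x_{\bC}$ and of the event $x_{\bC}\in\bU_j$. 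Hence $\EVol{\bW} \gtrsim \Omega(1)\cdot \Ex_{\bB}\big[\Vol(\sqcup_j \bU_j)\big] = \Omega(c_1) = \Omega(1)$ by Lemma~\ref{lemma:unique-expected-volume-lb-adaptive}. (I need to be a little careful with the $\Ball(\sqrt{2n})$ truncation and the $\|x_{\bC}\|\le\sqrt n$ truncation, but both cost only $o(1)$ since $\Vol(\Ball(\sqrt{2n}))=1/2+o(1)$ in $\R^{2n}$ and $\|x_{\bC}\|^2\sim\chi^2(n)$ concentrates at $n$.) Combining with the concentration of $\Vol(\sqcup_j\bU_j)$ from Lemma~\ref{lemma:concentration} (and an analogous McDiarmid/Chernoff argument over the independent $\bv^j$'s, or simply Markov on the complement), with probability $\Omega(1)$ over $\boldf\sim\Dno$ we have $\Vol(\bW) \ge \delta$ for an absolute constant $\delta>0$.

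Finally I convert the ``sandwich'' structure into a distance lower bound. Fix an outcome of $\boldf$ with $\Vol(\bW)\ge\delta$ and let $g$ be the indicator of any convex set. I want to show $\Vol(\boldf \triangle g) = \Omega(1)$. Consider a point $x\in\bW\cap g$ (so $g(x)=1$ but $\boldf(x)=0$, contributing to the symmetric difference directly) — or $x\in\bW\setminus g$. In the latter case, look at the line $\ell_x = \{x + t\bv^{j(x)} : t\in\R\}$ where $j(x)$ is the (unique) index with $x_{\bC}\in\bU_{j(x)}$. On $\ell_x$, the points $x \pm s\bv^{j(x)}$ for $s$ in some interval $[s_0, s_1]$ with $s_1 - s_0 = \Omega(\sqrt n)$ (on \emph{both} sides) satisfy $\boldf = 1$ and lie in $\Ball(\sqrt{2n})$; since $x\notin g$ but $g$ is convex, at most one of the two sides can be entirely inside $g$, so on at least one side there is a sub-segment of length $\Omega(\sqrt n)$ (actually $\Omega(1)$ suffices) where $\boldf=1$ but $g=0$, contributing to the symmetric difference. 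To turn this into a volume bound I would integrate over $\bW$ using a change of variables / Fubini that decomposes $N(0,I_{2n})$ along the $\bv^j$ directions: parametrize by the ``base point'' in the appropriate hyperplane and the coordinate along $\bv^j$, and observe the conditional density along each line is a bounded one-dimensional Gaussian, so the length-$\Omega(1)$ violating sub-segments pick up $\Omega(1)$ conditional mass, which when integrated against the $\delta$ mass of the base region gives $\Vol(\boldf\triangle g)\ge\Omega(\delta)=\Omega(1)$. The main obstacle — and the step requiring the most care — is precisely this last integration: the different lines $\ell_x$ have different directions $\bv^j$ depending on which $\bU_j$ the base point lands in, so the Fubini decomposition must be done flap-by-flap (one direction $\bv^j$ per $j$), and one must verify that the ``sandwich'' segments on which $g$ is forced to disagree carry a non-negligible, uniformly lower-bounded fraction of the conditional Gaussian mass along each such line — this is where Gaussian anti-concentration (to keep the center of the slab away from the tails) and the $\Omega(\sqrt n)$-length of the in-body segments (to have room for the sandwich) both get used. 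A secondary technical point is handling the few $x$ whose line exits $\Ball(\sqrt{2n})$ too quickly or whose $x_{\bC}$ is too close to $\partial\Ball(\sqrt n)$; these form an $o(1)$-fraction and can be discarded up front.
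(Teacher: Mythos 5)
Your proposal is correct in its essentials and follows the same strategy as the paper: the witness of non-convexity is the set of points $x$ with $x_{\bC}$ in a uniquely-violated region $\bU_i$ and $\langle \bv^i,x\rangle$ inside the strip (so $\boldf(x)=0$), and perturbing along $\pm\bv^i$ preserves $x_{\bC}$ while exiting the strip, producing $\boldf=1$ on both sides; the $\Omega(1)$ volume of this set comes from the Nazarov estimate plus Johnson--Lindenstrauss for $\|x_{\bC}\|$ plus Gaussian anti-concentration for the strip, exactly as in the paper. The one place you diverge is the final accounting, which you leave as a sketch: you propose a Fubini disintegration of $N(0,I_{2n})$ along the lines $\{x+t\bv^{j}\}$, flap by flap, whereas the paper avoids any disintegration by using discrete triples $(x,\,x-\bv^i/\|\bv^i\|,\,x+\bv^i/\|\bv^i\|)$ --- a unit step along $\bv^i/\|\bv^i\|$ shifts $\langle\bv^i,\cdot\rangle$ by $\|\bv^i\|\approx\sqrt{n}$, enough to clear the entire strip --- and then observes that any convex $g$ must disagree on one of the three collinear points, each point lies in at most three triples, and the Gaussian density is comparable at $x$ and $x^{\pm}$. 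Your continuous version can be made to work, but note two quantitative slips: (i) a typical Gaussian point lies at radius $\sqrt{2n}\pm O(1)$, so the in-ball segments along $\pm\bv^j$ have length $\Theta(n^{1/4})$ at best, not $\Omega(\sqrt{n})$ (you correctly note that $\Omega(1)$ suffices; the paper enforces this by restricting to $\Thinshell$ and taking unit steps); and (ii) the truncations to $\Ball(\sqrt{2n})$ and to $\|x_{\bC}\|\le\sqrt{n}$ each cost a constant factor of the mass, not $o(1)$ --- harmless for an $\Omega(1)$ bound, but not negligible as you state.
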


We require a few definitions.
Define $\Thinshell := \{x \in \R^{2n}: \sqrt{2n}-2 \leq \|x\| \leq \sqrt{2n}-1\}.$  
Given an outcome of $\boldf \sim \Dno$ (which determines the $\bg^i$'s, $\bv^i$'s, $\bF^i$'s and $\bU_i$'s), for $i \in [N]$ define $\bU := \sqcup_{i \in [N]} \bU_i$.
Define $p := \Ex_{\boldf \sim \Dno}[\Vol[\bU \cap \Thinshell]].$

\begin{lemma} \label{lem:p-enough} 
$p = \Omega(1) \implies$  \Cref{lem:distance}.
\end{lemma}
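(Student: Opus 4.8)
The plan is to show that if $p = \Omega(1)$, then with $\Omega(1)$ probability over $\boldf \sim \Dno$ we have $\Vol(\bU \cap \Thinshell) = \Omega(1)$, and that this alone forces $\boldf$ to be $\Omega(1)$-far from every convex set. The first part follows from a concentration argument: by Lemma~\ref{lemma:concentration} (or a minor variant of it adapted to the thin shell), the volume $\Vol(\bU \cap \Thinshell)$ is tightly concentrated around its expectation $p$, since changing any single halfspace $\bH_i$ changes this volume by at most the tiny quantity $e^{-(1-o(1))\sqrt n}$; McDiarmid then gives that $\Vol(\bU \cap \Thinshell) \geq p/2 = \Omega(1)$ except with probability $o(1)$. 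So it suffices to condition on such a $\boldf$ and argue it is far from convex.

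Now fix such a $\boldf$. The key geometric observation is the one highlighted in the techniques overview: for each $i \in [N]$ and each point $x$ with $x_{\bC} \in \bU_i$ and $x \in \Ball(\sqrt{2n})$, the value $\boldf(x)$ is exactly $\Indicator[\langle \bv^i, x\rangle \notin [-\tfrac{\sqrt n}{2}, \tfrac{\sqrt n}{2}]]$, i.e.\ within the ``tube'' $\{x : x_{\bC} \in \bU_i\} \cap \Ball(\sqrt{2n})$ the function $\boldf$ is $0$ on a slab of $\bA$-width $\sqrt n$ around the hyperplane $\langle \bv^i, \cdot\rangle = 0$ and $1$ outside it. The idea is to argue that any convex set $g$ must either disagree with $\boldf$ on a constant fraction of the slab region (the ``inside'' of the tube, where $\boldf = 0$) or on a constant fraction of the region just outside the slab (where $\boldf = 1$), simply because a convex set cannot have a ``hole'' of the shape of this slab sitting between two regions where it is forced to be $1$. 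Concretely, fix $i$ and consider a line in direction $\bv^i$ through a point $x$ with $x_{\bC} \in \bU_i$ such that $\langle \bv^i, x\rangle = 0$: moving out along $\pm\bv^i$, the line stays in $\bU_i$ (since $\bv^i \perp \bC$) but exits the slab, so $\boldf$ goes $1$--$0$--$1$ along this line. If $g$ agreed with $\boldf$ at the two ``1'' endpoints on such a line but also on the midpoint region, $g$ would be non-convex; hence for each such line, $g$ disagrees with $\boldf$ on a segment of length $\Omega(1)$ (of total chord length $\sqrt n$). Integrating this over all lines in direction $\bv^i$ passing through $\{x_{\bC} \in \bU_i\} \cap \Thinshell$ — using that $\Thinshell$ has width $\Theta(1)$ so the relevant chords are all of length $\Theta(\sqrt n)$ and the slab occupies only a $\Theta(1/\sqrt n)$ fraction of each — and summing the disjoint contributions over $i \in [N]$ (the $\bU_i$ are disjoint), the total disagreement volume $\Vol(\boldf \triangle g)$ is at least a constant times $\Vol(\bU \cap \Thinshell) = \Omega(1)$. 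This gives Lemma~\ref{lem:distance}, and hence the implication.

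The main obstacle I expect is making the ``line-by-line'' disagreement argument quantitatively clean: one has to handle the fact that the chord of $\Thinshell \cap \{x_{\bC} \in \bU_i\}$ in direction $\bv^i$ through a given base point has a length and a position depending on the base point, and that the slab $[-\tfrac{\sqrt n}{2}, \tfrac{\sqrt n}{2}]$ may or may not straddle that chord, so one needs to restrict to a sub-population of base points for which the geometry is favorable (e.g.\ $\|x\|$ close enough to $\sqrt{2n}$ and $\|x_{\bC}\|$ small enough that the $\bA$-chord has length $\Omega(\sqrt n)$ and is roughly centered) and check this sub-population still carries $\Omega(1)$ volume. A secondary technical point is ensuring the convexity obstruction is charged correctly: rather than arguing pointwise per line, it is cleaner to invoke the fact that a convex set restricted to a line is an interval, so on any line where $\boldf$ is $1$--$0$--$1$ the set $g$ must miss one of the two $\boldf^{-1}(1)$ pieces entirely or contain a $\Theta(1)$-length chunk of the $\boldf^{-1}(0)$ slab; a Fubini-type integration over the direction-$\bv^i$ fibering then yields the volume bound with no case analysis on $g$.
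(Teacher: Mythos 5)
Your overall route is the same as the paper's: extract an $\Omega(1)$-probability event on which $\Vol(\bU\cap\Thinshell)=\Omega(1)$, then use the $1$--$0$--$1$ pattern of $\boldf$ along lines in direction $\bv^i$ inside the tube $\{x: x_{\bC}\in \bU_i\}$ to force any convex $g$ to disagree on a set of constant measure. (Your first step, via McDiarmid, is more than is needed --- the paper only uses that a $[0,1]$-valued random variable with expectation $\Omega(1)$ is $\Omega(1)$ with probability $\Omega(1)$ --- but it is defensible since $\Ex[\Vol(\bU\cap\Thinshell)\mid \bC]$ is independent of $\bC$ by rotational invariance.)

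The gap is in the volume accounting of your Fubini step. First, the numerics: a chord of $\Ball(\sqrt{2n})$ in direction $\bv^i$ through a point $x\in\Thinshell$ typically has length $\Theta(n^{1/4})$, not $\Theta(\sqrt n)$ (with $\|x\|^2\approx 2n-3\sqrt{2n}$ and $\langle x,\bv^i/\|\bv^i\|\rangle=O(1)$, one exits the ball after $|t|\approx(3\sqrt{2n})^{1/2}$); and the slab $\{|\langle \bv^i,\cdot\rangle|\le \sqrt n/2\}$ has \emph{geometric} width $\approx 1$ along the chord since $\|\bv^i\|\approx\sqrt n$ --- the $\sqrt n$ you quote is the inner-product width, not a length. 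Second, and more importantly, even with correct numerics the argument ``$g$ disagrees on a segment of length $\Omega(1)$ out of a chord of length $L$'' only yields an $\Omega(1/L)$ \emph{fraction} per chord; integrating this against the Gaussian does not give ``a constant times $\Vol(\bU\cap\Thinshell)$'' unless you also show the forced disagreement segment sits in a region of Gaussian density comparable to that at the base point $x$. A convex $g$ that misses a point of the $0$-piece is only forced to omit one entire $1$-piece of the chord, which can lie far from $\Thinshell$; and the lengths of the $0$-piece and of the shorter $1$-piece inside any fixed window are not bounded below without further conditioning on where the slab falls. The paper resolves exactly this by localizing to the unit-step triple $x^{\pm}=x\pm \bv^i/\|\bv^i\|$: the ``fine'' event ($\boldf(x)=0$, $\boldf(x^{\pm})=1$) holds with probability $\Omega(1)$ over $\vec{\bv}$ (Claim~\ref{claim:fine}), the three points have comparable Gaussian density, every point of $\R^{2n}$ lies in at most three triples, and $g$ must err on at least one point of each violating triple. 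You need this localization (or an equivalent device controlling both the location and the length of the forced disagreement) for the final ``constant times $\Vol(\bU\cap\Thinshell)$'' bound to go through.
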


\begin{proof}
If $p=\Omega(1)$ then $\Pr_{\boldf}[\Vol[\bU \cap \Thinshell]=\Omega(1)] = \Omega(1)$.
We view the draw of $\boldf$ as taking place in two stages: in the first one $\bC$, $\bA$, and $\vec{\bg} = (\bg^1,\dots,\bg^N)$ are drawn, and in the second one $\vec{\bv} = (\bv^1,\dots,\bv^N)$ is drawn. Observe that the set $\bU$ depends only on the first stage.
Say that any outcome of the first stage for which $\Vol[\bU \cap \Thinshell]=\Omega(1)$ holds is a \emph{good} outcome of the first stage, so an $\Omega(1)$ fraction of outcomes of the first stage are good.  

Fix any good outcome  $C,A,\vec{g}$ of the first stage (note that this fixes $U_1,\dots,U_N$ and hence $U$), and consider a draw of $\bx \sim N(0,I_{2n})$. 
%Since the first-stage outcome is good, there is a $c = \Omega(1)$ probability that $\bx \in U \cap \Thinshell$. \red{Moreover, there is at least a $c/2=\Omega(1)$ probability that both $\bx \in U \cap \Thinshell$ and that 
%\begin{equation} \label{eq:xCrange}
%\|\bx_C\| \in [\sqrt{n}-a,\sqrt{n}]
%\end{equation} for a suitable absolute constant $a$. This is an immediate consequence of the following claim:
We have the following claim:
\begin{claim} \label{claim:a}
For a suitable absolute constant $a>0$, we have
$\Prx_{\bx \sim N(0,I_{2n})}[\bx \in U \cap \Thinshell$ and $\|\bx_C\| \in [\sqrt{n}-a,\sqrt{n}]] = \Omega(1).$
\end{claim}
\begin{proof}
Since we have fixed a good outcome $C,A,\vec{g}$ of the first stage, we have that $\Pr_{\bx \sim N(0,I_{2n})}[\bx \in U \cap \Thinshell] \geq c$ for some absolute constant $c>0.$  
Moreover, every outcome of $\bx \in U \cap \Thinshell$ has $\|\bx_C\| \leq \sqrt{n}$, since $U$ is a subset of $B$.
So to prove the claim we need only show that $\Pr_{\bx \sim N(0,I_{2n})}[\|\bx_C\| < \sqrt{n}-a] \leq c/2.$

We first observe that by standard bounds on the chi-square distribution (\Cref{prop:chi-squared-tail}), we have that
$\Pr_{\bx \sim N(0,I_{2n})}[\|\bx\| \notin [\sqrt{2n} - a', \sqrt{2n} + a']] \leq c/4$ for a suitable constant $a'$.  So fix any length $\ell \in [\sqrt{2n} - a', \sqrt{2n} + a'].$
%Thus it suffices to argue that
%\begin{equation} \label{eq:condit}
%\Prx_{\bx \sim N(0,I_{2n})}[\|\bx_C\| \in [\sqrt{n}-a,\sqrt{n} + a] \ | \ \bx \in U \cap \Thinshell] = \Omega(1).
%\end{equation}
%We analyze this conditional probability on a length-of-$\bx$ by length-of-$\bx$ basis; so fix any particular outcome $\ell \in [\sqrt{n}-2,\sqrt{n}-1]$ of $\|\bx\|$ (where the upper and lower bounds on $\ell$ are because the conditioning is on $\bx \in U \cap \Thinshell$).  
Fix any vector $z \in \R^{2n}$ with $\|z\|=\ell$; by the rotational symmetry of the $N(0,I_{2n})$ distribution and the rotational symmetry of drawing a Haar random $n$-dimensional subspace $\bC$ of $\R^{2n}$, the distribution of $\|\bx_{C}\|$ conditioned on $\|\bx\|=\ell$ is the same as the distribution of $\|z_{\bC}\|$ where $\bC$ is a Haar random $n$-dimensional subspace $\bC$ of $\R^{2n}$.  A routine application of the Johnson-Lindenstrauss theorem (see e.g. Theorem~5.3.1 of \cite{vershynin2018high}) gives us that $\Pr_{\bC}[\|z_{\bC}\| <\sqrt{n}-a] \leq c/4$, for a suitable choice of the constant $a$.
So $\Pr_{\bx \sim N(0,I_{2n})}[\|\bx_C\| < \sqrt{n}-a] \leq c/2$ as required, and the claim is proved.
\end{proof}

Now, given an $x$ that lies in $U \cap \Thinshell$ and has $\|x_C\| \in [\sqrt{n}-a,\sqrt{n}],$ consider an outcome of the second stage, i.e.~the draw of $\vec{\bv}$; note that this draw completes the draw of $\boldf \sim \Dno$.  Define the vectors
\[
x^+ := x + {\frac {\bv^i}{\|\bv^i\|}}, \quad \quad x^- := x - {\frac {\bv^i}{\|\bv^i\|}}.
\]
Let us say that an outcome of $\vec{\bv}$ for which $\boldf(x)=0$, $\boldf(x^+)=1$, $\boldf(x^-)=1$ is a \emph{fine outcome of $\vec{\bv}$ for $x$}.  
We will use the following claim:

\begin{claim} \label{claim:fine}
For any fixed  $x$ that lies in $U \cap \Thinshell$ and has $\|x_C\| \in [\sqrt{n}-a,\sqrt{n}],$ we have $\Pr_{\vec{\bv}}[\vec{\bv}$ is fine for $x]=\Omega(1)$.
\end{claim}
\begin{proof}
Since $x \in U_i \cap \Thinshell$ for some $i$, it must be the case that also $x^+,x^- \in U_i$ (because every possible outcome of $\bv^i$ is orthogonal to every possible outcome of $g^j$ for every $j \in [N]$). So $\vec{\bv}$ is fine if and only if 
\[
\abra{\bv^i,x^-} < -{\frac {\sqrt{n}} 2} \leq
\abra{\bv^i,x} \leq {\frac {\sqrt{n}} 2} <
\abra{\bv^i,x^+}, \quad \text{or equivalently},
\]
\begin{equation} \label{eq:zz}
\abra{\bv^i,x} - \|\bv^i\| < -{\frac {\sqrt{n}} 2} \leq
\abra{\bv^i,x} \leq {\frac {\sqrt{n}} 2} <
\abra{\bv^i,x} + \|\bv^i\|.
\end{equation}
Since $x \in \Thinshell$ we have $\sqrt{2n}-2 \leq \|x\| \leq \sqrt{2n}-1$, i.e.~
\[
2n - 4 \sqrt{2n} + 4 \leq \|x\|^2 = \|x_C\|^2 + \|x_A\|^2 \leq 2n - 2\sqrt{2n} + 1,
\]
 and since $\|x_C\| \in [\sqrt{n}-a,\sqrt{n}]$ we have that $n - 2 a \sqrt{n} + a^2 \leq \|x_C\|^2 \leq n$. So
 \begin{equation} \label{eq:xAbound}
 n - 4 \sqrt{2n} + 4 \leq \|x_A\|^2 \leq n - 2 \sqrt{2n} + 2a\sqrt{n} + 1 - a^2.
 \end{equation}
 Now since $\bv^i$ is drawn from a standard $N(0,I_n)$ distribution over the subspace $A$, a routine calculation using 
 (i) \Cref{eq:xAbound}; 
 (ii) the fact that $\|\bv^i - \abra{\bv^i,x} {\frac x{\|x\|}}\|^2$ and $\abra{\bv^i,x}$ are independent and are distributed as a draw from the $\chi^2(n-1)$ distribution and a draw from $N(0,\|x_A\|^2)$ respectively; and
(iii) the fact that a draw from the $\chi^2(n-1)$ distribution takes value $n(1 \pm o(1))$ except with vanishingly small probability,
gives that \Cref{eq:zz} holds with $\Omega(1)$ probability. 
\end{proof}

As an immediate consequence of \Cref{claim:fine}, we get that an $\Omega(1)$ fraction of outcomes of $\vec{v}$ are such that
\begin{equation} \label{eq:dandy}
\Prx_{\bx \sim N(0,I_{2n})}\sbra{
v \text{~is fine for~}\bx \ | \ \bx \in U \cap \Thinshell \ \& \ \|\bx_C\| \in \sbra{\sqrt{n}-a,\sqrt{n}}}
= \Omega(1).
\end{equation}

Fix any outcome $\vec{v}$ of $\vec{\bv}$ for which \Cref{eq:dandy} holds.
To conclude the proof of \Cref{lem:distance}, we observe that since $x \in U_i$ implies that $x^+,x^-$ are also in $U_i$, it follows that any $z \in \R^n$ can participate in at most three triples of the form $(x,x^-,x^+)$, so the maximum possible degree of overlap across all of the triples is at most a factor of three.
Moreover, for any $x \in \Thinshell$, it holds that $\sqrt{2n} - 3 \leq \|x\|-1\leq \|x^+\|,\|x^-\| \leq \|x\|+1 \leq \sqrt{2n}$, and hence the pdf of the $\chi^2(2n)$ distribution is within a constant factor on each of the three inputs $\|x\|,\|x^+\|$ and $\|x^-\|$ (so the $N(0,I_{2n})$ Gaussian's pdf is within a constant factor on each of the three inputs $x,x^+,x^-$).
Combining this with \Cref{claim:a},
%and the fact that $\Vol[U \cap \Thinshell]=\Omega(1)$, 
we get that for an $\Omega(1)$ fraction of outcomes of $\boldf \sim \Dno$, the value of $\boldf$ needs to be altered on at least an $\Omega(1)$ fraction of inputs drawn from $N(0,I_n)$ in order to ``repair'' all of the violating triples $(x,x^+,x^-)$ for which $x \in U \cap \Thinshell$ and $\|x_C\| \in [\sqrt{n}-a,\sqrt{n}].$  This gives \Cref{lem:p-enough}.
 \end{proof}
 
\begin{proof}[\Cref{lem:distance}]
To prove \Cref{lem:distance} it remains only to show that $p=\Omega(1),$ i.e.~to show that
\begin{equation} \label{eq:fx}
\Prx_{\boldf \sim \Dno,\bx \sim N(0,I_{2n})}[x \in (\bU \cap \Thinshell)]=\Omega(1).
\end{equation}
We first observe that we have $\Pr_{\bx \sim N(0,I_{2n})}[\bx \in \Thinshell] = \Omega(1)$. Fix any outcome $x \in \Thinshell$. Consider a draw of the Haar random $n$-dimensional subspace $\bC$ of $\R^{2n}$ which is part of the draw of $\boldf \sim \Dno$. Similar to the proof of~\Cref{claim:a}, using $\|x\| \in [\sqrt{2n}-2,\sqrt{2n}-1]$ the Johnson-Lindenstrauss theorem gives that $\Pr_{\bC}[\|x_{\bC}\| \in [\sqrt{n}-1,\sqrt{n}]] = \Omega(1)$.  Finally, fix any outcome $C$ of $\bC$ such that $\|x_C\| \in [\sqrt{n}-1,\sqrt{n}]$, and consider the ``completion'' of the draw of $\boldf \sim \Dno$ (i.e.~the draw of $\bB \sim \Naz(r,N,C)$ which induces an outcome of $\bU$). We have
\begin{equation}\label{eq:make-me-constant}
\Prx_{\boldf}[x \in \bU] = N \cdot \pbra{1-\Phi\pbra{\frac{r}{\|x_C\|}}}\Phi\pbra{\frac{r}{\|x_C\|}}^{N-1},
\end{equation}
so to complete the proof of  \Cref{lem:distance} it suffices to show that $\eqref{eq:make-me-constant} = \Omega(1)$. We have 
\[
\Phi\pbra{\frac{r}{\|x_C\|}}^{N-1}
\geq
\Phi\pbra{\frac{r}{\sqrt{n}}}^{N-1}
=
\pbra{1 - \frac{c_1}{N}}^{N-1}=\Omega(1),
\]
where the first equality is \Cref{eq:r-condition} and the second is because $c_1 = \Theta(1)$.
Similar to the proof of \Cref{lemma:unique-expected-volume-lb-adaptive}, we have
	\begin{align*}
		\frac{1-\Phi\pbra{\frac{r}{\|x_C\|}}}{1-\Phi\pbra{\frac{r}{\sqrt{n}}}}
		 \geq \frac{1-\Phi\pbra{\frac{r}{\sqrt{n}-1}}}{1-\Phi\pbra{\frac{r}{\sqrt{n}}}} &\geq \frac{\pbra{\frac{\sqrt{n}-1}{r} - \frac{(\sqrt{n}-1)^3}{r^3}}\exp\pbra{\frac{-r^2}{2(\sqrt{n}-1)^2}}}{\frac{\sqrt{n}}{r}\exp\pbra{\frac{-r^2}{2n}}} \tag{\Cref{prop:gaussian-tails}} \\
		& \geq (1- o(1))\exp\pbra{\frac{r^2(1-2\sqrt{n})}{2n(\sqrt{n}-1)^2}} \\ 
		& = \Theta(1), \quad \text{using \Cref{lemma:r-N-relationship}}.
	\end{align*}
So
\[
N \cdot \pbra{1-\Phi\pbra{\frac{r}{\|x_C\|}}}
\geq N \cdot \Theta(1) \cdot \pbra{1-\Phi\pbra{\frac{r}{\|x_C\|}}}
= N \cdot \Theta(1) \cdot \frac{c_1}{N} = \Omega(1),
\]
where the first equality is by \Cref{eq:r-condition}. This concludes the proof of \Cref{lem:distance}. 

\end{proof} 
%\rnote{Shivam, do you have something in \Cref{subsec:nazarov-prelims} which gives us that \Cref{eq:make-me-constant} is $\Omega(1)$ when $\|x\| \in [\sqrt{n}-1,\sqrt{n}]$? I think this is exactly what I need from \Cref{lemma:unique-expected-volume-lb-adaptive}, rather than a bound on the volume - I feel like this is implicit in the latter part of that proof, is that right? (Do we actually need \Cref{lemma:unique-expected-volume-lb-adaptive} in its current form anywhere?)}

\subsection{Proof of Theorem~\ref{thm:one-sided}}

\newcommand{\Alg}{\mathrm{Alg}}
\newcommand{\ol}[1]{\overline{#1}}
\newcommand{\lsim}{\lesssim}
\newcommand{\conv}{\mathrm{conv}}

%\enote{I started going through the main structure of the argument and diving up into some lemmas. I may be mistaken on the specific events that we need, and we can hopefully discuss these on our Tuesday meeting. }

\begin{definition}[One-sided Adaptive Algorithms as Binary Trees]
Fix $n, q \in \N$. A $q$-query one-sided deterministic algorithm, $\Alg$, for testing convexity in $\R^{2n}$ is specified by a rooted binary tree of depth $q$ where each node contains the following information:
\begin{flushleft}\begin{itemize}
\item Each node $v$ which is a not a leaf contains a query vector $x_v \in \R^{2n}$, as well as two out-going edges, one labeled 0 and one labeled 1, to nodes which we label $v(0)$ and $v(1)$, respectively.
\item Each leaf node $v$ contains an output $o_v$ which is set to \emph{``accept''} or \emph{``reject.''} 
Let $Q_1$ (or $Q_0$) denote the set of points queried along the path that are labelled $1$
  (or $0$, respectively).
Then $o_v$ is set to be ``reject'' if and only if $Q_0\cap \conv(Q_1)\ne \emptyset$.
\end{itemize}\end{flushleft}
By adding nodes which repeat the queries, we may assume, without loss of generality, that the depth of every leaf of the tree is exactly $q$. %(Note that even though every point $x$ with $\|x\| <\sqrt{n}-1$ must have $f(x)=1$ for every $f$ in the support of $\Dno$, it may still be useful for a one-sided tester to query such points, since they can be used in constructing a witness of the $f$ non-convexity of $f$.)
%\rnote{Since this point was subtle enough to escape us before Xi noticed it, I added a quick mention for the reader.)}

\end{definition}
A $q$-query deterministic algorithm $\Alg$ executes on a function $f \colon \R^{2n} \to \{0,1\}$ by taking the natural root-to-leaf path given by following the function values which the oracle returns at the queries within each of the nodes. In particular, we will make repeated use of the following definitions which capture the execution of the algorithm $\Alg$ on a function $f$:
\begin{flushleft}\begin{itemize}
\item The node $v^0$ is the root of the tree, which is the starting point of the root-to-leaf path. Then, the nodes $v^1,\dots, v^q$ indicate the root-to-leaf path generated by executing the algorithm on the function $f$. In particular, at time step $t \in \{ 0, \dots, q-1\}$, we have $v^{t+1} = v^{t}(f(x_{v^{t}}))$
\item The set $Q^0$ is defined to be $\emptyset$, and for $t \in \{0, \dots, q-1\}$ the set $Q^{t+1}$ is defined to be $Q^{t} \cup \{ x_{v^{t}} \} \subset \R^n$.
Thus $Q^{t+1}$ is the set of vectors that are queried at time steps prior to $t+1$.
\end{itemize}\end{flushleft}
Once the algorithm reaches the leaf node $v^{q}$, the algorithm outputs $o_{v^q}$, and we will refer to $\Alg(f)$ as the output (``accept'' or ``reject'') produced by the algorithm. It is trivial to see that since any $q$-query deterministic algorithm corresponds to a tree of depth $q$, the total number of query vectors $x_v \in \R^{2n}$ across all nodes of the tree is at most $2^q$. Our goal is to show that, if $\Alg$ is a $q$-query deterministic algorithm which makes \emph{one-sided error}, then
\begin{align}
\Prx_{\boldf \sim \Dno}\left[ \Alg(\boldf) = \text{``reject''}\right] = o(1).  \label{eq:adapt-final-exp}
\end{align}
Recall that implicit in a fixed function $f$ in the support of $\Dno$ are the control and action subspaces $C, A \subset \R^{2n}$, as well as the vectors $g^1,\dots, g^{N} \in C$ and $v^1,\dots,v^{N} \in A$, and that $g^1,\dots,g^N$ define $B$, $H_i$ and $F_i$ regions. In order to simplify our notation, we will often refer to a subset of the queries $\tilde{Q}^k$ for any $k \leq q$ whose norm on the control subspace is bounded,
\[ \tilde{Q}^k = \left\{ x \in Q^k : \|x_C\| \leq \sqrt{n}\right\}. \]
Toward showing the above upper bound, we define two important events (which will depend on the draw $\boldf \sim \Dno$).

%Let's start with some notation.%
%Let $\Shell:=\Ball(\sqrt{n})\setminus \Ball(\sqrt{n}-100q)$.
%Next we introduce some notation for describing the execution of $\Alg$ on an $f$ 
%  from the support of $\Dno$.
%A query point $x$ along the path of running $\Alg$ on $f$ is said to \emph{discover} the %$i$th flap $F_i$
%  for some $i\in [N]$ if $x$ is the first point along the path that lies in $F_i$.
%Given a set of queries $Q$ made so far (e.g., $Q^k$ for some $k\le q$),
%  we say a flap $F_i$ is \emph{nonempty} with respect to $Q$ if $Q\cap F_i\ne \emptyset$,
%  and a nonempty flap $F_i$ is \emph{regular} if $Q\cap F_i\subset \Shell$ and 
%  every point $x\in Q\cap F_i$ satisfies 
%\begin{equation}\label{eq:hehe1}
%\langle x,g^i\rangle \le r+100qn^{1/4}.
%\end{equation}
%The motivation behind the notion of regular flaps 
%  is that, as it will become clear later in 
%  Lemma \ref{lem:ev1} (concerning the first event $\calE_1$ to be defined below),
%  it is unlikely for $\Alg$ to query a point $x$ that lands in $F_i$ for some $i$
%  and at the same time either has $x\notin \Shell$ or violates (\ref{eq:hehe1}).

%We are now ready to define the three events:

\begin{definition}
Given $\Alg$ and a function $f$ from $\Dno$, we consider the following three events:
\begin{flushleft}\begin{itemize}
\item $\calE_1(f)$: This event occurs 
%there exists any node $v$ within the algorithm $\Alg$ such that 
%    \begin{itemize}
%    \item either the query $x_v$ lies in the region $D^{\ge q}$ defined by $f$, or
%    \item $x_v$ satisfies $x_v\notin \Shell$ and $x_v\notin B$ (or equivalently, $x_v\in D^{\ge 1}$), or
%    \item {\color{blue}$x_v$ satisfies that $\langle x_v,g^i\rangle>r+100q n^{1/4}$ for some $i\in [N]$.}%\xnote{This is a new condition.}
%      the region $D^{\ge 1}$ defined by $f$.
%    \end{itemize}
%}
%\item {\color{red} $\calE_2(f)$: This event occurs 
if at the end of the execution of
  $\Alg$ on $f$, every point $x\in \tilde{Q}^q$ lies in at most $q$ flaps, and
   for every flap $F_i$ with $\tilde{Q}^q\cap F_i\ne\emptyset$,
  %there exist an $i\in [N]$ and two queries $x,y\in Q^q$\rnote{Minor: $Q^q$ or $Q^{q+1}$ --- isn't $Q^{q+1}$ the set of all $q$ queries the algorithm makes when the function is $f$? \color{red}Xi: Maybe it should be $Q^q$ since we start with $Q^0=\emptyset$ at the beginning.} that satisfy the following conditions:
   %produced by executing $\Alg$ on the function $f$, which satisfy the following two conditions:
%\begin{itemize}
%\item $F_i$ is among the first $q^2$ nonempty flaps with respect to $Q^q$; % $Q^q\cap F_i\ne \emptyset $;
%\item $F_i$ is regular and 
%and
 % every point $z\in Q^q\cap F_i$ satisfies $$\langle z,g^i\rangle \le r+100qn^{1/4};$$ 
%\item There exists some $i \in [N]$ where $x, y \in S_i$ \blue{and $\|x\|_2, \|y\|_2 \leq \sqrt{n} + 1$}. 
%\rnote{Do we want just the blue thing or do we want that $x,y \in \Shell$?}
  \begin{equation} \label{eq:iwantit}
  \|x - y\| \le 1000 \sqrt{q} n^{1/4}\quad\text{for all $x,y\in \tilde{Q}^q\cap F_i$.}
  \end{equation}
%\end{itemize}}
\item $\calE_2(f)$: This event occurs if at the end of
  the execution of $\Alg$ on $f$, for every flap $F_i$ with
  $\tilde{Q}^q\cap F_i\ne \emptyset$ and every $x,y\in \tilde{Q}^q\cap F_i$, we have 
%  there exist an $i\in [N]$ and two queries $x,y\in Q^q$ such that
%  $x,y\in F_i$ but} %$f(x)\ne f(y)$.
$$
  \Indicator\sbra{\langle v^i , x\rangle \notin [-\sqrt{n}/2,\sqrt{n}/2]} = \Indicator\sbra{\langle v^i,y
  \rangle \notin [-\sqrt{n}/2,\sqrt{n}/2]}.
  $$
  % whenever there exists two queries $x, y \in Q^q$ produced by executing $\Alg$ on the function $f$ where there is some $i \in [N]$ with $x, y \in S_i$, \blue{$\|y\|_2, \|x\|_2 \leq \sqrt{n} + 1$},\rnote{Here too, do we want just the blue thing or do we want that $x,y \in \Shell$?} and $f(x) \neq f(y)$.
\end{itemize}\end{flushleft}
\end{definition}

Theorem \ref{thm:one-sided} follows immediately from the following three lemmas:

\begin{lemma}\label{lem:reject-and-3}
Let $\Alg$ be a one-sided, deterministic, $q$-query algorithm for testing convexity. Then, if $\Alg(f)$ outputs ``reject,'' the event $\overline{\calE_2(f)}$ occurred.
\end{lemma}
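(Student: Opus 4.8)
The plan is to unwind what it means for a one-sided algorithm to output ``reject'' and show this forces a violation of convexity to be witnessed \emph{entirely within a single flap}, which is precisely what $\overline{\calE_2(f)}$ records. Recall that a leaf $v$ is labeled ``reject'' if and only if $Q_0 \cap \conv(Q_1) \neq \emptyset$, where $Q_1$ (resp.\ $Q_0$) is the set of queried points along the root-to-leaf path with $f$-value $1$ (resp.\ $0$). So if $\Alg(f)$ rejects, there is some point $z \in Q_0$ and points $w_1,\dots,w_m \in Q_1$ with $z \in \conv(w_1,\dots,w_m)$.

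First I would analyze the structure of the level sets of $f = f_{C,A,\vec H,\vec v}$. A point $x$ has $f(x) = 1$ only if $\|x\| \le \sqrt{2n}$ and $\|x_C\| \le \sqrt n$ (so $x \in \tilde Q^q$, using the notation from the excerpt), and moreover either $x_C \in B$ or $x_C \in F_T$ for some nonempty $T$ with $\langle v^j, x\rangle \notin [-\sqrt n/2, \sqrt n/2]$ for all $j \in T$. The key geometric observation is that $B$ is convex and is contained in $\Ball(\sqrt n) \cap C$; in fact the set $\{x : \|x\| \le \sqrt{2n}, \|x_C\|\le \sqrt n, x_C \in B\}$ is an intersection of two balls and the preimage under the (linear) projection $x \mapsto x_C$ of the convex set $B$, hence convex. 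Therefore, if all of $w_1,\dots,w_m$ satisfied $(w_i)_C \in B$, then $z$, being in their convex hull, would also lie in this convex set and hence have $f(z)=1$, contradicting $z \in Q_0$. So at least one $w_i$ — call it $x$ — must have $x_C \in F_T$ for some nonempty $T$ with $x \in \Ball(\sqrt{2n})$ and $f(x) = 1$, meaning $\langle v^j, x\rangle \notin [-\sqrt n/2, \sqrt n/2]$ for all $j \in T$. In particular, fixing any $i \in T$, we have $x_C \in F_i$ and $\Indicator[\langle v^i, x\rangle \notin [-\sqrt n/2,\sqrt n/2]] = 1$.

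Next I would produce a second point in $F_i$ with the opposite strip-membership. The natural candidate is $z$ itself. Since $z$ lies in the convex hull of the $w$'s and the $w$'s all have $\|\cdot\| \le \sqrt{2n}$ and $\|\cdot_C\| \le \sqrt n$ (as they have $f$-value $1$), convexity of both $\Ball(\sqrt{2n})$ and the cylinder $\{x : \|x_C\| \le \sqrt n\}$ gives $\|z\| \le \sqrt{2n}$ and $\|z_C\| \le \sqrt n$; hence $z \in \tilde Q^q$. Now since $f(z) = 0$ but $z$ is in the $\sqrt{2n}$-ball with $\|z_C\| \le \sqrt n$, the definition of $f$ forces $z_C \notin B$, so $z_C \in F_S$ for some nonempty $S$, and moreover for \emph{some} $j \in S$ we have $\langle v^j, z\rangle \in [-\sqrt n/2, \sqrt n/2]$ (otherwise $f(z)$ would be $1$). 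The subtlety — and this is the step I expect to be the main obstacle — is that a priori the flap index $j$ witnessing $f(z)=0$ need not equal the index $i$ witnessing $f(x)=1$, so $x$ and $z$ might not lie in a \emph{common} flap with disagreeing strip-indicators. I would handle this by choosing $i$ more carefully: since $z_C \in \conv((w_1)_C,\dots,(w_m)_C)$ (projection is linear) and $z_C \in F_S$, and since each flap $F_k = \Ball(\sqrt n) \setminus H_k$ is the intersection of $\Ball(\sqrt n)$ with a halfspace complement, I claim that for every $j \in S$, at least one $w_\ell$ has $(w_\ell)_C \in F_j$ — indeed, $F_j$ is defined by the linear inequality $\langle g^j, \cdot\rangle > r$, and if every $(w_\ell)_C$ satisfied $\langle g^j, (w_\ell)_C\rangle \le r$ then so would the convex combination $z_C$, contradicting $z_C \in F_j$. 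Fix $j_0 \in S$ such that $\langle v^{j_0}, z\rangle \in [-\sqrt n/2, \sqrt n/2]$, and pick $w_\ell$ with $(w_\ell)_C \in F_{j_0}$; set $x := w_\ell$, $i := j_0$. Then $x, z \in \tilde Q^q \cap F_i$, while $\Indicator[\langle v^i, z\rangle \notin [-\sqrt n/2,\sqrt n/2]] = 0$. It remains to check $\Indicator[\langle v^i, x\rangle \notin [-\sqrt n/2,\sqrt n/2]] = 1$: this holds because $f(x) = 1$ and $x_C \in F_i$, so $x_C \in F_T$ for the set $T$ of \emph{all} halfspaces falsified by $x_C$, which includes $i$, and the third case of the definition of $f$ forces the conjunction over all of $T$, in particular over $i$. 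Hence $x, y := z$ witness a violation of the condition defining $\calE_2(f)$ in the flap $F_i$, i.e.\ $\overline{\calE_2(f)}$ occurred, completing the proof.
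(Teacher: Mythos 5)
Your proof is correct and follows essentially the same route as the paper's: extract the flap index $i$ from the rejected point via the failed conjunct in the definition of $f$, then use the separating-hyperplane property of the convex hull in the direction $g^i$ to produce a queried $1$-point lying in the same flap $F_i$ with the opposite strip indicator, which is exactly a witness for $\overline{\calE_2(f)}$. Your initial detour through the convexity of $\cbra{x:\|x\|\le\sqrt{2n},\ x_C\in B}$ is harmless but redundant, since it is superseded by the later, more careful choice of $i=j_0$.
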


\begin{lemma}\label{lem:ev1}
Let $\Alg$ be a one-sided,  deterministic, $q$-query algorithm. Then,
\begin{align*}
\Prx_{\boldf \sim \Dno}\left[ \calE_1(\boldf)\right] \ge 1- o(1). 
\end{align*}
\end{lemma}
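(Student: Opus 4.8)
The plan is to prove $\Pr_{\boldf \sim \Dno}[\calE_1(\boldf)] \geq 1 - o(1)$ by controlling two bad events separately and union-bounding over the (at most $2^q$) query vectors appearing anywhere in the tree. Recall that $q = n^c$ for a suitably small constant $c > 0$, so factors like $2^q$, $q!$, etc., need to be beaten by terms that are doubly-exponentially or exponentially small in $n$.

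\medskip

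\textbf{Step 1: Bounding the number of flaps a single query can fall in.} For any fixed $x \in \R^{2n}$ with $\|x_C\| \leq \sqrt{n}$, I would apply \Cref{lemma:small-vol-high-degree} (with $\R^n$ replaced by the control subspace $\bC$, and with $c_1 = \ln 2 \pm O(1)/N$) to conclude that $\Pr_{\bB}[x_C \in \bigcup_{|T| \geq q} \bF_T] \leq c_1^q/q! \leq 1/q!$. Since the tree has at most $2^q$ query vectors in total and $q!$ grows faster than $2^{2q}$, a union bound gives that with probability $1 - 2^q/q! = 1 - o(1)$, \emph{every} query vector $x$ in the tree with $\|x_C\| \leq \sqrt{n}$ lies in at most $q$ flaps. (One subtlety: the query vectors are fixed independently of the randomness in $\boldf$ only along a fixed root-to-leaf path; but since we union-bound over all $\leq 2^q$ vectors appearing in the tree, and each such vector is a fixed point of $\R^{2n}$ determined by $\Alg$ alone, the argument goes through. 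Actually there is a subtlety that $x_v$ for internal nodes is fixed, so this is fine.)

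\medskip

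\textbf{Step 2: Bounding pairwise distances within a flap.} This is the main content, and I expect it to invoke \Cref{lem:ev1}'s companion geometric lemma stated as to-be-proven in \Cref{sec:proofev1}. The claim is that with probability $1-o(1)$, for every flap $\bF_i$ and every pair of queries $x, y$ in the tree with $\|x_C\|, \|y_C\| \leq \sqrt{n}$ and $\|x-y\| > 1000\sqrt{q}\, n^{1/4}$, it is \emph{not} the case that both $x_C$ and $y_C$ lie in $\bF_i$. The intuition (sketched in the excerpt): conditioned on both projections having norm at most $\sqrt{n}$, for $x_C, y_C$ to both violate halfspace $\bH_i$ we need $\bg^i \cdot x_C \geq r$ and $\bg^i \cdot y_C \geq r$ with $r \approx \sqrt{2}\, n^{3/4}$; since $\bg^i$ restricted to $\bC$ is effectively a random direction and $r/\sqrt{n} \approx n^{1/4}$ is close to the maximum possible value of $\bg^i \cdot (x_C/\|x_C\|)$ up to the relevant scale, the event that a single point is falsified already forces $x_C$ to be essentially aligned with $\bg^i$; two far-apart points cannot both be that aligned. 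I would make this quantitative by: (i) using rotational symmetry to replace ``$x_C \in \bF_i$'' with a spherical-cap / Gaussian tail statement about the angle between $\bg^i$ and $x_C$; (ii) observing that if $\|x - y\|$ is large then $\|x_C - y_C\|$ or $\|x_A - y_A\|$ is large, and in the former case $x_C, y_C$ cannot both lie in a thin cap around $\bg^i$'s direction, while in the latter case — hmm, actually the cleaner route is: both $x_C$ and $y_C$ lie in $\bF_i \subseteq \Ball(\sqrt n)$, and $\bF_i$ is the portion of the ball cut off by a halfspace at distance $\approx n^{3/4}/\sqrt{}$... more carefully, $\bF_i$ is a spherical cap of $\Ball(\sqrt n)$ of small radius (diameter $O(\sqrt{q}\, n^{1/4})$ after accounting for the typical norm of $\bg^i$ being $\sqrt n$), so any two points in it are within $1000\sqrt q\, n^{1/4}$ — wait, that would be a \emph{deterministic} statement, which seems too strong. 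The probabilistic slack must come from: $\bg^i$ has norm $\sqrt n (1 \pm o(1))$ with high probability (\Cref{prop:chi-squared-tail}), so $\bF_i = \{x : \|x\|\le \sqrt n, \bg^i\cdot x \ge r\}$ and with $r = \sqrt 2\, n^{3/4}(1-o(1))$, $\|\bg^i\| = \sqrt n(1+o(1))$, we get $\bF_i$ is contained in the slab $\{x : \bg^i\cdot x/\|\bg^i\| \in [r/\|\bg^i\|, \sqrt n]\}$ intersected with $\Ball(\sqrt n)$, and since $r/\|\bg^i\| \approx \sqrt 2\, n^{1/4}$... the width in the $\bg^i$ direction is $\sqrt n - \sqrt 2 n^{1/4} \approx \sqrt n$, not small. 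So the small diameter must be in the \emph{orthogonal} directions: a point with $\bg^i\cdot x/\|\bg^i\| \ge \sqrt 2 n^{1/4}$ and $\|x\| \le \sqrt n$ has orthogonal component of norm $\le \sqrt{n - 2\sqrt n\, n^{1/4} \cdot \text{stuff}}$... this isn't small either. I need to think again — the right statement is probably that the number of such ``bad pairs'' is controlled because for \emph{most} draws of $\bg^i$, no query at all falls in $\bF_i$, and conditioned on one query $x$ falling in $\bF_i$, the conditional distribution of $\bg^i$ is concentrated enough that a far-away $y$ is very unlikely to also fall in. I'll structure Step 2 as: fix a pair $x,y$ with $\|x-y\| > 1000\sqrt q\, n^{1/4}$ and both control-projections of norm $\le \sqrt n$; bound $\Pr_{\bg^i}[x_C, y_C \in \bF_i]$ by a Gaussian two-point tail estimate showing it is at most (something like) $N^{-2} \cdot e^{-\Omega(\sqrt q)}$ or exponentially small enough; union-bound over the $\le N$ choices of $i$ and the $\le 2^{2q}$ pairs of queries, using $N = 2^{\sqrt n}$ and $q = n^c$ with $c < 1/2$ so that $N \cdot 2^{2q} \cdot (\text{tail}) = o(1)$.

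\medskip

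\textbf{Step 3: Combining.} The event $\calE_1(\boldf)$ is exactly the conjunction of ``every $x \in \tilde Q^q$ lies in at most $q$ flaps'' (Step 1) and ``\Cref{eq:iwantit} holds for every flap met by $\tilde Q^q$'' (Step 2). Since $\tilde Q^q$ is a subset of the query vectors appearing in the tree, both high-probability statements apply, and by a union bound $\Pr[\calE_1(\boldf)] \geq 1 - o(1)$. \textbf{The main obstacle} is Step 2: pinning down the correct quantitative two-point Gaussian tail estimate for ``$\bg^i$ falsifies both $x_C$ and $y_C$'' and verifying that the resulting bound, multiplied by $N \cdot 2^{2q}$, is still $o(1)$. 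This requires carefully tracking how the distance $\|x-y\| > 1000\sqrt q\, n^{1/4}$ translates (via the random projection onto $\bC$, using Johnson–Lindenstrauss-type concentration as in the proof of \Cref{claim:a}) into a lower bound on $\|x_C - y_C\|$, and then into a gap between the two inner products $\bg^i \cdot x_C$ and $\bg^i\cdot y_C$ that makes simultaneous falsification exponentially unlikely relative to single falsification. I would isolate this estimate as a standalone geometric lemma (the one the excerpt forward-references to \Cref{sec:proofev1}) and prove it first.
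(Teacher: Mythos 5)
Your Step 1 matches the paper's treatment exactly (it is the paper's event $\calE_{1,1}$, proved via \Cref{lemma:small-vol-high-degree} and a union bound over the at most $2^q$ query points in the tree). The gap is in Step 2, which carries essentially all of the content of the lemma and which you yourself flag as unresolved. Two concrete problems. First, the quantitative target you name --- a two-point bound of the form $N^{-2}e^{-\Omega(\sqrt q)}$ --- is both unachievable and insufficient. It is unachievable because the events $\{x_C\in\bF_i\}$ and $\{y_C\in\bF_i\}$ remain highly positively correlated even at separation $1000\sqrt q\,n^{1/4}$: the correlation of the Gaussians $\langle\bg^i,x_C\rangle/\|x_C\|$ and $\langle\bg^i,y_C\rangle/\|y_C\|$ is $\rho=1-\Theta(q/\sqrt n)$, so the joint probability is of order $\frac{c_1}{N}\cdot e^{-\Theta(q)}$, not $N^{-2}\cdot(\cdots)$; and an exponent of only $\Omega(\sqrt q)$ would be killed by the union bound over $2^{2q}$ pairs. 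What one actually needs, and what a correct version of your bivariate computation gives, is a conditional probability $e^{-cq}$ with $c$ a large absolute constant (coming from $\Phi(-\Theta(\sqrt q))$, where the $\Theta(\sqrt q)$ inherits the factor $1000$ from the separation and the Gaussian tail then squares it), so that $2^{2q}e^{-cq}=o(1)$. Second, that computation is only valid when \emph{both} $\|x_C\|$ and $\|y_C\|$ lie in a thin shell $[\sqrt n-O(q),\sqrt n]$, and you never establish this. If, say, $\|y_C\|=\sqrt n-10$, the thresholds $h=r/\|x_C\|$ and $k=r/\|y_C\|$ separate, and the single-point bound $\Pr[y_C\in\bF_i]\approx\frac{c_1}{N}e^{-\Theta(1)}$ is nowhere near enough to survive the factor $2^{2q}$. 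So you need an additional high-probability event asserting that every falsified query has $\|x_C\|\ge\sqrt n-O(q)$ (this is the paper's event $\calE_{1,2}$, a one-dimensional Gaussian tail computation), together with a case split on whether both points of the pair lie in that shell.

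For contrast, the paper avoids the bivariate tail estimate here entirely. It introduces five simple events $\calE_{1,1},\dots,\calE_{1,5}$ --- each query lies in at most $q$ flaps; every falsified query lies in the shell $\|x_C\|\ge\sqrt n-100q$; no inner product $\langle x,\bg^i\rangle$ exceeds $r+100qn^{1/4}$; for each pair $x,z$ with $x\in F_i$, the component of $\bg^i$ along the direction of $z_C$ orthogonal to $x_C$ is at most $100\sqrt q$; and a Johnson--Lindenstrauss event $\|x-y\|\le 2\|(x-y)_C\|$ --- each controlled by an elementary union bound, and then derives the distance bound $\|x-z\|\le 1000\sqrt q\,n^{1/4}$ \emph{deterministically} from their conjunction via an algebraic case analysis on the decomposition $z_C=(1+a)x_C+by$. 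Your direct two-point route is closer in spirit to what the paper does for the tolerant lower bound (\Cref{lem:xynear}) and could likely be made to work here, but only after adding the shell restriction and redoing the exponent bookkeeping as above; as written, the key estimate is neither stated correctly nor proved.
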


\begin{lemma}\label{lem:ev3}
Let $\Alg$ be a one-sided, deterministic, $q$-query algorithm, where $q \leq n^{0.05}$. Then, 
\begin{align*}
\Prx_{\boldf \sim \Dno}\left[ \overline{\calE_2(\boldf)}\cap \calE_1(\boldf)\right] \le o(1).
\end{align*}
\end{lemma}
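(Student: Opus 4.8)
The plan is to show that on the event $\calE_1$, the event $\overline{\calE_2}$ forces, for one of the (at most $q^2$) flaps $\bF_i$ that a queried point lands in, the random action direction $\bv^i$ to place the first queried point of $\bF_i$ within $n^{o(1)}$ of a boundary of its strip in the $\langle\bv^i,\cdot\rangle$ coordinate, and then to bound the probability of this by Gaussian anti-concentration, crucially using that an adaptive $q$-query algorithm learns only $q$ bits of information about any single $\bv^i$.

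\smallskip\noindent\emph{Lazy revelation and the key independence.} Fix the deterministic $q$-query tree $\Alg$ (its query vectors are fixed in advance), $q\le n^{0.05}$. Reveal the randomness of $\boldf\sim\Dno$ in stages: first $\bC,\bA$ and $\vec{\bg}$ (which fixes $\bB$ and every flap $\bF_i$), then simulate $\Alg$, revealing each $\bv^i$ only when it is first needed to answer a query, i.e., at the first query point $x$ with $x\in\Ball(\sqrt{2n})$ and $x_{\bC}\in\bF_i$. For each flap with $\tilde Q^q\cap\bF_i\neq\emptyset$, let $x^{(i)}$ be the first query point on the path with $x^{(i)}_{\bC}\in\bF_i$. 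Since no earlier query has its $\bC$-projection in $\bF_i$, no earlier answer depends on $\bv^i$, so $x^{(i)}$ is a function of randomness revealed strictly before $\bv^i$; hence $x^{(i)}$ and $\bv^i$ are independent, and conditioned on $x^{(i)}$ the scalar $\langle\bv^i,x^{(i)}\rangle=\langle\bv^i,x^{(i)}_{\bA}\rangle$ is distributed as $N(0,\sigma_i^2)$ with $\sigma_i=\|x^{(i)}_{\bA}\|$. A short computation shows the density of $N(0,\sigma^2)$ at $\pm\sqrt n/2$ is $O(1/\sqrt n)$ for every $\sigma>0$.

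\smallskip\noindent\emph{Reduction and union bound.} Put $R:=1000\sqrt q\,n^{1/4}$ and $D^\ast:=n^{1/4+o(1)}$ (the value will come from the displacement bound below), and set $D_i:=\max_{z\in\tilde Q^q\cap\bF_i}|\langle\bv^i,(z-x^{(i)})_{\bA}\rangle|$. On $\calE_1$ at most $q$ queried points have $\|x_{\bC}\|\le\sqrt n$, each lies in at most $q$ flaps, so at most $q^2$ flaps have $\tilde Q^q\cap\bF_i\neq\emptyset$; and all points of $\tilde Q^q\cap\bF_i$ lie within Euclidean distance $R$ of $x^{(i)}$. Hence if $\overline{\calE_2}\cap\calE_1$ holds via $\bF_i$, the values $\langle\bv^i,z\rangle$ ($z\in\tilde Q^q\cap\bF_i$) fill an interval of length $\le 2D_i$ around $\langle\bv^i,x^{(i)}\rangle$ that straddles one of $\pm\sqrt n/2$, so either $D_i>D^\ast$ or $\langle\bv^i,x^{(i)}\rangle$ lies within $D^\ast$ of $\{\pm\sqrt n/2\}$. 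Enumerate the flaps with $\tilde Q^q\cap\bF_i\neq\emptyset$ as $\bF_{i_1},\dots,\bF_{i_m}$ in the order in which $x^{(i)}$ is reached on the path ($m\le q^2$ on $\calE_1$). Since $x^{(i_k)}$ is determined before $\bv^{i_k}$ is revealed, $\langle\bv^{i_k},x^{(i_k)}\rangle$ is conditionally $N(0,\sigma_{i_k}^2)$, and the density bound gives
\[
\Prx_{\boldf\sim\Dno}\big[\overline{\calE_2}\cap\calE_1\big]\ \le\ \sum_{k=1}^{q^2}\Big(\Prx\big[D_{i_k}>D^\ast\wedge\calE_1\big]+O\big(D^\ast/\sqrt n\big)\Big)\ \le\ q^2\,2^{-\Omega(\sqrt n)}+O\big(q^2 n^{-1/4+o(1)}\big)=o(1)
\]
for $q\le n^{0.05}$, where the first term uses the displacement bound $\Prx[D_{i_k}>D^\ast\wedge\calE_1]\le 2^{-\Omega(\sqrt n)}$ established next.

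\smallskip\noindent\emph{The displacement bound (the main obstacle).} This is the heart of the proof and where adaptivity bites: the crude estimate $D_i\le\|\bv^i\|\,R\approx\sqrt n\,R$ is useless, so one must use that the algorithm knows almost nothing about $\bv^i$. For a flap-$i$ query $z$ issued after $x^{(i)}$, let $W$ be the span of the $\bA$-projections of the flap-$i$ queries issued strictly before $z$ (a space of dimension $\le q$). Every answer that could have depended on $\bv^i$ up to this point was a threshold bit $\Indicator[\langle\bv^i,x_\ell\rangle\notin[-\tfrac{\sqrt n}{2},\tfrac{\sqrt n}{2}]]$ with $x_{\ell,\bA}\in W$, so $\bv^i_{W^\perp}$ is a fresh Gaussian independent of $z$. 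Decomposing $(z-x^{(i)})_{\bA}=w_W+w_{W^\perp}$ with both pieces of norm $\le R$ on $\calE_1$, the fresh contribution $\langle\bv^i_{W^\perp},w_{W^\perp}\rangle\sim N(0,\|w_{W^\perp}\|^2)$ is $O(R\sqrt{\log n})$ except with probability $2^{-\Omega(\sqrt n)}$, while $|\langle\bv^i_W,w_W\rangle|\le\|\bv^i_W\|\,R$. It remains to show $\|\bv^i_W\|\le n^{o(1)}$ except with negligible probability: $\bv^i_W$ has unconditioned norm $O(\sqrt q)$, and its conditioning consists only of $\le q$ threshold bits — each ``inside the strip'' bit is a high-probability event that perturbs $\bv^i_W$ negligibly, and (by an induction on the order of flap-$i$ queries, using $\|x_{\ell,\bA}\|\ge\sigma_i-R$ on $\calE_1$ and the fact that for a mean-zero Gaussian with standard deviation $\ll\sqrt n$ leaving $[-\tfrac{\sqrt n}{2},\tfrac{\sqrt n}{2}]$ is a deep-tail event) any ``outside the strip'' bit forces $\sigma_i=\Omega(\sqrt n/\sqrt{\log n})$, a regime in which the induced slab constraints on $\bv^i_W$ have half-width $\Omega(1)$ and again perturb it negligibly. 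This yields $D_i\le D^\ast:=n^{1/4+o(1)}$ with probability $1-2^{-\Omega(\sqrt n)}$, completing the argument. The only other point is the bookkeeping interchange between the whole-transcript event $\calE_1$ and the per-flap analysis, handled by the ``order of first appearance'' enumeration together with the cap $m\le q^2$; I expect this displacement estimate, and the adaptivity it must defeat, to be by far the most technical ingredient.
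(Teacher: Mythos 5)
Your proposal shares the core probabilistic content of the paper's argument: for each of the at most $q^2$ relevant flaps, single out a reference query in $\bF_i$, show by Gaussian anti-concentration that its projection onto $\bv^i$ lands within the critical window around $\pm\sqrt{n}/2$ with probability only $\widetilde{O}(\sqrt{q}/n^{1/4})$, show that the other flap-$i$ queries' projections stay within that window's width of the reference, and union bound. Where you genuinely diverge is in the adaptivity bookkeeping. The paper unions over the \emph{first} level $k$ at which $\calE_2$ fails, so that the only conditioning on $\bv^i$ is the event that all earlier flap-$i$ queries received the same bit; \Cref{lem:final-lemma-sim} then bounds the conditional probability by dividing the unconditional probability of the bad geometric event by the $\Omega(1)$ probability of that conditioning event. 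You instead use lazy revelation (so the first flap-$i$ query $x^{(i)}$ is genuinely independent of $\bv^i$ — a clean observation) and then try to control the displacement of \emph{all} later flap-$i$ queries via the decomposition of $\bv^i$ into its component on the span $W$ of previously queried action-directions and a fresh component on $W^{\perp}$.

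The gap is in the displacement bound, specifically the claim that $\|\bv^i_W\|\le n^{o(1)}$ under the transcript conditioning. Your justification — each "inside" bit is a high-probability event and so "perturbs $\bv^i_W$ negligibly," and "outside" bits can be dismissed because they force $\sigma_i=\Omega(\sqrt{n}/\sqrt{\log n})$ — does not constitute a proof. First, the joint conditioning event is an intersection of $\le q$ adaptively chosen slab/co-slab constraints on $\bv^i$; even when each marginal has probability $\Omega(1)$, the joint event can have probability far below any product of marginals (the query directions may be nearly parallel, and the constraints need not be independent or even compatible), so the bound $\Pr[A\mid E]\le\Pr[A]/\Pr[E]$ is not available without additional work. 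Second, your treatment of "outside" bits conflates a low-probability event with a non-occurring one: an "outside" bit at a query with small $\|x_{\ell,\bA}\|$ is rare, but \emph{conditioned on it occurring} the component of $\bv^i$ along $x_{\ell,\bA}$ is at least $\sqrt{n}/(2\|x_{\ell,\bA}\|)$, which is enormous, so one must explicitly charge the rarity of the event against its large conditional displacement; the induction you gesture at is exactly where this accounting must happen and it is not carried out. The gap is fillable — e.g., by a union bound over all $2^q$ root-to-leaf transcripts (as the paper does repeatedly in the proof of \Cref{lem:ev1}) with thresholds enlarged so that the per-transcript failure probability beats $2^{-q}$; for $q\le n^{0.05}$ this still yields $D^{\ast}=n^{0.25+o(1)}\cdot n^{O(0.05)}\ll\sqrt{n}$ and the final bound survives. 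But as written, the key step defeating adaptivity is asserted rather than proved, whereas the paper avoids the need for it entirely by stopping at the first violation. (Minor quantitative slips: the fresh-Gaussian contribution $\langle\bv^i_{W^\perp},w_{W^\perp}\rangle$ exceeds $O(R\sqrt{\log n})$ with probability $n^{-\Theta(1)}$, not $2^{-\Omega(\sqrt{n})}$; this does not affect the conclusion since only $o(1/q^2)$ is needed per term.)
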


%\begin{lemma}\label{lem:ev2}
%Let $\Alg$ be a one-sided, deterministic, $q$-query algorithm, where $q \leq ??$. Then,
%\begin{align*}
% \Prx_{\boldf \sim \Dno}\left[ \calE_2(\boldf)\right]=o(1).
%%\Prx_{\boldf \sim \Dno}\left[ \calE_2(\boldf) \wedge \ol{\calE_1(\boldf)} \right] = o(1).
%\end{align*}
%\end{lemma}

\begin{proof}[Proof of~Theorem~\ref{thm:one-sided} Assuming~\Cref{lem:reject-and-3,lem:ev1,lem:ev3}]
We upper bound the expression
\begin{align*}
\Prx_{\boldf \sim \Dno}\left[ \Alg(\boldf) = \text{``accept''}\right] &\mathop{\ge}^{(\ref{lem:reject-and-3})} \Prx_{\boldf \sim \Dno}\left[ \calE_2(\boldf)\right]
\geq \Prx_{\boldf \sim \Dno}\left[ \calE_1(\boldf)\right]-
\Prx_{\boldf \sim \Dno}\left[ \overline{\calE_2(\boldf)}\cap \calE_1(\boldf)\right]\ge 1-o(1)
%  \\%
%		&\leq \Prx_{\boldf \sim \Dno}\left[ \calE_1(\boldf) \right] + \Prx_{\boldf \sim \Dno}\left[ \calE_2(\boldf) \right] + \Prx_{\boldf \sim \Dno}\left[ \calE_3(\boldf) \wedge \ol{\calE_1(\boldf)} \wedge \ol{\calE_2(\boldf)}\right],
\end{align*}
using ~\Cref{lem:ev1,lem:ev3}.
\end{proof}

\subsection{Proof of \Cref{lem:reject-and-3}}

%\enote{This should be Xi's argument using the definition of convex hull, where we say that something has ``crossed the boundary''. Hopefully we can fill this in more formally...}

%\rnote{This will need to change somewhat given Xi's change to the construction}
Since $\Alg$ is a $q$-query deterministic algorithm which has one-sided error, in order for the algorithm to output ``reject,'' the set $Q^q$ queried by the root-to-leaf path obtained by executing $\Alg$ on $f$ must contain $x_1,\dots, x_{\ell}, y \in Q^q$ satisfying 
$$y \in \conv(x_1,\dots, x_{\ell}),\quad f(y) = 0,\quad \text{and}\quad f(x_1) = \dots = f(x_{\ell}) = 1.$$ In particular, from $y\in \conv(x_1,\dots,x_\ell)$, we must have that, for any vector $u \in \R^{2n}$, there exists a $j \in [\ell]$ such that $\langle x_j, u \rangle \geq \langle y, u \rangle$. This implies that:
\begin{flushleft}\begin{itemize}
\item We must have that all $x_1,\ldots,x_\ell$ satisfy $\|(x_i)_C\|_2 \leq \sqrt{n}$, and $\|y_C\|_2 \leq \sqrt{n}$, and this means these vectors lie in $\tilde{Q}^q$. The part of $\| (x_i)_C\|_2 \leq \sqrt{n}$ follows trivially from $f(x_1)=\cdots =f(x_\ell)=1$.
On the other hand, if $\|y_C\|_2>\sqrt{n}$, letting $u \in C$ be the unit vector $u=y_C/\|y_C\|_2$, there exists an $x_j$ 
  with $$\|(x_j)_C\|_2 \geq \langle x_j, u \rangle \geq \langle y, u \rangle =\|y_C\|_2> \sqrt{n},$$ and hence $f(x_j) = 0$, which would be a contradiction with $f(x_j)=1$.

\item We must have $y\notin B$ since $f(y)=0$.
%Otherwise there exists an $i\in [N]$ such that $\langle y,g^i\rangle >r$.
%Using $g^i$ as $u$ we have that there exists a $j\in [\ell]$ such that $\langle x_j,g^i\rangle \ge \langle y,g^i\rangle >r$, which implies that $f(x_j)=0$, a contradiction.
As a result, there is a nonempty $T$ such that $y\in F_T$.
   In addition, $f(y) = 0$ implies that 
   there exists an $i\in T$ such that $y\in F_i$ but 
   $$
   \Indicator\sbra{\langle v^i,y\rangle \notin [-\sqrt{n}/2,\sqrt{n}/2]} = 0.
   $$
Given that $y\in F_i$, setting $u=g^i$, there exists an $x_j$ such that 
  $\langle x_j,g^i\rangle >\langle y,g^i\rangle\ge r$ and thus, $x_j\in F_i$.
It follows from $f(x_j)=1$ and the construction that %, we have 
  $$
   \Indicator\sbra{\langle v^i ,x_j\rangle \notin [-\sqrt{n}/2,\sqrt{n}/2]} = 1.
   $$  
%    $f(y) = g_{\vec{H}, \vec{v}}(y)$ for $y \in \Shell$. In particular, there must be some index $i \in [N]$ so that $\langle x_j, g^i \rangle \geq \langle y, g^i \rangle > r - c_1 n^{1/4}$. Finally, we must have $\langle x_j , g^j \rangle \leq r$ as otherwise, $x_j \notin B$ and evaluates to $0$. For that setting of $j$, we must also have $\|x_j\|_2 \leq \sqrt{n} + 1$ since $f(x_j) = 1$. 
\end{itemize}\end{flushleft}
%In summary, we must have both $\|y\|_2, \|x_j \|_2 \leq \sqrt{n} + 1$ and $x_j, y \in S_i$ for some $i \in [N]$. 
This concludes the proof using $i,y$ and $x_j$.

\subsection{Proof of \Cref{lem:ev1}}
\label{sec:proofev1}

%\begin{proof}

%Let $\Shell:=\Ball(\sqrt{n})\setminus \Ball(\sqrt{n}-100q)$.
%I will be working on the non-adaptive case, where the algorithm makes
%  $2^q$ queries $Q\subset \Ball(\sqrt{n})$, where $q$ should be taken as a small polynomial of $n$.

To prove \Cref{lem:ev1}, we introduce five new, easy-to-analyze events $\calE_{1,1},\calE_{1,2},
  \calE_{1,3}$, $\calE_{1, 4}$ and $\calE_{1,5}$, show that each happens with probability at least $1-o(1)$,
  and that $\calE_{1,1}\cap \calE_{1,2}\cap \calE_{1,3} \cap   \calE_{1,4} \cap \calE_{1, 5}$ implies $\calE_1$. For the $n$-dimensional subspace $C \subset \R^{2n}$ (in particular, the control subspace for $f$), we denote $\Shell(C) := \{ x \in \R^{2n} : \sqrt{n} - 100q \leq \| x_{C} \|_2 \leq \sqrt{n} \}$, where $x_C$ denotes the orthogonal projection of $x$ onto the subspace $C$. 
\begin{flushleft}\begin{itemize}
\item $\calE_{1,1}(f)$:
This event occurs if no query $x$ in $\Alg$ with $\|x_C\|_2 \leq \sqrt{n}$ lies in $\bigcup_{|T| \geq q} F_T$ defined by $f$;
%    \begin{itemize}
%    \item either the query $x_v$ lies in the region $D^{\ge q}$ defined by $f$, or
%    \item $x_v$ satisfies $x_v\notin \Shell$ and $x_v\notin B$ (or equivalently, $x_v\in D^{\ge 1}$), or
%    \item {\color{blue}$x_v$ satisfies that $\langle x_v,g^i\rangle>r+100q n^{1/4}$ for some $i\in [N]$.}%\xnote{This is a new condition.}
%      the region $D^{\ge 1}$ defined by $f$.
%    \end{itemize}
\item $\calE_{1,2}(f)$: This event occurs if 
  no query $x$ in $\Alg$ satisfies 
  %  $v$ in the algorithm $\Alg$ is labelled a query $x_v$ such that 
  $x\notin \Shell(C)$  and $x\notin B$ (or equivalently, $\|x_C\|_2<\sqrt{n}-100q$
  and $x\in F_i$ for some $i\in [N]$);
\item $\calE_{1,3}(f)$: This event occurs if 
  no query $x$ in $\Alg$ with $\|x_C \|_2 \leq \sqrt{n}$ has $$\langle x,g^i\rangle\ge r+100qn^{1/4},\quad\text{for some $i\in [N]$};$$
\item $\calE_{1,4}(f)$: This event \emph{does not} occur if 
  there exist $i\in [N]$ and two queries $x,z$ in $\Alg$ where (i) $x_C$ and $z_C$ are not scalar multiples of each other, $z_C=(1+a)x_C+by$ denotes the unique decomposition with $x_C \perp y$, $\|y\|_2=1$ and $b>0$,
  such that $x\in F_i$ and $|\langle y,g^i\rangle |\ge 100\sqrt{q}$.
  \item $\calE_{1,5}(f)$: The event occurs whenever every pair $x,y \in \Alg$ satisfy $\| x - y \|_2 \leq 2 \| (x - y)_C \|_2$.
\end{itemize}\end{flushleft}

\ignore{This lemma follows from a simple union bound. Note there are at most $2^q$ queries $x_v$ with $v \in \Alg$. Thus, we have:
\begin{align*}
\Prx_{\boldf \sim \Dno}\left[ \calE_1(\boldf)\right] \leq \sum_{v \in \Alg} \Prx_{\boldf \sim \Dno}\left[ x_v \in \bD^q  \right] \mathop{\lsim}^{(\ref{lem:exp-decrease})} 2^q \cdot 1/q! = o(1). 
\end{align*}
{\color{red}I hope the second part of $\calE_1$\rnote{I somewhat lost the thread - which statement is meant by ``the second part of $\calE_1$''?}  follows from the discussion in
  the email thread on March 6.} \blue{Yup, I think this should be fine - it'll follow from the kind of calculations that will be in the proof of \Cref{lem:rad-by-rad}. Shivam and Rocco will fill this in.}

I think each of the four events follow from a relatively simple union bound. 
For example, for event $\calE_{1,4}$, with a fixed triple of $i,x,z$,
  the probability of the bad event happens over $\bg^i$ is roughly
$$
\frac{1}{N} \cdot 2^{-\text{some big constant}\cdot q}.
$$
This is because we can consider the draw of $\bg^i$ (the relevant part)
  as drawing two standard Gaussians along $x$ and along $y$.}
%\end{proof}

\noindent We first prove that $\calE_{1}(f)$ is implied by the five events together. Then, we show that each of the events holds individually with probability $1-o(1)$. By a union bound over the five events, this gives~\Cref{lem:ev1}.

\begin{lemma}
$\calE_{1,1}(f)\cap \calE_{1,2}(f)\cap \calE_{1,3}(f)\cap \calE_{1,4}(f) \cap \calE_{1,5}(f)$ implies
  $\calE_{1}(f)$.
\end{lemma}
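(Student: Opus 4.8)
The plan is to verify the two defining clauses of $\calE_1(f)$ separately: the ``at most $q$ flaps'' clause will come from $\calE_{1,1}(f)$ alone, and the diameter bound \eqref{eq:iwantit} from the remaining four events. First I would dispatch the easy clause: every $x\in\tilde{Q}^q$ satisfies $\|x_C\|_2\le\sqrt n$ by definition, so if such an $x$ lay in $q$ or more flaps then $x\in\bigcup_{|T|\ge q}F_T$, contradicting $\calE_{1,1}(f)$; hence every point of $\tilde{Q}^q$ lies in at most $q$ flaps.

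For \eqref{eq:iwantit}, I would fix a flap $F_i$ with $\tilde{Q}^q\cap F_i\ne\emptyset$ and two queries $x,y\in\tilde{Q}^q\cap F_i$, and begin by assembling the consequences of the events as constraints on $x_C,y_C$. From $x,y\in F_i$ together with $\calE_{1,2}(f)$ we get $\|x_C\|_2,\|y_C\|_2\in[\sqrt n-100q,\sqrt n]$; from $x,y\in F_i$ together with $\calE_{1,3}(f)$ we get $\langle x_C,g^i\rangle,\langle y_C,g^i\rangle\in(r,r+100qn^{1/4})$ (using $\langle x,g^i\rangle=\langle x_C,g^i\rangle$ since $g^i\in C$); and from \Cref{lemma:r-N-relationship} with $N=2^{\sqrt n}$, $c_1=\Theta(1)$, we have $r=\Theta(n^{3/4})$, so in the regime of interest (say $q=o(\sqrt n)$, in particular $q\le n^{0.05}$) both $100qn^{1/4}$ and $100\sqrt{qn}$ are $o(r)$. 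Since $\calE_{1,5}(f)$ gives $\|x-y\|_2\le 2\|x_C-y_C\|_2$, it then suffices to show $\|x_C-y_C\|_2\le 500\sqrt q\, n^{1/4}$.

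The core estimate comes from applying $\calE_{1,4}(f)$ to the pair $(x,y)$ and the index $i$ (legitimate since $x\in F_i$). If $x_C$ and $y_C$ are scalar multiples of each other, then because $\langle x_C,g^i\rangle$ and $\langle y_C,g^i\rangle$ are both positive the scalar is positive, whence $\|x_C-y_C\|_2=\big|\,\|x_C\|_2-\|y_C\|_2\,\big|\le 100q$, which is well within $500\sqrt q\, n^{1/4}$. Otherwise write $y_C=(1+a)x_C+bw$ with $w\perp x_C$, $\|w\|_2=1$, $b>0$; then $\calE_{1,4}(f)$ forces $|\langle w,g^i\rangle|<100\sqrt q$. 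Now $\|x_C-y_C\|_2^2=a^2\|x_C\|_2^2+b^2$, and I would bound each term by combining the inner-product identity $(1+a)\langle x_C,g^i\rangle=\langle y_C,g^i\rangle-b\langle w,g^i\rangle$ with the norm identity $\|y_C\|_2^2=(1+a)^2\|x_C\|_2^2+b^2$. The crude bound $b\le\sqrt n$ (from $\|y_C\|_2\le\sqrt n$) makes the right side of the first identity equal to $r(1+o(1))$ while $\langle x_C,g^i\rangle=r(1+o(1))$, forcing $a=o(1)$ with $|a|=O\big((qn^{1/4}+b\sqrt q)/r\big)$; feeding this back into the norm identity, written as $b^2=\|y_C\|_2^2-\|x_C\|_2^2-(2a+a^2)\|x_C\|_2^2$ with $\big|\|y_C\|_2^2-\|x_C\|_2^2\big|\le 200q\sqrt n$, yields a quadratic inequality $b^2\le O(q\sqrt n)+O(\sqrt q\, n^{1/4})\,b$, hence $b=O(\sqrt q\, n^{1/4})$ and then $a^2\|x_C\|_2^2=O(q\sqrt n)$. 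Combining, $\|x_C-y_C\|_2=O(\sqrt q\, n^{1/4})$; tracking the absolute constants (the various factors of $100$ in the definitions of $\calE_{1,2},\dots,\calE_{1,5}$ were chosen precisely for this) gives $\|x_C-y_C\|_2\le 500\sqrt q\, n^{1/4}$, and therefore $\|x-y\|_2\le 1000\sqrt q\, n^{1/4}$, which is \eqref{eq:iwantit}.

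The main obstacle is the last paragraph: $a$ and $b$ each enter the bound for the other, so one must resolve a coupled system, and the final constant $1000$ emerges only after a somewhat delicate accounting that crucially uses $r=\Theta(n^{3/4})$ dominating both $qn^{1/4}$ and $\sqrt{qn}$ --- i.e.\ uses that $q$ is at most a small polynomial in $n$. Everything else is a routine translation of the five events into the desired geometric statement.
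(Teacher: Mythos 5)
Your proposal is correct and follows essentially the same route as the paper: the same split of $\calE_1$ into its two clauses (with $\calE_{1,1}$ handling the flap count), the same scalar-multiple case via $\calE_{1,2}$ and $\calE_{1,5}$, and the same decomposition $y_C=(1+a)x_C+bw$ combined with the inner-product constraints from $\calE_{1,3},\calE_{1,4}$ and the norm constraint from $\calE_{1,2}$. The only difference is in the final algebra: the paper resolves the coupled system in $(a,b)$ by a case split on whether $a\ge -200q/\sqrt{n}$, whereas you bootstrap from the crude bound $b\le\sqrt{n}$; both routes pass through the same two identities and yield $\|(x-y)_C\|=O(\sqrt{q}\,n^{1/4})$.
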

\begin{proof}
Recall that $\tilde{Q}^q$ denotes the set of (at most $q$) queries made by $\Alg$ when running on $f$ whose orthogonal projections onto $C$ each have norm at most $\sqrt{n}$.
First $\calE_{1,1}(f)$ implies that the number of ``nonempty'' flaps $i\in [N]$, i.e.~flaps $F_i$ that have $\tilde{Q}^q\cap F_i\ne \emptyset$, is at most $q^2$.
Fix any nonempty flap $F_i$ and any two points $x,z\in \tilde{Q}^q\cap F_i$.
First consider the case that $x_C$ and $z_C$ are scalar multiples of each other. Note that we have $x,z\in \Shell(C)$ by $\calE_{1,2}(f)$ and thus, $\|(x-z)_C\|_2\le 100q$ (since they are scalar multiples of each other). By $\calE_{1,5}(f)$, $\|x - z\|_2 \leq 200q$, which is consistent with the requirement of $\calE_{1}(f)$ since $q = o(\sqrt{q}n^{1/4}).$
  
So consider the case when $x_C,z_C$ are not scalar multiples of each other, and let $z_C=(1+a)x_C+by$ be
  the unique decomposition with $x_C\perp y$ and $y \in C$ with $\|y\|_2=1$ and $b>0$.
Let $\alpha := \|(x-z)_C\|_2^2 = a^2\|x_C\|_2^2+b^2$.    Our goal is to establish that
\begin{equation} \label{eq:goal-norm-bound}
\alpha \leq  250000 q \sqrt{n},
\end{equation}
so that we may use $\calE_{1,5}(f)$ to deduce that \eqref{eq:iwantit} holds for $x$ and $z$.

We have $\|z_C\|_2^2 = (1+a)^2 \|x_C\|_2^2 +  b^2$.
Given that $\|z_C\|_2\le \sqrt{n}$,
\[ 
(1+2a)\|x_C\|_2^2 + \alpha = \|z_C\|_2^2 \leq n. 
\]
By $\calE_{1,2}(f)$, we have $x\in \Shell(C)$ and thus, $\|x_C\|_2\ge \sqrt{n}-100q$.
Plugging this in, we have
\[
(1+2a)(n-200q\sqrt{n}+10000q^2) + \alpha \leq n,
\]
or equivalently,
\begin{equation} \label{eq:aa}
\alpha \leq -2an +(1+2a)(200q\sqrt{n}-10000q^2)
\le 200q\sqrt{n}+a(-2n+400q\sqrt{n}-20000q^2). %\le -2cn+200(1+2c)q\sqrt{n}.
%4\sqrt{n} - 2c(n-2\sqrt{n}+1).
\end{equation}
%Since $\alpha \geq 0$ and using $q=o(\sqrt{n})$, we get
%\[
%c \leq {\frac {200q \sqrt{n}}{2n-400q\sqrt{n}+10000q^2}} \leq {\frac {200q} {\sqrt{n}}}.
%\]

Let's consider two cases:

\medskip
\noindent {\bf Case 1:} $a \geq -200q/\sqrt{n}.$  We have from  \Cref{eq:aa} (note that
  the coefficient of $a$ is negative and is a value larger than $-2n$) 
%$$2c(n-2\sqrt{n}+1) \geq {\frac {-20}{\sqrt{n}}} \cdot (n-2\sqrt{n}+1) \geq -21\sqrt{n},$$ so \Cref{eq:a} gives
%$
%
$$\|(x-z)_C\|_2^2 = \alpha \leq 200q\sqrt{n}+2n\cdot \frac{200q}{\sqrt{n}}= 600q\sqrt{n},$$ and we get \Cref{eq:goal-norm-bound}.

\medskip
\noindent {\bf Case 2:} $a < -200q/\sqrt{n}.$  In this case, using $r\le \langle z,g^i\rangle  ,\langle x,g^i\rangle\le r+100q n^{1/4}$ (where the first inequality is because $x,z \in F_i$ and the second is from $\calE_{1,3}(f)$) gives 
\begin{align*}
r &\leq \overbrace{(1+a)\cdot \langle x, g^i\rangle + b \cdot \langle y,g^i\rangle}^{=\langle z, g^i\rangle} \leq a\cdot \langle x,g^i\rangle+  \overbrace{{r+100qn^{1/4}}}^{\text{b/c~}\langle x ,g^i\rangle \le r+100q n^{1/4}}  + b\cdot  \overbrace{(100 \sqrt{q})}^{\text{b/c~}|\langle y,g^i\rangle | \leq 100\sqrt{q}}
%&\leq n^{3/4} - 5n^{1/4} + \sum c_i (100 \log n),
\end{align*}
so (recall that $a<-200q/\sqrt{n}$ is negative and $-a$ is positive)
\[
b\ge \frac{-a\cdot \langle x,g^i\rangle-100qn^{1/4}}{100\sqrt{q}}\ge  \frac{-a r}{200\sqrt{q}}.
\]
%\[
%{\frac {-cn^{3/4}}{200 \log n}} \leq {\frac {-cn^{3/4} - n^{1/4} }{100 \log n}} \leq \sum c_i \text{~(remember, }
%\]
%Now applying Cauchy-Schwarz, we have
%\[
%b\ge 
%\frac{1}{q}\cdot \left({\frac {c^2n^{3/2}}{40000 \log^2 n}}\right)  .
%\]
Recalling that
$\|z_C\|_2^2=(1+a)^2 \|x_C\|_2^2 + b^2\le n$ and that $\|x_C\|_2^2 \geq n-200q\sqrt{n}+10000q^2$, we get
$$
n\ge (1+2a+a^2)(n-200q\sqrt{n})+ {\frac {a^2r^2}{40000 q}} 
\ge (1+2a)(n-200q\sqrt{n})+{\frac {a^2r^2}{40000 q }} 
$$
%\begin{align*}
%& (1+2c+c^2) (n-2\sqrt{n}+1) + \blue{\left({\frac {c^2n^{3/2}}{40000 \log^2 n}}\right)/q'}\\ &\leq 
%(1+2c+c^2) \|x\|^2 + \blue{\left({\frac {c^2n^{3/2}}{40000 \log^2 n}}\right)/q'}\\ &\leq
%(1+2c+c^2) \|x\|^2 + \sum c_i^2 \leq
% n , \text{~and hence}
%\end{align*}
and hence,
\[
a^2 \cdot \frac {r^2}{40000 q}
\leq
200q\sqrt{n} -2a (n-200q\sqrt{n} ).
\]
Recalling that $a<0$, dividing through by $-a$ we get
\[
(-a) \cdot \frac{r^2}{40000q}
\le {\frac {200q\sqrt{n}}{-a}} + 2 (n - 200q\sqrt{n} )
\overbrace{\le }^{\text{~(using $-a\ge 200q/\sqrt{n}$)}}  3n.
\]
So we have
\[
0 <
-a \leq \frac{120000qn}{r^2} = \frac{60000q}{\sqrt{n}} \cdot  (1+o(1)),
\]
by the setting of $r$ in~\Cref{lemma:r-N-relationship}. Recalling \Cref{eq:a}, we get 
\[
\alpha \leq 200q\sqrt{n} - 2an \leq 200q\sqrt{n} + 240000q\sqrt{n},
\]
%Now, since $q' \leq n^{0.49}$, we have that
%$\red{INSANITY}= {\frac {n^{3/2}}{40000q' \log^2(n)}} (1 \pm o(1))$, so 
%\[
%Q \leq 4 \sqrt{n} + {\frac {6n^2}{\red{INSANITY}}} 
%\leq 4\sqrt{n} + 240000q\sqrt{n} \log^2(n)(1 \pm o(1))
%\leq 25\sqrt{n} + 250000 q' \sqrt{n} (\log n)^2
%\]
as was to be shown.
\end{proof}

\paragraph{Event $\calE_{1, 1}(f)$.} We now show that with probability at least $1 - o(1)$ over the draw of $\boldf \sim \Dno$, all $2^q$ queries specified by $\Alg$ avoid the region which is the intersection of at least $q$ flaps. Consider any fixed query $x$ and fix any setting of the control subspace $C \subset \R^{2n}$ with $\|x_C\|_2 \leq \sqrt{n}$. Using~\Cref{lemma:small-vol-high-degree} (and the fact $C$ is isomorphic to $\R^n$),
\begin{align*}
\Prx_{\bB \sim \Naz(r, N,C)}\left[x \in \bigcup_{|T| \geq q} \bF_T \right] \leq \frac{c_1^q}{q!},
\end{align*} 
so that a union bound over $2^q$ queries gives $(2c_1)^q / q! = o(1)$ for large $q$.  

\paragraph{Event $\calE_{1,2}(f)$.} Similarly to above, we proceed by a union bound over all $2^q$ queries. We consider a fixed control subspace $C$ and we let $x$ be a query with $\|x_C\|_2 < \sqrt{n} - 100q$, so
\begin{align*}
\Prx_{\bB \sim \Naz(r, N,C)}\left[ \exists i \in [N] : x \in \bF_i \right]  &\leq 
N \cdot \left( 1 - \Phi\left(\frac{r}{\sqrt{n}-100q} \right)\right)
\\ &\le N \cdot \left( 1 - \Phi\left(\frac{r}{\sqrt{n}} \left(1 + \frac{100q}{\sqrt{n}} \right)\right)\right) \\
			&\leq N \cdot \frac{\sqrt{n}}{r} \cdot \exp\left(-\frac{r^2}{2n} \left(1 + \frac{200 q}{\sqrt{n}}\right) \right) \\
			&= N \cdot \frac{\sqrt{n}}{r} \cdot \exp\left(-\frac{r^2}{2n} \right) \cdot \exp\left( -\frac{100r^2 q}{n^{3/2}}\right) \leq 2c_1 \cdot \exp\left(-10q\right),
\end{align*}
by the setting of $r$ from~\Cref{lemma:r-N-relationship}. The desired claim then follows from a union bound over all $2^q$ queries.

\paragraph{Event $\calE_{1,3}(f)$.} Consider any query $x$, and consider a fixed setting of the control subspace $C$ with $\|x_C \|_2 \leq \sqrt{n}$. Then, 
\begin{align*}
\Prx_{\bB \sim \Naz(r, N)}\left[\exists i \in [N] : \langle x, \bg^i \rangle \geq r + 100 q n^{1/4} \right]   
&\le N \left( 1 - \Phi\left( \frac{r+100qn^{1/4}}{\sqrt{n}} \right) \right)
\\
&\le N \left( 1 - \Phi\left( \frac{r}{\sqrt{n}} \left( 1 + \frac{100qn^{1/4}}{r}\right)\right) \right) \leq o(2^{-q}), 
\end{align*}
where the computation proceeds similarly to $\calE_{1,2}(f)$.

\paragraph{Event $\calE_{1,4}(f)$.} For a fixed control subspace $C$, we may consider two arbitrary queries $x, z$ among the set of all $2^q$ queries with $\|x_C\|_2, \|z_C\|_2 \leq \sqrt{n}$. This gives $2^{2q}$ possible settings of the unit vector $y$ which is orthogonal to $x_C$. In order for the event to fail, there must exists some $i \in [N]$ where $\langle \bg_i, x_C \rangle \geq r$ and $|\langle \bg_i, y \rangle| \geq 100\sqrt{q}$. Furthermore, since $x_C$ and $y$ are orthogonal, these two events are independent:
\begin{align*}
\Prx_{\bg^i}\left[ \langle \bg_i, x_C\rangle \geq r \wedge |\langle \bg_i, y\rangle | \geq 100 \sqrt{q}\right] \leq \frac{c_1}{N} \cdot e^{-100^2 q / 2}, 
\end{align*}
hence, we may take a union bound over all $i \in [N]$ and all $2^{2q}$ pairs of vectors $x$ and $z$.

\paragraph{Event $\calE_{1,5}(f)$.} Finally, consider any two vectors $x$ and $y$ which are queries among the $2^q$ possible queries in $\Alg$. The Johnson-Lindenstrauss lemma (see Theorem~5.3.1 in~\cite{vershynin2018high}) says that a random $n$-dimensional subspace $\bC$ of $\R^{2n}$ will satisfy $\| (x-y)_{\bC} \|_2 \geq (1/\sqrt{2} - \eps) \|x-y\|_2$ except with probability $\exp\left(-\Omega(\eps^2 n)\right)$. Thus, for large enough $n$, $\|x - y\|_2 \leq 2 \|(x-y)_{\bC}\|_2$ except with probability $\exp\left(-\Omega(n)\right)$, and since $q \ll n$, we may union bound over all $2^{2q}$ pairs of queries in $\Alg$.

\subsubsection{Proof of~\Cref{lem:ev3}}\label{section:final}

%\enote{I'm hopeful this lemma will be straight-forward given the events. Essentially, we just need to ``randomize'' with respect to the action subspaces and perform union bounds only with respect to $q$.}

For $\overline{\calE_2(f)}\cap \calE_1(f)$ to happen, there must exist a level
  $k\in [q]$ such that
\begin{flushleft}\begin{itemize}
\item After the the first $k-1$ queries $\tilde{Q}=\tilde{Q}^{k-1}$, $\calE_1(f)$ holds, i.e., the number of
  flaps $F_i$ with $\tilde{Q} \cap F_i\ne\emptyset$ is at most $q^2$. In every such $F_i$, every two points in $\tilde{Q} \cap F_i$ have distance at most
  $1000 \sqrt{q}n^{1/4}$ and share the same value of 
  $$
  \Indicator\sbra{\langle v^i ,x \rangle \notin [-\sqrt{n}/2,\sqrt{n}/2]},
  $$
  which we denote by $b_i\in \{0,1\}$.
\item Let $y$ be the $k$-th query.
There exists an $i$ such that $\tilde{Q}\cap F_i\ne \emptyset$ and $y\in F_i$ such that  
\begin{equation}\label{hehe1}
\|y-x\|_2\le 1000 \sqrt{q}n^{1/4}
\end{equation}
for all $x\in \tilde{Q}\cap F_i$ (the number of such $i$ is at most $q$) but
$$
\Indicator\sbra{\langle v^i ,x \rangle \notin [-\sqrt{n}/2,\sqrt{n}/2]}\ne b_i.
$$
\end{itemize}\end{flushleft}

We prove below that when $q\leq n^{0.05}$, the probability of the event above 
  for a fixed $k$ is $o(1/q)$ and thus,~\Cref{lem:ev3} follows by applying a union bound on $k$.
This follows from a union bound on all $i\in [N]$ such that 
  $\tilde{Q}\cap F_i\ne \emptyset$ and every $x\in \tilde{Q}\cap F_i$ satisfies (\Cref{hehe1}), taking $X$ (or $z$) below as $\tilde{Q}\cap F_i$ (or $y$, respectively)
  projected on the space orthogonal to $g^i$
  and $b$ as $b_i$.
%Formally the $n$ in the lemma below should also be set to be $n-1$.
%This follows from a union bound on the event of $y\in all $i\in [N]$ with $Q\cap F_i\ne \emptyset$ 

\begin{lemma}\label{lem:final-lemma-sim}
Let $b\in \{0,1\}$, and let $X$ be a set of at most $q$ points in $\Ball(\sqrt{2n})$ and $y\in \Ball(\sqrt{2n})$. Suppose that every pair $x \in X$ and $y$ satisfy
\[ \| (x-y)_A \|_2 \leq 1000 \sqrt{q} n^{1/4}. \]
Then over the draw~of $\bv \sim N(0, I_n)$ in $A$, the probability of
  $\Indicator[\langle  \bv,y\rangle \notin[-\sqrt{n}/2,\sqrt{n}/2]]\ne b$ conditioning on the event that
  $\Indicator[\langle \bv,x\rangle \notin[-\sqrt{n}/2,\sqrt{n}/2]]=b$ for all $x\in X$ is at most $O(\sqrt{q} \log n/n^{1/4})$.
\end{lemma}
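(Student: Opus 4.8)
The plan is to write the conditional probability as $\Pr[E_y\ne b\mid B]=\Pr[E_y\ne b,\,B]/\Pr[B]$, to bound the numerator by a Gaussian anti-concentration argument and the denominator by a crude constant lower bound. First I would fix notation: since $\bv$ lies in the $n$-dimensional action subspace $A$, we have $\langle\bv,z\rangle=\langle\bv,z_A\rangle$ for every $z\in\R^{2n}$, so every event below depends only on the projections $z_A$ (and $\langle\bv,z\rangle\sim N(0,\|z_A\|^2)$). Write $L_0:=1000\sqrt q\,n^{1/4}$, $I:=[-\sqrt n/2,\sqrt n/2]$, and $E_z:=\Indicator[\langle\bv,z_A\rangle\notin I]$, so $B=\{E_x=b\text{ for all }x\in X\}$. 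Assuming $X\neq\emptyset$ (the only case used in the application), fix any $x_0\in X$; by the triangle inequality through $y_A$ we get $\|x_A-x_{0,A}\|\le 2L_0$ for all $x\in X$, and of course $\|x_{0,A}\|\le\sqrt{2n}$.

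For the numerator I would weaken $B$ to the single constraint $\{E_{x_0}=b\}$, so $\Pr[E_y\ne b,\,B]\le\Pr[E_y\ne E_{x_0}]$, and bound the latter. Writing $\delta:=\langle\bv,y_A-x_{0,A}\rangle$ we have $\langle\bv,y_A\rangle=\langle\bv,x_{0,A}\rangle+\delta$; since $E_y\ne E_{x_0}$ means exactly one of these two reals lies in $I$ while they differ by $|\delta|$, this event forces $\langle\bv,x_{0,A}\rangle$ to lie within distance $|\delta|$ of one of the endpoints $\pm\sqrt n/2$. Because $\delta\sim N(0,\|y_A-x_{0,A}\|^2)$ with $\|y_A-x_{0,A}\|\le L_0$, taking $t_0:=L_0\sqrt{6\ln n}$ gives $\Pr[|\delta|>t_0]\le 2n^{-3}$, and on the complement the event forces $\langle\bv,x_{0,A}\rangle$ into the set $S:=[\tfrac{\sqrt n}{2}-t_0,\tfrac{\sqrt n}{2}+t_0]\cup[-\tfrac{\sqrt n}{2}-t_0,-\tfrac{\sqrt n}{2}+t_0]$ of total length $4t_0$. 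The crux is then to bound $\Pr[\langle\bv,x_{0,A}\rangle\in S]$ for $\langle\bv,x_{0,A}\rangle\sim N(0,\sigma_0^2)$, $\sigma_0:=\|x_{0,A}\|\le\sqrt{2n}$, by interpolating two estimates: a density bound $\Pr[\langle\bv,x_{0,A}\rangle\in S]\le 4t_0/(\sqrt{2\pi}\,\sigma_0)$, good when $\sigma_0$ is not tiny, and a Gaussian tail bound $\Pr[\langle\bv,x_{0,A}\rangle\in S]\le 2\bigl(1-\Phi((\sqrt n/2-t_0)/\sigma_0)\bigr)$ (via \Cref{prop:gaussian-tails}), good when $\sigma_0$ is small so that $S$ sits deep in the tail. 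Splitting according to whether $(\sqrt n/2-t_0)/\sigma_0$ exceeds $\sqrt{6\ln n}$, and using $t_0=L_0\sqrt{6\ln n}$ with $L_0=1000\sqrt q\,n^{1/4}$, both cases yield $\Pr[\langle\bv,x_{0,A}\rangle\in S]=O(t_0\sqrt{\log n}/\sqrt n)=O(\sqrt q\log n/n^{1/4})$; hence $\Pr[E_y\ne b,\,B]=O(\sqrt q\log n/n^{1/4})$.

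For the denominator it suffices to show $\Pr[B]=\Omega(1)$. When $b=0$ this is easy: if $\langle\bv,x_{0,A}\rangle\in[-\sqrt n/4,\sqrt n/4]$ and $|\langle\bv,x_A-x_{0,A}\rangle|\le\sqrt n/4$ for every $x\in X$, then $B$ holds, so a union bound gives $\Pr[B]\ge\Pr[|N(0,1)|\le\sqrt n/(4\sigma_0)]-\sum_{x\in X}\Pr[|N(0,1)|>\sqrt n/(8L_0)]\ge\Pr[|N(0,1)|\le 1/(4\sqrt 2)]-q\cdot e^{-\Omega(n^{0.45})}=\Omega(1)$, using $\sigma_0\le\sqrt{2n}$, $\|x_A-x_{0,A}\|\le 2L_0$, and $q\le n^{0.05}$. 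Dividing the numerator bound by this constant gives the claimed $O(\sqrt q\log n/n^{1/4})$ bound in the case $b=0$.

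The case $b=1$ is where I expect the main subtlety. Here $B=\{|\langle\bv,x_A\rangle|>\sqrt n/2\text{ for all }x\in X\}$, and the numerator argument above goes through verbatim, but lower-bounding $\Pr[B]$ by a constant is no longer automatic: if the common length scale $\sigma_0$ of the $x_A$'s (all within $2L_0$ of $x_{0,A}$) is $\Omega(\sqrt n)$, one repeats the $b=0$ argument with $[-\sqrt n/4,\sqrt n/4]$ replaced by $\{|\langle\bv,x_{0,A}\rangle|>2\sqrt n\}$; but if all the $x_A$'s are much shorter than $\sqrt n$ then $B$ has only exponentially small probability and, conditioned on $B$, $\langle\bv,x_{0,A}\rangle$ sits just barely outside $I$, so $\langle\bv,y_A\rangle$ can dip back into $I$ with non-negligible conditional probability — so the abstract statement alone is not enough in this degenerate regime. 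The intended resolution (to be supplied from the context of \Cref{lem:ev3}) is that this regime requires the query points to be nearly orthogonal to the random action subspace $A$, which — exactly as in the Johnson--Lindenstrauss arguments behind $\calE_{1,5}$ — occurs only with negligible probability over the draw of $A$, so it can be absorbed into the $o(1)$ error terms of \Cref{lem:ev3} rather than handled inside this lemma. Modulo this degenerate case, the interpolation-of-tails bound on the numerator together with the constant lower bound on $\Pr[B]$ gives the stated $O(\sqrt q\log n/n^{1/4})$ estimate.
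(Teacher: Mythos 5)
Your approach is the paper's: fix $x^*\in X$, bound the ``numerator'' $\Pr[E_y\neq E_{x^*}]$ by combining a Gaussian density bound with a Gaussian tail bound interpolating over the unknown scale $\|x^*_A\|$, and lower-bound the conditioning probability $\Pr[B]$ by a constant. The numerator calculations are essentially identical (your choice $t_0=L_0\sqrt{6\ln n}$ is a slightly cleaner tuning than the paper's $\gamma=\Theta(L_0\log n)$, but both land on the stated rate).

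The substantive issue you flag for $b=1$ is real, and the paper's proof shares it. The paper's claim (1) asserts $\Pr[\forall x\in X:E_x=b]=\Omega(1)$, but the only thing it verifies is $\Pr[\langle\bv,x^*\rangle\in[-\sqrt n/2,\sqrt n/2]]=\Omega(1)$, which is $\Pr[E_{x^*}=0]$ and says nothing about $\Pr[E_{x^*}=1]$. As you note, the hypotheses permit all of $\|x_A\|,\|y_A\|$ to be tiny (the condition $\|(x-y)_A\|\le 1000\sqrt q\,n^{1/4}$ bounds only differences); in that case $\Pr[B]$ is exponentially small for $b=1$, and, worse, conditioned on $B$ the overshoot of $\langle\bv,x^*\rangle$ past $\pm\sqrt n/2$ is only $\Theta(\|x^*_A\|^2/\sqrt n)$, which is dwarfed by the typical size $\Theta(\|y_A-x^*_A\|)$ of the increment $\langle\bv,y-x^*\rangle$, so $\Pr[E_y=0\mid B]=\Omega(1)$ (and equals $1$ when $y_A=0$). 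So the lemma as written is false for $b=1$ in this regime (and is vacuous for $X=\emptyset$). Your proposed repair---discard, inside the application in Lemma~\ref{lem:ev3}, the event that the prior queries in a flap all have small action-subspace projection yet carry $b_i=1$, since that event is itself exponentially unlikely over the draw of $\bv^i$, and fold it into the $o(1)$ slack---is the right one, and the paper's proof needs the same patch; it simply does not surface the point.
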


Before proving~\Cref{lem:final-lemma-sim}, we show how it implies~\Cref{lem:ev3} from a union bound. In particular, we have concluded that
\begin{align*}
\Prx_{\boldf \sim \Dno}\left[ \ol{\calE_2(\boldf)} \cap \calE_1(\boldf) \right] \leq \overbrace{q}^{\text{u.b over $k \in [q]$}} \times \overbrace{q^2}^{\text{u.b over $i$}} \times O\left(\sqrt{q} \log n / n^{1/4} \right) = O(q^{3.5} \log n / n^{1/4}).
\end{align*} 

\begin{proof}[Proof of~\Cref{lem:final-lemma-sim}]
Fix a point $x^*\in X$. We show that (1) the probability of $\Indicator[\langle \bv,x^*\rangle \notin[-\sqrt{n}/2,\sqrt{n}/2]]=b$
  for all $x$ is at least $\Omega(1)$; and (2) the probability of 
$$
\Indicator[\langle  \bv,y\rangle \notin[-\sqrt{n}/2,\sqrt{n}/2]]\ne \Indicator[\langle  \bv,x^*\rangle \notin[-\sqrt{n}/2,\sqrt{n}/2]]
$$
is at most $O(\sqrt{q} \log n / n^{1/4})$. The lemma then follows.

To analyze (2), we consider any $0 < \gamma \leq \sqrt{n}/4$ and any sign $\xi \in \{-1,1\}$. We have that a draw of a Gaussian $\bv$ lying in the action subspace $A$ satisfies 
\begin{align*}
\Prx_{\bv}\left[ \langle \bv, x^*\rangle \in [\xi\sqrt{n}/2 - \gamma, \xi\sqrt{n}/2 + \gamma] \right] &= \Prx_{\bg \sim N(0, 1)}\left[ \bg \in \left[\frac{\xi\sqrt{n}}{2\|x^*_A\|_2} - \frac{\gamma}{\|x^*_A\|_2}, \frac{\xi\sqrt{n}}{2\|x_A^*\|_2} + \frac{\gamma}{\|x^*_A\|_2} \right] \right] \\
&\leq \min_{\tau > 0} \left\{ \frac{2\gamma}{\tau}, \frac{4\tau}{n} \cdot e^{-n / 8\tau^2} \right\},
\end{align*}
where we have used Gaussian anti-concentration (to conclude it does not lie within an interval of width $2\gamma / \|x^*_A\|_2$), as well as Gaussian tail-bounds to say $\bg$ is is larger than $\sqrt{n} / (4 \|x_A^*\|_2)$. The minimum over $\tau > 0$ is meant to quantify over possible values of $\|x_A^*\|_2$. Letting $\gamma = 50000 \sqrt{q} n^{1/4} \log n$ allows us to conclude that the probability that $\langle \bv, x^*\rangle$ lies within distance $\gamma$ of $-\sqrt{n}/2$ or $\sqrt{n}/2$ is at most $O(\sqrt{q} \log n / n^{1/4})$. 

On the other hand, given that $\|(y-x^*)_A\|\le 1000\sqrt{q}n^{1/4}$, we also have that
\begin{align*}
\Prx_{\bv}\left[ \left| \langle \bv, y - x^*\rangle \right| \geq 1000 \sqrt{q} n^{1/4} \log n \right] \leq \Prx_{\bg \sim N(0, 1)}\left[ |\bg| \geq \log n\right], 
\end{align*}
which is smaller than any inverse polynomial in $n$. Therefore, we have that, except with probability at most $O(\sqrt{q} \log n / n^{1/4})$, for both $\xi \in \{-1,1\}$,
\begin{itemize}
\item $\left| \langle \bv, x^* \rangle - \xi \cdot \frac{\sqrt{n}}{2} \right| \leq 50000 \sqrt{q} n^{1/4} \log n$; and 
\item $|\langle \bv, x^* - y\rangle| \leq 1000 \sqrt{q} n^{1/4} \log n$.  
\end{itemize}
When these two events occur, event (2) cannot occur, which shows that (2) occurs with probability at most $O(\sqrt{q} \log n / n^{1/4})$. 

To conclude (1), we note that any $x^*$ with $\|x^*_A\|_2 \leq \sqrt{2n}$ satisfies 
\begin{align*}
\Prx_{\bv}\left[ \langle \bv, x^*\rangle \in [-\sqrt{n}/2, \sqrt{n}/2] \right] = \Prx_{\bg \sim N(0,1)}\sbra{ \bg \in \sbra{-\frac{1}{2\sqrt{2}}, \frac{1}{2\sqrt{2}}}} = \Omega(1),
\end{align*}
which shows that conditioning on event (1) does not significantly affect the probability of (2). 
\ignore{it follows from a union bound $x\in X$ that 
$$
\left|\langle\bv,x-x^*\rangle\right|> 100000\sqrt{q}n^{1/4}\log n
$$
for some $x\in X$
with probability at most $o(1)$.
Given that (1) happens when the two events above do not occur, we have that (1) holds with probability at least $\Omega(1)$.

The analysis of (2) is similar. 
(2) happens when either $\langle \bv,x^*\rangle$  lies the union of  () and ()
  or $\langle \bv, y-x^*\rangle$ satisfies the same inequality ().
So  (2) happens with probability at most $O(\sqrt{q}\log n/n^{1/4})$.}
\end{proof}

\section{A Mildly-Exponential Lower Bound for Non-Adaptive Tolerant Testers}
\label{sec:tolerant}

We will prove the following:

\tolerantthm*

\subsection{The $\Dyes$ and $\Dno$ Distributions}
\label{subsec:tolerant-yes-no-distributions}

Before specifying the $\Dyes$ and $\Dno$ distributions, we first describe some necessary objects. 

\paragraph{The Control and Action Subspaces.}
Throughout, we will work over $\R^{n+1}$ for convenience. Let $\bA$ denote a random 1-dimensional subspace of $\R^{n+1}$, i.e. 
\[
	\bA=\{t\bv: t \in \R\}~\text{where}~\bv\sim\S^{n}~\text{is a Haar-random unit vector.}
\]
Let $\bC$ be the orthogonal complement of $\bA$; note that $\bC$ is a random $n$-dimensional subspace of $\R^{n+1}$.  We call $\bC$ the \emph{control subspace} and we call $\bA$ the \emph{action subspace}.  

Given a vector $x \in \R^n$, we write $x_{\bC}$ to denote the projection of $x$ onto $\bC$ and we write $x_{\bA}$ to denote the projection of $x$ onto $\bA$, so every vector satisfies $x = x_{\bA} + x_{\bC}$.
Recalling that $\bA$ is a 1-dimensional subspace, when there is no risk of confusion we write $x_{\bA}$ to denote the scalar value $t$ such that $x_{\bA}=t \bv$.

\paragraph{Constants.} 
We will use four positive absolute constants $c_0,c_1,c_2$ and $\tau$ in the construction.
Here $c_0$ is the constant hidden in the statement of \Cref{lemma:unique-expected-volume-lb-adaptive}.
We set $c_1,c_2$ and $\tau$ as follows:
\begin{align}\label{eq:constants}
c_1=\frac{1}{100} \quad \text{and}\quad c_2=\tau= \frac{c_0c_1}{100} 
\end{align}
so that \Cref{eq:make-me-constant} at the end of \Cref{subsec:tolerant-distance-to-convexity} is $\Omega(1)$.

\paragraph{Nazarov's Body on the Control Subspace.} 
Let $N := 2^{\sqrt{n}}$. 
We take $r$ to satisfy~\Cref{eq:r-condition} for the absolute constant $c_1$ given in \Cref{eq:constants}, and draw $\bB\sim\Naz(r,N,\bC)$ (cf.~\Cref{def:nazarov-body}) where $\Naz(r,N,\bC)$ is as defined in \Cref{def:nazarov-body} but with the $n$-dimensional control subspace $\bC$ playing the role of $\R^n$. (This notation is as in~\Cref{sec:adaptive-setup} where we write $\Naz(r,N,\bC)$ to mean the distribution $\Naz(r,N)$ over bodies in $\bC$.) Similar to~\Cref{sec:one-sided-adaptive},  $\bB \sse \R^{n+1}$ is a $\bC$-subspace junta. Note in particular that a draw of $\bB\sim\Naz(r,N,\bC)$ immediately specifies $\bg^i, \bH_i, \bF_i, \bU_i$ for $i\in[N]$ (and that the sets $\bH_i, \bF_i, \bU_i \sse \R^{n+1}$ are $\bC$-subspace juntas as well). 

\paragraph{Functions on the Action Subspace.} 

Let $c_2$ be the absolute constant given in \Cref{eq:constants}. Intuitively, $2c_2$ will be the Gaussian measure of two symmetric intervals in the action subspace $\bA$. More formally we define
\[
	\curb := \sbra{\Phi^{-1}\pbra{{\frac {1-2c_2} 3}},
		\Phi^{-1}\pbra{ {\frac {1+c_2} 3}}} \cup 
		\sbra{\Phi^{-1}\pbra{{\frac {2-c_2} 3}},
		\Phi^{-1}\pbra{{\frac {2+2c_2} 3}}
	}.
\]
Using the absolute constant $\tau$ given in \Cref{eq:constants},
we also define the interval 
\[
	I := \sbra{\sqrt{n+1} -  \sqrt{2\ln(2/\tau)},\sqrt{n+1}+\sqrt{2\ln(2/\tau)}},
\]
and we observe that a random draw of $\bx$ from $N(0,I_{n+1})$ has $\|\bx\| \in I$ with probability at least $1-\tau$.  
%\red{(Eventually we will need to make sure that the constant value of ``$\tau$'' plays well with the other constant $c_2$ and with other constants.)}\rnote{I guess we will say something about this in the ``\red{Time to set parameters!}'' part?} 
We write $\Shell_{n+1}$ to denote the corresponding spherical shell in $\R^{n+1}$, i.e.
\[
	\Shell_{n+1} = \big\{x \in \R^{n+1}: \|x\| \in I\big\}.
\]
Finally, let $\bP$ be a uniformly random subset of $[N]$. 

For a fixed setting of $C$ (which defines the complementary $A$), $B$ (which in turn defines $H_i$, $F_i$, the $F_T$'s, and $U_i$), and $P$, we define the function $g_{C,B,P}: \R^{n+1} \to \{0,1,0^\star,1^\star\}$ as follows:

\[
	g_{C,B,P}(x) = 
	\begin{cases}
		0 & \text{if~} \|x_C\| \geq \sqrt{n} \text{~or~}x \in \bigcup_{|T|\geq 2} F_T \text{~or~} x \notin \Shell_{n+1},\\
		1 & \text{if~} x \in \Shell_{n+1} \text{~and~} x \in B,\\
		0 & \text{if~} x \in \Shell_{n+1} \text{~and~} x \in U_i  \text{~ for some $i\in [N]$ and~} x_{A} \in \curb,\\ 
		0^\star & \text{if~} x \in \Shell_{n+1} \text{~and~} x \in U_i \text{~for some~}i\in P\text{~and~} x_{A} \notin \curb,\\
		1^\star & \text{if~} x \in \Shell_{n+1} \text{~and~} x \in U_i \text{~for some~}i\notin P\text{~and~} x_{A}\notin \curb.
	\end{cases}
\]

\begin{figure}[t]
	\centering
	\begin{tikzpicture}[x=0.75pt,y=0.75pt,yscale=-0.9,xscale=0.9]

		\def\A{(340, 235) circle (130)};
		\def\B{(340, 235) circle (110)};
	
		% relevant flap
		\begin{scope}
			\clip \A;
			\fill[color=green, opacity=0.1] \A;
			\fill[color=blue!20] (351.5,56) to (523.5,302)|- cycle;
			\fill[color=blue!20] (462.5,91) to (206.5,168)|- cycle;
			\fill[color=blue!20] (267.5,89) to (203.5,341)|- cycle;
			\fill[color=blue!20] (480.5,379) to (187,293)|- cycle;
			\fill[color=blue!20] (321,388) to (514.5,245)|- cycle;
		\end{scope}
		
		% white dogs
		\begin{scope}
			\clip \A;
			\clip (462.5,91) to (206.5,168)|- cycle;
			\fill[color=white] (267.5,89) -- (203.5,341)|- cycle;
		\end{scope}
		
		\begin{scope}
			\clip \A;
			\clip (480.5,379) to (187,293)|- cycle;
			\fill[color=white] (321,388) to (514.5,245)|- cycle;
		\end{scope}
			
		\draw \A;
		\filldraw[fill=white, dashed] \B;
		\draw (267.5,89) -- (203.5,341); % filled
		\draw (480.5,379) -- (187,293); % filled
		\draw (514.5,245) -- (321,388); % filled
		\draw (462.5,91) -- (206.5,168); % filled
		\draw (523.5,302) -- (351.5,56);	 % filled
		
		\node[fill, inner sep=1pt,circle] (ori) at (340,235) {};
		\node (oriname) at (340, 225) {\small ~$0^n$};
		\draw (ori) -- (222,290);
		\node (sqrtn) at (270,255) {\small $\sqrt{n}$};
		\draw (ori) -- (314,330);
		\node (rad) at (340,290) {\small $\frac{r}{\|\bg_i\|}$};		
		
		\node (Hi) at (495, 382) {\small $\bH_i$};
		\node (Ui) at (300, 348) {\small $\bU_i$};
		\node (Hj) at (475, 92) {\small $\bH_j$};
		\node (Uj) at (300, 122) {\small $\bU_j$};
		
		\node (Rn) at (600, 235) {$\bC \cong \R^n$};
		\node (phantomRn) at (600, 235) {\phantom{$\bC \cong \R^n$}};

		\tikzset{
		    overdraw/.style={preaction={draw,pink,line width=#1}},
		    overdraw/.default=5pt
		}

		% top action subspace
		\draw[latex-latex] (225, -10) -- (550, -10);
		\node (action-top) at (580, -10) {$\bA\cong\R$};
		\node (j-in-P) at (387.5, 20) {$j \in \bP$};
		\filldraw[overdraw] (300, -10) -- (290, -10);
		\filldraw[overdraw] (475, -10) -- (485, -10);	
		\draw (300, -5) -- (300, -15);
		\draw (290, -5) -- (290, -15);
		\draw (475, -5) -- (475, -15);
		\draw (485, -5) -- (485, -15);	
		\node (top-m) at (387.5, -20) {$1$};
		\node (top-r) at (512.5, -20) {$1$};
		\node (top-l) at (262.5, -20) {$1$};
		\draw[-latex] (Uj) [out=-40]to ([xshift=-20]j-in-P);
		
%		\draw[-latex] (Uj) [out=180,in=270]to ([yshift=-20pt]top-m);
		
		% bottom action subspace
		\draw[latex-latex] (455, 480) -- (130, 480);
		\node (action-bot) at (100, 480) {$\bA\cong\R$};
		\node (i-in-P) at (277.5, 510) {$i \notin \bP$};
		\filldraw[overdraw] (190, 480) -- (180, 480);
		\filldraw[overdraw] (365, 480) -- (375, 480);	
		\draw (190, 475) -- (190, 485);
		\draw (180, 475) -- (180, 485);
		\draw (365, 475) -- (365, 485);
		\draw (375, 475) -- (375, 485);	
		\node (bot-m) at (277.5, 470) {$0$};
		\node (bot-r) at (402.5, 470) {$0$};
		\node (bot-l) at (152.5, 470) {$0$};
		\draw[-latex] (Ui) [out=180,in=270]to ([yshift=-20pt]bot-m);
		
		\filldraw[black, fill=green, fill opacity=0.1] (590, 350) rectangle (610, 370);
		\node (greeeeen) at (660, 360) {\small $\bfyes(x) = 1$};
		
	\end{tikzpicture}
	
	\
	
	\caption{A depiction of $\Dyes$. We identify the control subspace $\bC \cong \R^n$. The annulus defined by the boundary of $\Ball(\sqrt{n})$ and the dotted circle 
corresponds to points $x$ which satisfy $x \in \Shell_{n+1}$ and $\|x_{\bC}\| \leq \sqrt{n}$. 
	Finally, the red region in the action subspace $\bA \cong \R$ corresponds to $\curb$.}	
	\label{fig:tolerant-yes}
\end{figure}

\begin{figure}[t]
	\centering
	\begin{tikzpicture}[x=0.75pt,y=0.75pt,yscale=-0.9,xscale=0.9]

		\def\A{(340, 235) circle (130)};
		\def\B{(340, 235) circle (110)};
	
		% relevant flap
		\begin{scope}
			\clip \A;
			\fill[color=green, opacity=0.1] \A;
			\fill[color=blue!20] (351.5,56) to (523.5,302)|- cycle;
			\fill[color=blue!20] (462.5,91) to (206.5,168)|- cycle;
			\fill[color=blue!20] (267.5,89) to (203.5,341)|- cycle;
			\fill[color=blue!20] (480.5,379) to (187,293)|- cycle;
			\fill[color=blue!20] (321,388) to (514.5,245)|- cycle;
		\end{scope}
		
		% white dogs
		\begin{scope}
			\clip \A;
			\clip (462.5,91) to (206.5,168)|- cycle;
			\fill[color=white] (267.5,89) -- (203.5,341)|- cycle;
		\end{scope}
		
		\begin{scope}
			\clip \A;
			\clip (480.5,379) to (187,293)|- cycle;
			\fill[color=white] (321,388) to (514.5,245)|- cycle;
		\end{scope}
			
		\draw \A;
		\filldraw[fill=white, dashed] \B;
		\draw (267.5,89) -- (203.5,341); % filled
		\draw (480.5,379) -- (187,293); % filled
		\draw (514.5,245) -- (321,388); % filled
		\draw (462.5,91) -- (206.5,168); % filled
		\draw (523.5,302) -- (351.5,56);	 % filled
		
		\node[fill, inner sep=1pt,circle] (ori) at (340,235) {};
		\node (oriname) at (340, 225) {\small ~$0^n$};
		\draw (ori) -- (222,290);
		\node (sqrtn) at (270,255) {\small $\sqrt{n}$};
		\draw (ori) -- (314,330);
		\node (rad) at (340,290) {\small $\frac{r}{\|\bg_i\|}$};		
		
		\node (Hi) at (495, 382) {\small $\bH_i$};
		\node (Ui) at (300, 348) {\small $\bU_i$};
		\node (Hj) at (475, 92) {\small $\bH_j$};
		\node (Uj) at (300, 122) {\small $\bU_j$};
		
		\node (Rn) at (600, 235) {$\bC \cong \R^n$};
		\node (phantomRn) at (600, 235) {\phantom{$\bC \cong \R^n$}};

		\tikzset{
		    overdraw/.style={preaction={draw,pink,line width=#1}},
		    overdraw/.default=5pt
		}

		% top action subspace
		\draw[latex-latex] (225, -10) -- (550, -10);
		\node (action-top) at (580, -10) {$\bA\cong\R$};
		\node (j-in-P) at (387.5, 20) {$j \in \bP$};
		\filldraw[overdraw] (300, -10) -- (290, -10);
		\filldraw[overdraw] (475, -10) -- (485, -10);	
		\draw (300, -5) -- (300, -15);
		\draw (290, -5) -- (290, -15);
		\draw (475, -5) -- (475, -15);
		\draw (485, -5) -- (485, -15);	
		\node (top-m) at (387.5, -20) {$0$};
		\node (top-r) at (512.5, -20) {$1$};
		\node (top-l) at (262.5, -20) {$1$};
		\draw[-latex] (Uj) [out=-40]to ([xshift=-20]j-in-P);
		
		% bottom action subspace
		\draw[latex-latex] (455, 480) -- (130, 480);
		\node (action-bot) at (100, 480) {$\bA\cong\R$};
		\node (i-in-P) at (277.5, 510) {$i \notin \bP$};
		\filldraw[overdraw] (190, 480) -- (180, 480);
		\filldraw[overdraw] (365, 480) -- (375, 480);	
		\draw (190, 475) -- (190, 485);
		\draw (180, 475) -- (180, 485);
		\draw (365, 475) -- (365, 485);
		\draw (375, 475) -- (375, 485);	
		\node (bot-m) at (277.5, 470) {$1$};
		\node (bot-r) at (402.5, 470) {$0$};
		\node (bot-l) at (152.5, 470) {$0$};
		\draw[-latex] (Ui) [out=180,in=270]to ([yshift=-20pt]bot-m);
		
		\filldraw[black, fill=green, fill opacity=0.1] (590, 350) rectangle (610, 370);
		\node (greeeeen) at (660, 360) {\small $\bfno(x) = 1$};
	\end{tikzpicture}
	
	\
	
	\caption{A depiction of $\Dno$. Our conventions are as in~\Cref{fig:tolerant-yes}.}	
	\label{fig:tolerant-no}
\end{figure}

\paragraph{The $\Dyes$ and $\Dno$ Distributions.}
%Recall from~\Cref{eq:r-condition} that setting $N$ and $c_1$ immediately specifies $r$. 
%Let $c_2 \in (0, 1/2)$ 
%be a free parameter that we will set (alongside $c_1$) at the end of~\Cref{subsec:tolerant-distance-to-convexity}. 
To sample a set from either $\Dyes$ or $\Dno$, first draw $\bC, \bB, \bP$ as described above; note that this induces a draw of $\bA, \bH_i, \bF_i,$ the $\bF_T$'s, and $\bU_i$. Draws from $\Dyes$ and $\Dno$ are identical on points $x \in \R^{n+1}$ where $g_{\bC,\bB,\bP}(x) \in \{0,1\}$; on the other values, however, 
\begin{itemize}
	\item For functions in $\Dyes$, we set $0^\star \mapsto 0$ and $1^\star \mapsto 1$.
	\item For functions in $\Dno$, we set
	\[
		0^{\star} \mapsto \Indicator\cbra{x_A\notin\road} \qquad\text{and}\qquad 1^\star \mapsto \Indicator\cbra{x_A\in\road}
	\]
	where we define 
	\[
		\road := \pbra{\Phi^{-1}\pbra{{\frac {1+c_2} 3}}, \Phi^{-1}\pbra{{\frac {2-c_2} 3}}} \subset \R.
	\]
\end{itemize}
See~\Cref{fig:tolerant-yes,fig:tolerant-no} for illustrations of $\Dyes$ and $\Dno$.

%!TEX root = ../ctlb.tex

\subsection{Distance to Convexity}
\label{subsec:tolerant-distance-to-convexity}

Recall that a draw of a function from either $\Dyes$ or $\Dno$ induces a draw of $\bB, \bC,$ and $\bP$. 
First, we give an upper bound on the expected distance to convexity of a function drawn from $\Dyes$. 
Let $\eps_1$ be the constant given by
$$
\eps_1:=2c_2+\tau+2\cdot \Ex_{\bB\sim \Naz(r,N)}\sbra{
\Vol\left(\bigcup_{|T|\geq 2} \bF_T\right)}.
$$

\begin{proposition} \label{prop:tolerant-yes-distance}
	We have $\dist(\bfyes, \Pconv) \leq \eps_1$ with probability at least $0.5$ when 
	  $\bfyes\sim \Dyes$.
	%\[
	%	\Ex_{\bfyes\sim\Dyes}\sbra{\dist(\bfyes, \Pconv)} \leq 2c_2 + \tau + \Ex_{\bfyes\sim\Dyes}\sbra{\Vol\pbra{\bigcup_{\substack{|T|\geq 2}} \bF_T}}.
	%\]
	%where $T\sse[N]$. 
\end{proposition}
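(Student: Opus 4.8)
The plan is to exhibit a single convex set $g$ which is close to $\bfyes$ for \emph{every} outcome of the random choices $\bC,\bB,\bP$, with the distance controlled by the constants $\tau,2c_2$ plus the random quantity $\Vol(\bigcup_{|T|\ge2}\bF_T)$; the proposition then follows from Markov's inequality applied to that last quantity.

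For the convex set, I would take the ``partial Nazarov body'' obtained by keeping only the halfspaces indexed by $\bP$,
\[
	g := \big\{x\in\R^{n+1}:\|x_{\bC}\|\le\sqrt n\big\}\ \cap\ \bigcap_{i\in\bP}\bH_i ,
\]
which is convex, being an intersection of the solid cylinder $\{x:\|x_{\bC}\|\le\sqrt n\}$ (the preimage of a ball under the linear map $x\mapsto x_{\bC}$) with genuine halfspaces of $\R^{n+1}$. The first thing to check, and the conceptual heart of the argument, is the containment $\bfyes^{-1}(1)\subseteq g$. Unwinding the cascade defining $g_{\bC,\bB,\bP}$ (recall $0^\star\mapsto 0$ and $1^\star\mapsto 1$ for $\bfyes$) gives $\bfyes^{-1}(1)=\{x\in\Shell_{n+1}:x\in\bB\}\cup\{x\in\Shell_{n+1}:x\in\bU_i\text{ for some }i\notin\bP,\ x_{\bA}\notin\curb\}$. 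Now $\bB\subseteq g$ since $\bB$ is cut out by \emph{all} $N$ halfspaces and lies in the cylinder; and if $x\in\bU_i$ with $i\notin\bP$ then $x\in\bH_j$ for all $j\ne i$, hence for all $j\in\bP$, and $x\in\bF_i\subseteq\{\|x_{\bC}\|\le\sqrt n\}$, so $x\in g$.

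Given the containment, $\dist(\bfyes,\Pconv)\le\Vol(\bfyes\triangle g)=\Prx_{\bx\sim N(0,I_{n+1})}[\bx\in g,\ \bfyes(\bx)=0]$, and I would bound this by a short case analysis, splitting first on whether $\bx\in\Shell_{n+1}$. The contribution of $\bx\notin\Shell_{n+1}$ is at most $\Prx[\bx\notin\Shell_{n+1}]\le\tau$. For $\bx\in g\cap\Shell_{n+1}$ we have $\|\bx_{\bC}\|\le\sqrt n$, so $\bx_{\bC}$ lies in exactly one of $\bB$, $\bigsqcup_i\bU_i$, $\bigcup_{|T|\ge2}\bF_T$ (these three partition the cylinder). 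Inspecting the definition: $\bfyes(\bx)=0$ is impossible when $\bx_{\bC}\in\bB$; impossible when $\bx_{\bC}\in\bU_i$ with $i\in\bP$ (since $\bx\notin\bH_i$ would contradict $\bx\in g$); requires $\bx_{\bA}\in\curb$ when $\bx_{\bC}\in\bU_i$ with $i\notin\bP$; and is automatic when $\bx_{\bC}\in\bigcup_{|T|\ge2}\bF_T$. Hence this contributes at most $\Vol(\bigcup_{|T|\ge2}\bF_T)+\Prx_{\bx}[\bx_{\bC}\in\bigsqcup_i\bU_i,\ \bx_{\bA}\in\curb]$, and the second probability factors as $\Vol(\bigsqcup_i\bU_i)\cdot\Prx[\bx_{\bA}\in\curb]=\Vol(\bigsqcup_i\bU_i)\cdot 2c_2\le 2c_2$, because $\bigsqcup_i\bU_i$ is a $\bC$-subspace junta, $\curb$ depends only on $x_{\bA}$, the pair $\bx_{\bC},\bx_{\bA}$ is independent with $\bx_{\bA}\sim N(0,1)$, and $\curb$ has $N(0,1)$-measure $2c_2$ by its definition. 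All told, for every outcome of $\bC,\bB,\bP$ we get $\dist(\bfyes,\Pconv)\le\tau+2c_2+\Vol(\bigcup_{|T|\ge2}\bF_T)$.

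Finally, $\Vol(\bigcup_{|T|\ge2}\bF_T)$ is a nonnegative random variable whose expectation equals $\Ex_{\bB\sim\Naz(r,N)}[\Vol(\bigcup_{|T|\ge2}\bF_T)]$ (by rotational invariance this is unaffected by the randomness of $\bC$, and it does not involve $\bP$), so Markov's inequality gives that with probability at least $1/2$ it is at most $2\,\Ex_{\bB\sim\Naz(r,N)}[\Vol(\bigcup_{|T|\ge2}\bF_T)]$; on that event, $\dist(\bfyes,\Pconv)\le\tau+2c_2+2\,\Ex_{\bB\sim\Naz(r,N)}[\Vol(\bigcup_{|T|\ge2}\bF_T)]=\eps_1$, which is the claim. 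I do not anticipate any serious obstacle: the only genuinely load-bearing points are the choice of $g$ (keeping exactly the $\bP$-indexed halfspaces is what lets $g$ contain the flaps $\bU_i$, $i\notin\bP$, that $\bfyes$ partially fills while excluding those $\bU_i$, $i\in\bP$, that $\bfyes$ leaves empty) and the resulting containment $\bfyes^{-1}(1)\subseteq g$, which kills one side of the symmetric difference; the rest is bookkeeping, where the one thing to be careful about is splitting on membership in $\Shell_{n+1}$ \emph{before} the case analysis on $x_{\bC}$ so as not to multiply-count the ``outside the shell'' mass.
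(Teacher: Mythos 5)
Your proposal is correct and follows essentially the same route as the paper: both take the convex comparison set to be the cylinder $\{\|x_{\bC}\|\le\sqrt n\}$ intersected with only the $\bP$-indexed halfspaces, observe the containment $\bfyes^{-1}(1)\subseteq g$, bound the one-sided difference by $\tau+2c_2+\Vol(\bigcup_{|T|\ge 2}\bF_T)$, and finish with Markov's inequality on the last term. Your write-up is somewhat more explicit about the containment and the independence of $\bx_{\bC}$ and $\bx_{\bA}$, but the argument is the same.
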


%To prove \Cref{prop:tolerant-yes-distance}, we will need the following simple fact.
%\begin{fact}~\label{fact:convex-extension}
%Suppose $A',B' \subseteq \mathbb{R}^n$ such that $B' \subseteq A'$. Suppose $B'|_{A}$ is convex -- i.e., if $x_1, \ldots, x_k \in B'$ and $y \in \mathsf{Conv}(x_1,\ldots, x_k)$ and $y \in A'$, then $y \in B'$. Then, there exists a convex set $C'$ such that for any $x \in A'$, $x \in B'$ iff $x \in C'$. 
%\end{fact}
%\begin{proof}
%Define the set $C'$ to be the convex hull of $B'$. By definition, $C'$ is convex. If $x \in B'$, then $x \in C'$ by construction. In the other direction, suppose $x \in C'$ -- then, $x $ lies in the convex hull of points in $B'$. Thus, if $x \in A'$, $x$ is also in $B'$. This finishes the proof. 
%\end{proof}

\begin{proof} 
Consider a fixed choice of $C,B$ and $P$. 
We then consider the convex set $G_{C, B, P}' \subset \R^{n+1}$ defined as the intersection of 
$H_i$ (for $i\in P$) and the set $\{x: \Vert x_C \Vert_2 \leq \sqrt{n}\}$. By construction, the set $G_{C, B, P}'$ is a convex set. Let $g_{C, B, P}'$ denote the corresponding indicator function. We now analyze the distance between the functions $g_{C, B, P}$ (where, as for functions in the support of $\Dyes$, we identify $0^\star$ with $0$ and $1^\star$ with $1$) and $g_{C, B, P}'$. 

First of all, by construction, if $g_{C, B, P}(x)=1$, then  $g_{C,B,P}'(x)$ is also $1$. So to bound the distance, we note that there are three possible ways in which $g_{C,B,P}(x)$ can be $0$ but $g_{C,B,P}'(x)$ can be $1$: 
\begin{enumerate}
\item $x \in \cup_{|T| \ge 2} F_T$;
\item $x \not \in \Shell_{n+1}$;
\item There is some $i \in [N]$ such that $x \in U_i$ and $x_A \in \curb$. 
\end{enumerate}
By definition, (i) the Gaussian volume of the first set is $\Vol( \cup_{\substack{|T|\geq 2}} F_T)$; (ii) the Gaussian volume of the second set is bounded by $\tau$; (iii) the Gaussian volume of the third set is bounded by $\Vol(\curb)$ which by definition is $2c_2$. Thus, for a specific instantiation of $C$, $B$ and $P$, 
\[
	\dist(g_{C,B,P}, \Pconv) \le 2c_2 + \tau + \Vol\pbra{\bigcup_{\substack{|T|\geq 2}} F_T}. 
\]
The claim follows from Markov's inequality, that the last term on the RHS above is at most twice the 
  expectation with probability at least $1/2$. 
\end{proof}

\noindent Next, we give a lower bound on the expected distance to convexity of a function drawn from $\Dno$.
Let $\eps_2$ be the constant given as 
$$
\eps_2:=\pbra{\frac{1-2c_2}{3}}\pbra{0.3\cdot \Ex_{\bB\sim \Naz(r,N)}\left[\Vol{\left(\bigsqcup_{i=1}^N \bU_i\right)}\right] - \frac{\tau}{2}\
		}.
$$

\begin{proposition} \label{prop:tolerant-no-distance}
	We have $\dist(\bfno, \Pconv) \geq\eps_2$ with probability at least $1-o(1)$
	when $\bfno\sim \Dno$.
	%\[
	%	\Ex_{\bfno\sim\Dno}\sbra{\dist(\bfno, \Pconv)} \geq \pbra{\frac{1-2c_2}{3}}\pbra{{\frac 1 2}\cdot \EVol{\bigcup_{i=1}^N \bU_i} - \tau\
	%	}.
	%\]
\end{proposition}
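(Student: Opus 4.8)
The plan is to produce, for a $1-o(1)$ fraction of draws $\bfno\sim\Dno$, a lower bound of $\eps_2$ on $\Vol(\bfno\triangle L)$ that holds \emph{simultaneously} for every convex $L\subseteq\R^{n+1}$; the analysis is essentially one-dimensional, carried out along lines parallel to the action direction $\bv$ (the unit vector spanning $\bA$). Two facts make this work: (i) membership of a point $x$ in any $\bU_i$ depends only on $x_{\bC}$, whereas the $\curb/\road$ structure depends only on the scalar $x_{\bA}$, and under $\bx\sim N(0,I_{n+1})$ the pair $(\bx_{\bC},\bx_{\bA})$ is distributed as $N(0,I_n)$ on $\bC$ times an independent $N(0,1)$; and (ii) for $i\in\bP$ the restriction of $\bfno$ to $\bU_i\cap\Shell_{n+1}$ is, along each action line, the indicator of \emph{two} intervals (the ``left'' piece $x_{\bA}<\Phi^{-1}((1-2c_2)/3)$ and the ``right'' piece $x_{\bA}>\Phi^{-1}((2+2c_2)/3)$, on which $g=0^\star\mapsto\Indicator\cbra{x_{\bA}\notin\road}=1$) separated by the $0$-valued block $\curb\cup\road$; whereas for $i\notin\bP$ it is the indicator of the single interval $\road$ and carries no forced non-convexity. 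Since $\Dyes$ and $\Dno$ agree wherever $g\in\{0,1\}$, the entire discrepancy lives inside $\bigsqcup_i\bU_i$.

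First I would fix a ``good'' outcome of $(\bC,\bB,\bP)$, which occurs with probability $1-o(1)$: by \Cref{lemma:concentration} we may assume $\Vol(\bigsqcup_i\bU_i)\ge 0.9\,\Ex_{\bB}\sbra{\Vol(\bigsqcup_i\bU_i)}$, and, since each $\Vol(\bU_i)\le e^{-(1-o(1))\sqrt n}$ is negligible compared with $\sum_i\Vol(\bU_i)=\Omega(1)$ (the estimate from the proof of \Cref{lemma:concentration}), a standard concentration bound over the random subset $\bP$ lets us also assume $\sum_{i\in\bP}\Vol(\bU_i)\ge\tfrac12\sum_i\Vol(\bU_i)-o(1)$. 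Now fix an arbitrary convex $L$ and let $I_{\pm}:=\sqrt{n+1}\pm\sqrt{2\ln(2/\tau)}$. Consider the event $\mathcal G$ that $\bx_{\bC}\in\bU_i$ for some $i\in\bP$ and $\|\bx_{\bC}\|\ge I_{-}$; by \Cref{prop:chi-squared-tail} the radius condition fails with probability at most $\tau/2$, so $\Pr[\mathcal G]=\Vol\big(\bigsqcup_{i\in\bP}\bU_i\cap\{\|x_{\bC}\|\ge I_{-}\}\big)\ge\sum_{i\in\bP}\Vol(\bU_i)-\tfrac{\tau}{2}\ge 0.45\,\Ex_{\bB}\sbra{\Vol(\bigsqcup_i\bU_i)}-\tfrac{\tau}{2}-o(1)$.

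Next I would prove the per-line bound: conditioned on any value of $\bx_{\bC}$ for which $\mathcal G$ holds (say $\bx_{\bC}\in\bU_i$ with $i\in\bP$), over the independent draw $\bx_{\bA}\sim N(0,1)$ one has $\Pr[\bfno(\bx)\ne L(\bx)]\ge (1-2c_2)/3-o(1)$. On the line $\{\bx_{\bC}+t\bv:t\in\R\}$ the point lies in $\Shell_{n+1}$ exactly for $|t|\le\sqrt{I_{+}^2-\|\bx_{\bC}\|^2}$; because $\|\bx_{\bC}\|\ge I_{-}$ this valid set is a single interval around $0$, and because $\|\bx_{\bC}\|\le\sqrt n$ it has length $\Theta(n^{1/4})$, so it contains all of $\curb\cup\road$ together with an overwhelming ($1-o(1)$) share of the $N(0,1)$-mass of the left and right pieces. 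On this valid set $\bfno$ equals $1$ on the left piece, $0$ on $\curb\cup\road$, and $1$ on the right piece, so the three pairwise-disjoint sets $\road$, (left piece), (right piece) each carry $N(0,1)$-mass $(1-2c_2)/3-o(1)$. Since $L$ meets the line in a single interval $J$, either $\road\subseteq J$, in which case $\bfno\ne L$ on $\road$, or $J$ omits a point of $\road$ and is therefore disjoint from the left piece or from the right piece, on which $\bfno=1\ne 0=L$; in all cases the one-dimensional disagreement has mass $\ge (1-2c_2)/3-o(1)$.

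Finally, integrating the per-line bound against the distribution of $\bx_{\bC}$ gives $\Vol(\bfno\triangle L)\ge\Pr[\mathcal G]\cdot\big((1-2c_2)/3-o(1)\big)\ge\tfrac{1-2c_2}{3}\big(0.3\,\Ex_{\bB}\sbra{\Vol(\bigsqcup_i\bU_i)}-\tfrac{\tau}{2}\big)=\eps_2$ for $n$ large, using $\Ex_{\bB}\sbra{\Vol(\bigsqcup_i\bU_i)}=\Omega(c_1)>0$ from \Cref{lemma:unique-expected-volume-lb-adaptive} together with the settings in \Cref{eq:constants}; taking the infimum over convex $L$ finishes the proof. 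The step I expect to be the crux is the interaction between $\Shell_{n+1}$ and the action lines: the clean two-intervals-with-a-hole picture — and hence the penalty forced on any convex $L$ — is only valid when $\|\bx_{\bC}\|\ge I_{-}$, for otherwise the shell splits each action line into two widely separated arcs and a convex $L$ can cover an entire arc cheaply; bounding the mass lost to this case via \Cref{prop:chi-squared-tail} is exactly what produces the $-\tau/2$ term in $\eps_2$, and arranging the bookkeeping of the $o(1)$ and $\tau$ losses so that the final quantity still clears $\eps_2$ is where the slack built into \Cref{eq:constants} is spent.
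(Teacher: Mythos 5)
Your proof is correct and follows essentially the same route as the paper's: restrict to points with $x_{\bC}\in U_i$ for $i\in P$ and $\|x_{\bC}\|\ge \sqrt{n+1}-\sqrt{2\ln(2/\tau)}$ (losing $\tau/2$ mass via the chi-squared tail bound), argue line-by-line along the action direction that any interval must disagree with the $1$-$0$-$1$ pattern on mass at least $(1-2c_2)/3$, integrate, and conclude with concentration of $\sum_{i\in P}\Vol(U_i)$. The only differences are cosmetic: you spell out the interval case analysis that the paper dismisses as "immediate," and you obtain the concentration of $\sum_{i\in\bP}\Vol(\bU_i)$ by combining \Cref{lemma:concentration} with a separate Hoeffding bound over $\bP$ rather than folding $\bP$ into a single modified McDiarmid argument, which amounts to the same thing.
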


\begin{proof}
	As before, consider a fixed choice of $C, B$ and $P$. 
Take any $x_C$ such that $x_C\in U_i$ for some $i\in P$ and $\|x_C\|\ge \sqrt{n+1}-\sqrt{2\ln(2/\tau)}$
  (which is the left end of $I$).
For any such $x_C$, the line along $x_A$ looks like the picture at the top of 	
  \Cref{fig:tolerant-no}, except that the {function} is set to $0$ when $\|x_C\|$ makes $x$ go outside of $\Shell_{n+1}$.
Given that $\|x_C\|\le \sqrt{n}$, this only happens when $|x_A|$ is at least $\Omega(n^{1/4})$ 
  given that $\tau\le 1/10000$ in \Cref{eq:constants}.
%(note that this happens {even if the right end of $I$ is $\sqrt{n+1}+1$}
%\rnote{This confused me a little, the right end of $I$ is $\sqrt{n+1}+\sqrt{2ln(2/\tau)}$; can we rephrase slightly?}).

As a result, the distance to convexity along this line in the action space (with $x_C$ fixed)
  is at least $\Vol(\road)=(1-2c_2)/3$.	
%	Fix $i \in [N]$ such that $i \in P$. Note that 
%	\[
%		\dist\pbra{(\R\setminus \road), \Pconv} = \Vol(\road) = \frac{1-2c_2}{3}
%	\]
%	where $\Pconv = \Pconv(1)$, i.e. the class of intervals in $\R$. 
This follows immediately from the fact that $\road$ is a symmetric interval about $0$ and the choice of parameters in the definition of $\road$. 

Given that the mass of $x$ with $\|x_C\|< \sqrt{n+1}-\sqrt{2\ln(2/\tau)}$ is at most $\tau/2$, 
it follows that 
	\begin{align*}
		\dist(\fno, \Pconv)
		%&\ge  \Vol\pbra{g_{C,B,P}^{-1}(0^\star)}\cdot\pbra{\frac{1-2c_2}{3}} \\
		 \geq \pbra{\pbra{\sum_{i \in P} \Vol\pbra{U_i}} - \frac{\tau}{2}}\cdot\pbra{\frac{1-2c_2}{3}}   
	\end{align*}
	%where we relied on the fact that $\Vol(\R^n \setminus \Shell_{n+1}) \leq \tau$. 
The result follows by a straight forward modification of \Cref{lemma:concentration} to show that with
  probability at least $1-o(1)$, we have 
  $\sum_{i\in P} \vol(U_i)$ is at least $0.3\cdot \Ex[\vol(\sqcup_{i\in [N]} \bU_i)]$
    when $\bB\sim \Naz(r,N)$.
%by taking expectation over $\bfno\sim\Dno$ where we make use of the fact that $\bP$ is a uniformly random subset of $[N]$.
\end{proof}

\paragraph{Setting Parameters.} %Fix $\fyes$ in the support of $\Dyes$ such that \Cref{eq:lilo} holds, and fix $\fno$ in the support of $\Dno$ such that \Cref{eq:stitch} holds. 
%Let $\eps_1 := \dist(\fyes, \Pconv)$ and $\eps_2 := \dist(\fno \Pconv)$. 
%Note that because of~\Cref{prop:tolerant-yes-distance,prop:tolerant-no-distance},
We verify that $\eps_2-\eps_1=\Omega(1)$:
\begin{align}
	\eps_2 - \eps_1 
	&\geq \pbra{\frac{1-2c_2}{3}}\pbra{0.3\cdot  {\EVol{\bigsqcup_{i=1}^N \bU_i}}  - \frac{\tau}{2}} - {2c_2 - \tau - 2\cdot \EVol{\bigcup_{|T|\geq 2} \bF_T}} \nonumber \\
	&\geq \EVol{\bigsqcup_{i=1}^{N} \bU_i }\pbra{\frac{1-2c_2}{10} -   \frac{ c_1}{1-c_1} } - 2c_2 - \tau\pbra{\frac{7-2c_2}{6}} \tag{\Cref{lemma:flaps-vs-dog-ears}} \nonumber \\
	 &\geq c_0c_1\pbra{\frac{1-2c_2}{10} -  {{ } \frac{c_1}{1-c_1} }} - 2c_2 - \frac{7\tau}{6} ,\label{eq:make-me-constant}
\end{align}
(where the last line is by \Cref{lemma:unique-expected-volume-lb-adaptive})
which is $\Omega(1)$ given choices 
  of $c_0,c_1,c_2$ and $\tau$ made in \Cref{eq:constants}.

%\red{We're going to make the thing inside the parenthesis at least $0.01$; so the first term will be $0.02c_0c_1$. We will also require $-2c_2 \geq -0.005c_0c_1$ (i.e. $2c_2 \leq 0.005 c_0c_1$), and will make $4\tau/3$ also at most $0.005c_0c_1$. So the whole thing is at least $0.01c_0c_1$. So our whole construction is specified by taking 
%\[
%	c_1 := \frac{\eps}{0.01c_0}, \quad\text{i.e.~a large constant multiple of $\eps$}
%\]
%(this requires $\eps$ to be some suitably small absolute constant), and I think we should take $c_2$ to be some suitably small constant multiple of $\eps$.  And we will also have $\tau=$ (some small constant multiple of $\eps$) --- this will come up later.
%}

\ignore{The $\EVol{\bigcup_{i=1}^{N} \bU_i }$ is going to be, using Lemma 23, I think, lower bounded by $c_0 c_1$ for some absolute constant $c_0$ (assuming $c_1<0.1$ which it will be), right? So will the  $\frac{c_1^{1-\alpha}}{2}$ just get replaced by $c_0 c_1$? In which case the RHS would be
	\[
c_0 c_1\pbra{\frac{1-2c_2}{6} - \frac{c_1}{4}} - 2c_2 - \frac{4\tau}{3} 
	\]
	As long as $c_1<1/6$ (which we can stipulate) and $c_2 < c_0 c_1/96 < 1/4 $ (which we can stipulate), then isn't this at least
	$c_0 c_1 (1/12 - 1/24) - c_0 c_1/48 - 4\tau/3 = c_0 c_1/48 - 4\tau/3$, and we can have this be at least $c_0 c_1/50$ by stipulating that $\tau$ is suitably small relative to $c_1$? So if $c_1 = (50/c_0)\eps$ then I think we get that the gap $\eps_2 - \eps_1$ is at least $\eps$?

	If the comment earlier about the various factors of 2 and 1/2 are correct then we'll need to revise this slightly but it should still work out to something similar, I hope	
}

\ignore{
Recall that in~\Cref{sec:nazarov-prelims}, we obtained bounds on the \emph{expected} volume of the above quantities (where the expectation is over draws of $\bB \sim \Naz(r,N)$). An easy application of Markov's inequality allows us to obtain high-probability bounds from bounds on expected volume: With probability $99/100$ over the draws of $\bB\sim\Naz(r,N)$, we have 
\begin{equation} \label{eq:dog-ear-avg-to-whp}
	\Vol\pbra{\bigcup_{\substack{|T|\geq 2}} \bF_T}
	\leq 
	100\cdot\Ex_{\bB\sim\Naz(r,N)}\sbra{\Vol\pbra{\bigcup_{\substack{|T|\geq 2}} \bF_T}}.
\end{equation}

Similarly, note that for any random variable $\bX$, we have 
\[
	\Prx\sbra{\bX \geq \frac{\E[\bX]}{2}} \geq \frac{99}{100}
\]
because otherwise we have $\E[\bX] < 0.99\cdot0.5\cdot\E[\bX] + 0.5\cdot\E[\bX] < \E[\bX]$, a contradiction. Consequently, we have that with probability $99/100$ over the draws of $\bB\sim\Naz(r,N)$, we have 
\begin{equation} \label{eq:flap-avg-to-whp}
	\Vol\pbra{\bigcup_{i\in[N]} \bU_i} \geq \frac{\Ex\sbra{\Vol\pbra{\bigcup_{i\in[N]} \bU_i}}}{2}.
\end{equation}
It follows by a union bound that with probability $98/100$ over the draws of $\bB\sim\Naz(r,N)$, both~\Cref{eq:dog-ear-avg-to-whp,eq:flap-avg-to-whp} hold; we will refer to such outcomes of $\bB$ as \emph{good} outcomes.
}
%!TEX root = ../ctlb.tex

\subsection{Proof of Theorem~\ref{thm:tolerant}}
\label{subsec:toleran-lb-proof}

We introduce some helpful notation and outline the high-level structure of the argument. 

\subsubsection{Setup and Outline of Argument}
\label{subsec:tolerant-indistinguishability-setup}

We introduce the following notation:

\begin{notation} \label{notation:flap-count}
	Given an outcome of the control subspace $C$ and of Nazarov's body $B=H_1 \cap \cdots \cap H_N \cap \Ball(\sqrt{n})\subset \R^{n+1}$ within $C$ as defined earlier, for $x \in \R^{n+1}$ we define the set $S_B(x)$ as 
	\[S_B(x) := \cbra{\ell\in [N] : x \in F_\ell}.\]
Note that if $x$ and $y$ have $x_C=y_C$, then $S_B(x)=S_B(y)$, i.e. only the $C$-part of $x$ affects $S_B$.
\end{notation}

We define the regions $\Top, \Middle, \Bottom \subset \R$ as follows:
%\rnote{Reading this after a long time (so it's like the first time), it seems weird that $\mathrm{Top}$ is at the left end of the numberline and $\mathrm{Bottom}$ is at the right end; should we explain why? \shivam{I changed it to $\Top$ and $\Bottom$, I think this makes more sense.}}
\begin{align*}
\Top &:= \pbra{-\infty,\Phi^{-1}\pbra{{\frac {1-2c_2} 3}}},\\
\Middle &:= \pbra{\Phi^{-1}\pbra{{\frac {1 + c_2} 3}}, \Phi^{-1}\pbra{{\frac {2 - c_2} 3}}},\\
\Bottom &:= \pbra{\Phi^{-1}\pbra{{\frac {2+2c_2} 3}}, \infty}.
\end{align*}
Note that $\Top \sqcup \Middle \sqcup \Bottom \sqcup \curb = \R$ (where as before we identify $\R$ with an outcome of the one-dimensional action subspace $A$). 

To establish indistinguishability, we show that no non-adaptive 
  deterministic algorithm $\calA$ that makes
  $q=2^{c_3n^{1/4}}$ queries, for some sufficiently small constant $c_3$, 
  can distinguish $\Dyes$ from $\Dno$.
  Specifically, for any nonadaptive deterministic algorithm $\calA$ with query complexity $q $, we show that
\begin{equation} \label{eq:goal}
\Prx_{\bfyes\sim \Dyes}\big[\calA \text{ accepts }\bfyes\big]\leq \Prx_{\bfno\sim \Dno}\big[\calA  \text{ accepts }\bfno\big]+o(1). 
\end{equation} 
To this end, we define $\Bad$ to be the following event: 
\begin{flushleft}\begin{quote}
$\Bad$: There are $x,y \in \Shell_{n+1}$ queried
by $\calA$ that (i) satisfy $S_{\bB}(x) = S_{\bB}(y)=\{\ell\}$ for some $\ell \in [N]$ (or equivalently, $x,y\in U_\ell$ for some $\ell$),
and (ii) have $x_{\bA},y_{\bA}$ belonging to two \emph{distinct} sets among $\Top,\hspace{0.05cm}\Middle,\hspace{0.05cm}\Bottom.$
\end{quote}\end{flushleft}

We will first show in \Cref{lemma:bad-is-the-only-way-to-distinguish} that $\calA$ can distinguish $\Dyes$ from $\Dno$ only when $\Bad$ occurs.
On the other hand, in \Cref{lemma:bad-is-unlikely}, we show $\Bad$ occurs with probability $o(1)$ when the number of queries is $\smash{q=2^{c_3n^{1/4}}}$ and $c_3$ is sufficiently small. 
\Cref{lemma:bad-is-the-only-way-to-distinguish,lemma:bad-is-unlikely} together establishe~\Cref{eq:goal}; the proof of this is analogous to the proof of~Theorem~1 in Section 4.2 of~\cite{chen2024mildly} and we refer the reader to~\cite{chen2024mildly} for full details. 
Theorem~\ref{thm:tolerant} then follows from~\Cref{eq:goal} via Yao's minimax principle~(Theorem~\ref{thm:yao-minimax}).

\subsubsection{Indistinguishability of $\Dyes$ and $\Dno$}

%The argument proceeds as follows.
We write $\calA(f)$ to denote the sequence of $q$ answers to the queries made by $\calA$ to $f$.  We write $\mathrm{view}_{\calA}(\Dyes)$ (respectively $\mathrm{view}_{\calA}(\Dno)$) to be the distribution of $\calA(\boldfyes)$ for $\boldfyes\sim\Dyes$ (respectively $\boldfno\sim\Dno$).
The following claim asserts that conditioned on $\Bad$ not happening, the distributions $\mathrm{view}_{\calA}(\Dyes|_{\overline{\Bad}})$ and $\mathrm{view}_{\calA}(\Dno|_{\overline{\Bad}})$ are identical.

\begin{lemma}\label{lemma:bad-is-the-only-way-to-distinguish}
	$\mathrm{view}_{\calA}(\Dyes|_{\overline{\Bad}})=\mathrm{view}_{\calA}(\Dno|_{\overline{\Bad}}).$
\end{lemma}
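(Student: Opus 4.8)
The plan is to condition on the ``control data'' $(\bC,\bB)$ and reduce the lemma to an elementary statement about the random set $\bP$. First I would observe that, since $\calA$ is non-adaptive and deterministic, its query set $\{x_1,\dots,x_q\}\subseteq\R^{n+1}$ is fixed in advance, independent of all randomness, and that the event $\overline{\Bad}$ is measurable with respect to $(\bC,\bB)$ alone: its definition refers only to $\Shell_{n+1}$, to the sets $U_\ell$ (equivalently to $S_{\bB}$), and to the fixed regions $\Top,\Middle,\Bottom$, and never to $\bP$. Since $(\bC,\bB)$ is drawn identically under $\Dyes$ and $\Dno$, conditioning on $\overline{\Bad}$ yields the same conditional law of $(\bC,\bB)$ in both cases; hence it suffices to fix an arbitrary outcome $(C,B)$ consistent with $\overline{\Bad}$ and show that the answer vector $\calA(\bfyes)$, viewed as a distribution over the remaining randomness $\bP$, equals that of $\calA(\bfno)$.

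Next I would classify the queries. Call $x_j$ \emph{determined} if $g_{C,B,P}(x_j)\in\{0,1\}$ --- that is, $x_j\notin\Shell_{n+1}$, or $\|(x_j)_C\|\ge\sqrt n$, or $x_j\in B$, or $x_j\in\bigcup_{|T|\ge 2}F_T$, or $(x_j)_A\in\curb$ --- and \emph{special} otherwise (so $x_j\in\Shell_{n+1}$, $x_j$ lies in a unique $U_{\ell_j}$, and $(x_j)_A\notin\curb$). The point is that none of the five defining conditions for ``determined'' mentions $P$, and on a determined query the $\Dyes$ and $\Dno$ remappings agree (they only alter $0^\star$ and $1^\star$); hence every determined query receives the \emph{same} value in both distributions, a value fixed by $(C,B)$ and independent of $\bP$. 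So the determined coordinates of $\calA(\bfyes)$ and $\calA(\bfno)$ are equal and deterministic, and it remains only to match the distributions of the special coordinates.

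For the special coordinates I would group the special queries by their flap index. Because $(C,B)\in\overline{\Bad}$, all special queries sharing a flap index $\ell$ have action-coordinate lying in a \emph{common} region $R_\ell\in\{\Top,\Middle,\Bottom\}$ (two such queries in distinct regions would exactly witness $\Bad$). Writing $\Indicator_\ell:=\Indicator\{\ell\in\bP\}$ and using $\road=\Middle$, a direct unwinding of the definitions of $g_{C,B,P}$ and of the two remappings shows that every special query with index $\ell$ receives the single bit $1-\Indicator_\ell$ under $\bfyes$, while under $\bfno$ it receives $1-\Indicator_\ell$ if $R_\ell=\Middle$ and $\Indicator_\ell$ if $R_\ell\in\{\Top,\Bottom\}$. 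In both cases, therefore, all special queries of a given flap get a common bit, and the collection of active flap indices is the same (it is determined by $(C,B)$). Since $\bP$ is a uniformly random subset of $[N]$, the family $(\Indicator_\ell)_\ell$ is i.i.d.\ $\mathrm{Bernoulli}(1/2)$, so the tuple of per-flap bits is an i.i.d.\ uniform bit string under both $\Dyes$ and $\Dno$ --- in the $\Dno$ case some coordinates are the complements of the corresponding $\Indicator_\ell$, but complementing a uniform bit preserves its law and independence. Hence the joint law of the special coordinates is identical in the two cases, and together with the identical deterministic determined coordinates this yields $\mathrm{view}_{\calA}(\Dyes|_{\overline{\Bad}})=\mathrm{view}_{\calA}(\Dno|_{\overline{\Bad}})$.

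The only place any care is needed is the third step: one must verify that $\overline{\Bad}$ genuinely collapses each flap's special queries to a single shared bit in \emph{both} distributions, so that the two laws differ only by flap-dependent complementation of fair coins. If $\overline{\Bad}$ were dropped, a flap could contain one special query with action-coordinate in $\Top$ and another in $\Middle$; under $\bfyes$ these two answers always agree, whereas under $\bfno$ they always disagree --- which is precisely the distinguishing signal that the companion bound (\Cref{lemma:bad-is-unlikely}) is designed to show occurs only with probability $o(1)$. Everything else is routine bookkeeping: checking that the case analysis defining $g_{C,B,P}$ is exhaustive, that $B$ is disjoint from the $U_\ell$'s, and that $\curb$-membership together with the other ``determined'' conditions are indeed $\bP$-free.
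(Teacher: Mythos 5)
Your proposal is correct and follows essentially the same route as the paper's proof: fix $(C,B)$ (noting $\overline{\Bad}$ and the determined answers depend only on these), group the remaining queries by their unique flap index, use $\overline{\Bad}$ to collapse each flap to a single answer bit, and observe that bit is a fair coin (possibly complemented) independently across flaps via $\bP$. Your explicit per-flap accounting of which answers are $\Indicator_\ell$ versus $1-\Indicator_\ell$ under $\Dno$ is a slightly more careful rendering of the step the paper states briefly, but the argument is the same.
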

\begin{proof}
	Let $Q$ be the set of points queried by $\calA$. 
	Recall that the distributions of the subspaces $\bC$ and action variables $\bA$ are identical for $\Dyes$ and $\Dno$. So fix an arbitrary outcome of the $n$-dimensional subspace $C$ and the orthogonal one-dimensional subspace $A$. As the distribution of the Nazarov body $\bB\sim \mathrm{Naz}(r,N,C)$ is also identical for $\Dyes$ and $\Dno$, we fix an arbitrary outcome $B$ of $\bB$. Let $\boldf$ be a random function drawn from either $\Dyes$ or $\Dno$.

	Note that for any point $x\in\R^{n+1}$ such that $|S_B(x)|\neq 1$ or $x \notin \Shell_{n+1}$ or {$x_A  \in \curb$}, by construction we have that $\boldf(x)$ can be determined directly in the same way for both $\Dyes$ and $\Dno$ (no query is required). So it suffices for us to consider the points $x$ such that $|S_B(x)|=1$, $x \in \Shell_{n+1}$, and {$x_A\notin \curb$}. We call these points \emph{important} points.

    We divide these important points into disjoint groups according to $S_B(x)$. More precisely, for every $\ell\in [N]$, let $X_{\ell}=\{x \in \R^{n+1} \mid x \text{ is important},S_B(x)=\{\ell\}\}$. Let $\boldf_{\ell}(x)$ denote the function $\boldf(x)$ restricted to $X_{\ell}$ (where as stated above, $\boldf$ denotes either a function drawn from $\Dyes$ or from $\Dno$). 
    %Note that for a fixed $\ell\in[N]$, the functions $\boldf_{\ell}(x)$ only depends on the random bit $\bb_{\ell}$. As a result, the distributions of functions $\boldf_{\ell}(x)$ for different $\ell$ are independent.
%	So it suffices for us to fix an arbitrary $\ell\in[N]$ and only consider points that are in $X_{\ell}$. 
The condition that $\Bad$ does not happen implies that \emph{either} $x_A \in \Top$ for all $x \in Q \cap X_{\ell}$, \emph{or}  $x_A \in \Middle$ for all $x \in Q \cap X_{\ell}$, \emph{or} $x_A \in \Bottom$ for all $x \in Q \cap X_{\ell}$.
In particular, this means $\boldf_{\ell}(x)=\boldf_{\ell}(y)$ for all $x,y\in Q \cap X_{\ell}$, and this holds for both $\Dyes$ and $\Dno$. 

	Since $\boldf_{\ell}(x)$ are the same for all $x\in Q\cap X_{\ell}$, the distribution of $\boldf_{\ell}$ is actually one random bit.
	%, which only depends on the uniform random bit $\bb_{\ell}$. 
	Indeed, $\boldf_{\ell}(x)=0$ with probability $1/2$ and $\boldf_{\ell}(x)=1$ with probability $1/2$ (because each element $\ell \in [N]$ belongs to $\bP$ with probability $1/2$) independently, and this holds for both $\Dyes$ and $\Dno$. This completes the proof of the lemma. 
\end{proof}

Next, we show that  $\Bad$ happens with probability $o(1)$ 
(recall that $q=2^{c_3n^{1/4}}$). The proof of the following lemma follows the proof of an analogous lemma from~\cite{chen2024mildly}:

\begin{lemma}\label{lemma:bad-is-unlikely}
 For any fixed set of points $Q=\{x^1,\cdots,x^q\}\subset \R^{n+1}$, we have $\Pr[\Bad]=o(1)$.
\end{lemma}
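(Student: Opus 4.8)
We need to show that for any fixed set $Q=\{x^1,\ldots,x^q\}\subset\R^{n+1}$ with $q=2^{c_3 n^{1/4}}$, the event $\Bad$ occurs with probability $o(1)$ over the random choice of $\bC$ (equivalently $\bA$) and $\bB\sim\Naz(r,N,\bC)$. Recall $\Bad$ requires a pair $x,y\in Q\cap\Shell_{n+1}$ with $S_{\bB}(x)=S_{\bB}(y)=\{\ell\}$ for a common $\ell$ and with $x_{\bA},y_{\bA}$ landing in two distinct sets among $\Top,\Middle,\Bottom$. The strategy is a union bound over the $\binom{q}{2}$ pairs, splitting into two cases according to the distance $\|x-y\|$, exactly as signposted in the techniques section (``$\|x-y\|$ small'' versus ``$\|x-y\|$ large''). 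First I would fix a threshold $R$ of order, say, $n^{1/4}/\mathrm{polylog}$ (the precise polynomial factor to be tuned against the two tail bounds below).

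\emph{Case 1: $\|x-y\|\le R$.} Here I would argue it is very unlikely that $x_{\bA}$ and $y_{\bA}$ lie in two distinct regions among $\Top,\Middle,\Bottom$. The point is that $(x-y)_{\bA}$ is the projection of the fixed vector $x-y$ onto a Haar-random line $\bA$; its magnitude is (up to sub-Gaussian fluctuations) about $\|x-y\|/\sqrt{n+1}\le R/\sqrt{n+1}$, which is $o(1/\mathrm{polylog})$. Since $\Top,\Middle,\Bottom$ are separated from one another by gaps of constant Gaussian measure — hence, once we condition on $\|x_{\bA}-(\text{relevant boundary})\|$ via Gaussian anticoncentration, by $\Omega(1)$-length intervals in $\R$ — the chance that the short segment from $x_{\bA}$ to $y_{\bA}$ crosses from one region to another is $O(R/\sqrt{n}\cdot\sqrt{n})=O(R/n^{1/4})$ after accounting for the $\Theta(n^{1/4})$-scale at which $x_{\bA}$ itself is spread (this mirrors the estimate in \Cref{lem:final-lemma-sim}). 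Multiplying by $\binom q2\le q^2$ and choosing $R$ small enough relative to $n^{1/4}/q^{2+o(1)}$ — which is where $q=2^{c_3n^{1/4}}$ with small $c_3$ is used — makes the total contribution $o(1)$.

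\emph{Case 2: $\|x-y\|> R$.} Here I would argue it is very unlikely that $x$ and $y$ lie in a common $\bU_\ell$, i.e.\ that $S_{\bB}(x)=S_{\bB}(y)=\{\ell\}$ for some $\ell$. This is the geometric heart of the argument and parallels \Cref{lem:ev1}/\Cref{sec:proofev1} and \Cref{lemma:small-vol-high-degree}. The idea: for $x,y$ to share a unique violated halfspace $\bH_\ell$, we need a single random Gaussian $\bg^\ell$ (restricted to $\bC$) with both $\langle x_{\bC},\bg^\ell\rangle>r$ and $\langle y_{\bC},\bg^\ell\rangle>r$; since $\|x_{\bC}\|,\|y_{\bC}\|\le\sqrt n$ and $r=\sqrt{2n^{3/2}(1-o(1))}$, each such event costs a factor $\approx c_1/N$, and because $x_{\bC}$ and $y_{\bC}$ are far apart (Johnson–Lindenstrauss gives $\|(x-y)_{\bC}\|=\Theta(\|x-y\|)\ge\Omega(R)$ w.h.p.), the joint event is far smaller than $(c_1/N)\cdot(\text{something})$: decomposing $y_{\bC}$ along $x_{\bC}$ and an orthogonal direction, the orthogonal component forces $\langle y_{\bC},\bg^\ell\rangle$ to have an independent Gaussian contribution of standard deviation $\Omega(R)$ that must conspire to keep the inner product above $r$, contributing an extra $e^{-\Omega(R^2/n)}$-type factor — wait, one must be careful that $R^2/n$ is tiny, so instead the correct bound comes from the ``uniquely-falsify-the-same-halfspace-is-unlikely'' calculation: over the Haar-random $\bC$ and the $N$ independent $\bg^i$'s, $\Pr[\exists \ell:\ x,y\in\bU_\ell]\le N\cdot\Pr[x,y\in\bF_1]\le N\cdot(c_1/N)^2\cdot e^{-\Omega(R^2/n)}\cdot(\text{JL correction})$, and union-bounding over $q^2$ pairs this is $q^2\cdot(c_1/N)\cdot(\ldots)=o(1)$ since $N=2^{\sqrt n}\gg q^2$.

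\textbf{Main obstacle.} The delicate point is Case 2: one must correctly quantify, over the \emph{joint} randomness of the Haar subspace $\bC$ and the Gaussian halfspace normal $\bg^\ell$, the probability that two far-apart query points are simultaneously falsified by the \emph{same} halfspace and by \emph{no other} halfspace. The ``no other halfspace'' part needs \Cref{lemma:small-vol-high-degree} (so each point is falsified by $\le\mathrm{poly}(\ldots)$ halfspaces w.h.p.), and the ``same halfspace, both violated'' part needs the geometric anticoncentration argument from \Cref{sec:proofev1} showing that after projecting to $\bC$ the two points are still far, so their images under a random halfspace direction behave nearly independently — and crucially $N=2^{\sqrt n}$ beats the $q^2=2^{2c_3n^{1/4}}$ union bound by an exponential margin. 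Balancing the threshold $R$ between Case 1 (needs $R$ small: $R\le n^{1/4}/q^{2+o(1)}$) and Case 2 (needs $R$ not too small so that the JL distance bound and the anticoncentration give a nontrivial saving) is the calculation where the constant $c_3$ gets pinned down; since $N$ dominates both cases, essentially any $R$ polynomially related to $n^{1/4}$ and any sufficiently small $c_3$ works, but writing this cleanly is the bulk of the work.
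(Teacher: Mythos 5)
Your overall skeleton matches the paper's: union bound over the $\binom q2$ pairs, bound $\Pr[\Bad_{x,y}]$ by the minimum of $\Pr[\text{(a) same }\bU_\ell]$ and $\Pr[\text{(b) distinct regions}]$, and split on $\|x-y\|$. Your Case 2 is also in the right spirit (the paper handles it by conditioning on $S_{\bB}(y')=\{1\}$, projecting via a JL-type claim, and invoking a bivariate Gaussian tail bound to show $\Vol(\slab(x')\cap\slab(y'))/\Vol(\slab(x'))\le 2^{-\Theta(n^{1/4})}$ once the correlation is $\rho\le 1-\Omega(n^{-1/4})$). But there is a genuine gap in Case 1 that breaks the whole balance. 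You bound the probability that $x_{\bA},y_{\bA}$ land in distinct regions by an anticoncentration estimate of the form $O(R/n^{1/4})$ per pair. That is only polynomially small in $n$, while the union bound is over $q^2=2^{2c_3n^{1/4}}$ pairs; to survive it you are forced to take $R\le n^{1/4}\cdot 2^{-2c_3n^{1/4}}$, i.e.\ exponentially small. With such an $R$, Case 2 collapses: two points at exponentially small distance have $x_{\bC},y_{\bC}$ essentially identical, so the events $\{x\in\bF_1\}$ and $\{y\in\bF_1\}$ do not decouple and your claimed factor $(c_1/N)^2$ is wrong --- the joint probability is $\approx c_1/N$, and $N\cdot(c_1/N)=c_1=\Omega(1)$. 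The decorrelation you need ($\rho\le 1-\Omega(n^{-1/4})$) requires $\|x-y\|\gtrsim n^{3/8}$, so the two cases, as you have set them up, cannot be reconciled; your closing remark that ``essentially any $R$ polynomially related to $n^{1/4}$'' works is inconsistent with your own Case 1 bound, which is $O(1)$ at $R=n^{1/4}$.

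The missing idea is that Case 1 is not an anticoncentration problem but a large-deviation problem, and this is exactly what the $\curb$ buffer intervals in the construction are for. The regions $\Top,\Middle,\Bottom$ are separated by the two $\curb$ intervals, each of \emph{constant} width $\rho(c_2)$, so for $x_{\bA}$ and $y_{\bA}$ to lie in distinct regions one must have $|x_{\bA}-y_{\bA}|\ge\rho(c_2)$, a fixed constant. Writing $z=x-y$, the quantity $z_{\bA}$ is the projection of a fixed vector of length $\le cn^{3/8}$ onto a Haar-random direction in $\R^{n+1}$, distributed as $\|z\|\cdot\bv_1$ with $\bv$ uniform on the sphere; by the spherical cap bound, $\Pr[|z_{\bA}|\ge\rho(c_2)]\le e^{-\Omega(n\rho(c_2)^2/\|z\|^2)}=2^{-\Omega(n^{1/4})}$. (Also note $x_{\bA}$ is spread over an $O(1)$ scale here, not $\Theta(n^{1/4})$ --- you imported the scaling from the one-sided construction of \Cref{lem:final-lemma-sim}, which does not apply to the tolerant construction.) This gives the exponentially small per-pair bound $2^{-4c_3n^{1/4}}$ at the threshold $\|x-y\|\le cn^{3/8}$, which is large enough for Case 2's decorrelation to go through and small enough to beat the $2^{2c_3n^{1/4}}$ union bound.
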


\begin{proof}
	Fix a pair of query points $x,y \in \R^{n+1}$ that belong to $Q$.  By the definition of $\Bad$, we may assume without loss of generality that $x,y \in \Shell_{n+1}$. Let $\Bad_{x,y}$ be the event that 
	\begin{enumerate}
		\item[(a)] $x,y \in \bU_\ell$ for some $\ell \in [N]$ (equivalently, $S_{\bB}(x)=S_{\bB}(y)=\{\ell\}$), and
		\item[(b)] $x_{\bA},y_{\bA}$ belong to two \emph{distinct} sets among $\{\Top,\Middle,\Bottom\}$.
	\end{enumerate}
	Analogous to the argument in~\cite{chen2024mildly}, we will show that 
	\begin{equation} \label{eq:showme}
	\Prx_{\bB}[\Bad_{x,y}] \leq \min\{\Pr_{\bB}[(a)], \Pr_{\bB}[(b)]\} \text{~is very small}.
	\end{equation}
	Recall that each of the two intervals defining $\curb$ (cf.~\Cref{subsec:tolerant-indistinguishability-setup}) has the same width which we will denote $\rho(c_2)$ for succinctness, i.e. 
	\[
	\rho(c_2) := \Phi^{-1}\pbra{ {\frac {1+c_2} 3}} - \Phi^{-1}\pbra{{\frac {1-2c_2} 3}}.
\]
%	\red{Note that $\rho(c_2)=\Theta(c_2)$, independent of $n$.}\rnote{It seemed from what happened in the ``Setting parameters'' section at the end of Sec 6.2 that $c_2$ will be some $\Theta(\eps)$ value, so I'm thinking of $\rho(c_2)=\Theta(\eps)$ for now.}

	On one hand, for (b) to happen on $x,y$, we must have%$\Bad_{xy}$ to happen, we must have:
	\begin{equation} \label{eq:diamond}
		 \big|x_{\bA} - y_{\bA} \big| \geq \rho(c_2). \tag{$\diamond$}
	\end{equation}
	On the other hand, (a) means  
	\begin{equation} \label{eq:star}
		\text{There exists}~\ell\in[N]~\text{such that}~S_{\bB}(x)=S_{\bB}(y)=\{\ell\}. \tag{$\star$}
	\end{equation} 
	It follows that 
	$\Pr[\Bad_{xy}]\leq \min\{\Pr[\diamond],\Pr[\star]\}$. 
	We will show that $\min\{\Pr[\diamond],\Pr[\star]\}\leq 2^{-4c_3n^{1/4}}$ and will do so via the following lemmas, \Cref{lem:xyfar} and \Cref{lem:xynear} (below $c>0$ is a suitable positive absolute constant):
%	\rnote{Guessing about how to set parameters here: Maybe (in analogy with what happens in the ``Mildly...'' paper) we want $c_3=\Theta(\eps^{-a})$?}

	\begin{lemma} \label{lem:xyfar}
	If $\|x-y\| \leq c n^{3/8}$, then $\Pr[\diamond] \leq 2^{-4c_3 n^{1/4}}.$
	\end{lemma}
	
	\begin{proof}
	Fix $x,y$ such that $\|x-y\| \leq cn^{3/8}$. For succinctness we write $z$ to denote $x-y$, so $z \in \R^{n+1}$ and $\|z\| \leq cn^{3/8}$; our goal is to show that $|z_{\bA}| \leq \rho(c_2)$ except with probability at most $2^{-4c_3 n^{1/4}}$.
	
	Since $\bA$ is a Haar-random direction in $\R^{n+1}$, the distribution of $z_{\bA}$ is the same as the distribution of $\|z\|\cdot\bv_1$ where $\bv \sim \S^{n-1}$. 
	Hence by standard bounds on spherical caps (\Cref{lem:spherical-cap}), 
	\[
		\Prx\sbra{|\bv_1| \geq \frac{t}{\sqrt{n}}} \leq e^{-t^2/2}.
	\]
	Taking $t= \sqrt{8c_3} \cdot n^{1/8}$, this probability is at most $e^{-4c_3 n^{1/4}} < 2^{-4c_3 n^{1/4}}$. So we set our threshold as 
	\[
		\|z\| \leq {\frac {\rho(c_2)}{2\sqrt{2c_3}}}\cdot n^{3/8},
	\]
	i.e.~we require that $c \leq {\frac {\rho(c_2)}{2\sqrt{2 c_3}}}$, and the lemma is established.
\end{proof}

	\begin{lemma} \label{lem:xynear}
	If $\|x-y\| > c n^{3/8}$, then $\Pr[\star] \leq 2^{-4c_3 n^{1/4}}.$
	\end{lemma}
	
	We defer the proof of~\Cref{lem:xynear} to~\Cref{subsec:tolerant-proof-xy-lemmas}. Thanks to \Cref{lem:xyfar,lem:xynear}, we get that
	\[
		\Pr[\Bad_{xy}] \leq  \min\cbra{\Pr[\diamond],\Pr[\star]} \leq 2^{-4c_3 n^{1/4}}.
	\]
	By a union bound over all (at most $2^{2c_3 n^{1/4}}$) pairs of points $x,y$ from $Q$, we get that
	\[
		\Pr[\Bad] \leq 2^{-4c_3 n^{1/4}} \cdot 2^{2c_3 n^{1/4}} = 2^{-2c_3 n^{1/4}} = o(1),
	\]
	which completes the proof.
\end{proof}

%Theorem~\ref{thm:tolerant} follows from~\Cref{lemma:bad-is-unlikely} via a straightforward application of Yao's minimax principle (Theorem~\ref{thm:yao-minimax}); we refer the reader to the proof of~Theorem~1 in Section 4.2 of~\cite{chen2024mildly} for complete details. 
%\rnote{Didn't we already talk about Yao at the end of \Cref{subsec:tolerant-indistinguishability-setup} - do we need to say this again here? Maybe strike it or move some variant of this sentence to the end of \Cref{subsec:tolerant-indistinguishability-setup}?}

\subsubsection{Proof of~\Cref{lem:xynear}}
\label{subsec:tolerant-proof-xy-lemmas}

For the remainder of this section, we will always assume that $x,y\in\Shell_{n+1}$ satisfy $\|x-y\| > c n^{3/8}$. 
Note that we can view the construction $g_{\bC, \bB, \bP}$ as a two stage process:
\begin{itemize}
\item We first draw $\bC$, which is a Haar random $n$ dimensional subspace of $\mathbb{R}^{n+1}$. 
\item We then draw $\bB \sim \Naz(r,N,\bC)$, and we draw $\bP$ as a uniformly random subset of $[N]$. 
\end{itemize}
We require the following claim: 
\begin{claim}~\label{clm:goodxy}
	Suppose $x,y\in\Shell_{n+1}$ satisfy $\|x-y\| > c n^{3/8}$.  Then with probability at least $1 - 2^{-\Omega(n^{1/4})}$ over the outcomes of $\bC$, we have 	
	\begin{equation} \label{eq:way-station}
		\|x_{\bC}\| \geq \|x\| - 1,
		\quad 
		\|y_{\bC}\| \geq \|y\| - 1, 
		\quad\text{and}\quad 
		\|(x-y)_{\bC}\| \geq cn^{3/8} - 1.	
	\end{equation}
\end{claim}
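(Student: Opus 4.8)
The plan is to reduce each of the three desired inequalities in \Cref{eq:way-station} to a single spherical-cap concentration bound on $|\langle w,\bv\rangle|$, applied to the three fixed vectors $w\in\{x,\,y,\,x-y\}$, and then take a union bound. Throughout, recall that the control subspace is $\bC = \bA^\perp$ where $\bA=\{t\bv:t\in\R\}$ for a Haar-random unit vector $\bv\sim\S^{n}$, so the randomness in $\bC$ is exactly the randomness in $\bv$.

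First I would record the elementary geometric identity: for any fixed $w\in\R^{n+1}$, the decomposition $w = w_{\bC}+w_{\bA}$ gives $\|w_{\bA}\| = |\langle w,\bv\rangle|$ and hence $\|w_{\bC}\|^2 = \|w\|^2 - \langle w,\bv\rangle^2$. Consequently, if $\|w\|\ge 1$, then the bound $|\langle w,\bv\rangle|\le \sqrt{\|w\|}$ implies $\langle w,\bv\rangle^2 \le \|w\| \le 2\|w\|-1$, which rearranges to $\|w_{\bC}\|^2 \ge (\|w\|-1)^2$, i.e.\ $\|w_{\bC}\|\ge \|w\|-1$. So it suffices to control the failure event $|\langle w,\bv\rangle| > \sqrt{\|w\|}$.

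Next I would establish the sub-claim: for any fixed $w$ with $\|w\|\ge 1$,
\[
\Pr\big[\,\|w_{\bC}\| < \|w\|-1\,\big] \;\le\; \Pr\big[\,|\langle w,\bv\rangle| > \sqrt{\|w\|}\,\big] \;\le\; 2\,e^{-n/(2\|w\|)}.
\]
This follows because, by rotational symmetry, $\langle w,\bv\rangle$ is distributed as $\|w\|\cdot\bv_1$ with $\bv\sim\S^{n}$, so the event $|\langle w,\bv\rangle|>\sqrt{\|w\|}$ is exactly $|\bv_1| > 1/\sqrt{\|w\|}$, and \Cref{lem:spherical-cap} applied on $\S^n$ (where the exponent is $n+1$) bounds its probability by $2e^{-(n+1)/(2\|w\|)} \le 2e^{-n/(2\|w\|)}$.

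Finally I would instantiate the sub-claim with $w=x$, $w=y$, and $w=x-y$. Since $x,y\in\Shell_{n+1}$ we have $\|x\|,\|y\| \le \sqrt{n+1}+\sqrt{2\ln(2/\tau)} = \sqrt{n}\,(1+o(1))$, and by the triangle inequality $\|x-y\|\le\|x\|+\|y\| = O(\sqrt n)$; moreover all three norms exceed $1$ for $n$ large (for $w=x-y$ this uses the hypothesis $\|x-y\| > cn^{3/8}$). Plugging these into the sub-claim shows each of the three failure events has probability at most $2e^{-\Omega(\sqrt n)}$, so a union bound gives that all three inequalities in \Cref{eq:way-station} hold simultaneously except with probability $2^{-\Omega(\sqrt n)} \le 2^{-\Omega(n^{1/4})}$; on that event $\|(x-y)_{\bC}\| \ge \|x-y\|-1 > cn^{3/8}-1$. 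There is no serious obstacle here — it is a routine concentration estimate — but the one place the hypotheses genuinely enter is the bound $\|w\| = O(\sqrt n)$, which makes the exponent $n/(2\|w\|)$ equal to $\Omega(\sqrt n)$; this is precisely why $x,y$ must lie in the thin shell $\Shell_{n+1}$ (rather than being arbitrary points of $\R^{n+1}$) and why the separation in the third inequality is only recovered down to $cn^{3/8}-1$.
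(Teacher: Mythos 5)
Your proposal is correct and takes essentially the same approach as the paper: both proofs write $\|w_{\bC}\|^2 = \|w\|^2 - \|w_{\bA}\|^2$, apply the spherical-cap bound (\Cref{lem:spherical-cap}) to control $\|w_{\bA}\| = |\langle w,\bv\rangle|$ for each of $w \in \{x,y,x-y\}$, and union bound. The only cosmetic difference is the finishing step: the paper chooses the threshold $\|w_{\bA}\| \le \beta n^{-3/8}\|w\|$ (giving failure probability $e^{-\Omega(n^{1/4})}$) and passes through a Taylor expansion of $\sqrt{1-u}$, whereas you choose the threshold $\|w_{\bA}\| \le \sqrt{\|w\|}$ and observe directly that $\|w\|^2 - \|w\| \ge (\|w\|-1)^2$ once $\|w\|\ge 1$; your choice is slightly cleaner and yields the stronger rate $2^{-\Omega(\sqrt n)}$, but this does not affect the claim, which only asserts $2^{-\Omega(n^{1/4})}$.
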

\begin{proof}
	Fix $x,y \in \Shell_{n+1}$ such that $\|x-y\| > c n^{3/8}$. Because $\bA$ is drawn Haar-randomly (and since it defines $\bC$),  
	it follows from \Cref{lem:spherical-cap} that
	\[
		\Prx_{\bA}\left[\Vert x_{\bA} \Vert \ge t \cdot n^{-1/2} \cdot \Vert x \Vert\right] \le  e^{-t^2/2}. 
	\]
	Let $t = \beta n^{1/8}$ for a suitable constant $\beta>0$. The previous inequality gives
	\[
		\Prx_{\bA}\left[\Vert x_{\bA} \Vert \ge  \beta \cdot n^{-3/8} \cdot \Vert x \Vert\right] \le  e^{-\beta^2 n^{1/4}/2}.  
	\]
	Thus, with probability $1-e^{-\beta^2 n^{1/4}/2}$, we have 
	\begin{align*}
	\Vert x_{\bC} \Vert = \sqrt{\Vert x \Vert^2 - \Vert x_{\bA} \Vert^2} &= \Vert x \Vert \cdot \sqrt{ 1- \frac{\Vert x_{\bA} \Vert^2}{\Vert x \Vert^2}}\\
	&\ge \Vert x \Vert \cdot \bigg(1 -\frac{\Vert x_{\bA} \Vert^2}{2\Vert x \Vert^2} \bigg) \\
	&= \Vert x \Vert - \frac{\Vert x_{\bA} \Vert^2}{2 \Vert x\Vert}   \\
	&\ge \Vert x \Vert - \frac{\beta^2 n^{-3/4}  \Vert x \Vert }{2}. 
	\end{align*}
	As $x \in \Shell_{n+1}$ and thus $\Vert x \Vert \le 2 \sqrt{n}$, 
	it follows that the last expression is at least $\Vert x \Vert-1$. 
	Identical calculations yield the corresponding lower bounds on $\Vert y_{\bC} \Vert$ and $\Vert (x-y)_{\bC} \Vert$. 
\end{proof}

Fix an outcome $C$ of $\bC$ such that~\Cref{eq:way-station} holds. For convenience, we will write $x'$ for $x_{C}$, $y'$ for $y_{C}$, both of which lie in $\R^n$. For the rest of the argument, we will work over $C$, i.e.~we view sets such as $\bH_\ell,\bH'_\ell$  and $\bB$ as lying in $C$ (which we identify with $\R^n$) rather than in $\R^{n+1}$.

The following argument is analogous to (parts of) the proof of~Lemma~15 of~\cite{chen2024mildly}. Recall  
\[
	S_{\bB}(x') = \cbra{ \ell \in [N] : x' \in \bU_{\ell}}.
\]
By \Cref{clm:goodxy}, we have that
\begin{align*}
	\Prx\sbra{\star} 
	&\leq 2^{-\Omega(n^{1/4})} + \Prx_{\bB}\sbra{S_{\bB}(x')=S_{\bB}(y') = \{\ell\}\text{~for some~}\ell} \\  
	&= 2^{-\Omega(n^{1/4})} + \Prx_{\bB}\sbra{_{\bB}(x')=S_{\bB}(y')   ~\text{and}~ \exists\ell \text{~s.t.~}S_{\bB}(y')=\{\ell\} } \\
	&\leq 2^{-\Omega(n^{1/4})} + \Prx_{\bB}\sbra{S_{\bB}(x')=S_{\bB}(y')  ~\mid~\exists \ell \text{~s.t.~} S_{\bB}(y')=\{\ell\}},
\end{align*}
%Without loss of generality, we will assume $\ell = 1$ (since $\Pr[\bX | \sqcup_\ell E_\ell] \leq \sup_\ell \Pr[\bX | E_\ell]$ for disjoint events $E_1,E_2,\dots$, and since the probabilities $\Pr_{\bB}\left[S_{\bB}(x')=S_{\bB}(y')  \ \mid \  \exists \ell \text{~s.t.~} S_{\bB}(y')=\{\ell\}\right]$ are the same for all $\ell \in [N]$).
where $x',y'$ satisfy \Cref{eq:way-station}.
We will analyze the case when $\ell = 1$, which is without loss of generality since 
\[
	\Pr[\bX | \sqcup_\ell E_\ell] \leq \sup_\ell \Pr[\bX | E_\ell]
\]
for disjoint events $\cbra{E_\ell}$ and since the probabilities 
\[
	\Prx_{\bB}\left[S_{\bB}(x')=S_{\bB}(y')  \ \mid \ ~\exists\ell \text{~s.t.~} S_{\bB}(x')=\{\ell_1\}\right] = \Prx_{\bB}\left[S_{\bB}(x')=S_{\bB}(y')  \ \mid \ ~\exists\ell \text{~s.t.~} S_{\bB}(x')=\{\ell_2\}\right]
\]
for all $\ell_1, \ell_2 \in [N]$. 
So our goal is to upper bound 
\begin{equation} \label{eq:condit-goal}
\Pr_{\bB}\left[S_{\bB}(x')=S_{\bB}(y')  \ \mid \ S_{\bB}(y')=\{1\}\right].
\end{equation}
Observe that the event ``$S_{\bB}(y') = \{1\}$'' that we are conditioning on is an event over the random draw of $\bB$, i.e. over the draw of $\bg^1,\dots,\bg^N$. To analyze this event it is helpful to introduce the following notation: For $z' \in \R^n$, define 
\[
	\slab(z') := 
	\begin{cases}
		\cbra{g \in \R^n : g\cdot z' \geq r} & \|z'\| \leq \sqrt{n},\\ 
		\emptyset & \|z'\| > \sqrt{n}.
	\end{cases}
\]
Consequently, the event ``$S_{\bB}(y') = \{1\}$'' is the same as the event 
\[
	\cbra{\bg^1 \in \slab(y')} \wedge \cbra{\bg^i \notin \slab(y')~\text{for}~i\in\{2, \ldots, N\}}.
\]
We may fix any outcome $g^{2*}, \ldots, g^{N*}$ of $\bg^2,\dots,\bg^N$ all of which lie outside of $\slab(y')$, and we get (writing $\vec{\bg}$ to denote $(\bg^1,\dots,\bg^N)$) that
\begin{align}
	\text{\eqref{eq:condit-goal}} 
	& =\Prx_{\vec{\bg}}\sbra{S_{\bB}(x')=S_{\bB}(y') \ | \ (\bg^1 \in \slab(x')) \wedge \pbra{\bg^i \notin \slab(x') \text{~for~}i\in[2:N]}}\nonumber \\
&\leq\sup_{\substack{g^{i*} \notin \slab(y') \\ i \neq 1}}
\Prx_{\vec{\bg}}\sbra{S_{\bB}(x')=S_{\bB}(y') \ | \ (\bg^1 \in \slab(x')) \wedge \pbra{\bg^i = g^{i*}\text{~for~}i\in[2:N]}}\label{eq:decompose} \\
&\leq\sup_{\substack{g^{i*} \notin \slab(y') \\ i \neq 1}}
\Prx_{\vec{\bg}}\sbra{\bg^1 \in \slab(x') \cap \slab(y') \ | \ (\bg^1 \in \slab(x')) \wedge \pbra{\bg^i = g^{i*}\text{~for~}i\in[2:N]}} \label{eq:ubub} \\
&=
\Prx_{\bg \sim N(0,I_n)}\sbra{\bg \in \slab(x') \cap \slab(y') \ | \ \bg \in \slab(x')}, \label{eq:condit-goal2}
\end{align}
where \Cref{eq:decompose} uses $\Pr[\bX | \sqcup_\ell E_\ell] \leq \sup_\ell \Pr[\bX | E_\ell]$ as earlier; \Cref{eq:ubub} uses that if $\bg^1 \in \ring(y')$ then in order to have $S_{\bB}(x') = S_{\bB}(y')$ it must be the case that $\bg^1 \in \ring(x') \cap \ring(y')$; and \Cref{eq:condit-goal2} is because the event $\bg^1 \in \ring(x') \cap \ring(y')$ is independent of the outcome of $\bg^2,\dots,\bg^S$.
So in what follows our goal is to upper bound \eqref{eq:condit-goal2}.
In other words, recalling that we write $\Vol(K)$ to denote the Gaussian measure of the set $K$ (cf.~\Cref{subsec:gaussian-tail-bounds}), our goal is to obtain an upper bound on 
\begin{equation} \label{eq:ububub}
	\eqref{eq:condit-goal2} = {\frac {\Vol(\ring(x') \cap \ring(y'))}{\Vol(\ring(x'))}},	
\end{equation}
which is a two-dimensional problem because the only thing that matters about the outcome of $\bg \sim N(0,I_n)$ vis-a-vis \eqref{eq:ububub} is the projection of $\bg$ in the directions of $x'$ and $y'$.
Towards this goal, we recall the following tail bound for bivariate Gaussian random variables:

\begin{proposition}[Equation~(2.11) of~\cite{willink2005bounds}] \label{prop:bivariate-gaussian-tail}
	Suppose $(\bZ_1, \bZ_2) \sim N(0, \Sigma)$ where 
	\[
		\Sigma = 
		\begin{bmatrix}
			1 & \rho \\ \rho & 1
		\end{bmatrix} \qquad\text{for}~\rho > 0.
	\]
	Then for $h, k > 0$, we have 
	\[
		\Prx_{(\bZ_1, \bZ_2) \sim N(0, \Sigma)}\sbra{\bZ_1 > h, \bZ_2 > k} \leq \Phi(-h)\pbra{\Phi\pbra{\frac{\rho h - k}{\sqrt{1-\rho^2}}} + \rho e^{(h^2-k^2)/2}\Phi\pbra{\frac{\rho k - h}{\sqrt{1-\rho^2}}}}.
	\]
\end{proposition}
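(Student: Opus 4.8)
The plan is to reduce the bivariate tail probability to a one-dimensional integral, peel off the leading term by an integration by parts, and bound the leftover integral with a standard Mills-ratio estimate. Write $\sigma:=\sqrt{1-\rho^2}$, which is positive since $0<\rho<1$. First I would use the conditional decomposition $\bZ_2=\rho\bZ_1+\sigma\bg$ with $\bg\sim N(0,1)$ independent of $\bZ_1$, which gives
\[
\Prx\sbra{\bZ_1>h,\ \bZ_2>k}=\int_h^\infty \phi(z)\,\Phi\pbra{\frac{\rho z-k}{\sigma}}\,dz.
\]
Then I would integrate by parts, taking $z\mapsto \Phi(z)-1$ as an antiderivative of $\phi$ and using that the boundary term at $+\infty$ vanishes (here $\rho>0$ is used, so that $\Phi((\rho z-k)/\sigma)\to 1$). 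This produces
\[
\Prx\sbra{\bZ_1>h,\ \bZ_2>k}=\Phi(-h)\,\Phi\pbra{\frac{\rho h-k}{\sigma}}+\frac{\rho}{\sigma}\int_h^\infty\pbra{1-\Phi(z)}\,\phi\pbra{\frac{\rho z-k}{\sigma}}\,dz,
\]
in which the first summand is already exactly the leading term of the claimed bound; so everything reduces to showing the remainder integral is at most $\rho\,\Phi(-h)\,e^{(h^2-k^2)/2}\,\Phi((\rho k-h)/\sigma)$.

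For the remainder I would invoke the monotonicity fact that $z\mapsto(1-\Phi(z))\,e^{z^2/2}$ is nonincreasing on $\R$ (equivalently the Mills-ratio inequality $1-\Phi(z)\le\phi(z)/z$ for $z>0$, the statement being trivial for $z\le 0$), which yields $1-\Phi(z)\le\Phi(-h)\,e^{(h^2-z^2)/2}$ for every $z\ge h$. Substituting this bound leaves the Gaussian integral $\int_h^\infty e^{-z^2/2}\,\phi((\rho z-k)/\sigma)\,dz$, which I would evaluate by completing the square in the exponent: the coefficient of $z^2$ is $1+\rho^2/\sigma^2=1/\sigma^2$, the cross term recenters the Gaussian at $\rho k$, and the residual constant is $e^{-k^2/2}$, so the integral equals $\sigma\,e^{-k^2/2}\,\Phi((\rho k-h)/\sigma)$. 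Feeding this back, the prefactor telescopes as $(\rho/\sigma)\cdot e^{h^2/2}\cdot\sigma\,e^{-k^2/2}=\rho\,e^{(h^2-k^2)/2}$, producing precisely the second term of the target, and adding the two pieces completes the proof.

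Overall the argument is a conditional decomposition, one integration by parts, a completion of the square, and the Mills-ratio estimate, so I do not expect a real obstacle; the only point requiring care is the monotonicity of $(1-\Phi(z))e^{z^2/2}$, which follows since its derivative is $e^{z^2/2}\bigl(z(1-\Phi(z))-\phi(z)\bigr)$ and $z(1-\Phi(z))-\phi(z)<0$ for $z>0$ while the expression is plainly negative for $z\le 0$. I would also remark that the proof uses only $0<\rho<1$ and, although stated for $h,k>0$, in fact goes through for all real $h$ if one tracks signs carefully.
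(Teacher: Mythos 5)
Your proof is correct. Note that the paper does not prove this proposition at all: it is imported verbatim as Equation~(2.11) of the cited reference (Willink, 2005), so there is no in-paper argument to compare against. Your derivation is a valid self-contained proof, and each step checks out: the conditional decomposition $\bZ_2=\rho\bZ_1+\sigma\bg$ correctly yields $\int_h^\infty\phi(z)\Phi((\rho z-k)/\sigma)\,dz$; the integration by parts with antiderivative $\Phi(z)-1$ has vanishing boundary term at $+\infty$ and produces exactly $\Phi(-h)\Phi((\rho h-k)/\sigma)$ plus the remainder $(\rho/\sigma)\int_h^\infty(1-\Phi(z))\phi((\rho z-k)/\sigma)\,dz$; the monotonicity of $(1-\Phi(z))e^{z^2/2}$ on all of $\R$ is exactly the Mills-ratio inequality and gives $1-\Phi(z)\le\Phi(-h)e^{(h^2-z^2)/2}$ for $z\ge h$; and the completion of the square (coefficient $1/\sigma^2$ on $z^2$, recentering at $\rho k$, residual $e^{-k^2/2}$) evaluates the remaining Gaussian integral to $\sigma e^{-k^2/2}\Phi((\rho k-h)/\sigma)$, so the prefactors combine to $\rho e^{(h^2-k^2)/2}\Phi(-h)\Phi((\rho k-h)/\sigma)$ as required. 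Your closing observations are also accurate: the hypothesis $\rho>0$ is what makes the boundary term vanish with the correct sign (and keeps the second term a valid upper contribution), and $h,k>0$ is not actually needed for the argument.
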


Let $\bg\sim N(0,I_n)$. Define the random variables
\[
	\bZ_1 := \frac{\bg\cdot x'}{\|x'\|}
	\qquad\text{and}\qquad
	\bZ_2 := \frac{\bg\cdot y'}{\|y'\|}
\]
and set $h := \frac{r}{\|x'\|}$, $k := \frac{r}{\|y'\|}$. 
It is immediate that 
\[\Vol(\slab(x')) = \Prx\sbra{\bZ_1 > h} \quad\text{and}\quad\Vol(\slab(x') \cap \slab(y')) = \Prx\sbra{\bZ_1 > h, \bZ_2 > k}.\]
Furthermore, note that $\Var[\bZ_1] = \Var[\bZ_2] = 1$. We also have 
$\rho := \Ex\left[\bZ_1\bZ_2\right] = \frac{x'\cdot y'}{\|x'\|\|y'\|} $. Thanks to~\Cref{clm:goodxy}, we have
\[
	(cn^{3/8} - 1)^2 \leq \|x' - y'\|^2 = \|x'\|^2 + \|y'\|^2 - 2x'\cdot y' \leq 2(n - x'\cdot y') 
\]
which in turn implies that 
\begin{equation} \label{eq:felton}
	\rho = \frac{x'\cdot y'}{\|x'\|\|y'\|} \leq \pbra{n - \frac{1}{2}(cn^{3/8} - 1)^2}\frac{1}{\|x'\|\|y'\|}.
\end{equation}
Using~\Cref{clm:goodxy} and the fact that $x, y \in \Shell_{n+1}$, we have that $\|x'\|, \|y'\| \geq \sqrt{n+1} -\sqrt{2\ln(2/\tau)} - 1$ and combining this with~\Cref{eq:felton} gives
\begin{equation} \label{eq:bigfoot}
	\rho \leq \frac{\pbra{n - \frac{1}{2}(cn^{3/8} - 1)^2}}{\pbra{\sqrt{n+1} -2\sqrt{2\ln(2/\tau)} - 1}^2} = 1 - \Omega(n^{-1/4}).
\end{equation}
Note that $\Vol(\slab(x')) = \Phi(-h)$. Consequently, using~\Cref{prop:bivariate-gaussian-tail} we get 
\[
	\eqref{eq:ububub} \leq \Phi\pbra{\frac{\rho h - k}{\sqrt{1-\rho^2}}} + \rho e^{(h^2-k^2)/2}\Phi\pbra{\frac{\rho k - h}{\sqrt{1-\rho^2}}},
\]
and we will obtain an upper bound on this in the remainder of this section. In particular, note that 
%\rnote{The red guys below continue, I think, to be $1 - \Theta(\eps^{5/2}/n^{1/4})$}
\begin{align*}
	\rho h - k &\leq \pbra{1 - \Omega(n^{-1/4})}\frac{r}{\|x'\|} - \frac{r}{\|y'\|} \\
	&= \frac{r}{\|x'\|}\pbra{1 - \Omega(n^{-1/4}) - \frac{\|x'\|}{\|y'\|}}
\end{align*}
Recall that $\|x'\|,\|y'\|\leq \sqrt{n}$ and that $\|x'\| \geq \sqrt{n+1} - \sqrt{2\ln(2/\tau)} - 1$. Hence, for $n$ large enough and an appropriate constant $\tau$, we have 
\begin{equation} \label{eq:LCD-soundsystem}
	\frac{\|x'\|}{\|y'\|} \geq 1 - {\Omega(n^{-1/2})}
\end{equation}
and consequently, we get that 
%\red{
%\begin{align*}
%	\rho h - k \leq \pbra{\frac{-r \eps^{2+a}}{\|x'\|\cdot n^{1/4}}} \leq -\Omega(\eps^{2+a}).
%\end{align*}
%}
\begin{align*}
	\rho h - k \leq O\pbra{\frac{r}{\|x'\|\cdot n^{1/4}}\pbra{O\pbra{\frac{1}{n^{1/4}}} - \Theta(1)}} \leq O\pbra{\frac{-r}{\|x'\|\cdot n^{1/4}}} \leq -\Omega(1)
\end{align*}
for an appropriate choice of $\tau$. 
The final inequality relies on the above lower bound on $\|x'\|$ and~\Cref{lemma:r-N-relationship}. 
An identical calculation gives that $\rho k - h \leq -\Omega(1)$. 
It follows that 
\begin{align}
	\eqref{eq:ububub} &\leq 2\exp\pbra{\frac{r^2}{\|x'\|^2}\pbra{1 - \frac{\|x'\|^2}{\|y'\|^2}}}\Phi\pbra{\frac{-\Omega(1)}{\sqrt{1 - \rho^2}}} \nonumber \\
	&\leq 2\exp\pbra{o(1)}\Phi\pbra{\frac{-\Omega(1)}{\sqrt{1 - \rho^2}}} \label{eq:anthrax} \\
	&\leq 2\exp\pbra{o(1)}\Phi\pbra{-\Omega(n^{1/8})} \label{eq:daft-punk} \\
	&\leq 2^{-{\Theta}(n^{1/4})} \label{eq:FIN}
\end{align}
where the final inequality relies on a standard Gaussian tail bound (cf.~\Cref{prop:gaussian-tails}). To see~\Cref{eq:anthrax}, note that 
\begin{align*}
	\exp\pbra{\frac{r^2}{\|x'\|^2}\pbra{1 - \frac{\|x'\|^2}{\|y'\|^2}}} 
	& \leq \exp\pbra{\Theta\pbra{\frac{n^{3/2}\ln(1/c_1)}{\sqrt{n+1} - \sqrt{2\ln(2/\tau)} - 1}}\pbra{1 - \frac{\|x'\|^2}{\|y'\|^2}}} \\
	& \leq \exp\pbra{\Theta\pbra{\frac{n^{3/2}\ln(1/c_1)}{\sqrt{n+1} - \sqrt{2\ln(2/\tau)} - 1}}\cdot O(n^{-1})} \\ 
	& \leq \exp\pbra{O\pbra{\frac{1}{\sqrt{n}}}},
\end{align*}
where we used~\Cref{lemma:r-N-relationship} and~\Cref{eq:felton}. 
\Cref{eq:daft-punk} immediately follows from~\Cref{eq:bigfoot}. Finally, note that~\Cref{eq:FIN} completes the proof.

\section{Two-Sided Non-Adaptive Lower Bound}
\label{sec:two-sided-nonadaptive}

Our goal in this section is to prove Theorem~\ref{thm:two-sided} restated below:

\maintheoremone*

\subsection{Setup}

We recall some necessary tools and results from \cite{CDST15}.

\subsubsection{Distributions with Matching Moments}

The first results we need, stated below as \Cref{prop:matchmoments,prop:negsupport}, establish the existence of two finitely supported random variables that match the first $\ell$ moments of a univariate Gaussian, for any $\ell$. Crucially, one of the random variables is supported entirely on non-negative reals, while the other puts nonzero probability on negative values (so if $\ell$ is any fixed constant, it puts a constant amount of probability on negative values):

\begin{proposition}[\cite{CDST15} Proposition~3.1]
\label{prop:matchmoments}
Given an odd $\ell \in \N$, there exists a value $\mu=\mu(\ell)$ $> 0$
  and a real random variable $\bu$ such that\vspace{-0.04cm}
\begin{enumerate}
\item $\bu$ is supported on at most $\ell$ nonnegative real values; and\vspace{-0.12cm}
\item  $\E[\bu^k] = \E_{\bg \sim N(\mu,I_k)}[\bg^k]$ for all $k \in [\ell]$.\vspace{0.06cm}
\end{enumerate}
\end{proposition}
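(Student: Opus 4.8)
The plan is to construct $\bu$ from a Gauss--Hermite quadrature rule for a \emph{shifted} univariate Gaussian, choosing the shift $\mu$ large enough that every quadrature node is nonnegative. Set $m := (\ell+1)/2$, which is an integer since $\ell$ is odd, and note $m \le \ell$. Let $He_m$ denote the degree-$m$ probabilists' Hermite polynomial, i.e.\ the $m$-th orthogonal polynomial for the standard Gaussian measure; its roots $x_1 < \cdots < x_m$ are all real and simple. The classical theory of Gaussian quadrature furnishes these roots as nodes together with strictly positive weights $w_1,\dots,w_m$ with $\sum_{i=1}^m w_i = 1$ (take $p\equiv 1$) such that
\[
\sum_{i=1}^m w_i\, p(x_i) \;=\; \E_{\bg \sim N(0,1)}\big[p(\bg)\big]
\]
for every polynomial $p$ of degree at most $2m-1 = \ell$ (writing $N(\mu,1)$ throughout for the univariate Gaussian with mean $\mu$ and unit variance, whose moments appear in the statement).

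First I would pass from $N(0,1)$ to $N(\mu,1)$ by the change of variables $y \mapsto y+\mu$: for any polynomial $p$ of degree at most $\ell$,
\[
\E_{\bg \sim N(\mu,1)}\big[p(\bg)\big] \;=\; \E_{\bg \sim N(0,1)}\big[p(\bg + \mu)\big] \;=\; \sum_{i=1}^m w_i\, p(x_i + \mu).
\]
Defining $\bu$ to take the value $x_i + \mu$ with probability $w_i$, applying this identity with $p(z)=z^k$ for each $k\in[\ell]$ gives $\E[\bu^k] = \E_{\bg\sim N(\mu,1)}[\bg^k]$, which is property (2); and $\bu$ is supported on the $m \le \ell$ values $x_1+\mu,\dots,x_m+\mu$, which takes care of the support-size requirement in (1).

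It remains only to force these values to be nonnegative, i.e.\ to choose $\mu \ge |x_1|$. Here I would invoke the standard bound that all roots of $He_m$ lie in $[-C\sqrt{m},\,C\sqrt{m}]$ for an absolute constant $C$ (for instance, the roots of the degree-$m$ probabilists' Hermite polynomial all have absolute value at most $\sqrt{4m+2}$), and set $\mu = \mu(\ell) := C\sqrt{m}$, a positive quantity depending only on $\ell$; then every $x_i+\mu \ge 0$, completing the construction. I do not expect a genuine obstacle here: the two facts that must be cited — positivity of Gauss quadrature weights and a crude upper bound on the largest Hermite root — are classical, and the argument is really just the observation that a degree-$m$ Gauss rule integrates polynomials of degree $\le 2m-1$ exactly, combined with a shift placing all nodes in $[0,\infty)$. (One could alternatively realize such a finitely supported moment-matching measure via the truncated Hamburger moment problem and PSD-ness of the associated Hankel matrix, but the Gauss-quadrature route is more explicit and immediately yields the bound $m\le\ell$ on the support size.)
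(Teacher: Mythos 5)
Your proof is correct: the $m$-point Gauss--Hermite rule with $m=(\ell+1)/2$ is exact on polynomials of degree $2m-1=\ell$, has strictly positive weights summing to $1$, and shifting the nodes by $\mu\geq\max_i|x_i|$ (which is $O(\sqrt{\ell})$ by the standard bound on Hermite zeros) yields a nonnegative support of size $m\leq\ell$ matching the first $\ell$ moments of $N(\mu,1)$. Note that the paper itself does not prove this statement but imports it from \cite{CDST15}, whose proof of Proposition~3.1 is the same Gaussian-quadrature-plus-shift argument you give.
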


\begin{proposition}[\cite{CDST15} Proposition~3.2]
\label{prop:negsupport}
Given $\mu > 0$ and $\ell\in \N$, %\red{an odd value $\ell > 1$,}\rnote{Need this b/c of the nonsense in Claim \ref{claim:nonsingular}}
there exists a real random variable $\bv$ such that\vspace{-0.04cm}
\begin{enumerate}
\item $\bv$ is supported on at most $\ell+1$ real values, with $\Pr[\bv<0]>0$; and\vspace{-0.12cm}
\item $\E[\bv^k] = \E_{\bg \sim N(\mu,I_k)}[\bg^k]$ for all $k \in [\ell]$.\vspace{0.06cm}
\end{enumerate}
\end{proposition}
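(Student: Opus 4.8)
The plan is to build $\bv$ by taking the nonnegative random variable $\bu$ of Proposition~\ref{prop:matchmoments} and perturbing it with a single small atom placed at a negative point, in a way that exactly preserves the first $\ell$ moments. Write $m_k := \E_{\bg\sim N(\mu,1)}[\bg^k]$ for $k=0,1,\dots,\ell$ (so $m_0=1$), fix any $b<0$, and for $\epsilon\in[0,1)$ define the perturbed moment vector $m_k^{(\epsilon)} := (m_k-\epsilon b^k)/(1-\epsilon)$, so that $m_0^{(\epsilon)}=1$ and $m_k^{(0)}=m_k$. The random variable $\bv$ will have law $\nu_\epsilon := (1-\epsilon)\rho_\epsilon + \epsilon\,\delta_b$, where $\rho_\epsilon$ is a finitely supported probability measure whose first $\ell$ moments are $m_1^{(\epsilon)},\dots,m_\ell^{(\epsilon)}$. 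Granting such a $\rho_\epsilon$, one checks immediately that $\E[\bv^k]=(1-\epsilon)m_k^{(\epsilon)}+\epsilon b^k=m_k$ for all $k\in[\ell]$, and that $\Pr[\bv<0]\ge \nu_\epsilon(\{b\})=\epsilon>0$, so the only remaining issue is to produce $\rho_\epsilon$ with few atoms.

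For this I would invoke the classical theory of the truncated Hamburger moment problem (equivalently, Gaussian quadrature at the level of moment sequences). Since the Gaussian $N(\mu,1)$ has infinite support, the Hankel matrices formed from $m_0,\dots,m_\ell$ are positive definite; positive definiteness is an open condition, so it persists for $m_0^{(\epsilon)},\dots,m_\ell^{(\epsilon)}$ provided $\epsilon$ is small enough, say $0<\epsilon\le\epsilon_0$. A moment sequence of length $\ell+1$ with positive definite Hankel matrix is realized by a probability measure supported on exactly $\lceil(\ell+1)/2\rceil$ distinct real points with strictly positive weights --- the Gauss quadrature rule whose nodes are the roots of the orthogonal polynomial attached to the sequence and whose weights are the corresponding Christoffel numbers (for even $\ell$ this is one member of a one-parameter family of such rules; any one will do). Taking $\rho_\epsilon$ to be this measure and fixing any $\epsilon\in(0,\epsilon_0]$, the support of $\nu_\epsilon$ has at most $\lceil(\ell+1)/2\rceil+1\le \ell+1$ points for every $\ell\ge1$, which is the claimed bound.

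I expect the only real work to be citing the precise form of the moment-problem statement used above --- in particular the cardinality $\lceil(\ell+1)/2\rceil$ of the support of the principal representation and the positivity of the quadrature weights, together with the (elementary) inequality $\lceil(\ell+1)/2\rceil+1\le\ell+1$. As an alternative that avoids quoting that machinery, one can run a continuity/openness argument for the moment map restricted to probability measures on a fixed large compact interval $[-R,R]$, using that $\rho_0$ may be taken to be the law of the (compactly supported, at most $\ell$-atom) variable of Proposition~\ref{prop:matchmoments}: a small perturbation of the target moments then stays in the image of the moment map, and Carath\'eodory's theorem for the moment body keeps the number of atoms at most $\ell+1$. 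Either way, choosing $b$ strictly below all the (uniformly bounded) nodes of $\rho_\epsilon$ prevents $\nu_\epsilon$ from having coincident atoms, so no further bookkeeping is needed.
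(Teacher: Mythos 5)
The paper does not actually prove this proposition --- it is imported verbatim from \cite{CDST15} (their Proposition~3.2) and used as a black box, so there is no in-paper argument to compare yours against. Judged on its own, your proof is correct. The moment bookkeeping checks out: $(1-\epsilon)m_k^{(\epsilon)}+\epsilon b^k=m_k$ by construction, the Hankel matrix $\bigl(m_{i+j}\bigr)_{0\le i,j\le\lfloor\ell/2\rfloor}$ of the Gaussian is positive definite because $N(\mu,1)$ has infinite support, positive definiteness survives the $O(\epsilon)$ perturbation of the entries, and the classical truncated-Hamburger/Gauss-quadrature fact then supplies a $\lceil(\ell+1)/2\rceil$-atomic representing measure $\rho_\epsilon$ with positive weights; the atom count $\lceil(\ell+1)/2\rceil+1\le\ell+1$ holds for every $\ell\ge 1$ (with equality at $\ell=1,2$, which is why the $+1$ in the statement cannot be dropped from your route). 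Two small points worth tightening in a write-up: (i) the remark about choosing $b$ below all nodes of $\rho_\epsilon$ is mildly circular, since $\rho_\epsilon$ depends on $b$ through the perturbed moments --- but it is also unnecessary, because if $b$ happens to coincide with a node of $\rho_\epsilon$ the support only shrinks and the mass at $b<0$ is still at least $\epsilon$, so $\Pr[\bv<0]>0$ regardless; (ii) you should name the exact Hankel matrix whose positive definiteness you invoke (the one through $m_{2\lfloor\ell/2\rfloor}$), since that is what the quadrature existence theorem requires. Your approach --- perturb the target moment sequence by subtracting a small negative atom and re-solve the truncated moment problem --- is a clean and arguably more modular argument than the explicit canonical-representation construction in \cite{CDST15}; it buys generality (it works for any prescribed negative location $b$) at the price of quoting the moment-problem machinery.
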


We will use $\bu$ (respectively $\bv$) to sample coefficients in our construction of the yes-distribution (respectively the no-distribution).

\subsubsection{Mollifiers, CLTs, Tail Bounds and Other Tools}

We recall the following basic proposition from \cite{CDST15} and its simple proof:

\begin{proposition} [\cite{CDST15} Proposition~4.1] \label{simplepro}
Let $\calA,\calA_{in}\sse \R^q$ where $\calA_{in} \sse \calA$. Let $\Psi_{in}: \R^q \to [0,1]$ be a function   satisfying $\Psi_{in}(X) = 1$ for all $X \in \calA_{in}$ and $\Psi_{in}(X) = 0$ for all $X \notin \calA$. Then for all random variables $\bS,\bT$:
\begin{equation*}
 \big|\hspace{-0.04cm}\Pr[\bS \in \calA]-\Pr[\bT\in \calA]
\hspace{0.01cm} \big| \le \big|\hspace{-0.04cm}\E[\Psi_{in}(\bS)]- \E[\Psi_{in}(\bT)]\hspace{0.01cm}\big|+
\max\hspace{-0.03cm}\big\{\hspace{-0.05cm}\Pr[\bS \in \calA\setminus \calA_{in}],\hspace{0.04cm}
\Pr[\bT\in \calA\setminus \calA_{in}]\big\}.\end{equation*}
\end{proposition}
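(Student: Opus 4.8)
The plan is to sandwich the mollifier $\Psi_{in}$ pointwise between the indicator function of $\calA_{in}$ and the indicator function of $\calA$, and then to combine a triangle inequality with a without-loss-of-generality reduction that converts a sum of two error terms into their maximum.

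First I would record the pointwise inequality forced by the hypotheses on $\Psi_{in}$. Since $\Psi_{in}$ takes values in $[0,1]$, equals $1$ on $\calA_{in}$, and equals $0$ off $\calA$, we have $\Indicator\cbra{X\in\calA_{in}}\le \Psi_{in}(X)\le \Indicator\cbra{X\in\calA}$ for every $X\in\R^q$. Taking expectations against the law of any random variable $\bR$ gives $\Pr[\bR\in\calA_{in}]\le \E[\Psi_{in}(\bR)]\le \Pr[\bR\in\calA]$, so that $0\le \Pr[\bR\in\calA]-\E[\Psi_{in}(\bR)]\le \Pr[\bR\in\calA]-\Pr[\bR\in\calA_{in}]=\Pr[\bR\in\calA\setminus\calA_{in}]$. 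This is the only place the structure of $\Psi_{in}$ is used.

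Next I would assume without loss of generality that $\Pr[\bS\in\calA]\ge\Pr[\bT\in\calA]$; the reverse case follows by swapping the roles of $\bS$ and $\bT$, since the claimed bound is symmetric in $\bS,\bT$. Under this assumption $\bigl|\Pr[\bS\in\calA]-\Pr[\bT\in\calA]\bigr| = \Pr[\bS\in\calA]-\Pr[\bT\in\calA]$. Using $\E[\Psi_{in}(\bT)]\le\Pr[\bT\in\calA]$ (the upper pointwise bound applied to $\bT$) together with $\Pr[\bS\in\calA]\le \E[\Psi_{in}(\bS)]+\Pr[\bS\in\calA\setminus\calA_{in}]$ (the lower pointwise bound applied to $\bS$), this difference is at most $\E[\Psi_{in}(\bS)]-\E[\Psi_{in}(\bT)]+\Pr[\bS\in\calA\setminus\calA_{in}]\le \bigl|\E[\Psi_{in}(\bS)]-\E[\Psi_{in}(\bT)]\bigr|+\max\{\Pr[\bS\in\calA\setminus\calA_{in}],\Pr[\bT\in\calA\setminus\calA_{in}]\}$, which is exactly the asserted inequality.

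There is essentially no obstacle here: the statement is a soft fact about mollified probabilities and needs no estimates beyond the trivial $[0,1]$-bounds. The only mild subtlety worth care is that the conclusion carries the \emph{max} of the two tail terms rather than their sum, which is precisely what the without-loss-of-generality step buys; I would just take care to invoke each of the two pointwise inequalities on the correct one of $\bS$ and $\bT$.
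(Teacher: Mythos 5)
Your proof is correct and follows essentially the same route as the paper: both rest on the two observations $\E[\Psi_{in}(\bR)]\le\Pr[\bR\in\calA]$ and $\Pr[\bR\in\calA]\le\E[\Psi_{in}(\bR)]+\Pr[\bR\in\calA\setminus\calA_{in}]$, applied to $\bS$ and $\bT$. The only cosmetic difference is that you dispatch the two sign cases by a symmetry/WLOG step where the paper writes out the matching upper and lower bounds on $\Pr[\bS\in\calA]-\Pr[\bT\in\calA]$ explicitly.
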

\begin{proof}
Observe that $\Pr[\bS\in \calA] \ge \E[\Psi_{in}(\bS)]$  and $\Pr[\bS\in \calA] \le \E[\Psi_{in}(\bS)] + \Pr[\bS \in \calA\setminus \calA_{in}]$, and likewise for $\bT$. As a result, we have
\[\begin{aligned}
\Pr[\bS \in \calA]-\Pr[\bT\in \calA] &\le \E[\Psi_{in}(\bS)] + \Pr[\bS \in \calA\setminus \calA_{in}]-\E[\Psi_{in}(\bT)],\ \ \ \text{and}\\[0.3ex]
\Pr[\bS \in \calA]-\Pr[\bT\in \calA]  &\ge \E[\Psi_{in}(\bS)] -  \Pr[\bT \in \calA\setminus \calA_{in}]-\E[\Psi_{in}(\bT)].
\end{aligned}\]
Combining these, we have the proposition.
\end{proof}

We adopt the following notation:  for $J = (J_1,\ldots,J_q) \in \N^q$ a  $q$-dimensional multi-index, we let $|J|$ denote $J_1 + \cdots + J_q$ and let $J!$  denote $J_1!J_2! \cdots J_q!$.   We write $\#J$ to denote $|\{ i \in [q]\colon J_i \ne 0\}|$ (and we observe that $\#J \le |J|$).
Given $X \in \R^q$ we write $X^J$ to denote $\prod_{i=1}^q (X_i)^{J_i}$, and we write $X|_J \in \R^{\#J}$ to denote the projection of $X$ onto the coordinates for which $J_i \neq 0.$ For $f: \R^q \to \R$, we write $f^{(J)}$ to denote the $J$-th derivative,
i.e.
\[
f^{(J)} =
{\frac {\partial^{J_1 + \cdots + J_q} f}
{\partial x_1^{J_1} \cdots \partial x_q^{J_q}}}.
\]

We recall the standard multivariate Taylor expansion:

\begin{fact}[Multivariate Taylor expansion]\label{fact:taylor}
Given a smooth function $f:\R^q\to \R$ and $k\in \mathbb{N}$,
\[
f(X+\Delta)=\sum_{|J|\le k}\frac{f^{(J)}(X)}{J!}\cdot \Delta^J
+(k+1)\sum_{|J|=k+1}\left(\frac{\Delta^J}{J!}
\E\big[(1-\btau)^k f^{(J)}(X+\btau \Delta)\big]\right),
\]
for $X,\Delta\in \R^q$, where $\btau$ is uniform random over the interval $[0, 1]$.
\end{fact}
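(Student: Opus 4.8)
The plan is to reduce this multivariate statement to the standard one-variable Taylor theorem with integral remainder by restricting $f$ to the segment from $X$ to $X+\Delta$. Define $g\colon [0,1]\to\R$ by $g(t):=f(X+t\Delta)$; since $f$ is smooth, so is $g$, and $g(0)=f(X)$, $g(1)=f(X+\Delta)$. The one-dimensional Taylor expansion with integral remainder states that
\[
g(1)=\sum_{j=0}^{k}\frac{g^{(j)}(0)}{j!}+\frac{1}{k!}\int_0^1 (1-t)^k\,g^{(k+1)}(t)\,dt,
\]
which itself follows by $k$ successive integrations by parts applied to $\frac{1}{k!}\int_0^1(1-t)^k g^{(k+1)}(t)\,dt$ (or by an easy induction on $k$). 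The remaining work is to express $g^{(j)}$ in terms of the partial derivatives $f^{(J)}$.

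The key identity I would establish, by induction on $j$, is that for all $j\ge 0$ and $t\in[0,1]$,
\[
g^{(j)}(t)=\sum_{|J|=j}\frac{j!}{J!}\,f^{(J)}(X+t\Delta)\,\Delta^J .
\]
The case $j=0$ is immediate. For the inductive step, differentiate the right-hand side in $t$; because $X+t\Delta$ is affine in $t$, the chain rule gives $\frac{d}{dt}f^{(J)}(X+t\Delta)=\sum_{i=1}^q \Delta_i\,f^{(J+e_i)}(X+t\Delta)$, where $e_i$ is the $i$-th standard basis multi-index and $\Delta_i$ is the $i$-th coordinate of $\Delta$. Re-indexing the resulting double sum by $J':=J+e_i$ (so $|J'|=j+1$ and $J'_i\ge 1$), the coefficient of $f^{(J')}(X+t\Delta)\,\Delta^{J'}$ becomes $\sum_{i:\,J'_i\ge 1}\frac{j!}{(J'-e_i)!}$; using $(J'-e_i)!=J'!/J'_i$ this equals $\frac{j!}{J'!}\sum_i J'_i=\frac{(j+1)!}{J'!}$, which closes the induction. (Equivalently, this is the multinomial theorem for the commuting differential operators $\Delta_i\partial_i$: $\big(\sum_i\Delta_i\,\partial_i\big)^j=\sum_{|J|=j}\frac{j!}{J!}\,\Delta^J\,\partial^J$.)

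Plugging this identity into the one-dimensional expansion finishes the proof. For the lower-order terms, $\frac{g^{(j)}(0)}{j!}=\sum_{|J|=j}\frac{f^{(J)}(X)}{J!}\,\Delta^J$, and summing over $j=0,\dots,k$ gives the main term $\sum_{|J|\le k}\frac{f^{(J)}(X)}{J!}\,\Delta^J$. For the remainder, substituting $g^{(k+1)}(t)=\sum_{|J|=k+1}\frac{(k+1)!}{J!}f^{(J)}(X+t\Delta)\,\Delta^J$ and pulling the constants out yields
\[
\frac{1}{k!}\int_0^1 (1-t)^k\,g^{(k+1)}(t)\,dt=(k+1)\sum_{|J|=k+1}\frac{\Delta^J}{J!}\int_0^1 (1-t)^k\,f^{(J)}(X+t\Delta)\,dt,
\]
and rewriting $\int_0^1(1-t)^k f^{(J)}(X+t\Delta)\,dt=\E\big[(1-\btau)^k f^{(J)}(X+\btau\Delta)\big]$ with $\btau$ uniform on $[0,1]$ gives exactly the claimed formula. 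The only step that is not entirely routine is the combinatorial identity controlling the multinomial coefficients in the expansion of $g^{(j)}$; everything else is the single-variable Taylor theorem and straightforward rearrangement.
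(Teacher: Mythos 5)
Your proof is correct and complete: the reduction to the one-variable Taylor theorem with integral remainder along the segment $t\mapsto X+t\Delta$, the inductive verification of $g^{(j)}(t)=\sum_{|J|=j}\frac{j!}{J!}f^{(J)}(X+t\Delta)\Delta^J$, and the rewriting of $\frac{1}{k!}\int_0^1(1-t)^k(\cdot)\,dt$ as $(k+1)\E[(1-\btau)^k(\cdot)]$ all check out. The paper states Fact~\ref{fact:taylor} as a recalled standard fact and gives no proof, so there is nothing to compare against; your argument is the standard derivation one would supply.
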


We recall the standard Berry--Esseen theorem for sums of independent real random variables (see for example, \cite{Fel68}), which is a quantitative form of the Central Limit Theorem:
\begin{theorem}[Berry--Esseen]
\label{thm:be}
Let $\bs = \bx_1 + \cdots + \bx_n$, where $\bx_1,\ldots,\bx_n$ are independent real-valued random variables with $\E[\bx_j] = \mu_j$ and $\Var[\bx_j] = \sigma_j^2$, and 
$\sum_{i=1}^n \E\sbra{|\bx_i|^3} \leq \kappa.$
%suppose that $| \bx_j - \E[\bx_j]|\le \tau$ with~proba\-bi\-lity $1$ for all $j\in [n]$. 
Let %$\calG$
$\bg$ denote a Gaussian random variable with mean $\sum_{j=1}^n \mu_j$ and variance $\sum_{j=1}^n\sigma_j^2$, matching those of $\bs$. Then for all $\theta\in \R$, we have
\[ \big| \Pr[\bs \le \theta] - \Pr[%\calG
\bg \le \theta] \big| \le \frac{O(\kappa)}{\sqrt{\sum_{j=1}^n \sigma_j^2}}.\]
\end{theorem}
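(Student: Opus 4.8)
Since Theorem~\ref{thm:be} is the classical Berry--Esseen theorem, the plan is to follow the standard characteristic-function argument; I will only sketch it. First I would reduce to a normalized setting. Replacing each $\bx_j$ by $\bx_j - \mu_j$ translates $\bs$ and $\bg$ by the same constant, so it changes neither $\Pr[\bs \le \theta] - \Pr[\bg \le \theta]$ nor the variances $\sigma_j^2$, while (by Jensen's inequality, since $|\mu_j| \le (\E|\bx_j|^3)^{1/3}$) the sum of absolute \emph{central} third moments is still $O(\kappa)$; so one may assume $\mu_j = 0$. Setting $B^2 := \sum_j \sigma_j^2$ and rescaling $\bx_j \mapsto \bx_j/B$ then makes $\bs$ have unit variance, and with $\rho := \kappa/B^3$ one has $\sum_j \E|\bx_j|^3 = O(\rho)$ after rescaling. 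It therefore suffices to prove $\sup_\theta |F(\theta) - \Phi(\theta)| = O(\rho)$, where $F$ denotes the CDF of $\bs$; unwinding the scaling gives the claimed bound $O(\kappa)/\sqrt{\sum_j \sigma_j^2}$ (in fact the sharper $O(\kappa/B^3)$).

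The argument then rests on two classical ingredients. The first is \emph{Esseen's smoothing inequality}: for every $T > 0$,
\[
\sup_\theta |F(\theta) - \Phi(\theta)| \;\le\; \frac{1}{\pi}\int_{-T}^{T}\left|\frac{\widehat F(t) - e^{-t^2/2}}{t}\right|\,dt \;+\; \frac{C_0}{T},
\]
where $\widehat F(t) = \prod_j \varphi_j(t)$ with $\varphi_j(t) := \E[e^{it\bx_j}]$ and $C_0$ is an absolute constant; this reduces the task to estimating $\widehat F(t) - e^{-t^2/2}$ on the bounded interval $|t| \le T$. The second is a Taylor estimate of the $\varphi_j$: since $\E[\bx_j] = 0$, Taylor's theorem with remainder gives $\varphi_j(t) = 1 - \tfrac12\sigma_j^2 t^2 + \theta_j \tfrac16 \E|\bx_j|^3 |t|^3$ with $|\theta_j| \le 1$. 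Choosing $T := \tfrac{1}{4\rho}$, one checks that $|\varphi_j(t) - 1| \le \tfrac14$ for $|t| \le T$, so the principal logarithms $\log \varphi_j(t)$ are well defined; summing the expansions $\log\varphi_j(t) = -\tfrac12\sigma_j^2 t^2 + O(\E|\bx_j|^3|t|^3)$ over $j$ and using $\sum_j \sigma_j^2 = 1$ and $\sum_j \E|\bx_j|^3 = O(\rho)$ yields $|\log\widehat F(t) + \tfrac12 t^2| \le C_1\rho|t|^3$, hence $|\widehat F(t) - e^{-t^2/2}| \le C_2 \rho|t|^3 e^{-t^2/4}$ on $|t| \le T$. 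Plugging this into the smoothing inequality, the integral is at most $\tfrac{C_2\rho}{\pi}\int_{-\infty}^{\infty} t^2 e^{-t^2/4}\,dt = O(\rho)$ and the boundary term is $4C_0\rho$, giving $\sup_\theta |F(\theta) - \Phi(\theta)| = O(\rho)$ as required.

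\textbf{Main obstacle.} The delicate point is the logarithm step: one must ensure that on $|t| \le 1/(4\rho)$ no factor $\varphi_j(t)$ comes close to $0$, and that summing the cubic remainders does not blow up the constants. This uses the elementary bound $\sigma_j^3 \le \E|\bx_j|^3$, which forbids a single $\sigma_j^2 t^2$ from being large when $|t| \le 1/(4\rho)$; the rest is bookkeeping. A Fourier-free alternative is Stein's method: solve the Stein equation $f'(w) - w f(w) = \Indicator\sbra{w \le \theta} - \Phi(\theta)$, bound $\|f\|_\infty$ and $\|f'\|_\infty$, and estimate $\E[f'(\bs) - \bs f(\bs)]$ by a leave-one-out argument, which yields the same order of bound; either way, for our purposes it suffices to cite a standard reference such as Feller.
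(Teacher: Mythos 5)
The paper gives no proof of Theorem~\ref{thm:be}: it is the classical Berry--Esseen theorem, recalled with a pointer to \cite{Fel68}. So your closing remark that it suffices to cite a standard reference is exactly what the paper does, and that is the right call; there is no in-paper argument to compare against.

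As a sketch of the classical proof, however, the step you yourself flag as the ``main obstacle'' is handled incorrectly. After normalizing $\sum_j\sigma_j^2=1$ and setting $\rho=\sum_j\E\big[|\bx_j|^3\big]$, you assert that $|\varphi_j(t)-1|\le\tfrac14$ for all $|t|\le T=1/(4\rho)$, justified by $\sigma_j^3\le\E\big[|\bx_j|^3\big]$. But that inequality only gives $\sigma_j\le\rho^{1/3}$, so $\sigma_j^2t^2$ can be as large as $\rho^{2/3}\cdot(4\rho)^{-2}=\rho^{-4/3}/16$, which diverges as $\rho\to0$: a single dominant summand (say $\sigma_1=\rho^{1/3}$, with the remaining variance spread over many tiny terms) makes $\varphi_1(t)$ leave the disk where the principal logarithm is defined well inside $[-T,T]$, and the log-sum step collapses. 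The claim is safe in the i.i.d.\ case (there $\rho\ge n^{-1/2}$, so $T=O(\sqrt n)$ and $\sigma_j^2t^2=O(1)$), but the paper applies the theorem to non-identically distributed summands such as $\bv_j(a^{(j)}\cdot{\cal X}_{t*})^2$, so the general case is the one that matters. The standard proofs avoid logarithms of the individual factors: one bounds each $|\varphi_j(t)|^2$ via the symmetrized variable, $|\varphi_j(t)|^2\le 1-\sigma_j^2t^2+\tfrac43\E\big[|\bx_j|^3\big]|t|^3\le\exp\big(-\sigma_j^2t^2+\tfrac43\E\big[|\bx_j|^3\big]|t|^3\big)$, which multiplies up to $\big|\prod_j\varphi_j(t)\big|\le e^{-t^2/3}$ on $|t|\le T$, and then one telescopes the difference $\prod_j\varphi_j(t)-\prod_je^{-\sigma_j^2t^2/2}$ factor by factor rather than exponentiating a sum of logarithms. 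With that replacement (or with the Stein-method route you mention, which has no such issue), your outline becomes the standard argument.
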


For $\bg \sim N(0,I_n)$, the value $\sum_{i=1}^n \bg_i^2 $ is distributed according to a chi-squared distribution with $n$ degrees of freedom, denoted $\chi(n)^2$.  We recall the following tail bound:

\begin{lemma} [Tail bound for the chi-squared distribution, from \cite{Johnstone01}] \label{lem:johnstone}
Let $\bX \sim \chi(n)^2$. % be a chi-squared random variable with $n$ degrees of freedom. Then
Then we have
\[\Pr\big[|\bX-n| \geq tn\big] \leq e^{-(3/16)nt^2},\quad\text{for all $t \in [0, 1/2)$.}\]
\end{lemma}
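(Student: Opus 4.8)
The plan is to prove this by the standard Chernoff / moment-generating-function argument. Write $\bX=\sum_{i=1}^n\bg_i^2$ with $\bg_1,\dots,\bg_n\sim N(0,1)$ independent; a one-line Gaussian integral gives $\E\sbra{e^{s\bg_i^2}}=(1-2s)^{-1/2}$ for every $s<1/2$, and hence by independence $\E\sbra{e^{s\bX}}=(1-2s)^{-n/2}$ for all $s<1/2$. I would then bound the upper tail $\Pr\sbra{\bX\ge(1+t)n}$ and the lower tail $\Pr\sbra{\bX\le(1-t)n}$ separately and combine them by a union bound over these two disjoint events.

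For the upper tail, I would apply Markov's inequality to $e^{s\bX}$ with $0<s<1/2$, giving $\Pr\sbra{\bX\ge(1+t)n}\le e^{-s(1+t)n}(1-2s)^{-n/2}$, and optimize the exponent over $s$ by differentiating; the optimizer is $s=\tfrac{t}{2(1+t)}\in(0,1/2)$, which yields
\[
\Pr\sbra{\bX\ge(1+t)n}\ \le\ \exp\pbra{\tfrac n2\pbra{\ln(1+t)-t}}.
\]
The one nontrivial step is then the elementary inequality $\ln(1+t)-t\le-\tfrac38 t^2$ for all $t\in[0,\tfrac12]$. I would prove this by studying $h(t):=\ln(1+t)-t+\tfrac38 t^2$: one has $h(0)=0$ and $h'(t)=\tfrac1{1+t}-1+\tfrac34 t=t\pbra{\tfrac34-\tfrac1{1+t}}$, so $h'<0$ on $(0,\tfrac13)$ and $h'>0$ on $(\tfrac13,\tfrac12)$, whence $\max_{[0,1/2]}h=\max\{h(0),h(\tfrac12)\}$ and a direct evaluation gives $h(\tfrac12)=\ln\tfrac32-\tfrac12+\tfrac3{32}<0$. (Note that the two-term Taylor bound $\ln(1+t)\le t-\tfrac{t^2}2+\tfrac{t^3}3$ only delivers this inequality for $t\le\tfrac38$, so the monotonicity argument is genuinely needed to reach $t=\tfrac12$.) Combining the two displays gives $\Pr\sbra{\bX\ge(1+t)n}\le e^{-(3/16)nt^2}$.

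For the lower tail, I would apply Markov to $e^{-u\bX}$ with $u>0$, using $\E\sbra{e^{-u\bX}}=(1+2u)^{-n/2}$; optimizing over $u$ gives $u=\tfrac{t}{2(1-t)}$ and the bound $\Pr\sbra{\bX\le(1-t)n}\le\exp\pbra{\tfrac n2\pbra{t+\ln(1-t)}}$. Here the power series $\ln(1-t)=-\sum_{k\ge1}t^k/k$ immediately gives $t+\ln(1-t)=-\tfrac{t^2}2-\tfrac{t^3}3-\cdots\le-\tfrac{t^2}2\le-\tfrac38 t^2$ for $t\in[0,1)$, so $\Pr\sbra{\bX\le(1-t)n}\le e^{-(3/16)nt^2}$ as well. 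A union bound over the disjoint events $\{\bX\ge(1+t)n\}$ and $\{\bX\le(1-t)n\}$ then yields the stated bound $\Pr\sbra{|\bX-n|\ge tn}\le e^{-(3/16)nt^2}$ up to an immaterial constant factor of $2$ (immaterial because the lemma is applied with $t$ a fixed constant and $n\to\infty$; the sharper lower-tail exponent $\tfrac n2(t+\ln(1-t))\le-\tfrac n4 t^2$ makes that term negligible next to the upper-tail term once $nt^2$ is bounded below). The main obstacle is the upper-tail inequality $\ln(1+t)-t\le-\tfrac38 t^2$ on $[0,\tfrac12]$; the rest is a routine Chernoff computation.
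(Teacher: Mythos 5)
The paper does not prove this lemma at all --- it is quoted verbatim from \cite{Johnstone01} --- so there is no internal proof to compare against; what you have written is a self-contained derivation, and it is essentially correct. The Chernoff computation is right on both tails: the optimal parameters $s=\tfrac{t}{2(1+t)}$ and $u=\tfrac{t}{2(1-t)}$ give exponents $\tfrac n2(\ln(1+t)-t)$ and $\tfrac n2(t+\ln(1-t))$, your monotonicity argument for $\ln(1+t)-t\le-\tfrac38t^2$ on $[0,\tfrac12]$ is valid (and the observation that $h(\tfrac12)=\ln\tfrac32-\tfrac12+\tfrac3{32}\approx-7.8\times10^{-4}$ is barely negative correctly explains why the cruder Taylor truncation does not reach $t=\tfrac12$), and the lower-tail series bound is immediate. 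The one discrepancy is the factor of $2$ from the union bound: as written you prove $\Pr[|\bX-n|\ge tn]\le 2e^{-(3/16)nt^2}$, which is weaker than the stated constant-free form (Johnstone's version absorbs the lower tail using its better exponent, which requires a slightly more careful endpoint analysis than a plain union bound). You flag this, and you are right that it is immaterial here: the lemma is invoked exactly once, in the proof of \Cref{lem:nofarfromconvex}, with $t=\Theta(C/\sqrt{n})$ so that $nt^2=\Theta(C^2)$ is a constant under our control, and the factor of $2$ is absorbed by taking $C$ marginally larger. (Minor quibble: in that application $t$ is not a fixed constant with $n\to\infty$ as you say, but rather $nt^2$ is a fixed constant; the conclusion that the factor of $2$ is harmless is unaffected.)
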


Following \cite{CDST15},
our proof will employ a carefully chosen ``mollifier,'' i.e. a particular smooth
function which approximates the indicator function of a set (the use of such mollifiers is standard in Lindeberg-type ``replacement method'' analyses).  We will use a specific mollifier, given in \cite{CDST15}, whose properties are tailored to our sets of interest (unions of orthants). The key properties of this mollifier are as follows:
\begin{proposition}[ \cite{CDST15} Proposition~4.3: ``product mollifier'']
\label{product-mollifier}
Let $\calO$ be a union of orthants in $\R^q$. 
For all $\eps > 0$, there exists a smooth function $\Psi_\calO:\R^q\to [0,1]$ with the following properties:
\begin{flushleft}\begin{enumerate}
\item $\Psi_\calO(X) = 0$ for all $X \notin \calO$.\vspace{-0.06cm}
\item $\Psi_\calO(X) = 1$ for all $X \in \calO$ with $\min_{i}\{ |X_i|\} \ge \eps$.\vspace{-0.08cm}
\item For any multi-index $J \in \mathbb{N}^q$ such that $|J|=k$, $\Vert
\Psi^{(J)}_{{\cal O}} \Vert_{\infty} \le \ff(k) \cdot (1/\eps)^k$, where $\ff(k)=k^{O(k)}$. \vspace{-0.12cm}
\item For any $J \in \mathbb{N}^q$,
$\Psi^{(J)}_{{\cal O}}(X) \not =0$ {only if} $X\in \calO$ and $|X_i| \le
\epsilon$ \vspace{0.02cm}for all $i$\vspace{-0.03cm} such that $J_i \ne 0$. Equivalently,
  $\Psi_{{\cal O}}^{(J)}(X) \ne 0$ {only if} $X\in \calO$ and $\| X|_J \|_\infty \le \eps$.
\end{enumerate}\end{flushleft}
\end{proposition}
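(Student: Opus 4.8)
The plan is to construct $\Psi_\calO$ explicitly as a finite sum of products of a single one‑dimensional ``smooth step,'' one product per orthant comprising $\calO$, and then to read off the four properties from a single disjointness observation. First I would fix, for the given $\eps>0$, a smooth function $\phi:\R\to[0,1]$ that is identically $0$ on $(-\infty,0]$, identically $1$ on $[\eps,\infty)$, nondecreasing, and satisfies $\|\phi^{(k)}\|_\infty\le \alpha_0(k)\cdot(1/\eps)^k$ for every $k$, where $\alpha_0(k)=k^{O(k)}$. This is the standard mollified step: take the classical transition function built from $s\mapsto e^{-1/s}$ (extended by $0$ for $s\le 0$) and compose with $t\mapsto t/\eps$; the $k^{O(k)}$ bound on its derivatives follows from the usual recursion for the derivatives of $e^{-1/s}$ together with the quotient and chain rules (alternatively one may simply quote this from \cite{CDST15}). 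Since $\phi$ is flat on both sides, $\phi^{(m)}(0)=0$ for all $m\ge 0$, and for each $k\ge 1$ the support of $\phi^{(k)}$ lies in the open interval $(0,\eps)$.

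Next, write $\calO=\bigcup_{\sigma\in\Sigma}O_\sigma$ as a union of orthants, $O_\sigma:=\{X\in\R^q:\sigma_iX_i\ge 0\text{ for all }i\}$, indexed by a set $\Sigma\subseteq\{-1,+1\}^q$ of sign patterns, and define
\[
\Psi_\calO(X)\ :=\ \sum_{\sigma\in\Sigma}\ \prod_{i=1}^q\phi(\sigma_iX_i).
\]
This is smooth and nonnegative, and its $J$-th derivative is $\Psi_\calO^{(J)}(X)=\sum_{\sigma\in\Sigma}\prod_{i=1}^q\sigma_i^{J_i}\,\phi^{(J_i)}(\sigma_iX_i)$. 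The crucial point is that at every fixed $X$, at most one of the products appearing in $\Psi_\calO(X)$ — and likewise at most one of the products appearing in any $\Psi_\calO^{(J)}(X)$ — can be nonzero: if the summands for distinct $\sigma,\sigma'\in\Sigma$ were both nonzero, then choosing a coordinate $i$ with $\sigma_i'=-\sigma_i$ and using that each nonvanishing factor $\phi^{(J_i)}(\sigma_iX_i)$ (for any $J_i\ge 0$) forces $\sigma_iX_i\ge 0$ would yield both $\sigma_iX_i\ge 0$ and $-\sigma_iX_i\ge 0$, hence $X_i=0$, hence $\phi^{(J_i)}(\sigma_iX_i)=\phi^{(J_i)}(0)=0$, a contradiction. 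From this, $\Psi_\calO(X)$ equals a single product of values in $[0,1]$ (or is $0$), so $\Psi_\calO:\R^q\to[0,1]$; and for $|J|=k$ we get $|\Psi_\calO^{(J)}(X)|\le\prod_i\|\phi^{(J_i)}\|_\infty\le\prod_i\alpha_0(J_i)(1/\eps)^{J_i}\le k^{O(k)}(1/\eps)^k$, which is property~(3) with $\alpha(k)=k^{O(k)}$.

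Finally I would verify the remaining properties from the supports of $\phi$ and its derivatives. A product $\prod_i\phi^{(J_i)}(\sigma_iX_i)$ is nonzero only when $\sigma_iX_i>0$ for every $i$ — so that $X$ lies in the interior of $O_\sigma\subseteq\calO$ — and, in addition, $\sigma_iX_i<\eps$, i.e.\ $|X_i|<\eps$, for every $i$ with $J_i\ge 1$. Taking $J=0$ gives property~(1), namely $\Psi_\calO(X)=0$ whenever $X\notin\calO$; the general-$J$ statement is exactly property~(4). For property~(2), if $X\in\calO$ has $\min_i|X_i|\ge\eps$ then $X$ lies in the interior of a unique orthant $O_\sigma$, necessarily with $\sigma\in\Sigma$; then every factor $\phi(\sigma_iX_i)=\phi(|X_i|)=1$, so that summand equals $1$, while for each $\sigma'\ne\sigma$ in $\Sigma$ some coordinate has $\sigma_i'X_i=-|X_i|<0$, killing that summand, whence $\Psi_\calO(X)=1$. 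The only nonroutine ingredient is the existence of the one‑dimensional mollifier $\phi$ with the $k^{O(k)}$ derivative bound; granting that, the rest is bookkeeping organized around the ``at most one nonzero product'' observation.
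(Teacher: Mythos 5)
Your construction is correct: the paper itself gives no proof of this proposition (it is imported verbatim as Proposition~4.3 of \cite{CDST15}), and what you have written is precisely the standard ``product mollifier'' construction from that reference — a sum over the orthants of $\calO$ of products of a one-dimensional smooth step, with the key observation that the summands (and all their derivatives) have pairwise disjoint supports, so that properties (1)--(4) follow without incurring a factor of $2^q$. The one deferred ingredient, a one-dimensional transition function with $k^{O(k)}\cdot(1/\eps)^k$ derivative bounds, is standard and correctly invoked.
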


\subsubsection{Clipping}

Given $C>0$, we define the ``clipping'' function $\clip_C: \R^n \to \zo$ which, on input a vector $x \in \R^n$, outputs 1 if and only if $\|x\| \leq \sqrt{n}+C$.

\subsection{The Yes- and No- Distributions} \label{sec:yesno}

Let $c>0$ (this is the $c$ of Theorem~\ref{thm:two-sided} ). Let $\bu$ and $\bv$ be the random variables given by \Cref{prop:matchmoments,prop:negsupport}, where we take $\ell$ to be the smallest odd integer that is at least $1/c$ and take $\mu=\mu(\ell)$.

A set $\bK$ drawn from our ``yes-distribution'' $\Dyes$ has indicator function defined as follows:
\begin{itemize}

\item
First, choose a Haar random orthonormal basis normalized so that each vector has Euclidean length $1/\sqrt{n}$, and denote those vectors $\ba^{(1)},\dots,\ba^{(n)}$.  (So $\ba^{(1)} \in \R^n$ is a Haar random unit vector in $\R^n$ scaled by $1/\sqrt{n}$; $\ba^{(2)}$ is Haar random over the radius-$(1/\sqrt{n})$ sphere in the $(n-1)$-dimensional subspace of $\R^n$ that is orthogonal to $\ba^{(1)}$; and so on.)

\item Then $n$ independent draws $\bu_1,\dots,\bu_n$ are made from the real random variable $\bu$ of \Cref{prop:matchmoments}.

\item The indicator function $\bK(x)$ is
\begin{equation} \label{eq:yes-distribution}
\bK(x) =
\Indicator\sbra{\bu_1(\ba^{(1)} \cdot x)^2 + \cdots + \bu_n(\ba^{(n)}\cdot x)^2 \leq  \mu \ \& \ \clip_C(x)=1}.
\end{equation}
(Here $C>0$ is a suitable constant, depending only on $c$ but not on $n$, that will be fixed later in our argument.)
\end{itemize}

A set $\bK$ drawn from our ``no-distribution'' $\Dno$ is defined very similarly, with the only difference being that $\bv$ takes the place of $\bu$:

\begin{itemize}

\item
The vectors $\ba^{(1)},\dots,\ba^{(n)}$ are chosen exactly as in the yes-case.

\item Then $n$ independent draws $\bv_1,\dots,\bv_n$ are made from the real random variable $\bv$ of \Cref{prop:negsupport}.

\item The indicator function $\bK(x)$ is
\begin{equation} \label{eq:no-distribution}
\bK(x) =
\Indicator\sbra{\bv_1(\ba^{(1)} \cdot x)^2 + \cdots + \bv_n(\ba^{(n)}\cdot x)^2 \leq  \mu \ \& \ \clip_C(x)=1}.
\end{equation}
(Here $C>0$ is the same constant as in the yes-case.)
\end{itemize}

We remark that our yes- and no- functions differ from the yes- and no- functions of \cite{CDST15} in a number of ways: Our functions involve a random orthonormal basis, they are degree-2 polynomial threshold functions rather than linear threshold functions, and they involve clipping. (In contrast the \cite{CDST15} functions do not involve choosing a random orthonormal basis, are LTFs, and do not incorporate any clipping.)

\subsubsection{Distance to Convexity}
We first consider yes-functions. Since $\bu$ is supported on non-negative real values and $\ba^{(1)},\dots,\ba^{(n)}$ are orthogonal vectors, every outcome of $\Indicator\sbra{\bu_1(\ba^{(1)} \cdot x)^2 + \cdots + \bu_n(\ba^{(n)}\cdot x)^2 \leq \mu}$ is an ellipsoid in $\R^n$. Since $\clip_C(x)$ is the indicator function of a ball in $\R^n$, and the intersection of a ball and an ellipsoid is a convex set, we immediately have the following:

\begin{corollary} \label{cor:yesconvex}
For every $C>0$, every $K \subset \R^n$ in the support of $\Dyes$ is convex.
\end{corollary}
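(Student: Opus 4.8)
The plan is to fix an arbitrary outcome of the randomness defining a set $\bK\sim\Dyes$ and show that this fixed set is convex; since this holds for every outcome, the corollary follows immediately. So fix the scaled orthonormal vectors $a^{(1)},\dots,a^{(n)}$ (each of Euclidean norm $1/\sqrt n$) and the values $u_1,\dots,u_n$, each of which is a nonnegative real by \Cref{prop:matchmoments}. Reading off \Cref{eq:yes-distribution}, the set $\bK$ is exactly the intersection of the two sets
\[
E := \Big\{x\in\R^n : \textstyle\sum_{i=1}^n u_i\,(a^{(i)}\cdot x)^2 \le \mu\Big\}
\qquad\text{and}\qquad
\Ball(\sqrt n + C) = \{x\in\R^n : \clip_C(x)=1\}.
\]

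First I would argue that $E$ is convex. For each $i$, the map $x\mapsto (a^{(i)}\cdot x)^2$ is the square of a linear functional and hence a convex function of $x$; since each $u_i\ge 0$, the function $\phi(x):=\sum_{i=1}^n u_i(a^{(i)}\cdot x)^2$ is a nonnegative linear combination of convex functions and is therefore convex (equivalently, $\phi$ is the quadratic form associated with the positive semidefinite matrix $\sum_{i=1}^n u_i\,a^{(i)}(a^{(i)})^{\top}$). Thus $E=\phi^{-1}((-\infty,\mu])$ is a sublevel set of a convex function, hence convex. (Orthogonality of the $a^{(i)}$ is not actually needed for this conclusion — only nonnegativity of the $u_i$ — though it is what makes $E$ a genuine ellipsoid that is axis-aligned in the $a^{(i)}$-basis.)

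Next, $\Ball(\sqrt n+C)$ is an origin-centered Euclidean ball and so is convex. Finally, the intersection of two convex sets is convex, so $\bK=E\cap\Ball(\sqrt n+C)$ is convex, which is what we wanted.

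There is no real obstacle here: the only property of the yes-distribution used in the argument is that the random coefficients $\bu_i$ are almost surely nonnegative, together with the convexity of balls and the closure of convexity under intersection. It is worth emphasizing that this argument provably breaks for $\Dno$, precisely because $\bv$ places positive probability on negative values (\Cref{prop:negsupport}), so the corresponding quadratic form need not be positive semidefinite; the genuinely substantive statement for the no-distribution is not that these sets fail to be convex but that they are \emph{far} from every convex set, which is established separately.
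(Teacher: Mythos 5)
Your proof is correct and takes essentially the same route as the paper: the paper likewise observes that nonnegativity of the $\bu_i$'s makes the quadratic-form sublevel set an ellipsoid (i.e.\ convex), that $\clip_C$ defines a ball, and that the intersection of the two is convex. Your remark that orthogonality of the $a^{(i)}$'s is not actually needed is a small but accurate generalization of the paper's phrasing.
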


The following lemma shows that a constant fraction of draws of $\bK \sim \Dno$ are constant-far from being convex (intuitively, this is because with extremely high probability a constant fraction of the coefficients $\bv_1,\dots,\bv_n$ are negative, which causes the degree-2 PTF to be far from an ellipsoid):

\begin{lemma} \label{lem:nofarfromconvex}
For a suitable choice of the constant $C>0$, with probability at least $1/2$ a random $\bK \sim \Dno$ is $\kappa$-far from convex (where $\kappa>0$ depends on $\mu$ and $\ell$ and hence only on $c$).
\end{lemma}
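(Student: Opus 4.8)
\textbf{Proof proposal for Lemma~\ref{lem:nofarfromconvex}.}

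The plan is to show that with probability at least $1/2$ over the draw of $\bK\sim\Dno$, the set $\bK$ is $\kappa$-far from convex. First I would isolate the ``typical'' event over the random coefficients $\bv_1,\dots,\bv_n$: since $\bv$ has $\Pr[\bv<0]=p>0$ for an absolute constant $p$ (depending only on $\ell$, hence only on $c$), a Chernoff/Hoeffding bound gives that except with probability $e^{-\Omega(n)}$, at least $(p/2)n$ of the coefficients $\bv_i$ are negative and (since $\bv$ also has positive mean $\mu>0$, so $\Pr[\bv>0]$ is a constant as well) at least $(p'/2)n$ of them are positive, and moreover all $|\bv_i|$ lie in the (constant-size) support of $\bv$. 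Call an outcome of $(\bv_1,\dots,\bv_n)$ \emph{balanced} if this happens; balanced outcomes have probability $1-e^{-\Omega(n)}\geq 1/2$. The orthonormal basis $\ba^{(1)},\dots,\ba^{(n)}$ is irrelevant for the distance-to-convexity computation (a rotation is an isometry of both $N(0,I_n)$ and the class of convex sets), so it suffices to show: for every balanced coefficient vector, the set $\{x: \sum_i \bv_i x_i^2 \le \mu\} \cap \Ball(\sqrt{n}+C)$ is $\kappa$-far from every convex set, for a suitable constant $C$ and a constant $\kappa=\kappa(\mu,\ell)>0$.

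The heart of the argument is therefore a deterministic statement: if $Q(x)=\sum_i \lambda_i x_i^2$ with a constant fraction of the $\lambda_i$ negative and a constant fraction positive, all $|\lambda_i|=\Theta(1)$, then $S := \{Q(x)\le\mu\}\cap\Ball(R)$ (with $R=\sqrt{n}+C$) has $\Vol(S\,\triangle\,L)\geq\kappa$ for every convex $L$. The way I would prove this is to exhibit, with constant Gaussian probability, a configuration of three collinear points $x^-, z, x^+$ with $x^-,x^+\in S$ but $z\notin S$ (a ``violating triple''), in such a way that the triples have bounded overlap, so that repairing all of them forces $\Vol(\bK\,\triangle\,L)=\Omega(1)$ — exactly the mechanism used in the proof of Lemma~\ref{lem:distance}. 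Concretely, partition the coordinates into a ``positive block'' $P$ (where $\lambda_i>0$) and a ``negative block'' $M$ (where $\lambda_i<0$), each of size $\Omega(n)$. For a Gaussian point $x$, the coordinates in $P$ contribute a positive amount $\sum_{i\in P}\lambda_i x_i^2$ concentrated around its mean, and likewise for $M$; one checks that with constant probability $x\in S$ yet moving $x$ a constant distance in a direction supported on the negative block $M$ (which \emph{decreases} $\sum_{i\in M}\lambda_i x_i^2$ and hence \emph{increases} $Q$) pushes $Q(x)$ above $\mu$, while the Euclidean norm stays within $R$ (this is where the clipping constant $C$ must be chosen large enough, using the $\chi^2$ tail bound Lemma~\ref{lem:johnstone}, so that the ball constraint is slack with constant probability). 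Since the direction of perturbation lies in a fixed low-dimensional-per-point family and has constant length, the standard overlap argument (each point of $\R^n$ participates in $O(1)$ triples, and the $N(0,I_n)$ density varies by only a constant factor across a triple since all three points have norm within $O(1)$ of each other) converts ``constant probability of a violating triple'' into ``$\Omega(1)$ distance to convexity,'' giving $\kappa$.

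I expect the main obstacle to be the deterministic geometric claim — verifying that for a balanced quadratic form one really can find a violating triple with constant probability, and simultaneously controlling the norm so the clipping does not interfere. The subtlety is that $Q$ is a \emph{sum} over all $n$ coordinates, so both the positive and negative blocks contribute $\Theta(n)$-sized fluctuations, and one needs a genuinely anti-concentration-flavored statement: conditioned on $x\in S$ (equivalently $\sum_{i\in P}\lambda_i x_i^2 - \sum_{i\in M}|\lambda_i|x_i^2 \le \mu$), the ``negative part'' $\sum_{i\in M}|\lambda_i|x_i^2$ should, with constant probability, be within a constant of some fixed threshold, so that a constant-length move in a single negative coordinate flips membership. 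A Berry--Esseen / CLT argument (Theorem~\ref{thm:be}) applied to the one-dimensional random variable $Q(\bx)$ — whose variance is $\Theta(n)$ but which, after the move, shifts by a deterministic $\Theta(1)$ amount on a positive-probability event where the moved coordinate is itself $\Theta(1)$ — should deliver this; the bookkeeping of which constants depend on $\mu,\ell$ is routine but must be done carefully to land a clean $\kappa>0$.
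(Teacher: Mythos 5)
Your setup is sound and matches the paper: reduce by rotational invariance to the canonical basis, use a Chernoff bound to argue the coefficient vector is ``balanced'' with high probability, and handle the clipping constant $C$ via a $\chi^2$ tail bound (the paper does exactly these steps). The gap is in the core geometric claim, and you flag the right place yourself in the last paragraph but are too optimistic about it. You want to produce, with constant probability over $\bx \sim N(0,I_n)$, a violating triple $(x^-,z,x^+)$ obtained from $\bx$ by a \emph{constant-length} perturbation in a negative-block direction, and then run the bounded-overlap argument from the proof of \Cref{lem:distance}. This cannot work at constant length scale: the random variable $Q(\bx)-\mu$ has standard deviation $\Theta(\sqrt n)$, while a perturbation of length $O(1)$ along $e_j$ moves $Q$ by $2\lambda_j x_j t + \lambda_j t^2 = O(1)$ (since $|x_j|=O(1)$ with probability $\Omega(1)$). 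So the event that $\bx$ lies within an $O(1)$-width band around the quadric $\{Q=\mu\}$ — which is precisely what a constant-length triple requires — has probability $O(1/\sqrt n)$, not $\Omega(1)$. Your appeal to Berry--Esseen does not rescue this: the CLT gives you that $Q(\bx)$ lands in a window of width $\Theta(\sqrt n)$, not width $\Theta(1)$. (There is also a benign sign slip in the parenthetical ``decreases $\sum_{i\in M}\lambda_i x_i^2$ and hence increases $Q$'': decreasing the signed sum decreases $Q$; you presumably mean decreasing $\sum_{i\in M}|\lambda_i|x_i^2$.)

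The paper sidesteps the scale mismatch by \emph{conditioning on the positive block first}. Fixing the coordinates in $S_2\cup\cdots\cup S_{\ell'}$ to a typical outcome absorbs the $\Theta(\sqrt n)$ fluctuation into the threshold, and what remains is the indicator $\Indicator\bigl[\sum_{j\in S_1}x_j^2 \ge p_1 n - (A'/d_1)\sqrt n\bigr]$ on the negative-block subspace alone, i.e.\ the \emph{complement of a ball} of radius $\Theta(\sqrt n)$. That object is far from convex by a clean ``line-by-line'' argument over lines through the origin: on each such line the set is the complement of an interval, any convex set restricted to a line is an interval, and Berry--Esseen ensures both the interval and its complement carry constant Gaussian mass. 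Since lines through the origin partition $\R^{|S_1|}$ (disintegrate the Gaussian into radial $\times$ spherical), no overlap bookkeeping is needed. If you want to salvage your triple-based mechanism, the natural fix is to allow perturbations of length $\Theta(\sqrt n)$ that pass through the $M$-origin — but at that point you have essentially re-derived the line-through-origin picture, and you would then still have to redo the overlap accounting, which the paper's version avoids.
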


\begin{proof}
By the rotational symmetry of the $N(0,I_n)$ distribution, we may assume that the orthonormal basis $\ba^{(1)},\dots,\ba^{(n)}$ is the canonical basis $e_1,\dots,e_n$ scaled by $1/\sqrt{n}$.
Thus a draw of $\bK \sim \Dno$ (after a suitable rotation) is
\[
\bK(x) =
\Indicator\sbra{\bv_1x_1^2 + \cdots + \bv_nx_n^2 \leq n\mu \ \& \ \clip_C(x)=1}.
\]
Given this, it suffices to show that a random set 
\begin{equation} \label{eq:targetset}
\bK' := \Indicator\sbra{\bv_1x_1^2 + \cdots + \bv_nx_n^2 \leq n\mu}
\end{equation} is $2\kappa$-far from convex with probability at least $1/2$.  If we have this, then since $\bK$ has distance at most $\kappa$ from $\bK'$ (which holds for a suitable choice of the constant $C$, using \Cref{lem:johnstone}), the lemma follows.

To analyze \Cref{eq:targetset}, we begin by recalling that by \Cref{prop:negsupport}, the random variable $\bv$ has probability $p_i>0$ of taking value $d_i$ for $1 \leq i \leq \ell'$, where $\ell'$ is some value that is at most $\ell+1$, and we have $d_1<0$, $d_1 < d_2 < \cdots < d_{\ell'}$, and $p_1 + \cdots + p_{\ell'}=1$.  
Taking $k=1$ in item (2) of \Cref{prop:negsupport}, we have  
\begin{equation} \label{eq:mean-is-mu}p_1 d_1 + \cdots + p_{\ell'}d_{\ell'}=\mu. \end{equation}
For $i \in [\ell']$, let $\boldn_i$ denote the number of indices $j \in [n]$ such that $\bv_j = d_i$.  Since all of the values $p_1,\dots,p_{\ell'}$ are constants independent of the asymptotic parameter $n$, by a standard Chernoff bound and union bound, we have that for suitable constants $c_1,\dots,c_{\ell'}>0$ (which depend on the $p_i$'s),
\begin{equation} \label{eq:as-should-be}
\Prx_{\bv_1,\dots,\bv_n}\sbra{\boldn_i \in [p_i n - c_i \sqrt{n}, p_i n + c_i \sqrt{n}]\text{~for each $i \in [\ell']$}} \geq 1/2.
\end{equation}
Fix any outcome $(v_1,\dots,v_n)$ of $(\bv_1,\dots,\bv_n)$ such that the event on the LHS of \Cref{eq:as-should-be} is satisfied. 
In the rest of the proof we will argue that for such an outcome the set 
\begin{equation} \label{eq:newKprime}
K' = \Indicator\sbra{v_1x_1^2 + \cdots + v_nx_n^2 \leq n\mu}
\end{equation}
corresponding to \Cref{eq:targetset} is $\Omega(1)$-far from convex.

For each $i \in [\ell']$, let $S_i \subset [n]$ denote the set of indices $j \in [n]$ such that $v_j=d_i$.  Let $c'_i$ be such that $|S_i|=p_i n + c'_i\sqrt{n}$, and observe that $|c'_i| \leq c_i$. For ease of notation  we may suppose that $S_1$ consists of the first coordinates $\{1,\dots,p_1 n + c'_1 \sqrt{n}\}$ (this is without loss of generality by the rotational invariance of $N(0,I_n)$).

Fix any $i \in \{2,\dots,\ell'\}$ and consider the tuple of random Gaussian coordinates $(\bx_j)_{j \in S_i}$ for a draw of $\bx=(\bx_1,\dots,\bx_n) \sim N(0,I_n)$. We have
\[
\Ex_{\bx}\sbra{\sum_{j \in S_i} \bx_j^2}= p_i n + c'_i\sqrt{n},
\]
and by the Berry-Esseen theorem (Theorem~\ref{thm:be}), we get that
\begin{equation} \label{eq:band}
\Pr\sbra{\sum_{j \in S_i} \bx_j^2 \in \sbra{p_i n - A_i \sqrt{n},p_i n + A_i \sqrt{n}}} \geq 1 - {\frac 1 {10\ell'}}
\end{equation}
for suitable positive absolute constants $A_2,\dots,A_{\ell'}$ (depending on the $p_i$'s and the $c'_i$'s but not on $n$).

Let $A := \ell' \cdot \max\{|d_2|,\dots,|d_{\ell'}|\} \cdot \max\{A_2,\dots,A_{\ell'}\}.$. By a union bound applied to \Cref{eq:band} over all $i \in \{2,\dots,\ell'\}$, with probability at least $9/10$ we have that
\begin{equation} \label{eq:inrange}
\sum_{i=2}^{\ell'} \sum_{j \in S_i} d_i \bx_j^2 \in
\sbra{
\pbra{\sum_{i=2}^{\ell'} d_i p_i n} - A \sqrt{n},
\pbra{\sum_{i=2}^{\ell'} d_i p_i n} + A \sqrt{n}
};
\end{equation}
let us say that any such outcome of $(\bx_j)_{j \in S_2 \cup \cdots \cup S_{\ell'}}$ is \emph{good}.
Fix any good outcome $(x_j)_{j \in S_2 \cup \cdots \cup S_{\ell'}}$ of the last $n - (p_1 n + c'_1 \sqrt{n})$ coordinates of $\bx \sim N(0,I_n)$, and let $A' \in [-A,A]$ be the value such that the LHS of \Cref{eq:inrange} is equal to $\pbra{\sum_{i=2}^{\ell'} d_i p_i n} + A' \sqrt{n}.$
Recalling \Cref{eq:mean-is-mu}, for this good outcome of the last $n - (p_1 n + c'_1 \sqrt{n})$ coordinates, the set \eqref{eq:newKprime} (viewed as an indicator function of coordinates $1,\dots,p_1 n + c'_1 \sqrt{n}$) becomes
\begin{equation} \label{eq:newnewKprime}
\Indicator\sbra{\sum_{j=1}^{p_1 n + c'_1 \sqrt{n}} d_1 x_j^2 \leq p_1 d_1 n - A' \sqrt{n}}.
\end{equation}
Recalling that $d_1 < 0$, this is equivalent to
\begin{equation} \label{eq:almostthere}
\Indicator\sbra{\sum_{j=1}^{p_1 n + c'_1 \sqrt{n}} x_j^2 \geq p_1 n - (A'/d_1) \sqrt{n}}.
\end{equation}
Let $q$ denote the probability that \eqref{eq:almostthere} holds for independent standard Gaussians $\bx_1,\dots,\bx_{p_1 n + c'_1 \sqrt{n}}$; the Berry-Esseen theorem implies that $q$ is a constant in $(0,1)$ which is bounded away from both 0 and 1.
By the radial symmetry of the $N(0,1)^{p_1 n + c'_1 \sqrt{n}}$ distribution, it follows that the subset of $\R^{p_1 n + c'_1 \sqrt{n}}$ whose indicator function is given by \Cref{eq:almostthere} is $\Omega(1)$-far from convex, because it is $\Omega(1)$-far from convex on a ``line by line'' basis. In more detail, for each unit vector $v \in \R^{p_1 n + c'_1 \sqrt{n}}$, the function \eqref{eq:almostthere} labels points on the corresponding line $\{tv: t \in \R\}$ as follows:

\begin{itemize}

\item [(i)] if $|t| \geq \sqrt{p_1 n - (A'/d_1)\sqrt{n}}$ then \eqref{eq:almostthere} outputs 1 on $tv$;

\item [(ii)] if $|t| > \sqrt{p_1 n - (A'/d_1)\sqrt{n}}$ then \eqref{eq:almostthere} outputs 0 on $tv$.

\end{itemize}

Since this labeling corresponds to the complement of an interval, and since both (i) and (ii) have constant probability as explained above, the distance to convexity is $\Omega(1)$, and the proof of \Cref{lem:nofarfromconvex} is complete.
\end{proof}

\subsection{Proof of Theorem~\ref{thm:two-sided}}

As is usual for a non-adaptive lower bound, we use Yao's principle.
Let $\cal{X}$ be a $q \times n$ query matrix, so the $i$-th row ${\cal X}_{i\ast}=({\cal X}_{i1},\dots,{\cal X}_{in})$ is a vector in $\R^n$ corresponding to the $i$-th query made by some deterministic algorithm.
We will argue that the behavior of such a deterministic algorithm will be almost the same on a target function $\bK \sim \Dyes$ and on a target function $\bK \sim \Dno$.  

First, since our analysis will only consider target functions drawn from $\Dyes$ and $\Dno$, and any draw from either of these distributions always involves  clipping (the $\clip_C$ component of \Cref{eq:yes-distribution,eq:no-distribution}), we may suppose without loss of generality that each query vector ${\cal X}_{i\ast}$ has $\|{\cal X}_{i\ast}\| \leq \sqrt{n}+C$, i.e.~it satisfies $\clip_C({\cal X}_{i\ast})=1$.

Let $\Ryes$ be the $\zo^q$-valued random variable obtained by drawing $\bK \sim \Dyes$ (recall that this corresponds to drawing $\bu_1,\ba^{(1)},\dots,\bu_n,\ba^{(n)}$) and setting the $t$-th coordinate of $\Ryes$ to be
\[
\bK({\cal X}_{t*}) = \Indicator\sbra{\bu_1 (\ba^{(1)} \cdot {\cal X}_{t*})^2 + \cdots + \bu_n(\ba^{(n)} \cdot {\cal X}_{t*})^2 \leq \mu}.
\]
Similarly, let $\Rno$ be the $\zo^q$-valued random variable obtained by drawing $\bK \sim \Dno$ (recall that this corresponds to drawing $\bv_1,\ba^{(1)},\dots,\bv_n,\ba^{(n)}$) and setting the $t$-th coordinate of $\Rno$ to be
\[
\bK({\cal X}_{t*}) = \Indicator\sbra{\bv_1 (\ba^{(1)} \cdot {\cal X}_{t*})^2 + \cdots + \bv_n(\ba^{(n)} \cdot {\cal X}_{t*})^2 \leq \mu}.
\]
To prove a two-sided non-adaptive lower bound of $q$ queries, it suffices to show that for the $\Ryes,\Rno$ defined above, we have $\dtv(\Ryes,\Rno) = o(1).$

Let us write $\overline{a}$ to denote $\overline{a}=(a^{(1)},\dots,a^{(n)})$, and let us write $\Ryes^{\overline{a}}$ to denote the random variable $\Ryes$ conditioned on having the outcome of $\ba^{(1)},\dots,\ba^{(n)}$ come out equal to $\overline{a}$, and similarly for $\Rno^{\overline{a}}.$ 
Using the coupling interpretation of total variation distance and the natural coupling between $\Dyes$ and $\Dno$, we have that
\begin{equation} \label{eq:a}
\dtv(\Ryes,\Rno) \leq \Ex_{\overline{\ba} \sim \text{Haar}}\sbra{\dtv(\Ryes^{\overline{\ba}},\Rno^{\overline{\ba}})},
\end{equation} 
so it suffices to upper bound the RHS of \Cref{eq:a} by $o(1)$.

Let us say that an outcome $\overline{a}=(a^{(1)},\dots,a^{(n)}) \in (\R^n)^n$ is \emph{bad} if 
%either (a) there is an $i \in [q]$ such that the $i$-th query $Q_{i*}$ has
%\[\|((v^1 \cdot Q_{i*})^2, \dots, (v^n \cdot Q_{i*})^2)	\| \notin \bracks*{{\frac {\sqrt{n}}{2}},2 \sqrt{n}}\] or (b) 
there is a pair $(t,j) \in [q] \times [n]$ such that $(a^{(j)} \cdot {\cal X}_{t*})^2 \geq {\frac {10 \ln n}{n}}$.
Recalling that each query vector ${\cal X}_{t*}$ has norm at most $\sqrt{n}+C$ and that each $\ba^{(j)}$ is a Haar random unit vector scaled by $1/\sqrt{n}$, it is easy to show that bad outcomes of $\overline{\ba}$ have very low probability:

\begin{lemma} \label{lem:badunlikely}
$\Pr[\overline{\ba}$ is bad$]=o(1)$.
\end{lemma}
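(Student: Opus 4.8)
The plan is to bound the probability that a single pair $(t,j) \in [q] \times [n]$ is ``violating'' (meaning $(a^{(j)} \cdot {\cal X}_{t*})^2 \geq 10\ln n/n$) and then take a union bound over all $qn$ such pairs. Fix a pair $(t,j)$. The key observation is that $\ba^{(j)}$, regardless of the conditioning coming from the earlier basis vectors $\ba^{(1)},\dots,\ba^{(j-1)}$ (which only restricts $\ba^{(j)}$ to lie in some $(n-j+1)$-dimensional subspace), is a Haar-random unit vector (scaled by $1/\sqrt{n}$) in that subspace; and since $n - j + 1 \le n$, spherical-cap concentration is only stronger in the lower-dimensional subspace. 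More precisely, I would condition on $\ba^{(1)},\dots,\ba^{(j-1)}$, write $\ba^{(j)} = \frac{1}{\sqrt{n}}\bu$ where $\bu$ is Haar-random on the unit sphere of an $(n-j+1)$-dimensional subspace $V$, and let $w$ be the unit vector in the direction of the projection of ${\cal X}_{t*}$ onto $V$. Then $a^{(j)} \cdot {\cal X}_{t*} = \frac{\|P_V {\cal X}_{t*}\|}{\sqrt{n}} (\bu \cdot w)$, and since $\|P_V {\cal X}_{t*}\| \le \|{\cal X}_{t*}\| \le \sqrt{n} + C$, we get $|a^{(j)} \cdot {\cal X}_{t*}| \le \frac{\sqrt{n}+C}{\sqrt{n}} |\bu \cdot w| = (1 + o(1)) |\bu\cdot w|$.

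Next I would apply the spherical cap bound (Lemma~\ref{lem:spherical-cap}, appropriately stated in dimension $n - j + 1 \ge n/2$, say, or just using the crude fact that the bound in dimension $m$ is at most the bound in dimension $\lceil m \rceil$ — here it is cleanest to note $n-j+1 \ge 1$ and simply re-derive or invoke the one-dimensional marginal tail) to conclude that for the relevant threshold
\[
\Prx_{\bu}\sbra{ |\bu \cdot w| \ge \sqrt{\frac{9\ln n}{n}} } \le 2 e^{-(n-j+1) \cdot \frac{9 \ln n}{2n}}.
\]
The only subtlety is that Lemma~\ref{lem:spherical-cap} is stated for $\S^{n-1}$, so I would either restate it for an $m$-dimensional sphere (the proof is identical) or, more simply, observe that $n - j + 1$ ranges over $\{1,\dots,n\}$ and handle the worst case; since we will need the exponent to beat $\ln(qn)$ and $q = n^{1/4 - c} \le n$, it suffices that $n - j + 1 \ge n^{1/2}$, which fails for the last $n - n^{1/2}$ coordinates. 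To handle all $j$ uniformly I would instead use that $\bu \cdot w$, being a single coordinate of a Haar-random unit vector in dimension $m = n - j + 1$, has density bounded and satisfies $\Pr[|\bu\cdot w| \ge s] \le 2(1-s^2)^{(m-1)/2} \le 2 e^{-(m-1)s^2/2}$; for $m \ge 2$ and $s^2 = 9\ln n / n$ this is at most $2 e^{-(m-1)\cdot \frac{9\ln n}{2n}}$, and for the small-$m$ tail (say $m \le \sqrt n$) one notes $\ba^{(j)}$ lives in a subspace where $P_V {\cal X}_{t*}$ has norm that is itself tiny with high probability — but the cleanest fix is simply to choose the constant $10$ generously and observe that even $m = 1$ forces $\bu \in \{\pm 1/\sqrt n \cdot(\text{sign})\}$ so $(a^{(j)}\cdot {\cal X}_{t*})^2 \le (1+o(1))^2 \cdot \|P_V {\cal X}_{t*}\|^2/n$, and the relevant $\|P_V {\cal X}_{t*}\|$ for the last few coordinates is controlled because the earlier (high-dimensional) projections already captured almost all the mass of ${\cal X}_{t*}$.

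The union bound then finishes it: summing the per-pair failure probability over the $j$'s for which $n - j + 1 \ge \sqrt n$ gives at most $qn \cdot 2 e^{-\sqrt n \cdot \frac{9\ln n}{2n}}$, which still is not obviously $o(1)$ — so in fact the right move, and the one I would commit to, is to replace the threshold analysis by noting that for ${all}$ $j$, $m = n-j+1 \ge 1$, and to use the trivial containment: with probability $1 - n^{-\omega(1)}$ over the full Haar basis, $\|(a^{(j)})\|^2 = 1/n$ exactly and the vector $\overline{\ba}$ acts as a random rotation, so $((a^{(1)} \cdot {\cal X}_{t*})^2, \dots, (a^{(n)} \cdot {\cal X}_{t*})^2)$ has the same distribution as $\frac{\|{\cal X}_{t*}\|^2}{n}$ times the squared coordinates of a Haar-random unit vector in $\R^n$; the maximum such squared coordinate is $O(\ln n / n)$ except with probability $n^{-\omega(1)}$ by the standard extreme-value bound for Haar vectors, and since $\|{\cal X}_{t*}\|^2 \le (\sqrt n + C)^2 = n(1+o(1))$ we get $\max_j (a^{(j)}\cdot{\cal X}_{t*})^2 \le \frac{9.9 \ln n}{n} < \frac{10\ln n}{n}$ with probability $1 - n^{-\omega(1)}$ per query, and then a union bound over the $q \le n$ queries gives $o(1)$. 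The main obstacle is purely bookkeeping: getting the constants and the dimension-$(n-j+1)$ versus dimension-$n$ issue to line up cleanly, which the ``random rotation applied to a fixed vector'' reformulation sidesteps entirely.
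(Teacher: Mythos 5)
Your final argument is correct, but most of the proposal is spent wrestling with a difficulty that is not actually there. The union bound over the $qn$ pairs $(t,j)$ only requires the \emph{marginal} probability of each event $\{(\ba^{(j)}\cdot{\cal X}_{t*})^2\ge 10\ln n/n\}$, and the marginal distribution of each $\ba^{(j)}$ --- for every $j$, not just $j=1$ --- is a Haar-random unit vector in $\R^n$ scaled by $1/\sqrt{n}$; the conditioning on $\ba^{(1)},\dots,\ba^{(j-1)}$ and the resulting dimension-$(n-j+1)$ bookkeeping never enter. This is exactly what the paper's proof does: it notes that $\ba^{(j)}\cdot{\cal X}_{t*}$ is distributed as the first coordinate of a Haar point on the sphere of radius $r/\sqrt{n}$ with $r\le\sqrt{n}+C$, applies \Cref{lem:spherical-cap} to get a per-pair bound of $n^{-9/2}$, and union-bounds over $qn<n^{5/4}$ pairs. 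The reformulation you ultimately commit to --- that $\big((a^{(j)}\cdot{\cal X}_{t*})^2\big)_{j\in[n]}$ is distributed as $\tfrac{\|{\cal X}_{t*}\|^2}{n}$ times the squared coordinates of a single Haar-random unit vector, so it suffices to bound the maximum squared coordinate --- is valid and amounts to the same calculation packaged per query rather than per pair; it buys nothing over the marginal argument but costs nothing either. Two small quantitative nits: the per-query failure probability in your last step is $n^{-\Theta(1)}$ (roughly $n\cdot n^{-9/2}$ after the union over coordinates), not $n^{-\omega(1)}$, though this still suffices since $q\le n$; and the intermediate claim that the partial union bound ``is not obviously $o(1)$'' is itself symptomatic of the unnecessary detour --- had you used marginals from the start, every term in the sum would be $n^{-9/2}$.
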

\begin{proof}
Fix some pair $(t,j) \in [q] \times [n]$ and let $r \leq \sqrt{n}+C$ be the norm of the query vector ${\cal X}_{t \ast}$.
The distribution of $\ba^{(j)} \cdot {\cal X}_{t*}$ is precisely the distribution of the first coordinate of a Haar random point drawn from the $n$-dimensional sphere  of radius $r/\sqrt{n}$. Hence, writing $\bu \sim \mathbb{S}^{n-1}$ to denote a Haar random point from the $n$-dimensional unit sphere, we have
\begin{align*}
\Prx\sbra{(\ba^{(j)} \cdot {\cal X}_{t*})^2 \geq {\frac {10 \ln n}{n}}} &=
\Prx_{\bu \sim \mathbb{S}^{n-1}}\sbra{{\frac {|\bu_1|r}{\sqrt{n}}} \geq {\frac {\sqrt{10\ln n}}{\sqrt{n}}}}\\
&\leq
\Prx_{\bu \sim \mathbb{S}^{n-1}}\sbra{\bu_1 \geq \frac {\sqrt{10\ln n}}{r}}\\
&\leq \Prx_{\bu \sim \mathbb{S}^{n-1}}\sbra{\bu_1 \geq {\frac {3\sqrt{\ln n}}{\sqrt{n}}}} \tag{using $r \leq \sqrt{n}+C$}\\
&\leq e^{-(9/2)\ln n} = 1/n^{9/2},
\end{align*}
using a standard bound on spherical caps (see \Cref{lem:spherical-cap}).
Since there are only $qn<n^{5/4}$ many pairs $(t,j) \in [q] \times [n]$, a union bound concludes the proof.
\end{proof}
Fix $\overline{a}=(a^{(1)},\dots,a^{(n)})$ to be any non-bad outcome of $\overline{\ba}$.  
Recalling \Cref{eq:a}, by \Cref{lem:badunlikely} it suffices to show that $\dtv(\Ryes^{\overline{\ba}},\Rno^{\overline{\ba}})\leq o(1)$; this is our goal in the rest of the proof.

Let $\bS \in \R^q$ be the random column vector whose $t$-th entry is
\[
\bu_1 (a^{(1)} \cdot {\cal X}_{t*})^2 + \cdots + \bu_n(a^{(n)} \cdot {\cal X}_{t*})^2 - \mu,
\]
%where each $\bu_a$ random variable is an independent draw from $\bu$.
and let $\bT \in \R^q$ be the random column vector whose $t$-th entry is
\[
\bv_1 (a^{(1)} \cdot {\cal X}_{t*})^2 + \cdots + \bv_n(a^{(n)} \cdot {\cal X}_{t*})^2 - \mu.
\]
%where each $\bv_a$ is independently drawn from the distribution $\bv$.  
The response vector $\Ryes^{\overline{a}}$ is determined by the orthant of $\R^q$ in which $\bS$ lies and the response vector $\Rno^{\overline{a}}$ is determined by the orthant of $\R^q$ in which $\bT$ lies.
So to prove a $q$-query monotonicity testing lower bound for non-adaptive algorithms,
it suffices to upper bound
\begin{equation} \label{eq:duobound}
\duo(\bS,\bT) \leq o(1),
\end{equation}
where $\duo$ is the ``union-of-orthants'' distance:
\[ \duo(\bS,\bT)  := \max\Big\{ |\Pr[\bS \in \calO]-\Pr[\bT\in\calO]| \colon \text{$\calO$ is a union of orthants in $\R^q$\Big\}}. \]
In what follows, we will show that
$\duo(\bS,\bT)\le o(1)$ when $q=O(n^{1/4-c})$.
  To this end, let $\calO$ denote a union of orthants such that
\begin{equation}\label{eq:morning}
\duo(\bS,\bT)=\big|\hspace{-0.03cm}\Pr[\bS\in \calO]-\Pr[\bT\in \calO]\hspace{0.01cm}\big|.
\end{equation}
Following~\cite{Mos08,GOWZ10,CDST15}, we first use the Lindeberg replacement
  method to bound 
  \begin{equation} \label{eq:Ebd}
  \big|\hspace{-0.04cm}\E[\Psi_\calO(\bS)]-\E[\Psi_\calO(\bT)]\hspace{0.01cm}\big|,
  \end{equation}
  and then apply \Cref{simplepro} to bound (\ref{eq:morning}).

\medskip

For all $i\in \{0,1\ldots,n\}$ we introduce the $\R^q$-valued hybrid random variable $\bQ^{(i)}$ whose $t$-th coordinate is
\[
(\bQ^{(i)})_t = 
\sum_{j=1}^i \bv_j (a^{(j)} \cdot {\cal X}_{t*})^2 +
\sum_{j=i+1}^n \bu_j (a^{(j)} \cdot {\cal X}_{t*})^2.
\]
Observe that $\bQ^{(0)} = \bS$ and $\bQ^{(n)} = \bT$. Informally, we are considering a sequence of hybrid distributions between $\bS$ and $\bT$ obtained by swapping out each of the $\bu$-summands for a corresponding $\bv$-summand one by one. The main idea is to bound the difference in expectations
\begin{equation} \label{eq:boundme}
\big|\hspace{-0.03cm}\E [\Psi_{\calO} (\bQ^{(i-1)} )]- \E[
\Psi_{\calO} (\bQ^{(i)} )] \hspace{0.01cm}\big| \quad \text{for each $i$},
\end{equation}
since summing \eqref{eq:boundme} over all $i\in [n]$ gives an upper bound on
\[\big|\hspace{-0.03cm}\E [\Psi_{\calO} (\bS) ] - \E[\Psi_{\calO} (\bT)]\hspace{0.01cm}\big|
= \big|\hspace{-0.03cm}\E [\Psi_{\calO} (\bQ^{(0)} ) ] - \E[
\Psi_{\calO} (\bQ^{(n)} )] \hspace{0.01cm}\big|
\le \sum_{i=1}^{n} \big|\hspace{-0.03cm}\E [\Psi_{\calO} (\bQ^{(i-1)} )]- \E[
\Psi_{\calO} (\bQ^{(i)} )] \hspace{0.01cm}\big|
\] using the triangle inequality.

To bound \eqref{eq:boundme}, we define the $\R^q$-valued random variable $\mathbf{R}_{-i}$ whose $t$-th coordinate is
\begin{equation} (\mathbf{R}_{-i})_t = 
\sum_{j=1}^{i-1} \bv_j (a^{(j)} \cdot {\cal X}_{t*})^2 +
\sum_{j=i+1}^n \bu_j (a^{(j)} \cdot {\cal X}_{t*})^2 \label{eq:R-minus-i}.
\end{equation}
Writing $\bPhi(\bv_i,a^{(i)})$ to denote the random vector in $\R^q$
whose $t$-th coordinate is $\bv_i (a^{(i)} \cdot {\cal X}_{t*})^2$ and likewise for $\bPhi(\bu_i,a^{(i)})$, we have that
\[ \big|\hspace{-0.03cm}\E[\Psi_{\calO} (\bQ^{(i-1)} ) ] - \E[\Psi_{\calO} (\bQ^{(i)} )]
\hspace{0.01cm}\big|=\big|\hspace{-0.03cm} \E[\Psi_{\calO} (\mathbf{R}_{-i} + \bPhi(\bv_i,a^{(i)}))] - \E[\Psi_{\calO} (\mathbf{R}_{-i} + \bPhi(\bu_i,a^{(i)}))] \hspace{0.01cm}\big|.
\]
Truncating the Taylor expansion of $\Psi_\calO$ at the $\ell$-th term
(Fact~\ref{fact:taylor}), we get
\begin{equation}\begin{aligned}
\hspace{-0.4cm}\E\hspace{-0.05cm}\big[\Psi_{\calO} (\mathbf{R}_{-i} + \bPhi(\bv_i,a^{(i)}))\big]
 &= \sum_{|J| \le \ell} \frac{1}{J!} \cdot
  \E\left[\Psi_{\calO}^{(J)} (\mathbf{R}_{-i}) \cdot (\bPhi(\bv_i,a^{(i)}))^J
\right]
  \\
&\hspace{0.36cm}+ \sum_{|J|=\ell+1} \frac{\ell+1}{J!} \cdot
 \E\left[(1-\btau)^{\ell} \Psi_{\calO}^{(J)} (\mathbf{R}_{-i} + \btau  \bPhi(\bv_i,a^{(i)}))  (\bPhi(\bv_i,a^{(i)}))^{J}\right] \label{taylor-error}
\end{aligned}\end{equation}
where $\btau$ is a random variable uniformly distributed on the interval $[0,1]$ (so the very last expectation is with respect to $\btau$, $\bv_i$
and $\mathbf{R}_{-i}$).
Writing the analogous expression for $\E[\Psi_{\calO} (\bR_{-i} + \bPhi(\bv_i,a^{(i)}))]$,
we observe that by \Cref{prop:matchmoments,prop:negsupport} the first sums are equal term by term, i.e. we have
\[
\sum_{|J| \le \ell} \frac{1}{J!} \cdot
  \E\left[\Psi_{\calO}^{(J)} (\mathbf{R}_{-i}) \cdot  (\bPhi(\bv_i,a^{(i)}))^J
\right] = 
\sum_{|J| \le \ell} \frac{1}{J!} \cdot
  \E\left[\Psi_{\calO}^{(J)} (\mathbf{R}_{-i}) \cdot  (\bPhi(\bu_i,a^{(i)}))^J
\right]
\]
for each $|J| \leq h.$ Thus we may cancel
all but the last terms to obtain
\[
\big|\hspace{-0.03cm}\E[\Psi_{\calO} (\bQ^{(i-1)} )]
- \E[\Psi_{\calO} (\bQ^{(i)} )]\hspace{0.01cm} \big| \le \sum_{|J|=\ell+1} \frac{\ell+1}{J!} \Vert \Psi_{\calO}^{(J)} \Vert_{\infty} \left( \E\big[|(\bPhi(\bv_i,a^{(i)}))^J|\big]+\E\big[|(\bPhi(\bu_i,a^{(i)}))^{J}|\big]\right).
\]
Observe that there are $|\{ J \in \N^q \colon |J| = \ell+1\}| =
\Theta(q^{\ell+1})$ many terms in this sum.
Recalling that each value of $(a^{(j)} \cdot \calX_{t \ast})^2$ is at most ${\frac {10 \log n} n}$ (because $\bar{a}$ is not bad),
that both $\bu_i$ and $\bv_i$ are supported on at most $\ell+1$
real values that depend only on $\ell$ (by 
\Cref{prop:matchmoments,prop:negsupport}),
and \Cref{product-mollifier},
we have that for any $\tau>0$ (we will choose a value for $\tau$ soon),
\begin{equation}\label{taylor-error2}
\big|\hspace{-0.03cm}\E[\Psi_{\calO} (\bQ^{(i-1)} )] -
\E[\Psi_{\calO} (\bQ^{(i)} ) ]\hspace{0.01cm}\big| = O_\ell(1) \cdot  \left(\frac{q}{\tau}\right)^{\ell+1} \cdot  \left({\frac{10 \log n}{n}}\right)^{(\ell+1)/2}.
\end{equation}
Summing over all $i\in [n]$ costs us a factor of $n$ and so we get
\begin{equation} \label{eq:exp-bound}
\big|\hspace{-0.03cm}\E[\Psi_{\calO} (\bS)] -
\E[\Psi_{\calO} (\bT)]\hspace{0.01cm}\big| = O_\ell(1) \cdot  \left(\frac{q}{\tau}\right)^{\ell+1} \cdot  {\frac{(10 \log n)^{(\ell+1)/2}}{n^{(\ell-1)/2}}}.
\end{equation}

\Cref{eq:exp-bound} gives us the desired bound on \Cref{eq:Ebd}; it remains only to apply \Cref{simplepro} to finish the argument. To do this, let
\[\calB_\tau=\big\{X\in {\cal O}: \text{$|X_i|\le \tau$ for some $i\in [q]$}\hspace{0.01cm}\big\}\]
(${\cal B}_\tau$ corresponds to the region $\calA\setminus \calA_{in}$ of \Cref{simplepro}).
Since both $\bv$ and $\bu$ are supported on values of magnitude $O_\ell(1)$, using the one-dimensional Berry-Esseen inequality (Theorem~\ref{thm:be}) and a union bound across the $q$ coordinates we get that 
\begin{equation} \label{eq:prob-bound}
\Pr[ \bS \in \calB_\tau], \Pr[ \bT \in \calB_\tau] \leq O_{\ell}(q \tau) + O_{\ell}(q/\sqrt{n}).
\end{equation}
So by applying \Cref{simplepro}, we get that
\[
\duo(\bS,\bT) \leq O_{\ell}(q\hspace{0.02cm} \tau) + O_{\ell}(q/\sqrt{n}) + O_{\ell}(1)\cdot \left(\frac{q}{\tau}\right)^{\ell+1} \cdot  \frac{(10 \log n)^{(\ell+1)/2}}{n^{(\ell-1)/2}}. \]
Choosing $\tau = 1/n^{1/4}$ and recalling that $\ell$ is the smallest odd integer that is at least $1/c$, we get that for $q=O(n^{1/4-c})$ the RHS above is $O_{\ell}((10 \log n)^{(\ell+1)/2}n^{-c})$. This is $o(1)$ for any constants $c>0,\ell \in \N$, and the proof of Theorem~\ref{thm:two-sided} is complete.

\section*{Acknowledgements}

X.C.~is supported by NSF grants IIS-1838154, CCF-2106429, and CCF-2107187. A.D.~is supported by NSF grants CCF-1910534 and CCF0-2045128. S.N.~is supported by NSF grants CCF-2106429, CCF-2211238, CCF-1763970, and CCF-2107187. R.A.S.~is supported by NSF grants CCF-2106429 and CCF-2211238. E.W.~is supported by~NSF grant CCF-2337993.

This work was partially completed while a subset of the authors was visiting the Simons Institute for the Theory of Computing. 

\bibliography{allrefs}

\newcommand{\etalchar}[1]{$^{#1}$}
\begin{thebibliography}{KNOW14}

\bibitem[B{\etalchar{+}}97]{ball1997elementary}
Keith Ball et~al.
\newblock An elementary introduction to modern convex geometry.
\newblock {\em Flavors of geometry}, 31(1--58):26, 1997.

\bibitem[Bal93]{Ball:93}
K.~Ball.
\newblock {The Reverse Isoperimetric Problem for Gaussian Measure}.
\newblock {\em Discrete and Computational Geometry}, 10:411--420, 1993.

\bibitem[Ban10]{bansal2010constructive}
Nikhil Bansal.
\newblock Constructive algorithms for discrepancy minimization.
\newblock In {\em IEEE 51st Annual Symposium on Foundations of Computer Science
  (FOCS)}, pages 3--10, 2010.

\bibitem[BB16]{BB16}
A.~Belovs and E.~Blais.
\newblock A polynomial lower bound for testing monotonicity.
\newblock In {\em Proceedings of the 48th ACM Symposium on Theory of Computing
  (STOC)}, pages 1021--1032, 2016.

\bibitem[BB20]{BB20}
Eric Blais and Abhinav Bommireddi.
\newblock On testing and robust characterizations of convexity.
\newblock In {\em Approximation, Randomization, and Combinatorial Optimization.
  Algorithms and Techniques, {APPROX/RANDOM}}, pages 18:1--18:15, 2020.

\bibitem[BBB20]{BBB20}
Aleksandrs Belovs, Eric Blais, and Abhinav Bommireddi.
\newblock Testing convexity of functions over finite domains.
\newblock In {\em Proceedings of the 2020 {ACM-SIAM} Symposium on Discrete
  Algorithms, {SODA}}, pages 2030--2045, 2020.

\bibitem[BBH24]{BBH24}
Hadley Black, Eric Blais, and Nathaniel Harms.
\newblock Testing and learning convex sets in the ternary hypercube.
\newblock In {\em 15th Innovations in Theoretical Computer Science Conference,
  {ITCS}}, pages 15:1--15:21, 2024.

\bibitem[BF18]{B-EF18}
Omri Ben{-}Eliezer and Eldar Fischer.
\newblock Earthmover resilience and testing in ordered structures.
\newblock In {\em 33rd Computational Complexity Conference, {CCC}}, pages
  18:1--18:35, 2018.

\bibitem[BLR93]{BLR93}
M.~Blum, M.~Luby, and R.~Rubinfeld.
\newblock Self-testing/correcting with applications to numerical problems.
\newblock {\em Journal of Computer and System Sciences}, 47:549--595, 1993.

\bibitem[BMR16]{BMR16}
Piotr Berman, Meiram Murzabulatov, and Sofya Raskhodnikova.
\newblock {The Power and Limitations of Uniform Samples in Testing Properties
  of Figures}.
\newblock In {\em 36th IARCS Annual Conference on Foundations of Software
  Technology and Theoretical Computer Science (FSTTCS)}, pages 45:1--45:14,
  2016.

\bibitem[BMR19]{BMR19}
Piotr Berman, Meiram Murzabulatov, and Sofya Raskhodnikova.
\newblock Testing convexity of figures under the uniform distribution.
\newblock {\em Random Struct. Algorithms}, 54(3):413--443, 2019.

\bibitem[BMR22]{BMR22}
Piotr Berman, Meiram Murzabulatov, and Sofya Raskhodnikova.
\newblock Tolerant testers of image properties.
\newblock {\em {ACM} Trans. Algorithms}, 18(4):37:1--37:39, 2022.

\bibitem[Bor75]{Borell:75}
C.~Borell.
\newblock {The Brunn-Minkowski inequality in Gauss space}.
\newblock {\em Invent. Math.}, 30:207--216, 1975.

\bibitem[Bor03]{Borell2003}
Christer Borell.
\newblock {The Ehrhard inequality}.
\newblock {\em Comptes Rendus. Math\'{e}matique}, 337(10):663--666, 2003.

\bibitem[Bor08]{Borell2008}
C.~Borell.
\newblock {Inequalities of the Brunn--Minkowski type for Gaussian measures}.
\newblock {\em Probability Theory and Related Fields}, 140:195--205, 2008.

\bibitem[BRY14a]{BRY14}
Piotr Berman, Sofya Raskhodnikova, and Grigory Yaroslavtsev.
\newblock {$L_p$}-testing.
\newblock In {\em Symposium on Theory of Computing, {STOC} 2014}, pages
  164--173, 2014.

\bibitem[BRY14b]{BRY14b}
Eric Blais, Sofya Raskhodnikova, and Grigory Yaroslavtsev.
\newblock Lower bounds for testing properties of functions over hypergrid
  domains.
\newblock In {\em {IEEE} 29th Conference on Computational Complexity, {CCC}
  2014}, pages 309--320, 2014.

\bibitem[CDL{\etalchar{+}}24]{chen2024mildly}
Xi~Chen, Anindya De, Yuhao Li, Shivam Nadimpalli, and Rocco~A Servedio.
\newblock Mildly exponential lower bounds on tolerant testers for monotonicity,
  unateness, and juntas.
\newblock In {\em Proceedings of the 2024 Annual ACM-SIAM Symposium on Discrete
  Algorithms (SODA)}, pages 4321--4337. SIAM, 2024.

\bibitem[CDS19]{CDS19}
Eshan Chattopadhyay, Anindya De, and Rocco~A. Servedio.
\newblock Simple and efficient pseudorandom generators from gaussian processes.
\newblock In Amir Shpilka, editor, {\em 34th Computational Complexity
  Conference (CCC)}, volume 137 of {\em LIPIcs}, pages 4:1--4:33. Schloss
  Dagstuhl - Leibniz-Zentrum f{\"{u}}r Informatik, 2019.

\bibitem[CDST15]{CDST15}
X.~Chen, A.~De, R.~Servedio, and L.{-}Y. Tan.
\newblock {Boolean Function Monotonicity Testing Requires (Almost) $n^{1/2}$
  Non-adaptive Queries}.
\newblock In {\em Proceedings of the Forty-Seventh Annual {ACM} on Symposium on
  Theory of Computing, {STOC} 2015}, pages 519--528, 2015.

\bibitem[CEFM04]{cordero2004b}
Dario Cordero-Erausquin, Matthieu Fradelizi, and Bernard Maurey.
\newblock The (b) conjecture for the gaussian measure of dilates of symmetric
  convex sets and related problems.
\newblock {\em Journal of Functional Analysis}, 214(2):410--427, 2004.

\bibitem[CFSS17]{CFSS17}
X.~Chen, A.~Freilich, R.~Servedio, and T.~Sun.
\newblock Sample-based high-dimensional convexity testing.
\newblock In {\em Proceedings of the 17th Int. Workshop on Randomization and
  Computation (RANDOM)}, pages 37:1--37:20, 2017.

\bibitem[CS13]{CS13a}
Deeparnab Chakrabarty and C.~Seshadhri.
\newblock A $o(n)$ monotonicity tester for boolean functions over the
  hypercube.
\newblock In {\em Proceedings of the 45th ACM Symposium on Theory of
  Computing}, pages 411--418, 2013.

\bibitem[CST14]{CST14}
Xi~Chen, Rocco~A. Servedio, and Li-Yang Tan.
\newblock New algorithms and lower bounds for testing monotonicity.
\newblock In {\em Proceedings of the 55th IEEE Symposium on Foundations of
  Computer Science}, pages 286--295, 2014.

\bibitem[CWX17]{CWX17stoc}
Xi~Chen, Erik Waingarten, and Jinyu Xie.
\newblock {Beyond Talagrand functions: new lower bounds for testing
  monotonicity and unateness}.
\newblock In {\em Proceedings of the 49th Annual {ACM} {SIGACT} Symposium on
  Theory of Computing (STOC)}, pages 523--536, 2017.

\bibitem[DMN19]{DMN19}
Anindya De, Elchanan Mossel, and Joe Neeman.
\newblock Is your function low dimensional?
\newblock In Alina Beygelzimer and Daniel Hsu, editors, {\em Conference on
  Learning Theory, {COLT} 2019, 25-28 June 2019, Phoenix, AZ, {USA}}, volume~99
  of {\em Proceedings of Machine Learning Research}, pages 979--993. {PMLR},
  2019.

\bibitem[DMN21]{DMN21}
Anindya De, Elchanan Mossel, and Joe Neeman.
\newblock Robust testing of low dimensional functions.
\newblock In Samir Khuller and Virginia~Vassilevska Williams, editors, {\em
  {STOC} '21: 53rd Annual {ACM} {SIGACT} Symposium on Theory of Computing,
  Virtual Event, Italy, June 21-25, 2021}, pages 584--597. {ACM}, 2021.

\bibitem[DNS21]{DNS21itcs}
Anindya De, Shivam Nadimpalli, and Rocco~A. Servedio.
\newblock Quantitative correlation inequalities via semigroup interpolation.
\newblock In {\em 12th Innovations in Theoretical Computer Science Conference,
  {ITCS} 2021}, volume 185 of {\em LIPIcs}, pages 69:1--69:20, 2021.

\bibitem[DNS22]{DNS22}
Anindya De, Shivam Nadimpalli, and Rocco~A. Servedio.
\newblock Convex influences.
\newblock In Mark Braverman, editor, {\em {13th Innovations in Theoretical
  Computer Science Conference, {ITCS}}}, volume 215 of {\em LIPIcs}, pages
  53:1--53:21, 2022.

\bibitem[DNS23]{DNS23-convex}
Anindya De, Shivam Nadimpalli, and Rocco~A. Servedio.
\newblock Testing {C}onvex {T}runcation.
\newblock In {\em Proceedings of the 2023 Annual ACM-SIAM Symposium on Discrete
  Algorithms (SODA)}, pages 4050--4082. 2023.

\bibitem[DNS24]{DNS23-polytope}
Anindya De, Shivam Nadimpalli, and Rocco~A. Servedio.
\newblock Gaussian {A}pproximation of {C}onvex {S}ets by {I}ntersections of
  {H}alfspaces.
\newblock In {\em Proceedings of the 65th IEEE Symposium on Foundations of
  Computer Science (FOCS)}, 2024.
\newblock To appear.

\bibitem[Dur19]{durrett_2019}
Rick Durrett.
\newblock {\em Probability: Theory and Examples}.
\newblock Cambridge Series in Statistical and Probabilistic Mathematics.
  Cambridge University Press, 5 edition, 2019.

\bibitem[Eld22]{eldan-chvatal-corr}
Ronen Eldan.
\newblock Second-order bounds on correlations between increasing families.
\newblock {\em Combinatorica}, 42:1099--1118, 2022.

\bibitem[Fel68]{Fel68}
William Feller.
\newblock {\em An introduction to probability theory and its applications},
  volume~1.
\newblock Wiley, 3rd edition, 1968.

\bibitem[GGL{\etalchar{+}}00]{GGLRS}
O.~Goldreich, S.~Goldwasser, E.~Lehman, D.~Ron, and A.~Samordinsky.
\newblock Testing monotonicity.
\newblock {\em Combinatorica}, 20(3):301--337, 2000.

\bibitem[Glu89]{gluskin1989extremal}
Efim~Davydovich Gluskin.
\newblock Extremal properties of orthogonal parallelepipeds and their
  applications to the geometry of banach spaces.
\newblock {\em Mathematics of the USSR-Sbornik}, 64(1):85, 1989.

\bibitem[GOWZ10]{GOWZ10}
P.~Gopalan, R.~O'Donnell, Y.~Wu, and D.~Zuckerman.
\newblock Fooling functions of halfspaces under product distributions.
\newblock In {\em IEEE Conf. on Computational Complexity (CCC)}, pages
  223--234, 2010.

\bibitem[GR11]{GoldreichRon11}
Oded Goldreich and Dana Ron.
\newblock On proximity-oblivious testing.
\newblock {\em SIAM Journal on Computing}, 40(2):534--566, 2011.

\bibitem[GR16]{GoldreichRon16}
Oded Goldreich and Dana Ron.
\newblock On sample-based testers.
\newblock {\em {ACM} Trans. Comput. Theory}, 8(2):7:1--7:54, 2016.

\bibitem[GW93]{GruberWills93}
P.~M. Gruber and J.~M. Wills, editors.
\newblock {\em Handbook of Convex Geometry}.
\newblock Elsevier, 1993.

\bibitem[HSSV22]{HSSV22}
Daniel~J. Hsu, Clayton~Hendrick Sanford, Rocco~A. Servedio, and
  Emmanouil{-}Vasileios Vlatakis{-}Gkaragkounis.
\newblock Near-optimal statistical query lower bounds for agnostically learning
  intersections of halfspaces with gaussian marginals.
\newblock In Po{-}Ling Loh and Maxim Raginsky, editors, {\em Conference on
  Learning Theory, 2-5 July 2022, London, {UK}}, volume 178 of {\em Proceedings
  of Machine Learning Research}, pages 283--312. {PMLR}, 2022.

\bibitem[HW20]{HugWeil20}
Daniel Hug and Wolfgang Weil.
\newblock {\em Lectures on Convex Geometry}.
\newblock Springer Graduate Texts in Mathematics, 2020.

\bibitem[HY22]{HarmsYoshida22}
Nathaniel Harms and Yuichi Yoshida.
\newblock Downsampling for testing and learning in product distributions.
\newblock In {\em 49th International Colloquium on Automata, Languages, and
  Programming, ({ICALP})}, pages 71:1--71:19, 2022.

\bibitem[Joh01]{Johnstone01}
Iain~M. Johnstone.
\newblock Chi-square oracle inequalities.
\newblock In {\em State of the art in probability and statistics}, pages
  399--418. Institute of Mathematical Statistics, 2001.

\bibitem[Kan11]{Kane:Gotsman11}
D.~M. Kane.
\newblock {The Gaussian Surface Area and Noise Sensitivity of Degree-\emph{d}
  Polynomial Threshold Functions}.
\newblock {\em Computational Complexity}, 20(2):389--412, 2011.

\bibitem[Kan12]{Kane12}
D.~Kane.
\newblock {A Structure Theorem for Poorly Anticoncentrated Gaussian Chaoses and
  Applications to the Study of Polynomial Threshold Functions}.
\newblock In {\em 53rd Annual {IEEE} Symposium on Foundations of Computer
  Science, {FOCS} 2012, New Brunswick, NJ, USA, October 20-23, 2012}, pages
  91--100, 2012.

\bibitem[Kan14]{Kane14}
D.~Kane.
\newblock {A Pseudorandom Generator for Polynomial Threshold Functions of
  Gaussian with Subpolynomial Seed Length}.
\newblock In {\em {IEEE} 29th Conference on Computational Complexity, {CCC}
  2014, Vancouver, BC, Canada, June 11-13, 2014}, pages 217--228, 2014.

\bibitem[Kan15]{Kane15}
D.~M. Kane.
\newblock {A Polylogarithmic {PRG} for Degree 2 Threshold Functions in the
  Gaussian Setting}.
\newblock In {\em 30th Conference on Computational Complexity, {CCC} 2015, June
  17-19, 2015, Portland, Oregon, {USA}}, pages 567--581, 2015.

\bibitem[KK14]{KK14}
A.~R. Klivans and P.~Kothari.
\newblock {Embedding Hard Learning Problems Into Gaussian Space}.
\newblock In {\em Approximation, Randomization, and Combinatorial Optimization.
  Algorithms and Techniques, {APPROX/RANDOM} 2014}, pages 793--809, 2014.

\bibitem[KMS18]{KMS18}
Subhash Khot, Dor Minzer, and Muli Safra.
\newblock On monotonicity testing and boolean isoperimetric-type theorems.
\newblock {\em {SIAM} J. Comput.}, 47(6):2238--2276, 2018.

\bibitem[KNOW14]{KNOW14}
Pravesh Kothari, Amir Nayyeri, Ryan O'Donnell, and Chenggang Wu.
\newblock Testing surface area.
\newblock In Chandra Chekuri, editor, {\em Proceedings of the Twenty-Fifth
  Annual {ACM-SIAM} Symposium on Discrete Algorithms, {SODA} 2014}. {SIAM},
  2014.

\bibitem[KOS08]{KOS:08}
A.~Klivans, R.~O'Donnell, and R.~Servedio.
\newblock Learning geometric concepts via {G}aussian surface area.
\newblock In {\em Proc.\ 49th IEEE Symposium on Foundations of Computer Science
  (FOCS)}, pages 541--550, 2008.

\bibitem[Lat02]{latala2003some}
Rafa{\l} Lata{\l}a.
\newblock On some inequalities for gaussian measures.
\newblock In {\em Proceedings of the ICM}, volume~2, pages 813--822, 2002.

\bibitem[LL15]{LeonardLewis15}
I.~E. Leonard and J.~E. Lewis.
\newblock {\em Geometry of Convex Sets}.
\newblock Wiley, 2015.

\bibitem[LM00]{laurent2000}
B.~Laurent and P.~Massart.
\newblock Adaptive estimation of a quadratic functional by model selection.
\newblock {\em Annals of Statistics}, 28(5):1302--1338, 2000.

\bibitem[LM15]{lovett2015constructive}
Shachar Lovett and Raghu Meka.
\newblock Constructive discrepancy minimization by walking on the edges.
\newblock {\em SIAM Journal on Computing}, 44(5):1573--1582, 2015.

\bibitem[LO99]{Latala1999}
Rafa\l{} Lata\l{}a and Krzysztof Oleszkiewicz.
\newblock Gaussian measures of dilatations of convex symmetric sets.
\newblock {\em The Annals of Probability}, 27(4):1922--1938, 10 1999.

\bibitem[LO05]{Latala2005}
Rafa\l{} Lata\l{}a and Krzysztof Oleszkiewicz.
\newblock Small ball probability estimates in terms of width.
\newblock {\em Studia Mathematica}, 169(3):305--314, 2005.

\bibitem[LRR17]{levy2017deterministic}
Avi Levy, Harishchandra Ramadas, and Thomas Rothvoss.
\newblock Deterministic discrepancy minimization via the multiplicative weight
  update method.
\newblock In {\em International Conference on Integer Programming and
  Combinatorial Optimization (IPCO)}, pages 380--391, 2017.

\bibitem[McD89]{McDiarmid:89}
C.~McDiarmid.
\newblock On the method of bounded differences.
\newblock In {\em Surveys in Combinatorics 1989}, pages 148--188. London
  Mathematical Society Lecture Notes, 1989.

\bibitem[MO03]{mossel2003noise}
Elchanan Mossel and Ryan O'Donnell.
\newblock On the noise sensitivity of monotone functions.
\newblock {\em {Random Structures \& Algorithms}}, 23(3):333--350, 2003.

\bibitem[MORS10]{MORS:10}
K.~Matulef, R.~O'Donnell, R.~Rubinfeld, and R.~Servedio.
\newblock Testing halfspaces.
\newblock {\em SIAM J. on Comput.}, 39(5):2004--2047, 2010.

\bibitem[Mos08]{Mos08}
Elchanan Mossel.
\newblock Gaussian bounds for noise correlation of functions and tight analysis
  of long codes.
\newblock {\em FOCS}, pages 156--165, 2008.

\bibitem[Naz03]{Nazarov:03}
F.~Nazarov.
\newblock {On the maximal perimeter of a convex set in $\R^n$ with respect to a
  Gaussian measure}.
\newblock In {\em Geometric aspects of functional analysis (2001-2002)}, pages
  169--187. Lecture Notes in Math., Vol. 1807, Springer, 2003.

\bibitem[OSTK21]{OSTK21}
Ryan O'Donnell, Rocco~A. Servedio, Li{-}Yang Tan, and Daniel Kane.
\newblock {Fooling Gaussian PTFs via local hyperconcentration}.
\newblock Preliminary version in STOC~2020. Revised version includes an
  appendix by Daniel Kane, 2021.

\bibitem[OW07]{o2007approximation}
Ryan O'Donnell and Karl Wimmer.
\newblock {Approximation by DNF: examples and counterexamples}.
\newblock In {\em International Colloquium on Automata, Languages, and
  Programming}, pages 195--206. Springer, 2007.

\bibitem[PRR03]{PRR03}
Michal Parnas, Dana Ron, and Ronitt Rubinfeld.
\newblock On testing convexity and submodularity.
\newblock {\em {SIAM} J. Comput.}, 32(5):1158--1184, 2003.

\bibitem[PRR06]{PRR06}
Michal Parnas, Dana Ron, and Ronitt Rubinfeld.
\newblock Tolerant property testing and distance approximation.
\newblock {\em J. Comput. Syst. Sci.}, 72(6):1012--1042, 2006.

\bibitem[PRV18]{PRV18}
Ramesh Krishnan~S. Pallavoor, Sofya Raskhodnikova, and Nithin Varma.
\newblock Parameterized property testing of functions.
\newblock {\em {ACM} Trans. Comput. Theory}, 9(4):17:1--17:19, 2018.

\bibitem[PRW22]{PallavoorRW22}
R.~Pallavoor, S.~Raskhodnikova, and E.~Waingarten.
\newblock {Approximating the distance to monotonicity of Boolean functions}.
\newblock {\em {Random Struct. Algorithms}}, 60(2):233--260, 2022.

\bibitem[Ras03]{Raskhodnikova03}
Sofya Raskhodnikova.
\newblock Approximate testing of visual properties.
\newblock In {\em 6th International Workshop on Approximation Algorithms for
  Combinatorial Optimization Problems, {APPROX} 2003 and 7th International
  Workshop on Randomization and Approximation Techniques in Computer Science,
  {RANDOM} 2003}, pages 370--381, 2003.

\bibitem[Rot17]{rothvoss2017constructive}
Thomas Rothvoss.
\newblock Constructive discrepancy minimization for convex sets.
\newblock {\em SIAM Journal on Computing}, 46(1):224--234, 2017.

\bibitem[Rot23]{rothvoss2023lattices}
Thomas Rothvoss.
\newblock {Lattices: CSE 599S---Winter 2023 Lecture Notes}, 2023.
\newblock URL:
  \url{https://sites.math.washington.edu/~rothvoss/599-winter-2023/lattices.pdf}.

\bibitem[Roy14]{roy14}
Thomas Royen.
\newblock A simple proof of the {G}aussian correlation conjecture extended to
  multivariate gamma distributions.
\newblock 2014.
\newblock \href {http://arxiv.org/abs/1408.1028} {\path{arXiv:1408.1028}}.

\bibitem[RR23a]{reis2023subspace}
Victor Reis and Thomas Rothvoss.
\newblock The subspace flatness conjecture and faster integer programming.
\newblock In {\em IEEE 64th Annual Symposium on Foundations of Computer Science
  (FOCS)}, pages 974--988, 2023.

\bibitem[RR23b]{reis2023vector}
Victor Reis and Thomas Rothvoss.
\newblock Vector balancing in lebesgue spaces.
\newblock {\em Random Structures \& Algorithms}, 62(3):667--688, 2023.

\bibitem[RS96]{RS96}
R.~Rubinfeld and M.~Sudan.
\newblock Robust characterizations of polynomials with applications to program
  testing.
\newblock {\em SIAM Journal on Computing}, 25:252--271, 1996.

\bibitem[RSD24]{regev2024reverse}
Oded Regev and Noah Stephens-Davidowitz.
\newblock A reverse {M}inkowski theorem.
\newblock {\em Annals of Mathematics}, 199(1):1--49, 2024.

\bibitem[RV05]{RademacherVempala05}
Luis Rademacher and Santosh Vempala.
\newblock Testing geometric convexity.
\newblock In {\em FSTTCS 2004: Foundations of Software Technology and
  Theoretical Computer Science: 24th International Conference, 2004}, pages
  469--480, 2005.

\bibitem[Tal96]{Talagrand:96}
M.~Talagrand.
\newblock How much are increasing sets positively correlated?
\newblock {\em Combinatorica}, 16(2):243--258, 1996.

\bibitem[Tko18]{TkoczNotes}
Tomasz Tkocz.
\newblock Asymptotic {C}onvex {G}eometry {L}ecture {N}otes, 2018.
\newblock URL:
  \url{https://www.math.cmu.edu/~ttkocz/teaching/1819/asympt-conv-geom-notes.pdf}.

\bibitem[Tro18]{Tropp18}
Joel~A. Tropp.
\newblock Lectures on convex geometry.
\newblock 2018.
\newblock URL:
  \url{https://tropp.caltech.edu/notes/Tro18-Lectures-Convex-LN.pdf}.

\bibitem[Vem10]{Vempalafocs10}
Santosh~S. Vempala.
\newblock Learning convex concepts from gaussian distributions with {PCA}.
\newblock In {\em 51th Annual {IEEE} Symposium on Foundations of Computer
  Science, {FOCS}}, pages 124--130. {IEEE} Computer Society, 2010.

\bibitem[Ver18]{vershynin2018high}
Roman Vershynin.
\newblock {\em High-Dimensional Probability: An Introduction with Applications
  in Data Science}, volume~47.
\newblock Cambridge University Press, 2018.

\bibitem[Wai15]{TAILBOUND}
M.~Wainwright.
\newblock Basic tail and concentration bounds, 2015.
\newblock URL:
  \url{www.stat.berkeley.edu/~mjwain/stat210b/Chap2\_TailBounds\_Jan22\_2015.pdf}.

\bibitem[{Wik}23]{Wiki-chisquare}
{Wikipedia contributors}.
\newblock {Chi-squared distribution}.
\newblock Wikipedia, The Free Encyclopedia, Accessed on September 27, 2023.
\newblock URL: \url{https://en.wikipedia.org/wiki/Chi-squared\_distribution}.

\bibitem[Wil05]{willink2005bounds}
R~Willink.
\newblock Bounds on the bivariate normal distribution function.
\newblock {\em Communications in Statistics-Theory and Methods},
  33(10):2281--2297, 2005.

\bibitem[Yao77]{Yao:77}
A.~Yao.
\newblock Probabilistic computations: Towards a unified measure of complexity.
\newblock In {\em Proc. Seventeenth Annual Symposium on Foundations of Computer
  Science (STOC)}, pages 222--227, 1977.

\end{thebibliography}
\bibliographystyle{alphaurl}

\appendix

\end{document}